\documentclass[noinfoline]{imsart}\setcounter{tocdepth}{2}

\RequirePackage[OT1]{fontenc}
\RequirePackage{amsthm,amsmath}
\RequirePackage{natbib}
\RequirePackage[colorlinks,citecolor=blue,urlcolor=blue]{hyperref}

\usepackage{verbatim}
\usepackage{comment}

\allowdisplaybreaks

\usepackage{fullpage}

\newcommand{\E}{\mathbb{E}}
\def\Ez#1{\mathbb{E} \left( #1 \right)}

\newcommand {\cov}{\textrm{Cov}}
\newcommand {\var}{\textrm{Var}}

\newcommand {\QBartlett}{L}

\newcommand {\Nh}{\mathcal{N}_h}
\newcommand {\Op}{O_{\mathbb{P}}}
\newcommand {\op}{o_{\mathbb{P}}}
\DeclareMathOperator{\I}{i}
\DeclareMathOperator{\cum}{cum}
\DeclareMathOperator*{\argmin}{arg\,min}

\DeclareMathOperator{\diag}{diag}
\DeclareMathOperator{\tr}{tr}
\DeclareMathOperator{\Prob}{\mathbb{P}}
\newcommand{\D}[1]{\ensuremath{\operatorname{d}\!{#1}}}
\newcommand{\transpose}{^{\top}}
\newcommand{\boldsymbolmu}{\mathbb{M}}
\newcommand{\mathbbSigma}{\mathbb{S}}

\usepackage{subfigure}
\usepackage{layout}

\usepackage{dsfont}
\usepackage[mathscr]{eucal}
\usepackage[toc,page]{appendix}
\usepackage{mathrsfs}
\usepackage{color}
\usepackage{pifont}
\usepackage{bm}
\usepackage{latexsym}
\usepackage{amsfonts}
\usepackage{amssymb}
\usepackage{epsfig}
\usepackage{graphicx}
\usepackage{multirow}
\usepackage{enumitem}
\usepackage{mathbbol}
\usepackage[dvipsnames]{xcolor}
\usepackage{stackrel}

\newtheorem{theorem}{Theorem}
\newtheorem{proposition}{Proposition}
\newtheorem{corollary}{Corollary}
\newtheorem{lemma}{Lemma}

\startlocaldefs
\numberwithin{equation}{section}
\theoremstyle{plain}
\endlocaldefs

\begin{document}

\begin{frontmatter}

\title{{\large Sparsely Observed Functional Time Series: Estimation and Prediction}}

\runtitle{Sparsely Observed Functional Time Series: Estimation and Prediction}

\begin{aug}
\author{\fnms{Tom{\'a}{\v s}} \snm{Rub{\'i}n}\ead[label=e1]{tomas.rubin@epfl.ch}} \and
\author{\fnms{Victor M.} \snm{Panaretos}\ead[label=e2]{victor.panaretos@epfl.ch}}


\runauthor{T. Rub{\'i}n \& V.M. Panaretos}

\affiliation{Ecole Polytechnique F\'ed\'erale de Lausanne}

\address{Institut de Math\'ematiques\\
Ecole Polytechnique F\'ed\'erale de Lausanne\\
\printead{e1}, \printead*{e2}}

\end{aug}

\begin{abstract}
Functional time series analysis, whether based on time or frequency domain methodology, has traditionally been carried out under the assumption of complete observation of the constituent series of curves, assumed stationary. Nevertheless, as is often the case with independent functional data, it may well happen that the data available to the analyst are not the actual sequence of curves, but relatively few and noisy measurements per curve, potentially at different locations in each curve's domain. Under this sparse sampling regime, neither the established estimators of the time series' dynamics nor their corresponding theoretical analysis will apply. The subject of this paper is to tackle the problem of estimating the dynamics and of recovering the latent process of smooth curves in the sparse regime. Assuming smoothness of the latent curves, we construct a consistent nonparametric estimator of the series' spectral density operator and use it to develop a frequency-domain recovery approach, that predicts the latent curve at a given time by borrowing strength from the (estimated) dynamic correlations in the series across time.  {This new methodlogy is seen to comprehensively outperform a naive recovery approach that would ignore temporal dependence and use only methodology employed in the i.i.d. setting and hinging on the lag zero covariance.}  Further to predicting the latent curves from their noisy point samples, the method fills in gaps in the sequence (curves nowhere sampled), denoises the data, and serves as a basis for forecasting. Means of providing corresponding confidence bands are also investigated. A simulation study interestingly suggests that sparse observation for a longer time period may provide better performance than dense observation for a shorter period, in the presence of smoothness. The methodology is further illustrated by application to an environmental data set on fair-weather atmospheric electricity, which naturally leads to a sparse functional time series.
\end{abstract}
\begin{keyword}[class=AMS]
\kwd[Primary ]{62M10}
\kwd[; secondary ]{62M15, 60G10}
\end{keyword}

\begin{keyword}
\kwd{autocovariance operator}
\kwd{confidence bands}
\kwd{functional data analysis}
\kwd{nonparametric regression}
\kwd{spectral density operator}
\end{keyword}

\end{frontmatter}

\tableofcontents

\newpage
\section{Introduction}

Functional data analysis constitutes a collection of statistical methods to analyse data comprised of ensembles of random functions: multiple occurrences of random processes evolving continuously in time and/or space, typically over a bounded rectangular domain \citep{ramsay2007applied,ferraty2006nonparametric,hsing2015theoretical,wang2016functional}. The challenges arising in functional data, on the one hand, arise from their infinite-dimensional nature: this calls upon tools and techniques from functional analysis, while standard inference problems may become ill-posed. On the other hand, the data, though continuous in nature, are seldom observed as such. Instead, finitely sampled versions are available to the statistician. If the sampling is sufficiently dense, the data can often be treated as genuinely functional data, possibly after a pre-smoothing step. The statistical estimators and procedures may be then based on the intrinsically infinite dimensional inputs and techniques. This approach was popularised by \citet{ramsay2007applied}.

It can very well happen, though, that the data are recorded only at some intermediate locations of their domain, possibly corrupted by measurement error.  In this case, it is necessary to regard the underlying functional nature of the data only as a latent process, and additional effort is required to construct adequate statistical methodology.  This scenario is often referred to as \emph{sparsely observed functional data} and usually occurs when the independent realisations of the latent functional process is a longitudinal trajectory. In a key paper,  \citet{yao2005functional} demonstrated how to estimate the covariance operator of the latent functional process using kernel regression and how to estimate the principal components of the latent process through conditional expectations. See also  \citet{yao2005functional_linear_regression} for an application of the proposed methodology in functional linear regression. The rate of convergence of the kernel smoother of  \citet{yao2005functional} was later strengthened by \citet{hall2006properties} and  \citet{li2010uniform}. Other methods to deal with sparsely observed functional data make use of minimizing a specific convex criterion function and expressing the estimator within a reproducing kernel Hilbert space, see \citet{cai2010nonparametric}, and \citet{wong2017nonparametric}.

Still, there are many applications where independence of the underlying curves cannot be assumed, for instance when the functional data are naturally ordered into a temporal sequence indexed by discrete time. We then speak of functional time series, and these are usually analysed by assuming stationarity and weak dependence across the time index. Historically, the research has been focused mostly into generalizing linear processes into functional spaces, see  \citet{bosq2012linear} and \citet{blanke2007inference} for overview publications. More recently, the research has moved beyond the linear structure. \citet{hormann2010weakly} considered the effect of weak dependence on principal component analysis and studied the estimation of the long-run covariance operator. \citet{horvath2013estimation} provided a central limit theorem for the mean of a stationary weak dependent sequence and considered the estimation of the long-run covariance operator.

A step further from the estimation of isolated characteristics such as the mean function and the said long-run covariance operator is to estimate the entire second-order structure of the process, without assuming linearity. To this aim, \citet{panaretos2013cramer} introduced the notation of spectral density operators and harmonic principal components, capturing the complete second-order dynamics in the frequency domain, whereas \citet{panaretos2013fourier} showed how to estimate the said spectral density operators by smoothing the operator-valued analogue of the periodogram. They formalised weak dependence by cumulant-type mixing conditions, \`a la \citet{brillinger1981time}. In parallel work, \citet{hormann2015dynamic} introduced the notation of dynamic principal components, closely related to the harmonic principal components of \citet{panaretos2013cramer}, and estimated the spectral density operators by the operator version of Bartlett's estimate \citep{bartlett1950periodogram}.

Despite the long tradition of functional time series as a driving force behind theoretical and methodological progress in functional data analysis more generally, a surprising fact is that the focus has been almost exclusively ``densely" observed functional time series, where it is assumed that the full functional data are available. Indeed discrete sampling appears to be a nearly absent consideration, with the exceptions (to our knowledge) being: \citet{panaretos2013fourier}, who show the stability of their asymptotics under \emph{dense} discrete observation but with measurement error of decaying magnitude; and, more recently, \citet{kowal2017functional} who studied functional autoregressive models by means of Bayesian hierarchical Gaussian models. They derived a Gibbs sampler for inference and forecasting but the paper does not examine the asymptotic behaviour of the method. In particular, in one of their considered sampling regimes, which they call \textit{sparse-fixed} design, posterior Bayesian concentration would be intangible. The Bayesian modelling framework was also extended to multivariate dynamic linear models by \citet{kowal2017bayesian} and to dynamic function-on-scalar regression by  \citet{kowal2018dynamic}. A related problem was studied by \citet{paul2011principal}, who considered correlated sparsely observed functional data with separable covariance structure, but the focus was not on dynamics.

In this article we address this gap (or, rather, chasm) and consider the problem of estimating the complete dynamics, and recovering the latent curves, in a stationary functional time series that is observed \emph{sparsely}, \emph{irregularly}, and with \emph{measurement errors}. The number of observations per curve is assumed to be random, almost surely finite, and \emph{not} increasing to infinity. Therefore we speak of genuine sparsity, much in the same vein as \citet{yao2005functional}. As a first step, we show how to estimate the full second-order dynamics of the functional time series based on sparse noisy data using kernel regression methods. We construct estimators of individual characteristics such as the mean function and the lag autocovariance operators, as an aside, but the main contribution is the kernel-based generalization of Bartlett's estimate of the spectral density operators. By integrating back the spectral density into the time domain we construct a consistent estimator of the entire space-time covariance structure.

Our methodology can also be interpreted in a design context: in certain applications, it might be possible for the scientist to choose how to distribute a given fixed budget of measurements over individual curves and over time. In this case, one might ask how to better estimate the underlying dynamics: whether it is better to sample a functional time series more densely over shorter time-span, or to record fewer observations per curve but over a longer time-space. In Section \ref{sec:simulation} we perform a simulation study to examine this tradeoff, and find that under sufficient smoothness, the sparse sampling regime over a longer period seems preferable.

The second contribution of the article is the establishment of a functional data recovery framework. We show how to predict the unobserved functional data once the space-time dynamics have been estimated. The recovery of the functional data is done by conditioning on all observed data, borrowing strength from the complete dynamics of the process (rather than just the marginal covariance).  {Simulations show that this approach comprehensively outperfoms a naive approach that would ignore dependence and simply employ the methodology of \citet{yao2005functional} using only the lag zero covariance, as one would do in the i.i.d. case (see e.g. Table \ref{table:recovery_relative_gain_MA4})}. When the functional time series is Gaussian, we furthermore show how to construct confidence bands for the latent functional data, with both pointwise and simultaneous coverage. In addition, we show how the functional recovery methodology naturally leads to forecasting.

Functional time series methodology is often useful in analysing continuously measured scalar time series, that can be subdivided into segments of an obvious periodicity, usually days. A key benefit of this technique is the separation of the intra-day variability and the temporal dependence among the consecutive days. The approach is especially fruitful in the analysis of environmental or meteorological phenomena, for example, particulate matter atmospheric pollution
\citep{hormann2010weakly,hormann2015dynamic,hormann2016detection,aue2015prediction}.
Nonetheless, some meteorological variables cannot be measured continuously and uninterruptedly. A practical motivation of this article comes from the data on atmospheric electricity \citep{tammet2009joint}. The peculiarity of this data is that the atmospheric electricity can be reliably measured only in fair-weather conditions. Otherwise, the physical-chemical processes behind the atmospheric electricity are altered and thus a different kind of process is measured. Details of this mechanism are reported in the data analysis in Section \ref{sec:data analysis}. Because of this censoring protocol, the considered functional time series is genuinely sparsely observed. We analyse such a dataset using our proposed methods, as a means of illustration.

The rest of the article is organised as follows. In Section \ref{sec:model} we define the functional time-series framework we work with and introduce the estimation and prediction methodology. In Section \ref{sec:asymptotics} we formulate the asymptotic theory for the suggested estimators under two different sets of assumptions: the cumulant mixing conditions leading to suboptimal rates, and a stronger set of assumptions including the strong mixing conditions resulting in optimal rates. Section \ref{sec:simulation} contains the results of numerical experiments designed to probe the finite-sample performance of our methodology. Section \ref{sec:data analysis} illustrates the proposed methodology on the fair-weather atmospheric electricity time series. In Appendix \ref{appendix:practical_implementation_concerns} we comment on some implementation concerns, the formal proofs are included in Appendix \ref{sec:proofs}, and some additional results of the numerical experiments are presented in Appendix \ref{appendix:supplementary results}.


\section{Model and Estimation Methodology} \label{sec:model}
\subsection{Functional Time Series Framework}
Functional time series is a sequence of random function defined on the interval $[0,1]$ and is denoted as $\{X_t\}_{t\in\mathbb{Z}} = \{ X_t(x), x\in [0,1] \}_{t\in\mathbb{Z}}$. We assume that $X_t \in \mathcal{H} = L^2([0,1])$ and $\Ez{ \|X_t\|^2 }<\infty$. Moreover we assume that the realisations (paths) of $X_t$ are smooth functions (concrete smoothness assumptions will be introduced in Section \ref{sec:asymptotics}). This space-time process will be referred to as a functional time series. Assuming second-order stationarity in the time variable $t$, we may define the (common) mean function of $X_t(\cdot)$ by
$$ \Ez{ X_t(x) } = \mu(x), \qquad x\in[0,1],$$
and capture the second-order dynamics of the functional time series by its lag-$h$ autocovariance kernels,
$$ R_h(x,y) = \E \left\{ (X_h(x)-\mu(x)(X_0(y)-\mu(y)) \right\}, \qquad x,y\in[0,1], \quad h\in\mathbb{Z} .$$
Each kernel $R_h(\cdot,\cdot)$ introduces a corresponding operator $\mathscr{R}_h: L^2([0,1])\to L^2([0,1])$ defined by right integration
$$ (\mathscr{R}_h g)(x) = \int_0^1 R_h(x,y)g(y)\D{y}, \qquad g\in L^2([0,1]). $$
In addition to the stationarity, we assume weak dependence, in that the autocovariance kernels are summable in the supremum norm (denoted by $\|\cdot\|_\infty$) and the autocovariance operators summable in the nuclear norm (denoted by $\|\cdot\|_1$)
\begin{equation}
\label{eq:assumption_stationarity_summability_of_autocovariance}
\sum_{h\in\mathbb{Z}} \|R_h\|_\infty =  \sum_{h\in\mathbb{Z}} \sup_{x,y\in[0,1]} |R_h(x,y)| < \infty ,
\qquad
\sum_{h\in\mathbb{Z}} \|\mathscr{R}_h\|_1 < \infty.
\end{equation}
Under these conditions, \citet{panaretos2013fourier} showed that for each $\omega\in(-\pi,\pi)$, the following series converge in the supremum norm and the nuclear norm, respectively
\begin{equation}
\label{eq:spectral_density_kernel_operator}
f_\omega(\cdot,\cdot) = \frac{1}{2\pi} \sum_{h\in\mathbb{Z}} R_h(\cdot,\cdot) \exp(- \I\omega h),
\qquad
\mathscr{F}_\omega = \frac{1}{2\pi} \sum_{h\in\mathbb{Z}} \mathscr{R}_h \exp(- \I\omega h).
\end{equation}
The kernel $f_\omega(\cdot,\cdot)$ and the operator $\mathscr{F}_\omega$ are called the spectral density kernel at frequency $\omega$ and the spectral density operator at frequency $\omega$ respectively.
The lagged autocovariance kernels and operators can be recovered by the inversion formula \citep{panaretos2013fourier} that holds in the supremum and the nuclear norm, respectively:
\begin{equation}
\label{eq:inversion_formula_operators_kernels}
R_h(\cdot,\cdot) = \int_{-\pi}^\pi f_\omega(\cdot,\cdot) \exp(\I \omega h) \D\omega,
\qquad
\mathscr{R}_h = \int_{-\pi}^\pi \mathscr{F}_\omega \exp(\I \omega h) \D\omega.
\end{equation}
In particular, the spectral density operator $\mathscr{F}_\omega$ is a non-negative, self-adjoint trace-class operator for all $\omega$.


\subsection{Observation Scheme}
\label{subsec:observation scheme}

We consider a sparse observation scheme with additive independent measurement errors. Let $Y_{tj}$ be the $j$-th measurement on the $t$-th curve at spatial position $x_{tj}\in [0,1]$, where $j=1,\dots,N_t$ and $N_t$ is the number of measurements on the curve $X_t$ for $t=1,\dots,T$. The additive measurement errors are denoted by $\epsilon_{tj}$ and are assumed to be independent identically distributed realisations of a mean $0$ and variance $\sigma^2>0$ random variable for $j=1,\dots,N_t$ and $t=1,\dots,T$. Furthermore, the measurement errors are assumed to be independent of  $\{X_t\}_{t\in\mathbb{Z}}$ as well as the measurement locations $\{x_{tj}\}$.
The observation model can be then written as
\begin{equation}\label{eq:observation_scheme}
Y_{tj} = X_t(x_{tj}) + \epsilon_{tj}, \qquad j=1,\dots,N_t,\quad t=1,...,T.
\end{equation}

\noindent The spatial positions $x_{tj}$ as well as their number $N_t$ are considered random and concrete conditions for the asymptotic results are given in Section \ref{sec:asymptotics}.

\subsection{Nonparametric Estimation of the Model Dynamics}
\label{subsec:nonparam_estimation}

Given the sparsely observed data $\{Y_{tj}\}$ generated by the observation scheme \eqref{eq:observation_scheme}, we wish to estimate the mean function $\mu$ and the lag autocovariance kernels $R_h(\cdot,\cdot)$. Thanks to the formulae \eqref{eq:spectral_density_kernel_operator} and \eqref{eq:inversion_formula_operators_kernels}, the estimation of the lag autocovariance operators is equivalent to the estimation of the spectral density $f_\omega(\cdot,\cdot)$.

In the first step, we estimate the common mean function $\mu$ by a local linear smoother, see, for example, \citet{FanJianqing1996Lpma}. Let $K(\cdot)$ be a one-dimensional symmetric probability density function. Throughout this paper we work with the Epanechnikov kernel $K(v)=\frac{3}{4}(1-v^2)$ for $ v\in[-1,1],$ and $0$ otherwise, but any other usual smoothing kernel would be appropriate. Let $B_\mu>0$ be the bandwidth parameter. We define the estimator of $\mu(x)$ as $\hat\mu(x) = \hat a_0$ by minimizing the weighted sum of squares:
\begin{equation}\label{eq:local_LS_for_mu}
(\hat a_0, \hat a_1) = \argmin_{a_0,a_1} \sum_{t=1}^T \sum_{j=1}^{N_t} K\left(\frac{x_{tj}-x}{B_\mu}\right) \left\{ Y_{tj} - a_0-a_1(x_{tj}-x) \right\}^2 .
\end{equation}

Then, in a second step, we show how to estimate the second order characteristics of the functional time series, namely the lag-$0$ covariance and the lag-$h$ autocovariance kernels. Since the measurement errors $\epsilon_{tj}$ contribute only to the diagonal of the lag-0 autocovariance kernel, $\cov( Y_{t+h,j}, Y_{tk}) = R_h( x_{t+h,j}, x_{tk} ) + \sigma^2 1_{[h=0, j=k]}$ where $1_{[h=0, j=k]}=1$ if the condition in the subscript is satisfied and zero otherwise. Therefore we consider the ``raw'' covariances
\begin{equation}\label{eq:raw_autocovariances}
G_{h,t}( x_{t+h,j}, x_{tk} ) = (Y_{t+h,j} - \hat\mu(x_{t+h,j}))(Y_{tk} - \hat\mu(x_{tk}))
\end{equation}
where $h=0,\dots,T-1$, $t=1,\dots,T-h$, $j=1,\dots,N_{t+h}$, and $k=1,\dots,N_t$. We anticipate that $\Ez{ G_{h,t}( x_{t+h,j}, x_{tk} )} \approx R_h( x_{t+h,j}, x_{tk} ) + \sigma^2 1_{[h=0, j=k]}$. Hence, the diagonal of the raw lag-$0$ covariances must be removed when estimating the lag-$0$ covariance kernel.

Specifically, to estimate the lag-$0$ covariance kernel, we employ a local-linear surface-smoother on $[0,1]^2$ applied to the raw covariances $\{G_{0,t}( x_{tj}, x_{tk} ), t=1,\dots,T,j\neq k\}$. Precisely, we let $\hat R_0(x,y) = \hat b_0$ where $\hat b_0$ is obtained by minimizing the following weighted sum of squares:
\begin{equation}
\label{eq:local_LS_cov_lag_0}
(\hat b_0, \hat b_1, \hat b_2) = \argmin_{b_0,b_1,b_2} \sum_{t=1}^T \sum_{j\neq k}
K\left( \frac{x_{tj}-x }{B_R} \right)
K\left( \frac{x_{tk}-y }{B_R} \right)
\left\{ G_{0,t}( x_{tj}, x_{tk} ) - b_0 - b_1(x_{tj}-x) - b_2(x_{tk}-y) \right\}^2
\end{equation}
and $B_R>0$ is the bandwidth parameter.

We estimate the measurement error variance $\sigma^2$ using the approach of \citet{yao2005functional}. That is, we first estimate $V(x) = R_0(x,x) + \sigma^2$ by smoothing the variance on the diagonal. We assign $\hat{V}(x) = \hat{c}_0$ where:
\begin{equation} \label{eq:local_LS_diagonal}
(\hat{c}_0,\hat{c}_1)
= \argmin_{c_0, c_1} \sum_{t=1}^T \sum_{j=1}^{N_t} K\left( \frac{x_{tj}-x}{B_V} \right)
\left\{ Y_{tj} - c_0 - c_1(x_{tj}-x) \right\}^2
.
\end{equation}
Instead of using $\{\hat{R}_0(x,x):x\in[0,1]\}$ as the estimator of the diagonal of the lag-$0$ covariance kernel (without the ridge contamination), \citet{yao2003shrinkage,yao2005functional} opted for a local-quadratic smoother -- arguing that the covariance kernel is maximal along the diagonal, and so a local-quadratic smoother is expected to outperform a local linear smoother. This heuristic was also confirmed by our own simulations. Therefore, following \citet{yao2005functional}, we fit a local-quadratic smoother along the direction perpendicular to the diagonal. Concretely, the estimator is defined as $\bar{R}_0(x) = \bar{c}_0$ where $\bar{c}_0$ is the minimizer of the following weighted sum of squares:
\begin{multline}
\label{eq:local_LS_est_sigma2_minimisation_V}
(\bar{c}_0,\bar{c}_1,\bar{c}_2) = \argmin_{c_0,c_1,c_2}
\sum_{t=1}^T \sum_{j\neq k} 
K\left( \frac{x_{tj}-x }{B_R} \right)
K\left( \frac{x_{tk}-x }{B_R} \right)
\times\\\times
\left\{ G_{0,t}( x_{tj}, x_{tk} ) - c_0 - c_1(P(x_{tj},x_{tk})-x) - c_2(P(x_{tj},x_{tk})-x)^2 \right\}^2
\end{multline}
where $P(x_{tj},x_{tk})$ is the first coordinate (which is the same as the second one) of the projection of the point $(x_{tj},x_{tk})$ onto the diagonal of $[0,1]^2$.
The measurement error variance is then estimated by
\begin{equation} \label{eq:local_LS_est_sigma2}
\hat{\sigma}^2 = \int_0^1 \left( \hat{V}(x) - \bar{R}_0(x) \right) \D x.
\end{equation}
Since the estimator \eqref{eq:local_LS_est_sigma2} is based on smoothers, it is not guaranteed to be a positive number. This problem was already commented on by \citet{yao2005functional}. In the theoretical part of their paper, the negative estimate is replaced by zero and, in their code, it is replaced by a small positive number. The replacement by a positive number can be seen as a form of regularization.


Next, we proceed with the estimation of the lag-$h$ autocovariance kernels for $h>0$.
We define the estimator $\hat R_h(x,y) = \hat b_0^{(h)}$ for $h=1,\dots,T-1$ by minimizing
\begin{multline}\label{eq:local_LS_cov_lag_nu}
(\hat b_0^{(h)}, \hat b_1^{(h)}, \hat b_2^{(h)}) = \argmin_{b_0^{(h)},b_1^{(h)},b_2^{(h)}}
\sum_{t=1}^{T-h}
\sum_{j=1}^{N_{t+h}} \sum_{k=1}^{N_t}
K\left( \frac{x_{t+h,j}-x }{B_R} \right)
K\left( \frac{x_{tk}-y }{B_R} \right)
\times\\\times
\left\{ G_{h,t}( x_{t+h,j}, x_{tk} ) - b_0^{(h)} - b_1^{(h)}(x_{t+h,j}-x) - b_2^{(h)}(x_{tk}-y) \right\}^2
\end{multline}
For $h<0$ we set $\hat R_h = \hat R_{-h}\transpose$.
Observe that we did not need to remove the diagonal as in \eqref{eq:local_LS_cov_lag_0}.
Denote the corresponding estimated covariance operators as $\hat{\mathscr{R}}_h$.

\subsection{Spectral Density Kernel Estimation}
\label{subsec:Spectral Density Kernels Estimation}

To estimate the spectral density kernels $f_\omega$ one has to resort to smoothing or a different sort of regularization at some point. \citet{panaretos2013fourier} performed kernel smoothing of the periodogram in the spectral domain whereas \citet{hormann2015dynamic} made use of Barlett's estimate.
Bartlett's estimate involves a weighted average of the lagged autocovariances, with a choice of weights that downweighs higher order lags. From the theoretical perspective, this approach is equivalent to kernel smoothing of the periodogram \citep[\S 6.2.3]{PriestleyMauriceB1981Saat}. In fact, the Bartlett's weights correspond to the Fourier coefficients of the smoothing kernel, assumed compactly supported.

In this paper, we opt for the \textit{Barlett's weights} (or \textit{triangular window}) defined as $W_h = (1-|h|/\QBartlett)$ for $|h|<\QBartlett$ and $0$ otherwise for the Barlett's span parameter $\QBartlett\in\mathbb{N}$ as it seems to be a popular choice \citep{hormann2010weakly,hormann2015dynamic}. It should be noted that other choices of weights are possible \citep{rice2017plug} and the so-called local quadratic windows (Parzen, Bartlett-Pristley, etc.) improve the asymptotic bias. See \citet[\S 7.5]{PriestleyMauriceB1981Saat} for the detailed discussion in one-dimensional case. The statement seems to be also true for functional time series \citep{vandelft2019anote}.

If the full functional observations were available, the spectral density would be estimated by the formula (cf. \citet{hormann2015dynamic})
\begin{equation}\label{eq:bartletts_estimate_full_observations}
\hat{\mathscr{F}}_\omega =\frac{1}{2\pi} \sum_{h=-\QBartlett}^{\QBartlett} W_h \hat{\mathscr{R}}_h e^{-\I h\omega}
\end{equation}
where $\hat{\mathscr{R}}_h$ are the standard empirical autocovariance operators.
We could use the formula \eqref{eq:bartletts_estimate_full_observations} and plug-in the smoothed autocovariance kernels obtained in Section \ref{subsec:nonparam_estimation} but instead, we opt to show how to directly construct a smoother-based estimator of the spectral density kernels.
Specifically, we estimate the spectral density kernel at frequency $\omega\in(-\pi,\pi)$ by the local-linear surface-smoother applied to the raw covariances multiplied by complex exponentials. The weights for the smoother are based both on the spatial distance from the raw covariances as well as the time lag.
Specifically, we estimate the spectral density kernel as
\begin{equation}
\label{eq:spectral_density_estimator_bartlett_smoother}
\hat{f}_\omega(x,y) =\frac{\QBartlett}{2\pi} \hat{d}_0 \in\mathbb{C}
\end{equation}
where $\hat{d}_0$ is obtained by minimizing the following weighted sum of squares
\begin{multline}
\label{eq:minimization-spectral-density}
(\hat{d}_0,\hat{d}_1,\hat{d}_2) = \argmin_{(d_0,d_1,d_2) \in \mathbb{C}^3}
\sum_{h=-\QBartlett}^{\QBartlett}
{\frac{1}{\mathcal{N}_h}}
\sum_{t=\max(1,1-h)}^{\min(T,T-h)}
\stackrel[j\neq k \text{ if } h=0]{}{\sum_{j=1}^{N_{t+h}} \sum_{k=1}^{N_t}}
\big|
G_{h,t}(x_{t+h,j}, x_{tk})
e^{-\I h\omega}
-\\- d_0-d_1(x_{t_{t+h,j}} - x) - d_2(x_{tk} -y)\big|^2
W_h
\frac{1}{B_R^2} K\left(\frac{x_{t_{t+h,j}} - x}{B_R}\right)K\left(\frac{x_{tk} -y}{B_R} \right)
\end{multline}
{where $\mathcal{N}_h = (T-|h|) (\bar{N})^2$ for $h\neq 0$, $\mathcal{N}_0 = T (\overline{N^2} - \bar{N})$, and where $\bar{N} = (1/n) \sum_{t=1}^T N_t$ and $\overline{N^2} = (1/n) \sum_{t=1}^T N_t^2$.
}

It turns out that the minimizer of this complex minimization problem can be expressed explicitly. Moreover, the minimizer depends only on a few quantities that are independent of $\omega$, and can be pre-calculated. The estimator can be thus constructed for a given $\omega$ by multiplying these quantities by complex exponentials and performing a handful of inexpensive arithmetic operations. Consequently, it is computationally feasible to evaluate the estimator \eqref{eq:spectral_density_estimator_bartlett_smoother} on a dense grid of frequencies. The explicit form is stated in Section \ref{subsec:proof_of_Thm_2}.

Denote the integral operator corresponding to $\hat{f}_\omega(\cdot,\cdot)$ as $\hat{\mathscr{F}}_\omega$. We can go back to the temporal domain by integrating the spectral density and reproduce the estimators of the
autocovariance kernels and operators by the formulae \eqref{eq:inversion_formula_operators_kernels}
\begin{equation}
\label{eq:tilde_R_tilde_mathscr_r}
\tilde R_h(\cdot,\cdot) = \int_{-\pi}^\pi \hat{f}_\omega(\cdot,\cdot)e^{\I h\omega}\D\omega,
\qquad
\tilde{\mathscr{R}}_h =\int_{-\pi}^\pi \hat{\mathcal{F}}_\omega e^{\I h\omega}\D\omega.
\end{equation}

The estimators of spectral density kernels $\hat{f}_\omega(\cdot,\cdot), \omega\in(-\pi,\pi),$ are achieved by kernel smoothing. Therefore, especially for smaller sample sizes, the operators $\hat{\mathscr{F}}_\omega, \omega\in(-\pi,\pi),$
might not be strictly non-negative, and may feature some tail negative eigenvalues of small modulus.
To ensure numerical stability of the method in the following section, it is recommended to truncate these negative eigenvalues of $\hat{\mathscr{F}}_\omega$ at each frequency $\omega\in(-\pi,\pi)$.

If dimensionality reduction is of interest, one can truncate the spectral density operators $\hat{\mathscr{F}}_\omega$ at each frequency $\omega\in(-\pi,\pi)$ to an appropriate rank. Such dimensionality reduction is based on the Cram{\'e}r-Karhunen-Lo{\`e}ve expansion and was proven optimal in preserving the functional time series dynamics by
\citet{panaretos2013cramer}, and independently by \citet{hormann2015dynamic}. Since dimension reduction is not necessary for our theory/methods in the next section, we do not pursue it further.

\subsection{Periodic Behaviour Identification}
\label{subsec:periodic behaviour identification}

As discussed at the beginning of Section \ref{subsec:Spectral Density Kernels Estimation}, the choice of Bartlett's span parameter $\QBartlett$ is related to the bandwidth for smoothing in the frequency domain. To achieve consistent spectral density estimation, the parameter $L$ needs to be kept quite small (cf. condition \ref{assumption:B.10} and Theorem \ref{thm:spectral_density}). However, for the purpose of exploratory data analysis, it is useful to explore the data for periodic behaviour in a similar way as a periodogram is used in the case of scalar time series.

When the periodicity examination is indeed of interest, we propose to evaluate the estimator \eqref{eq:spectral_density_estimator_bartlett_smoother} for a fairly large value of $L$. The selection of adequate value of $L$ is a question of computational power available because the computational time to evaluate \eqref{eq:spectral_density_estimator_bartlett_smoother} grows linearly in $L$. In the data analysis Section \ref{sec:data analysis} we work with $L=1000$ which is roughly half of the considered time series length.

Once the estimator \eqref{eq:spectral_density_estimator_bartlett_smoother} is evaluated for a given value of $L$ we propose to calculate the trace of the spectral density operator at frequency $\omega\in(0,\pi)$. Peaks in this plot indicate periodic behaviour of the functional time series. The existence of periodicity is not only a useful insight into the nature of the data but may us prompt into approaching the periodic behaviour in a different way, for example by modelling the periodicity in a deterministic way as we do it in the data analysis carried out in Section \ref{sec:data analysis}.

\subsection{Functional Data Recovery Framework and Confidence Bands}
\label{subsec:functional_data_recovery}

We now consider the problem of recovering the latent functional data $\{X_t(x):x\in[0,1]\}$ given the sparse noisy samples $\{Y_{tj}\}$, and provide corresponding confidence bands.

Consider the random element $\mathbb{X}_T = [X_1, \dots, X_T] \in \mathcal{H}^T$ composed of ``stacked'' functional data (formally, it is an element of the product Hilbert space $\mathcal{H}^T$). Note that
\begin{equation}
\label{eq:mathbb_X_mu}
\Ez{ \mathbb{X}_T } = \boldsymbolmu_T = [\mu,\dots,\mu] \in\mathcal{H}^T,
\end{equation}
\begin{equation}
\label{eq:mathbb_X_cov}
\var(\mathbb{X}_T) = \mathbbSigma_T =
\begin{bmatrix}
\mathscr{R}_0 & \mathscr{R}_1\transpose & \mathscr{R}_2\transpose & \dots & \mathscr{R}_{T-1}\transpose \\
\mathscr{R}_1 & \mathscr{R}_0 & \mathscr{R}_1\transpose & \dots & \mathscr{R}_{T-2}\transpose \\
\vdots & \vdots & \vdots & \ddots & \vdots \\
\mathscr{R}_{T-1} & \mathscr{R}_{T-2} & \mathscr{R}_{T-3} & \dots & \mathscr{R}_0 \\
\end{bmatrix}
\in L(\mathcal{H}^T).
\end{equation}
Now define the stacked observables as $\mathbb{Y}_T = (Y_{11},\dots,Y_{1N_1},\dots,Y_{t1},\dots,Y_{tN_t},\dots,Y_{TN_1},\dots,Y_{TN_T}) \in \mathbb{R}^{\mathcal{N}_1^T}$ where $\mathcal{N}_1^T = \sum_{t=1}^T N_t$ is the total number of observations up to time $T$.
By analogy to $\mathbb{Y}_T$, stack the measurement errors $\{\epsilon_{tj}\}$ and denote this vector  $\mathcal{E}_T\in\mathbb{R}^{\mathcal{N}_1^T}$. Note that $\var(\mathcal{E}_T) = \sigma^2 I_{\mathcal{N}_1^T}$.
Further define the evaluation operators
$H_t : \mathcal{H} \to \mathbb{R}^{N_t}, g\mapsto (g(x_{t1}),\dots,g(x_{tN_t})) $ for each $t=1,\dots,T$ and
the stacked censor operator $\mathbb{H}_T : \mathcal{H}^T \to \mathbb{R}^{\mathcal{N}_1^T}, [g_1,\dots,g_T]\mapsto [H_1 g_1, \dots, H_T g_T]$.
Finally define the projection operator $P_t : \mathcal{H}^T \to\mathcal{H}, [g_1,\dots,g_T] \mapsto g_t$ for $t=1,\dots,T$.

\noindent In this notation we can rewrite the observation scheme \eqref{eq:observation_scheme} as
$$ \mathbb{Y}_T = \mathbb{H}_T\mathbb{X}_T + \mathcal{E}_T.$$


The best linear unbiased predictor 
of $\mathbb{X}_T$ given $\mathbb{Y}_T$, which we denote by $\widehat{\mathbb{X}}_T(\mathbb{Y}_T)$, is given by the formula
\begin{equation}\label{eq:BLUP_X_theoretical}
\widehat{\mathbb{X}}_T(\mathbb{Y}_T) = \boldsymbolmu_T + \mathbbSigma_T\mathbb{H}_T^* (\mathbb{H}_T\mathbbSigma_T\mathbb{H}_T^* + \sigma^2 I_{\mathcal{N}_1^T})^{-1}(\mathbb{Y}_T-\mathbb{H}_T\boldsymbolmu_T) \quad\in \mathcal{H}^T
\end{equation}
where $*$ denotes the adjoint operator. The term $\mathbb{H}_T\mathbbSigma_T\mathbb{H}_T^*$ is in fact a positive semi-definite matrix. Owing to the fact that $\sigma^2>0$, the matrix $\mathbb{H}_T\mathbbSigma_T\mathbb{H}_T^* + \sigma^2 I_{\mathcal{N}_1^T}$ is always invertible.

Now fix $s\in\{1,\dots,T\}$. The best linear unbiased predictor of the functional datum $X_s$, which we denote by $\widehat{X}_s(\mathbb{Y}_T)$, is given by
\begin{equation}\label{eq:BLUP_Xs_theoretical}
\widehat{X}_s(\mathbb{Y}_T) =
P_s \widehat{\mathbb{X}}_T(\mathbb{Y}_T)
\quad\in\mathcal{H}.
\end{equation}
Hence the recovery of $X_s$ by the formula \eqref{eq:BLUP_Xs_theoretical} uses the observed data across all $t=1,\dots,T$, borrowing strength across all the observations.

In practice, however, we need to replace the unknown parameters involved in the construction of the predictor by their estimates. Define $\hat{\boldsymbolmu}_T$ and $\hat{\mathbbSigma}_T$ by substituting $\hat\mu$ and $\tilde{\mathscr{R}}_h$ for their theoretical counterparts in formulae \eqref{eq:mathbb_X_mu} and \eqref{eq:mathbb_X_cov} respectively.
Now replace $\boldsymbolmu_T$, $\mathbbSigma_T$, $\sigma^2$ by $\hat{\boldsymbolmu}_T$, $\hat{\mathbbSigma}_T$ and $\hat{\sigma}^2$, respectively, in formulae \eqref{eq:BLUP_X_theoretical} and \eqref{eq:BLUP_Xs_theoretical}. The resulting predictors are denoted by
\begin{equation}\label{eq:BLUP_X_empirical}
\tilde{\mathbb{X}}_T(\mathbb{Y}_T) = \hat{\boldsymbolmu}_T + \hat{\mathbbSigma}_T \mathbb{H}_T^* (\mathbb{H}_T\hat{\mathbbSigma}_T\mathbb{H}_T^* + \sigma^2 I_{\mathcal{N}_1^T})^{-1}(\mathbb{Y}_T-\mathbb{H}_T\hat{\boldsymbolmu}_T)
\end{equation}
and
\begin{equation}\label{eq:BLUP_Xs_empirical}
\tilde{X}_s(\mathbb{Y}_T) = P_s \tilde{\mathbb{X}}_T(\mathbb{Y}_T).
\end{equation}


In order to construct confidence bands for the unobservable paths, we work under the Gaussian assumption:
\begin{enumerate}[label=(A{\arabic*})]
\item \label{assumption:A.1}
The functional time series $\{X_t\}_t$ as well as the measurement errors $\{\epsilon_{tj}\}_{tj}$ are Gaussian processes.
\end{enumerate}
Thanks to the Gaussian assumption \ref{assumption:A.1}, the predictors of $\mathbb{X}_T$ and $X_s$ given by formulae \eqref{eq:BLUP_X_theoretical} and \eqref{eq:BLUP_Xs_theoretical} are in fact given by conditional expectations and are the best predictors among all predictors.
Furthermore, we can calculate the exact conditional distribution of $\mathbb{X}_T$ given $\mathbb{Y}_T$ by the formula
\begin{equation}\label{eq:conditional_distribution_X}
\mathbb{X}_T | \mathbb{Y}_T \sim N_{\mathcal{H}^T}( \boldsymbolmu_{\mathbb{X}_T|\mathbb{Y}_T} , \mathbbSigma_{\mathbb{X}_T|\mathbb{Y}_T} )
\end{equation}
where
\begin{align}
\label{eq:conditional_distribution_X_mu}
\boldsymbolmu_{\mathbb{X}_T|\mathbb{Y}_T} &=
\boldsymbolmu_T + \mathbbSigma_T\mathbb{H}_T^* (\mathbb{H}_T\mathbbSigma_T\mathbb{H}_T^* + \sigma^2 I_{\mathcal{N}_1^T})^{-1}(\mathbb{Y}_T-\mathbb{H}_T\boldsymbolmu_T)
,\\
\label{eq:conditional_distribution_X_Sigma}
\mathbbSigma_{\mathbb{X}_T|\mathbb{Y}_T} &=
\mathbbSigma_T-\mathbbSigma_T\mathbb{H}_T^* (\mathbb{H}_T\mathbbSigma_T\mathbb{H}_T^* + \sigma^2 I_{\mathcal{N}_1^T})^{-1} \mathbb{H}_T \mathbbSigma_T.
\end{align}
From \eqref{eq:conditional_distribution_X} we can access the conditional distribution of $X_s$ for fixed $s=1,\dots,T$, by writing
\begin{equation}\label{eq:conditional_distribution_Xs}
X_s | \mathbb{Y}_T \sim N_{\mathcal{H}^T}( \boldsymbolmu_{X_s|\mathbb{Y}_T} , \mathbbSigma_{X_s|\mathbb{Y}_T} )
\end{equation}
where 
\begin{equation}
\label{eq:conditional_distribution_Xs_mu_Sigma}
\boldsymbolmu_{X_s|\mathbb{Y}_T} =
P_s \boldsymbolmu_{\mathbb{X}_T|\mathbb{Y}_T},
\qquad
\mathbbSigma_{X_s|\mathbb{Y}_T} =
P_s \mathbbSigma_{\mathbb{X}_T|\mathbb{Y}_T} P_s^*.
\end{equation}

To construct a band for $X_s$ with pointwise coverge, we construct a confidence interval for $X_s(x)$ at each $x\in[0,1]$ --- as we will see, the endpoints of these intervals are continuous functions of $x$, and so automatically define a confidence band. In practice, one constructs bands for a dense collection of locations in $[0,1]$ and interpolates. Given the conditional distribution $X_s(x) | \mathbb{Y}_T  \sim N( \boldsymbolmu_{X_s|\mathbb{Y}_T}(x) , \mathbbSigma_{X_s|\mathbb{Y}_T}(x,x) )$, the $(1-\alpha)$-confidence interval for fixed $x\in [0,1]$ is constructed as
\begin{equation}\label{eq:confidence_band_Xs_population}
\boldsymbolmu_{X_s|\mathbb{Y}_T}(x) \pm \Phi^{-1}(1-\alpha/2) \sqrt{\mathbbSigma_{X_s|\mathbb{Y}_T}(x,x)}
\end{equation}
where $\Phi^{-1}(1-\alpha/2)$ is the $(1-\alpha/2)$-quantile of the standard normal distribution.

In practice, when we do not know the true dynamics of the functional time series, we have to use the estimates of $\mu(\cdot)$ and $R_h(\cdot,\cdot)$. We define $\hat{\boldsymbolmu}_{\mathbb{X}_T|\mathbb{Y}_T}, \hat{\mathbbSigma}_{\mathbb{X}_T|\mathbb{Y}_T}, \hat{\boldsymbolmu}_{X_s|\mathbb{Y}_T}$ and $\hat{\mathbbSigma}_{X_s|\mathbb{Y}_T}$ by replacing $\boldsymbolmu_T$ and $\mathbbSigma_T$ with $\hat{\boldsymbolmu}_T$ and $\hat{\mathbbSigma}_T$ in the formulae
\eqref{eq:conditional_distribution_X_mu},
\eqref{eq:conditional_distribution_X_Sigma},
\eqref{eq:conditional_distribution_Xs_mu_Sigma} respectively.
Therefore the asymptotic confidence interval for $X_s(x)$ is obtain by rewriting \eqref{eq:confidence_band_Xs_population} using the empirical counterparts
\begin{equation}\label{eq:confidence_band_Xs_empirical}
\hat{\boldsymbolmu}_{X_s|\mathbb{Y}_T}(x) \pm \Phi^{-1}(1-\alpha/2) \sqrt{\hat{\mathbbSigma}_{X_s|\mathbb{Y}_T}(x,x)}.
\end{equation}


For the construction of the simultaneous band we use the method introduced by \citet{degras2011simultaneous}. Fix $s=1,\dots,T$.  In the previous section we derived the conditional distribution of $X_s$ given $\mathbb{Y}_T$ in formula \eqref{eq:conditional_distribution_Xs}.
Define the conditional correlation kernel
\begin{equation}\label{eq:simul_bands_correlation_kernel}
\rho_{X_s|\mathbb{Y}_T} (x,y)
=
\begin{cases} 
      \frac{\mathbbSigma_{X_s|\mathbb{Y}_T} (x,y)}{\sqrt{ \mathbbSigma_{X_s|\mathbb{Y}_T} (x,x)\mathbbSigma_{X_s|\mathbb{Y}_T} (y,y) }},
      & \mathbbSigma_{X_s|\mathbb{Y}_T}(x,x)>0,\quad \mathbbSigma_{X_s|\mathbb{Y}_T} (y,y)>0, \\
      0 ,& \text{otherwise}.
\end{cases}
\end{equation}

Then, the collection of intervals 
\begin{equation}\label{eq:simul_confidence_band_population}
\left\{\boldsymbolmu_{X_s|\mathbb{Y}_T}(x) \pm z_{\alpha,\rho} \sqrt{\mathbbSigma_{\mathbb{X}_T|\mathbb{Y}_T} (x,x)}: x\in[0,1]\right\},
\end{equation}
forms a (continuous) confidence band with simultaneous coverage probability $(1-\alpha)$ over $x\in[0,1]$. Here $z_{\alpha,\rho}$ is the $(1-\alpha)$-quantile of the law of $\sup_{x\in[0,1]}|Z(x)|$ where $\{Z(x), x\in[0,1]\}$ is a zero mean Gaussian process with covariance kernel
$\rho_{\mathbb{X}_T|\mathbb{Y}_T}$. The definition of a quantile specifically requires that $P( \sup_{x\in[0,1]} |Z(x)| \leq z_{\alpha,\rho}) = 1-\alpha$.
\citet{degras2011simultaneous} explains how to calculate this quantile numerically.

In practice, we replace the population level quantities in \eqref{eq:simul_confidence_band_population} by their estimated counterparts and define the asymptotic simultaneous confidence band as
\begin{equation}\label{eq:simul_confidence_band_empiric}
\left\{\hat{\boldsymbolmu}_{X_s|\mathbb{Y}_T}(x) \pm z_{\alpha,\hat{\rho}} \sqrt{\hat{\mathbbSigma}_{\mathbb{X}_T|\mathbb{Y}_T} (x,x)}: x\in[0,1]\right\}, 
\end{equation}
where $\hat{\boldsymbolmu}_{X_s|\mathbb{Y}_T}(x)$ and $\hat{\mathbbSigma}_{\mathbb{X}_T|\mathbb{Y}_T} (x,x)$ are as above and the quantile $z_{\alpha,\hat{\rho}}$ is calculated for the correlation structure $\hat{\rho}_{X_s|\mathbb{Y}_T} $ defined as the empirical counterpart to \eqref{eq:simul_bands_correlation_kernel}.

Note that $\Phi^{-1}(1-\alpha/2)<z_{\alpha,\rho}$ for any correlation kernel $\rho$  \citep{degras2011simultaneous}. Therefore, as expected, the pointwise confidence bands are enveloped by the simultaneous band. Once again, in practice, one evaluates the band limits defining \eqref{eq:simul_confidence_band_empiric} on a dense grid of $[0,1]$ and interpolates.

The functional recovery framework proposed in this section can be easily extended into forecasting, i.e. prediction of functional curves beyond the time horizon $T$. For the details, see Section \ref{subsec:forecasting}.


\section{Asymptotic Results} \label{sec:asymptotics}

\subsection{On the Choice of Mixing Conditions}

In Sections 
\ref{subsec:asymptotic_results_cumulant} and
\ref{subsec:asymptotic_results_strong}
we develop asymptotic theory for our methodology under two different sets of assumptions.

Firstly, in Section \ref{subsec:asymptotic_results_cumulant} we prove the asymptotic behaviour of the estimators under Brillinger-type cumulant mixing conditions. The corresponding Theorems \ref{thm:mean_and_autocov_function} and \ref{thm:spectral_density} are in a sense canonical, in that their proofs rely on generalisations of the techniques by \citet{yao2005functional}. Nevertheless, the yielded convergence rates for one dimensional smoothing and surface smoothing are
$\Op( 1/(\sqrt{T}B_\mu) )$ and $\Op( 1/(\sqrt{T}B_R^2) )$, respectively, which are not optimal.

The optimal rates for one dimensional smoothing and surface smoothing are known to be $\Op(\sqrt{\log T/(T B_\mu)})$ and $\Op(\sqrt{\log T/(T B_R^2)})$ respectively. Recovering such rates using local-regression methods for time-series data relies heavily on the employed measure of weak dependence, namely strong mixing conditions, \citep{hansen2008uniform,liebscher1996strong,masry1996multivariate}, \citep[Thm 6.5]{fan2008nonlinear},
geometric strong mixing conditions \citep[Thm. 2.2 and Cor. 2.2]{bosq2012nonparametric}, and
$\rho$-mixing conditions \citep{peligrad1992properties}. In Section \ref{subsec:asymptotic_results_strong} and Theorems \ref{thm:strong_mixing_mu_R}, \ref{thm:strong_mixing_F} we make use of techniques developed by \citet{hansen2008uniform} to obtain the optimal rates under strong mixing. 

Since these two sets of rates rest on qualitatively different conditions, we have chosen to include both results into the article.

\subsection{Asymptotic Results under Cumulant Mixing Conditions}
\label{subsec:asymptotic_results_cumulant}

In order to establish the consistency and the convergence rate of the estimators introduced in Section \ref{sec:model}, we will make use of the following further assumptions on the model \eqref{eq:observation_scheme}:

\begin{enumerate}[label=(B{\arabic*})]
\item \label{assumption:B.1}
The number of measurements $N_t$ in time $t$ are independent random variables with law $N_t \sim N$ where $N\geq 0$, $\Ez{ N } < \infty$ and $\Prob(N>1)>0$.
\item \label{assumption:B.2}
The measurement locations $x_{tj}, j=1,\dots,N_t, t=1,\dots,T$ are independent random variables generated from the density $g(\cdot)$ and are independent of the number of measurements $(N_t)_{t=1,\dots,T}$. The density $g(\cdot)$ is assumed to be twice continuously differentiable and strictly positive on $[0,1]$.
\end{enumerate}
We allow the event $\{N_t=0\}$ to potentially have positive probability. This corresponds to the situation where no measurements are available at time $t$, for example when we additionally have missing data at random.
We also need to impose smoothness conditions on the unknown functional parameters
\begin{enumerate}[resume, label=(B{\arabic*})]
\item \label{assumption:B.3}
The common mean function, $\mu(\cdot)$, is twice continuously differentiable on $[0,1]$.
\item \label{assumption:B.4}
The autocovariance kernels, $R_h(\cdot,\cdot)$, are twice continuously differentiable on $[0,1]^2$ for each $h\in\mathbb{Z}$. Moreover,
$$\sup_{x,y\in[0,1]} \left| \frac{\partial^2}{\partial y^{\alpha_1} \partial x^{\alpha_2}} R_h(y,x) \right|$$
is uniformly bounded in $h$ for all combinations of $\alpha_1,\alpha_2 \in\mathbb{N}_0$ where $\alpha_1+\alpha_2=2$.
\end{enumerate}
To prove the consistency of autocovariance kernels estimators $\hat{R}_h(\cdot,\cdot)$ we need to further assume some mixing conditions in the time domain. The smoothing estimators are essentially moment-based, therefore it is natural to consider cumulant-type summability conditions. For the introduction to the cumulants of real random variables see \citet{rosenblatt2012stationary}
and for the definitions and properties of the cumulant kernels and cumulant operators see \citet{panaretos2013fourier}.
\begin{enumerate}[resume, label=(B{\arabic*})]
\item \label{assumption:B.5}
Denote the 4-th order cumulant kernel of $\{X_t\}$ as $\cum(X_{t_1}, X_{t_2}, X_{t_3}, X_{t_4})(\cdot,\cdot,\cdot,\cdot)$. Assume the summability in the supremum norm
$$
\sum_{h_1,h_2,h_3 = -\infty}^\infty \sup_{x_1,x_2,x_3,x_4 \in [0,1]} \left|
\cum(X_{h_1}, X_{h_2}, X_{h_3}, X_0)(x_1,x_2,x_3,x_4)
\right| < \infty.
$$
\end{enumerate}
We will also need to strengthen the summability assumption \eqref{eq:assumption_stationarity_summability_of_autocovariance}.
\begin{enumerate}[resume, label=(B{\arabic*})]
\item \label{assumption:B.6}
Assume
$$ \sum_{h=-\infty}^\infty |h| \sup_{x,y\in[0,1]} \left| R_h(x,y) \right| <\infty.$$
\end{enumerate}
The last two conditions correspond to conditions \textbf{C$'$(1,2)} and \textbf{C$'$(0,4)} in  \citet{panaretos2013fourier}, respectively. 
Finally, we impose the following assumptions on the decay rate of the bandwidth parameters and the growing rate of the Bartlett's span parameter $\QBartlett$
\begin{enumerate}[resume, label=(B{\arabic*})]
\item \label{assumption:B.7}
$B_\mu \to 0$, $T B_\mu^4 \to \infty$,
\item \label{assumption:B.8}
$B_R \to 0$, $T B_R^6\to \infty$,
\item \label{assumption:B.9}
$B_V \to 0$, $T B_V^4 \to \infty$,
\item \label{assumption:B.10}
$\QBartlett \to \infty$, $\QBartlett = o(\sqrt{T} B_R^2), L = o(B_R^{-2})$.
\end{enumerate}


\noindent We may now state our asymptotic results on uniform consistency and convergence rates:

\begin{theorem}\label{thm:mean_and_autocov_function}
Under the assumptions \ref{assumption:B.1} --- \ref{assumption:B.3} and \ref{assumption:B.7}:
\begin{equation}\label{eq:thm:mean_function:mu}
\sup_{x\in[0,1]} |\hat\mu(x) - \mu(x)| = \Op\left( \frac{1}{\sqrt{T} B_\mu} + B_\mu^2 \right).
\end{equation}
Under the assumptions \ref{assumption:B.1} --- \ref{assumption:B.5} and \ref{assumption:B.7} --- \ref{assumption:B.9}, for for fixed lag $h\in\mathbb{Z}$:
\begin{equation}\label{eq:thm:autocov_function:R}
\sup_{x,y\in[0,1]} | \hat R_h(x,y) - R_h(x,y) | = \Op\left( \frac{1}{\sqrt{T} B^2_R} +B^2_R\right),
\end{equation}
\begin{equation}\label{eq:thm:autocov_function:sigma}
\hat{\sigma}^2 = \sigma^2 + \Op\left\{ \frac{1}{\sqrt{T}} \left( \frac{1}{B_V} + \frac{1}{B_R^2}\right)+B^2_\mu +B^2_R \right\}.
\end{equation}
\end{theorem}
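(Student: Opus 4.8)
The plan is to treat all three estimators as weighted local polynomial smoothers and to push the classical bias/variance decomposition of \citet{yao2005functional} through the dependent-data setting, the temporal dependence being controlled \emph{solely} through the summability conditions \eqref{eq:assumption_stationarity_summability_of_autocovariance} and \ref{assumption:B.5}--\ref{assumption:B.6}. For the mean I would first write $\hat\mu(x)-\mu(x)$ in the usual normal-equation form, separating the weighted design moments $S_{n,\ell}(x)=\sum_{t,j}K((x_{tj}-x)/B_\mu)(x_{tj}-x)^\ell$ from the response. A useful structural remark is that, by \ref{assumption:B.1}--\ref{assumption:B.2}, the design $(N_t,\{x_{tj}\})$ is independent across $t$ and independent of the process, so the normalised moments $(TB_\mu)^{-1}S_{n,\ell}$ are genuine averages of independent summands: a law of large numbers gives their $g$-weighted limits uniformly in $x$, a denominator bounded away from zero, and a clean isolation of the deterministic geometry from the dependence. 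A second-order Taylor expansion of $\mu$ combined with the exactness of local-linear fitting on affine functions then yields a bias of order $B_\mu^2$ via \ref{assumption:B.3}.

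The genuinely new work lies in the stochastic term $Z(x)\approx (TB_\mu)^{-1}\sum_{t,j}K((x_{tj}-x)/B_\mu)U_{tj}$, with $U_{tj}=X_t(x_{tj})-\mu(x_{tj})+\epsilon_{tj}$. Here I would compute $\E[Z(x)^2]$ directly: the diagonal ($t=t'$, $j=k$) contributions are $O((TB_\mu)^{-1})$ and dominate, while the cross-curve contributions, bounded by writing $\cov(U_{tj},U_{t'k})=R_{t-t'}(x_{tj},x_{t'k})$ and invoking $\sum_h\|R_h\|_\infty<\infty$, remain of the smaller order $O(T^{-1})$ rather than accumulating over the $O(T^2)$ pairs; hence $\E[Z(x)^2]$ stays at the diagonal order $O((TB_\mu)^{-1})$. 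The identical computation for the differentiated smoother costs one factor $B_\mu^{-1}$, giving $\E[Z'(x)^2]=O((TB_\mu^3)^{-1})$. To pass from these pointwise bounds to a uniform one \emph{without} the exponential inequalities reserved for the strong-mixing theorems, I would use the elementary inequality $\sup_x Z(x)^2\le Z(x_0)^2+2(\int Z^2)^{1/2}(\int (Z')^2)^{1/2}$, whose expectation is of order $(TB_\mu)^{-1/2}(TB_\mu^3)^{-1/2}=(TB_\mu^2)^{-1}$. This reproduces exactly the suboptimal rate $1/(\sqrt{T}B_\mu)$ and pinpoints the loss of a $B_\mu^{-1/2}$ factor relative to the optimal rate as the price of this crude uniformisation.

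For the lag-$h$ kernel I would rerun this scheme for the two-dimensional smoother \eqref{eq:local_LS_cov_lag_nu}, with three changes. First, the raw covariances \eqref{eq:raw_autocovariances} carry the plug-in error $\hat\mu-\mu$; expanding $(Y_{t+h,j}-\hat\mu)(Y_{tk}-\hat\mu)$ around the true $\mu$ produces correction terms uniformly of the mean's order by \eqref{eq:thm:mean_function:mu}, which are dominated by the target rate under the bandwidth relations \ref{assumption:B.7}--\ref{assumption:B.9}. Second, the response is now a product, so the second-moment bookkeeping involves fourth-order joint moments: $\cov(G_{h,t},G_{h,t'})$ expands into products of autocovariances plus a genuine fourth cumulant, and it is precisely the summability \ref{assumption:B.5} (together with \ref{assumption:B.6}) that keeps the aggregated variance at the diagonal order $O((TB_R^2)^{-1})$. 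Third, the uniformisation is the two-dimensional analogue of the Cauchy--Schwarz/interpolation step, balancing $\iint Z^2=O((TB_R^2)^{-1})$ against the mixed derivative $\iint(\partial_x\partial_y Z)^2=O((TB_R^6)^{-1})$ to obtain $\sup|Z|=\Op(1/(\sqrt{T}B_R^2))$; the $B_R^2$ bias follows from the bivariate Taylor expansion using \ref{assumption:B.4}.

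Finally, for $\hat\sigma^2$ I would treat $\hat V$ as a one-dimensional diagonal smoother (rate $1/(\sqrt{T}B_V)+B_V^2$, by the argument of Part 1) and $\bar R_0$ as the perpendicular local-quadratic smoother (rate $1/(\sqrt{T}B_R^2)+B_R^2$), then bound $|\hat\sigma^2-\sigma^2|\le \sup_x|\hat V(x)-V(x)|+\sup_x|\bar R_0(x)-R_0(x,x)|$ using $V(x)=R_0(x,x)+\sigma^2$, so that the common $R_0(x,x)$ contribution cancels and only $\sigma^2$ plus the advertised stochastic and bias terms survive. The principal obstacle throughout is the dependence-aware second-moment analysis --- in particular showing that, thanks to \ref{assumption:B.5}--\ref{assumption:B.6}, the cross-time terms stay subdominant instead of accumulating over $O(T)$ or $O(T^2)$ index pairs; once the variances and their differentiated versions are established, the crude Sobolev/Cauchy--Schwarz uniformisation mechanically delivers the stated (suboptimal) powers of the bandwidths.
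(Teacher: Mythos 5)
Your proposal is correct and reaches the stated rates, but it replaces the paper's key uniformisation device with a different one, so the two proofs are worth contrasting. The skeleton is shared: both write the smoothers in explicit normal-equation form, separate design moments from response moments, obtain the $B^2$ bias by a second-order Taylor expansion, control the stochastic term of the mean smoother through $\sum_h\|R_h\|_\infty<\infty$ (cross-curve covariances contribute only $O(T^{-1})$ against the diagonal $O((TB_\mu)^{-1})$), control the lag-$h$ smoother by expanding $\cov(G_{h,t},G_{h,t'})$ into a fourth cumulant plus products of autocovariances and invoking \ref{assumption:B.5}, show the plug-in error $\hat\mu-\mu$ in the raw covariances is negligible, and assemble $\hat\sigma^2$ from $\hat V$ and $\bar R_0$. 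Where you diverge is the passage from pointwise to uniform bounds: the paper represents the kernel weights as inverse Fourier transforms, so that $S_r(x)=\tfrac{1}{2\pi}\int\phi_r(v)e^{-\I xv}\zeta_r(vB_\mu)\D v$ with an $x$-free empirical process $\phi_r(v)$ of variance $O(1/T)$, and the lost bandwidth factor enters through $\int|\zeta_r(vB_\mu)|\D v=O(B_\mu^{-1})$ (respectively $O(B_R^{-2})$ for the surface smoother); you instead use the Agmon/Sobolev interpolation $\sup_xZ^2\le Z(x_0)^2+2\|Z\|_2\|Z'\|_2$ and its mixed-derivative analogue on $[0,1]^2$, where the same factor is lost because each differentiation of the kernel costs $B^{-1}$ in the variance. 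Both devices are equally crude and yield the identical suboptimal rates; the Fourier route only needs integrability of the transformed kernel, whereas your route needs $K$ and the product kernel to be absolutely continuous with square-integrable (weak) derivatives -- satisfied by the Epanechnikov kernel, but worth stating, since the quoted interpolation inequality requires it, and the two-dimensional version genuinely needs the \emph{multiplicative} form $\|Z\|_2\|\partial_x\partial_yZ\|_2$ rather than the additive embedding, which would only give $1/(\sqrt{T}B_R^3)$. One small correction: you invoke \ref{assumption:B.6} in the fourth-moment bookkeeping for $\hat R_h$, but Theorem \ref{thm:mean_and_autocov_function} does not assume it and does not need it -- the standing summability \eqref{eq:assumption_stationarity_summability_of_autocovariance} already makes the products of autocovariances summable over the time lag, so you should drop that hypothesis from your argument.
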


\begin{theorem}\label{thm:spectral_density}
Under the assumptions \ref{assumption:B.1} --- \ref{assumption:B.5} and \ref{assumption:B.7} --- \ref{assumption:B.10}, the spectral density is estimated consistently:
\begin{equation}\label{eq:thm_spectral_density_rate1}
\sup_{\omega\in[-\pi,\pi]} \sup_{x,y\in[0,1]} \left| \hat f_\omega(x,y)-f_\omega(x,y) \right| = \op(1).
\end{equation}
If we further assume condition \ref{assumption:B.6}, we can additionally obtain the convergence rate:
$$ \sup_{\omega\in[-\pi,\pi]} \sup_{x,y\in[0,1]} \left| \hat f_\omega(x,y)-f_\omega(x,y) \right| = \Op\left(\QBartlett \frac{1}{\sqrt{T}}\frac{1}{B_R^2} + \QBartlett B^2_R\right) .$$
\end{theorem}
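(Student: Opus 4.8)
The plan is to exploit the fact that the surface smoother \eqref{eq:minimization-spectral-density} is, up to a local-linear bias correction, a Bartlett-weighted superposition of the lag-$h$ covariance smoothers already analysed in Theorem \ref{thm:mean_and_autocov_function}. First I would record the explicit minimiser of the complex weighted least-squares problem \eqref{eq:minimization-spectral-density}: solving the $3\times 3$ normal equations writes $\hat d_0$ as a ratio whose denominator is built from the design moments $S_{ab}(x,y)=\sum_h \frac{W_h}{\mathcal N_h}\sum_{t,j,k}\frac{1}{B_R^2}K(\tfrac{x_{t+h,j}-x}{B_R})K(\tfrac{x_{tk}-y}{B_R})(x_{t+h,j}-x)^a(x_{tk}-y)^b$, which do \emph{not} depend on $\omega$, and whose numerator is the analogous weighted sum of the responses $G_{h,t}e^{-\I h\omega}$ from \eqref{eq:raw_autocovariances}. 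Because the locations are i.i.d.\ from $g$ (assumptions \ref{assumption:B.1}--\ref{assumption:B.2}), a law-of-large-numbers argument shows that, uniformly in $(x,y)$, each normalised design moment concentrates around $g(x)g(y)$ times a fixed kernel moment times $\sum_{|h|<L}W_h=L$; the odd moments vanish to leading order. This keeps the denominator bounded away from zero and reduces $\hat f_\omega=\frac{L}{2\pi}\hat d_0$ to $\frac{1}{2\pi}\sum_{|h|<L}W_h\,\check R_h(x,y)\,e^{-\I h\omega}$ plus a negligible remainder, where $\check R_h$ is a per-lag local-linear smoother of the raw covariances.

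With this reduction I would split the error into three pieces: (i) the Bartlett truncation bias relative to \eqref{eq:spectral_density_kernel_operator}, $f_\omega-\frac{1}{2\pi}\sum_{|h|<L}W_h R_h e^{-\I h\omega}$; (ii) the smoothing bias $\frac{1}{2\pi}\sum_{|h|<L}W_h(\E[\check R_h]-R_h)e^{-\I h\omega}$; and (iii) the stochastic fluctuation $\frac{1}{2\pi}\sum_{|h|<L}W_h(\check R_h-\E[\check R_h])e^{-\I h\omega}$. For (i), writing the quantity as $\frac{1}{2\pi}\big(\sum_{|h|\ge L}R_h e^{-\I h\omega}+\tfrac1L\sum_{|h|<L}|h|R_h e^{-\I h\omega}\big)$, the summability in \eqref{eq:assumption_stationarity_summability_of_autocovariance} gives $\op(1)$ uniformly in $\omega,x,y$ (tail summability plus a Ces\`aro argument), which already yields the consistency claim \eqref{eq:thm_spectral_density_rate1}; under the stronger \ref{assumption:B.6} the same expression is $O(1/L)$. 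For (ii), a Taylor expansion of $R_h$ about $(x,y)$ combined with the local-linear normal equations annihilates the first-order term, and the uniform-in-$h$ second-derivative bound in \ref{assumption:B.4} gives a per-lag bias $O(B_R^2)$; summing against $W_h$ (with $\sum_{|h|<L}W_h=L$) produces the $O(L B_R^2)$ contribution.

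The substantive work is in (iii). Since the bound $\sup_\omega|\frac{1}{2\pi}\sum_{|h|<L}W_h(\check R_h-\E\check R_h)e^{-\I h\omega}|\le\frac{1}{2\pi}\sum_{|h|<L}W_h\sup_{x,y}|\check R_h-\E\check R_h|$ is already free of $\omega$, uniformity over $\omega$ is automatic once the right-hand side is controlled, and the only remaining uniformity is over $(x,y)$. I would bound $\E\sup_{x,y}|\check R_h(x,y)-\E\check R_h(x,y)|$ uniformly in $h$ by $O(1/(\sqrt T B_R^2))$: expanding the variance of the kernel-weighted sum of products $G_{h,t}$ over the time index, the cross-covariances decompose via the identity $\cov(AB,CD)=\cov(A,C)\cov(B,D)+\cov(A,D)\cov(B,C)+\cum(A,B,C,D)$ into products of autocovariance kernels and fourth-order cumulant kernels, and assumption \ref{assumption:B.5} together with \eqref{eq:assumption_stationarity_summability_of_autocovariance} forces the double sum over time to be $O(T)$ rather than $O(T^2)$, so the variance is $O(1/(T B_R^4))$, the $B_R^{-4}$ coming from the kernel localisation. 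Summing the per-lag bound against $W_h$ gives the stochastic rate $\Op(L/(\sqrt T B_R^2))$, and \ref{assumption:B.10} ($L=o(\sqrt T B_R^2)$, $L=o(B_R^{-2})$) makes all three pieces vanish. Two further technical points I would address: the raw covariances are formed with $\hat\mu$ rather than $\mu$, but the induced error is controlled by the mean rate of Theorem \ref{thm:mean_and_autocov_function} and is of smaller order; and passing from a second-moment bound to the uniform-in-$(x,y)$ supremum requires an equicontinuity/covering argument for the smoother, exactly as in the proof underlying Theorem \ref{thm:mean_and_autocov_function}. The main obstacle is precisely obtaining the variance bound \emph{uniformly in the lag $h$}, so that summing over the growing number $2L-1$ of lags costs only the factor $L$ through $\sum W_h$ with no spurious $\log$ or extra powers; this is where the cumulant summability \ref{assumption:B.5} is indispensable, and it is also what accounts for the suboptimality of the rate relative to the strong-mixing analysis.
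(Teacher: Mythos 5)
Your proposal follows essentially the same route as the paper's proof: you write out the explicit weighted least-squares minimiser as a ratio of $\omega$-free design moments and an $\omega$-dependent numerator, show the design moments concentrate at $g(x)g(y)$ times kernel moments (the paper's Lemma \ref{lemma:asymptotics_of_S_for_F}), and decompose the numerator error into Bartlett truncation bias (handled by Kronecker/Ces\`aro under \eqref{eq:assumption_stationarity_summability_of_autocovariance}, and $O(1/\QBartlett)$ under \ref{assumption:B.6}), smoothing bias $O(\QBartlett B_R^2)$, and a stochastic term whose per-lag variance is bounded uniformly in $h$ via the product-of-autocovariances-plus-fourth-cumulant identity under \ref{assumption:B.5} (the paper's Lemma \ref{lemma:asymptotics_of_Q_for_F}, which reuses the argument of Lemma \ref{lemma:asymptotics_of_Q_for_R}). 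The identification of the uniform-in-$h$ variance bound as the crux, the observation that uniformity in $\omega$ is free after taking absolute values lag by lag, and the deferral of the $\hat\mu$-versus-$\mu$ and uniform-in-$(x,y)$ issues to the Theorem \ref{thm:mean_and_autocov_function} machinery all match the paper.
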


As a consequence of Theorem \ref{thm:spectral_density} we obtain the consistency and the convergence rate of the entire space-time covariance structure \eqref{eq:tilde_R_tilde_mathscr_r}, i.e. rates uniform in both time index and spatial argument:
\begin{corollary}\label{corollary:sup_consistency}
Under the assumptions \ref{assumption:B.1} --- \ref{assumption:B.5} and \ref{assumption:B.7} --- \ref{assumption:B.10}:
\begin{equation}\label{eq:corollary:sup_consistency:statement}
\sup_{h\in\mathbb{Z}} \sup_{x,y\in[0,1]}
| \tilde R_h(x,y) - R_h(x,y) | = \op(1)
\end{equation}
and assuming further \ref{assumption:B.6}:
\begin{equation}\label{eq:corollary:sup_consistency:statement2}
\sup_{h\in\mathbb{Z}} \sup_{x,y\in[0,1]}
| \tilde R_h(x,y) - R_h(x,y) | = \Op\left(\QBartlett \frac{1}{\sqrt{T}}\frac{1}{B_R^2} + \QBartlett B^2_R\right).
\end{equation}
\end{corollary}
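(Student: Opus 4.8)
The plan is to obtain this corollary as a direct consequence of Theorem \ref{thm:spectral_density}, exploiting the fact that $\tilde R_h$ is manufactured from $\hat f_\omega$ by exactly the same inversion formula that produces $R_h$ from $f_\omega$. Combining the definition \eqref{eq:tilde_R_tilde_mathscr_r} with the inversion formula \eqref{eq:inversion_formula_operators_kernels}, for every $h\in\mathbb{Z}$ and $x,y\in[0,1]$ one has
\begin{equation}
\tilde R_h(x,y) - R_h(x,y) = \int_{-\pi}^\pi \left( \hat f_\omega(x,y) - f_\omega(x,y) \right) e^{\I h\omega} \D\omega .
\end{equation}
Both integrals are well defined: the second by the inversion formula recalled in the excerpt, and the first because $\hat f_\omega(x,y)$ is, for fixed $x,y$, a finite trigonometric sum in $\omega$ (by construction it is a weighted combination of the $2\QBartlett+1$ exponentials $e^{-\I h\omega}$, $|h|\le \QBartlett$, the weights being independent of $\omega$), hence bounded and continuous on $[-\pi,\pi]$.

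The crucial point is that the lag $h$ enters only through the unimodular factor $e^{\I h\omega}$. Applying the triangle inequality for integrals and using $|e^{\I h\omega}|=1$,
\begin{equation}
\left| \tilde R_h(x,y) - R_h(x,y) \right| \le \int_{-\pi}^\pi \left| \hat f_\omega(x,y) - f_\omega(x,y) \right| \D\omega \le 2\pi \sup_{\omega\in[-\pi,\pi]} \sup_{x,y\in[0,1]} \left| \hat f_\omega(x,y) - f_\omega(x,y) \right| .
\end{equation}
The right-hand side no longer depends on $h$, $x$, or $y$, so taking the supremum over $h\in\mathbb{Z}$ and over $x,y\in[0,1]$ on the left leaves it unchanged. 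This is precisely what delivers the uniformity over all lags at no additional cost: a control of the spectral density estimator that is uniform in the frequency $\omega$ is automatically transferred into a control of the recovered autocovariance kernels that is uniform in the lag $h$.

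It then only remains to substitute the conclusions of Theorem \ref{thm:spectral_density}. Under assumptions \ref{assumption:B.1}--\ref{assumption:B.5} and \ref{assumption:B.7}--\ref{assumption:B.10} the supremum on the right-hand side is $\op(1)$, which yields \eqref{eq:corollary:sup_consistency:statement}; under the additional assumption \ref{assumption:B.6} it is $\Op(\QBartlett T^{-1/2} B_R^{-2} + \QBartlett B_R^2)$, which yields \eqref{eq:corollary:sup_consistency:statement2} once the constant factor $2\pi$ is absorbed into the $\Op$. I do not anticipate any genuine obstacle in this argument: the entire difficulty of the problem --- in particular the truncation bias of the Bartlett window and the downweighting of high lags --- has already been absorbed into the frequency-domain rate of Theorem \ref{thm:spectral_density}, and the passage back to the time domain is a soft, norm-nonincreasing operation. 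The only technical care required is the elementary verification, noted above, that the inversion integral defining $\tilde R_h$ converges, which is immediate from the finite-sum structure of $\hat f_\omega$.
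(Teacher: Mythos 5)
Your proposal is correct and is essentially identical to the paper's own argument: both express $\tilde R_h - R_h$ as the inverse Fourier integral of $\hat f_\omega - f_\omega$, bound it by $2\pi$ times the supremum of the spectral-density error uniformly in $\omega$ and $x,y$ (noting the lag enters only through the unimodular factor $e^{\I h\omega}$), and then invoke Theorem \ref{thm:spectral_density}. No issues.
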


\subsection{Asymptotic Results Under Strong Mixing Conditions}
\label{subsec:asymptotic_results_strong}

We begin by listing the assumptions leading to the optimal convergence rates. Besides imposing the key assumption of the strong mixing we need to strengthen some of the other assumptions as well. We require some additional regularity conditions on the smoothing kernel $K(\cdot)$ which until now was assumed only to be a bounded probability density function. The condition is formulated for a generic $k(\cdot)$ multivariate kernel because we will require more than just the smoothing kernel $K(\cdot)$ to satisfy this condition.
\begin{enumerate}[label=(C{\arabic*})]
\item \label{assumption:C}
The function $k: \mathbb{R}^d \to \mathbb{R}$ is bounded and integrable
$$ |k(u)| \leq \bar{k} < \infty,
\qquad \int_{\mathbb{R}^d} \left| k(u) \right| \D u <\infty,
 $$
and for some $\Lambda_1 < \infty$ and $L<\infty$, either $k(u)=0$ for $|u|>\tilde{L}$ and 
$$ \left| k(u) - k(u') \right| \leq \Lambda_1 \|u-u'\|, \qquad u,u'\in\mathbb{R},$$
or $k(\cdot)$ is differentiable, $|(\partial/\partial) u k(u)| \leq \Lambda_1$, and for some $\nu>1$, $|(\partial/\partial) u k(u)| \leq \Lambda_1 \|u\|^{-\nu}$ for $\|u\|>\tilde{L}$.
\end{enumerate}
The following conditions impose more conditions on the functional time series model.
\begin{enumerate}[label=(D{\arabic*})]
\item \label{assumption:D1}
The functional time series $\{X_t\}_{t\in\mathbb{Z}}$ is strictly stationary and strong mixing with mixing coefficients $\alpha_m$ that satisfy
$$ \alpha(m) \leq A m^{-\beta}, $$
for $A<\infty$ and for some $s>2$
$$ \E |X_t(x)|^s \leq B_1 < \infty, \qquad x\in[0,1]. $$
\item \label{assumption:D2}
The number of measurement locations $N_t$ in time $t$ are independent random variables with law $N_t\sim N$ where $N\in\{0,1,\dots,N^{max}\}$ for some $N^{max}\in\mathbb{N}$ and such that $\Prob(N>1)>0$.
\item \label{assumption:D3}
The measurement errors $\{\epsilon_{tj}\}$ are independent identically-distributed zero-mean random variables satisfying
$$ \E |\epsilon_{tj}|^s < \infty. $$
Moreover, $\{\epsilon_{tj}\}$ are independent of the functional time series $\{X_t(\cdot)\}$.
\item \label{assumption:D4}
The marginal density of the observation location $g(\cdot)$ satisfies
$$ 0 < B_2 \leq \inf_{x\in[0,1]} g(x) \leq \sup_{x\in[0,1]} g(x) \leq B_3 < \infty. $$
\end{enumerate}

\noindent For the estimation of the mean function $\mu(\cdot)$ the following assumptions are required:
\begin{enumerate}[resume, label=(D{\arabic*})]
\item \label{assumption:D5}
The functional time series $\{X_t(\cdot)\}$ satisfies
$$ \sup_{h\in\mathbb{Z}} \sup_{x,y\in[0,1]} \E |X_h(x)X_0(y)| < \infty. $$
\item \label{assumption:D6}
The smoothing kernel $K(\cdot)$ satisfies $\int |u|^4 K(u)\D u <\infty$ and the functions $u\mapsto K(u), u\mapsto u K(u), u\mapsto u^2 K(u)$ satisfy the assumption \ref{assumption:C}.
\item \label{assumption:D7}
The coefficients $\beta$ and $s$ from the Assumption \ref{assumption:D1} satisfy
$$ \beta > \frac{2s-1}{s-2}. $$
\item \label{assumption:D8}
The bandwidth parameter $B_\mu$ satisfies
$$ \frac{\log T}{T^{\theta_\mu} B_\mu} = o(1), \qquad T\to\infty, $$
with
$$ \theta_\mu = \frac{\beta-2-(1+\beta)/(s-1)}{\beta+1-(1+\beta)/(s-1)}.$$
\item \label{assumption:D9}
The functions $ g(\cdot) $ and $g(\cdot)\mu(\cdot)$ are twice continuously differentiable on $[0,1]$.
\end{enumerate}

\noindent The rates for the lag-$h$ autocovariance kernel estomator(s) will require the following set of assumptions:
\begin{enumerate}[resume, label=(D{\arabic*})]
\item \label{assumption:D10}
The functional time series $\{X_t(\cdot)\}$ satisfies
$$ \sup_{h\in\mathbb{Z}} \sup_{x,y\in[0,1]} \E \left| X_h(x)X_0(y) \right|^2 < \infty $$
and
$$ \sup_{h,h'\in\mathbb{Z}}
\sup_{x,y,x',y' \in[0,1]}
\E\left| X_h(x)X_0(y)X_{h'}(x')X_{h+h'}(y') \right| < \infty
.$$
\item \label{assumption:D11}
The smoothing kernel $K(\cdot)$ satisfies $\iint |uv|^4 K(u)K(v) \D u\D v<\infty$ and the functions
$(u,v) \mapsto u^p v^q K(u)K(v)$ satisfy the assumption \ref{assumption:C} for $p,q\in\mathbb{N}_0, 0\leq p+q\leq 2$.
\item \label{assumption:D12}
The bandwidth parameter $B_R^2$ satisfies
$$ \frac{\log T}{T^{\theta_R} B_R} = o(1),\qquad T\to\infty, $$
with
$$ \theta_R = \frac{\beta-3 - (1+\beta)/(s-1)}{\beta+1 - (1+\beta)/(s-1)}.$$
\item \label{assumption:D13}
The functions $g(x)g(y)$ and $g(x)g(y)R_h(x,y)$ are twice continuously differentiable and
$$ \sup_{x,y\in[0,1]} \left| \frac{\partial^2}{\partial x^{\alpha_1} \partial y^{\alpha_2}}  R_h(x,y) \right| $$
is uniformly bounded in $h$ for all combinations of $\alpha_1,\alpha_2 \in\mathbb{N}_0$ where $\alpha_1+\alpha_2=2$.
\end{enumerate}

\noindent The following conditions will be required for the rates concerning spectral density estimation.
\begin{enumerate}[resume, label=(D{\arabic*})]
\item \label{assumption:D14}
Assume the summability in the supremum norm of the 4-th order cumulant kernel of $\{X_t\}$, 
$$
\sum_{h_1,h_2,h_3 = -\infty}^\infty \sup_{x_1,x_2,x_3,x_4 \in [0,1]} \left|
\cum(X_{h_1}, X_{h_2}, X_{h_3}, X_0)(x_1,x_2,x_3,x_4)
\right| < \infty.
$$
\item \label{assumption:D15}
The Bartlett span parameter $\QBartlett$ satisfies
$$ \QBartlett = o\left( \left(\sqrt{\frac{\log T}{T B_R^2}}\right) ^{-\frac{s-2}{s-1}} \right) $$
\item \label{assumption:D16}
The bandwidth parameter $B_R^2$ satisfies
$$ \frac{\log T}{T^{\theta_F} B_R} = o(1),\qquad T\to\infty, $$
with
$$ \theta_F = \frac{\beta(s-2)-4s+4}{\beta(s-2)}$$
and
$$ \QBartlett B_R^2 = o(1). $$
\end{enumerate}

We can now state the main consistency and convergence results under the strong mixing conditions.
\begin{theorem}\label{thm:strong_mixing_mu_R}
Under the assumptions \ref{assumption:D1} --- \ref{assumption:D9},
$$ \sup_{x\in[0,1]} \left| \hat\mu(x) - \mu(x) \right| = \Op\left( \sqrt{\frac{\log T}{T B_\mu}}  + B_\mu^2\right).$$
For fixed $h\in\mathbb{Z}$, under the assumptions \ref{assumption:D1} --- \ref{assumption:D13},
$$ \sup_{x,y\in[0,1]} \left| \hat{R}_h(x,y) - R_h(x,y) \right| = \Op\left( \sqrt{\frac{\log T}{T B_R^2}}  + B_R^2 \right).$$
\end{theorem}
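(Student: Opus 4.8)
The plan is to treat both estimators as local-polynomial regressions and to split each error into a deterministic bias, controlled by the smoothness assumptions, and a stochastic fluctuation, controlled by a uniform-in-argument convergence rate for kernel-weighted sums of dependent data. The optimal $\sqrt{\log T/(\cdot)}$ factors are exactly what the strong-mixing machinery of \citet{hansen2008uniform} delivers, so the bulk of the work is to cast our sums into his framework and verify his hypotheses. For the mean, I would write $\hat\mu$ through its normal equations: with $K_{B_\mu}(u)=B_\mu^{-1}K(u/B_\mu)$, set $S_{T,\ell}(x)=(TB_\mu)^{-1}\sum_{t,j}K_{B_\mu}(x_{tj}-x)\{(x_{tj}-x)/B_\mu\}^\ell$ for $\ell=0,1,2$, together with the analogous response-weighted sums formed from $Y_{tj}$; then $\hat\mu(x)-\mu(x)$ is a fixed rational function of these quantities, so it suffices to control the numerator sums and to show the denominator (design) matrix is uniformly well-conditioned.

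Inserting $Y_{tj}=\mu(x_{tj})+Z_t(x_{tj})+\epsilon_{tj}$ with $Z_t=X_t-\mu$ and Taylor-expanding $\mu$ at $x$ to second order, the local-linear design annihilates the constant and linear parts exactly and leaves a bias of order $B_\mu^2$, uniform in $x$ by the bound on $\mu''$ implicit in \ref{assumption:D9}. For the denominators, $\E(S_{T,\ell}(x))\to \E(N)\,g(x)\int u^\ell K(u)\,\D u$, and I would invoke the uniform law-of-large-numbers part of \citet{hansen2008uniform} --- applicable since $u\mapsto u^\ell K(u)$ satisfies \ref{assumption:C} by \ref{assumption:D6}, the density is bounded by \ref{assumption:D4}, and the series is strong mixing by \ref{assumption:D1} --- to conclude that the design matrix stays uniformly invertible. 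The crux is the stochastic term $(TB_\mu)^{-1}\sum_{t,j}K_{B_\mu}(x_{tj}-x)\{(x_{tj}-x)/B_\mu\}^\ell\{Z_t(x_{tj})+\epsilon_{tj}\}$, which must be $\Op(\sqrt{\log T/(TB_\mu)})$ uniformly in $x$; this is precisely the quantity bounded by Hansen's uniform convergence theorem, whose hypotheses hold here because the summands form a strong-mixing array with coefficient decay strengthened to \ref{assumption:D7}, the moment bound $\E|X_t(x)|^s\le B_1$ together with \ref{assumption:D3} furnishes the $s$-th moment his Bernstein-type inequality needs, and the bandwidth exponent $\theta_\mu$ in \ref{assumption:D8} is exactly the one his rate requires. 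Dividing this fluctuation plus the $O(B_\mu^2)$ bias by the uniformly invertible design matrix gives the first claim.

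For $\hat R_h$ the architecture is identical but two-dimensional, so the effective bandwidth volume is $B_R^2$ and the target rate carries $\sqrt{\log T/(TB_R^2)}$. Three additional issues arise. First, $G_{h,t}$ centres by $\hat\mu$ rather than $\mu$; I would replace $\hat\mu$ by $\mu$ at the outset, the resulting perturbation being controlled by the uniform rate for $\hat\mu$ just proved and hence of strictly smaller order. Second, with true centring the signal in $G_{h,t}$ is the product $Z_{t+h}(x_{t+h,j})Z_t(x_{tk})$ plus error cross-terms, so the relevant moments are fourth-order: \ref{assumption:D10} supplies exactly the second and mixed fourth moments bounding the variance of these product summands, while the product kernel $(u,v)\mapsto u^pv^qK(u)K(v)$ satisfies \ref{assumption:C} by \ref{assumption:D11}. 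Third, the bivariate summands again form a strong-mixing array, with bandwidth exponent $\theta_R$ from \ref{assumption:D12} and bias $O(B_R^2)$ from the $C^2$ smoothness of $g(x)g(y)R_h(x,y)$ in \ref{assumption:D13}. The case $h=0$ differs only by excluding the diagonal $j=k$, which removes the $\sigma^2$ ridge while deleting a negligible fraction of pairs.

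The main obstacle is the stochastic term for $\hat R_h$: unlike the bounded, independent summands of the classical i.i.d. analysis, these summands are products of two functional evaluations that are only $L^s$-integrable and are dependent across $t$. The delicate step is to verify that this product array meets the mixing and moment hypotheses underlying Hansen's inequality --- concretely, that the $\sigma$-fields generated by the pairs $(X_{t+h},X_t)$ together with the random counts $N_t$ and locations $\{x_{tj}\}$ form a strong-mixing sequence with coefficients decaying fast enough (which follows from \ref{assumption:D1} and \ref{assumption:D2} after conditioning on the sampling design), and that \ref{assumption:D10} delivers the requisite moment of the products. Equally delicate is the two-dimensional chaining/covering argument over $(x,y)\in[0,1]^2$ needed to upgrade pointwise to uniform control; this is where the $\sqrt{\log T}$ factor is produced. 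Once these are in place, the remaining manipulations --- plug-in of $\hat\mu$, inversion of the $3\times3$ design matrix, and combination with the $O(B_R^2)$ bias --- are routine.
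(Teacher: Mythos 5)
Your proposal is correct and follows essentially the same route as the paper: both arguments flatten the doubly-indexed observations (and, for the autocovariance, the raw covariance products) into a strongly mixing array inherited from \ref{assumption:D1} and \ref{assumption:D2}, verify the moment, kernel, and bandwidth hypotheses of \citet{hansen2008uniform}, and invoke his uniform convergence results for local-linear smoothers to get the $\sqrt{\log T/(TB^d)}$ stochastic rate alongside the $O(B^2)$ Taylor bias, with the $\hat\mu$-versus-$\mu$ centring handled as a negligible perturbation. The only difference is one of granularity: the paper cites Hansen's Theorem 10 wholesale, whereas you spell out the normal-equation decomposition and covering argument that that theorem performs internally.
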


\begin{theorem}\label{thm:strong_mixing_F}
Under the assumption \ref{assumption:D1} --- \ref{assumption:D11} and \ref{assumption:D13} --- \ref{assumption:D16},
\begin{equation}\label{eq:thm:strong_mixing_F_rate1}
\sup_{\omega\in[-\pi,\pi]} \sup_{x,y\in[0,1]}
\left| \hat{f}_\omega(x,y) - f_\omega(x,y) \right| =
\op(1)
\end{equation}
and assuming further \ref{assumption:B.6},
\begin{equation}\label{eq:thm:strong_mixing_F_rate2}
\sup_{\omega\in[-\pi,\pi]} \sup_{x,y\in[0,1]}
\left| \hat{f}_\omega(x,y) - f_\omega(x,y) \right| =
\Op\left( \QBartlett \sqrt{\frac{\log T}{T B_R^2}}  + \QBartlett B_R^2 \right).
\end{equation}
\end{theorem}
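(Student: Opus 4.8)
The plan is to exploit the closed form of the minimiser of \eqref{eq:minimization-spectral-density}, recorded in Section~\ref{subsec:proof_of_Thm_2}, and to reduce the spectral estimator to a Bartlett-weighted sum of the per-lag smoothers whose rate is already controlled by Theorem~\ref{thm:strong_mixing_mu_R}. The crucial structural observation is that the $3\times 3$ matrix of weighted design moments entering the complex weighted least squares \eqref{eq:minimization-spectral-density} is real and does not depend on $\omega$; only the response vector, with entries $G_{h,t}(x_{t+h,j},x_{tk})e^{-\I h\omega}$, carries the frequency. Hence $\hat f_\omega(x,y)=\tfrac{\QBartlett}{2\pi}\hat d_0$ is a real-weighted, $\omega$-free linear functional of the phase-modulated raw covariances, i.e.\ a trigonometric polynomial in $\omega$ of degree at most $\QBartlett$. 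After showing that the normalised design-moment matrix converges uniformly in $(x,y)$ to a deterministic invertible limit built from $g$ and the moments of $K$, and that the normalisations $W_h$, $1/\Nh$ and the prefactor $\QBartlett/(2\pi)$ combine correctly, one obtains that $\hat f_\omega$ is, up to factors concentrating by the law of large numbers for the $N_t$ (Assumption~\ref{assumption:D2}), asymptotically equal to $\tfrac{1}{2\pi}\sum_{|h|\le \QBartlett}W_h\,\hat R_h(x,y)\,e^{-\I h\omega}$ with $\hat R_h$ as in \eqref{eq:local_LS_cov_lag_nu}.

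Then I would split $\hat f_\omega-f_\omega$ into a deterministic bias and a centered fluctuation through
\begin{multline*}
\hat f_\omega-f_\omega=\Big(\hat f_\omega-\tfrac{1}{2\pi}\sum_{|h|\le \QBartlett}W_h R_h e^{-\I h\omega}\Big)\\+\Big(\tfrac{1}{2\pi}\sum_{|h|\le \QBartlett}W_h R_h e^{-\I h\omega}-f_\omega\Big).
\end{multline*}
The second bracket is the window-plus-truncation bias; since $W_h-1=-|h|/\QBartlett$ on $|h|<\QBartlett$, it is $\op(1)$ uniformly in $(\omega,x,y)$ under the bare summability \eqref{eq:assumption_stationarity_summability_of_autocovariance} (dominated convergence) and $O(1/\QBartlett)$ under \ref{assumption:B.6}. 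The first bracket splits further into the accumulated local-linear smoothing bias, which is $O(\QBartlett B_R^2)$ because each per-lag design-conditional bias is $O(B_R^2)$ uniformly in $h$ by the uniform second-derivative bound \ref{assumption:D13} and the triangular weights sum to $\QBartlett$, plus the centered stochastic term $\tfrac{1}{2\pi}\sum_{|h|\le\QBartlett}W_h(\hat R_h-\E[\hat R_h\mid\mathscr{D}])e^{-\I h\omega}$, where $\mathscr{D}=\{x_{tj},N_t\}$ is the sampling design.

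The key estimate is the uniform-in-lag variance bound
\[
\sup_{|h|\le \QBartlett}\ \sup_{x,y\in[0,1]}\big|\hat R_h(x,y)-\E[\hat R_h(x,y)\mid\mathscr{D}]\big|=\Op\!\Big(\sqrt{\tfrac{\log T}{T B_R^2}}\Big),
\]
namely the single-lag fluctuation driving Theorem~\ref{thm:strong_mixing_mu_R} made uniform over the $2\QBartlett+1$ lags. Given this, the triangle inequality with $W_h\le 1$ yields
\begin{multline*}
\sup_{\omega}\sup_{x,y}\Big|\tfrac{1}{2\pi}\sum_{|h|\le\QBartlett}W_h(\hat R_h-\E[\hat R_h\mid\mathscr{D}])e^{-\I h\omega}\Big|\\\le\tfrac{2\QBartlett+1}{2\pi}\sup_{|h|\le\QBartlett}\sup_{x,y}\big|\hat R_h-\E[\hat R_h\mid\mathscr{D}]\big|=\Op\!\Big(\QBartlett\sqrt{\tfrac{\log T}{T B_R^2}}\Big),
\end{multline*}
and, since the right-hand side is free of $\omega$ (the phases have unit modulus), uniformity in frequency is automatic. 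Collecting the pieces gives $\sup_{\omega,x,y}|\hat f_\omega-f_\omega|=\Op\big(\QBartlett\sqrt{\log T/(T B_R^2)}+\QBartlett B_R^2+\QBartlett^{-1}\big)$. For \eqref{eq:thm:strong_mixing_F_rate1} all three terms are $\op(1)$, by \ref{assumption:D15}, \ref{assumption:D16}, and summability respectively; under the additional \ref{assumption:B.6} the window term $\QBartlett^{-1}$ is of no larger order than the two leading terms, giving \eqref{eq:thm:strong_mixing_F_rate2}.

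The main obstacle is the uniform-in-$h$ stochastic bound. For a single lag it is precisely the fluctuation analysed in Theorem~\ref{thm:strong_mixing_mu_R} via Hansen's maximal inequality for strong-mixing arrays, applied to the products $X_{t+h}(\cdot)X_t(\cdot)$. Upgrading it to hold uniformly over $|h|\le\QBartlett$ requires, first, that the moment and mixing inputs to that inequality be bounded uniformly in $h$ --- guaranteed by the uniform fourth-moment condition \ref{assumption:D10}, the uniform derivative bound \ref{assumption:D13}, and the cumulant summability \ref{assumption:D14}, which jointly control the variance of the lag-$h$ raw-covariance local averages uniformly in $h$ --- and, second, that the union bound over the $2\QBartlett+1$ lags and over a fine spatial discretization (with Lipschitz control across grid cells supplied by \ref{assumption:C} and \ref{assumption:D11}) cost only a logarithmic factor. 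Because $\QBartlett$ is polynomial in $T$ under \ref{assumption:D15}, $\log\!\big((2\QBartlett+1)\times\mathrm{grid}\big)=O(\log T)$, so the union bound does not inflate the rate beyond $\sqrt{\log T/(T B_R^2)}$. Finally I would verify that replacing $\mu$ by $\hat\mu$ in the raw covariances contributes only lower-order error, using the mean rate of Theorem~\ref{thm:strong_mixing_mu_R} and \ref{assumption:D8}.
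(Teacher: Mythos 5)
Your proposal follows essentially the same route as the paper's proof: the same decomposition into window/truncation bias, accumulated $O(\QBartlett B_R^2)$ smoothing bias, and a centered fluctuation controlled uniformly over the $2\QBartlett+1$ lags by the truncation--discretization--exponential-inequality argument of Hansen/Liebscher--Rio, with the union bound over lags kept harmless by \ref{assumption:D15} and \ref{assumption:D16}, and the replacement of $\mu$ by $\hat\mu$ shown negligible at the end. The only cosmetic deviation is that you normalise each lag by its own design-moment matrix and then form the Bartlett sum of the $\hat R_h$, whereas the paper keeps a single Bartlett-weighted design matrix in the denominator of the closed-form estimator; the two agree to the stated order because all the (per-lag and pooled) denominators converge uniformly to the same deterministic limit.
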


\subsection{Functional Data Recovery and Confidence Bands}
\label{subsec:asymptotics_functional_data_recovery}

In this section we turn our attention to developing asymptotic theory for the recovered functional data and the associated confidence bands, in particular, the asymptotic behaviour of the plug-in estimator \eqref{eq:BLUP_Xs_empirical} vis-\`a-vis its theoretical counterpart \eqref{eq:BLUP_Xs_theoretical}.

First of all, we need to clarify what asymptotic result we can hope to accomplish.
Before venturing into functional time series, let us comment on the asymptotic results for independent identically distributed functional data \citep{yao2005functional}.
As the number of sparsely observed functional data grows to infinity, one can consistently estimate the second-order structure of the stochastic process (which in this case consists of the zero-lag autocovariance, due to independence). This is then used in the plug-in prediction of a given functional datum, say $X_s(\cdot)$, given the sparse measurements on this datum. In the limit, this prediction is as good as if we knew the true lag zero covariance of the stochastic process   \citep[Theorem 3]{yao2005functional}. Because the predictor uses the estimate of the lag zero covariance based on all the observed data, \citet{yao2005functional} call this trait as \textit{borrowing strength} from the entire sample.

In the time series setting of the current paper, one can expand the concept of borrowing strength from the entire sample. As the number of sparsely observed functional data (i.e. the time horizon $T$) grows to infinity, one can not only estimate the dynamics of the functional time series consistently (Theorem \ref{thm:spectral_density} and Corollary \ref{corollary:sup_consistency}), but also further exploit the fact that neighbouring data are correlated to further improve the recovery. Because of the weak dependence, the influence of the observations decreases as we part away from the time $s$. Therefore we fix a span of times $1,\dots,S$ where $s<S\in\mathbb{N}$ and we will be interested in the prediction of $X_s$ given the data in this span. To be precise, we are going to prove that the prediction of $X_s$ from the data in the local span and based on the estimated dynamics from complete data is, in the limit, as good as the prediction based on the true (unknown) dynamics. Therefore, in our case, we are \textit{borrowing strength} across the sample in a twofold sense -- firstly for the estimation of the functional time series dynamics, and then for prediction of the functional datum $X_s$.

The span $S$ can in principle be chosen to be as large as one wishes, but is held fixed with respect to $T$. This is justified by the weak dependence assumption. In practice,  one must also entertain numerical considerations and not choose $S$ to be exceedingly large, since the evaluation of the predictors \eqref{eq:BLUP_Xs_theoretical} and \eqref{eq:BLUP_Xs_empirical} based on longer spans requires the inversion of a big matrix.

We formulate Theorems \ref{thm:correctness_of_kriging} and \ref{thm:correctness_of_bands} under the cumulant mixing conditions required for Theorems \ref{thm:mean_and_autocov_function} and \ref{thm:spectral_density}. Nevertheless, the conclusions also hold also under the strong mixing condition regime of Theorems \ref{thm:strong_mixing_mu_R} and \ref{thm:strong_mixing_F} since, as is apparent from the proofs, the only requirement coming into play is the consistency of the spectral density operator estimators in the sense of \eqref{eq:thm_spectral_density_rate1} or \eqref{eq:thm:strong_mixing_F_rate1}.

\begin{theorem}\label{thm:correctness_of_kriging}
Under the assumptions \ref{assumption:B.1} --- \ref{assumption:B.5} and \ref{assumption:B.7} --- \ref{assumption:B.10},
for fixed $s\in\mathbb{N}, s<S$, 
$$
\sup_{x\in[0,1]} \left|  \tilde{X}_s(\mathbb{Y}_S)(x)  - \widehat{X}_s(\mathbb{Y}_S)(x) \right| = \op(1).$$
\end{theorem}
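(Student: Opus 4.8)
The plan is to show that the empirical best linear unbiased predictor $\tilde{X}_s(\mathbb{Y}_S)$ converges uniformly to its theoretical counterpart $\widehat{X}_s(\mathbb{Y}_S)$ by exhibiting $\tilde{X}_s$ and $\widehat{X}_s$ as continuous functionals of the estimated (respectively, true) model quantities, and then invoking the consistency results already established. Recall from \eqref{eq:BLUP_Xs_theoretical} and \eqref{eq:BLUP_Xs_empirical} that, with the span fixed at $S$, both predictors are built from the same observed data $\mathbb{Y}_S$ and the same (random but fixed) sampling operator $\mathbb{H}_S$; the only difference lies in replacing $(\boldsymbolmu_S, \mathbbSigma_S)$ by $(\hat{\boldsymbolmu}_S, \hat{\mathbbSigma}_S)$. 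So the proof reduces to a perturbation/stability argument for the map
$$ (\boldsymbolmu_S, \mathbbSigma_S) \longmapsto P_s\left[ \boldsymbolmu_S + \mathbbSigma_S\mathbb{H}_S^* (\mathbb{H}_S\mathbbSigma_S\mathbb{H}_S^* + \sigma^2 I_{\mathcal{N}_1^S})^{-1}(\mathbb{Y}_S-\mathbb{H}_S\boldsymbolmu_S) \right]. $$

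First I would record the inputs to the stability argument. The mean estimate satisfies $\sup_{x}|\hat\mu(x)-\mu(x)|=\op(1)$ by Theorem \ref{thm:mean_and_autocov_function}, hence $\|\hat{\boldsymbolmu}_S - \boldsymbolmu_S\|=\op(1)$ in $\mathcal{H}^S$ (a fixed finite stack of copies of $\hat\mu-\mu$). For the covariance, Corollary \ref{corollary:sup_consistency} gives $\sup_{h}\sup_{x,y}|\tilde R_h(x,y)-R_h(x,y)|=\op(1)$; since $\mathbbSigma_S$ is a finite $S\times S$ block array of the operators $\mathscr{R}_h$ for $|h|\le S-1$, uniform consistency of the kernels transfers to convergence $\|\hat{\mathbbSigma}_S - \mathbbSigma_S\|=\op(1)$ in the relevant operator norm (the Hilbert--Schmidt norm suffices, controlled by the supremum norm of the kernels on the bounded domain). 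I would also note that $\sigma^2$ is treated as known in \eqref{eq:BLUP_Xs_empirical}, so no separate handling is needed there; if one instead used $\hat\sigma^2$, its consistency from \eqref{eq:thm:autocov_function:sigma} would slot in identically.

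The core of the argument is the continuity of matrix inversion. Because $\sigma^2>0$, the operator $\mathbb{H}_S\mathbbSigma_S\mathbb{H}_S^* + \sigma^2 I$ acts on the finite-dimensional space $\mathbb{R}^{\mathcal{N}_1^S}$ and has smallest eigenvalue bounded below by $\sigma^2>0$; the same holds for the perturbed operator since $\mathbb{H}_S\hat{\mathbbSigma}_S\mathbb{H}_S^*$ is positive semidefinite. Thus both matrices are uniformly invertible with operator norm of the inverse bounded by $\sigma^{-2}$, and the map $A\mapsto A^{-1}$ is Lipschitz on this set via the identity $A^{-1}-B^{-1}=A^{-1}(B-A)B^{-1}$. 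Since $\mathbb{H}_S$ is a fixed bounded operator (given the realized sampling locations) and $\|\hat{\mathbbSigma}_S - \mathbbSigma_S\|=\op(1)$, I would chain the elementary bounds on products, sums, and inverses of operators to conclude that the bracketed expression above, evaluated at the hatted inputs, converges to its value at the true inputs in $\mathcal{H}^S$-norm. Applying the bounded projection $P_s$ and then passing from the $\mathcal{H}=L^2$ norm to the supremum norm — justified because the recovered curve is a finite linear combination $\sum_{t,j}$ of the continuous kernel sections $\tilde R_{\cdot}(\,\cdot\,, x_{tj})$ with coefficients that are $\Op(1)$, so the difference $\tilde X_s - \widehat X_s$ is itself a smooth (continuous) function whose uniform norm is controlled by the same operator-level discrepancies — yields the claimed $\op(1)$.

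The main obstacle is the last step: upgrading $L^2$ convergence of the predicted curve to uniform ($\sup_x$) convergence, since the theorem is stated in the supremum norm. The clean way to handle this is to observe that $\widehat{X}_s(\mathbb{Y}_S)(x)$ and $\tilde{X}_s(\mathbb{Y}_S)(x)$ are each finite sums of the form $\mu(x)+\sum_{t,j} w_{tj}\, \mathscr{R}_{\cdot}(x, x_{tj})$ (respectively with hats), where the weights $w_{tj}$ are entries of the fixed finite vector $(\mathbb{H}_S\mathbbSigma_S\mathbb{H}_S^*+\sigma^2 I)^{-1}(\mathbb{Y}_S - \mathbb{H}_S\boldsymbolmu_S)$. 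Because $\mathcal{N}_1^S$ is almost surely finite and fixed, the supremum over $x$ of the difference is bounded by the (finite) number of terms times the largest of $\sup_x|\tilde R_h(x, x_{tj}) - R_h(x, x_{tj})|$ and the weight discrepancies, each $\op(1)$ by Corollary \ref{corollary:sup_consistency} and the inversion stability just established. Care is needed because the weights themselves depend on the estimated covariance, so I would first show the weights converge (a finite-dimensional $\op(1)$ statement following from the uniform invertibility argument) and only then combine with the uniform kernel consistency; both factors being $\op(1)$ with $\Op(1)$ companions gives the product bound and completes the proof.
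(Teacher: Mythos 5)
Your proposal is correct and follows essentially the same route as the paper's proof (Lemma \ref{lemma:convergence_4_correctness_of_kriging}): the paper likewise decomposes the error into the mean-estimation term, the inverse-matrix perturbation term, and the covariance-vector perturbation term, controls the finite-dimensional inversion using $\sigma^2>0$, and obtains the supremum-norm statement by writing the predictor as a finite linear combination of kernel sections $R_h(\cdot,x_{tj})$ whose uniform consistency is supplied by Corollary \ref{corollary:sup_consistency}. Your explicit remarks on the uniform lower bound $\sigma^2$ for the eigenvalues and the resolvent identity merely make precise what the paper leaves implicit.
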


In the following theorem we verify the asymptotic coverage probability of the pointwise and simultaneous confidence bands \eqref{eq:confidence_band_Xs_empirical} and \eqref{eq:simul_confidence_band_empiric} under the Gaussian assumption \ref{assumption:A.1}.

\begin{theorem}\label{thm:correctness_of_bands}
Under the assumptions \ref{assumption:A.1}, \ref{assumption:B.1} --- \ref{assumption:B.5} and \ref{assumption:B.7} --- \ref{assumption:B.10}, for fixed $s\in\mathbb{N}, s\leq S$:
\begin{itemize}
\item Asymptotic coverage of the pointwise confidence band for fixed $x\in[0,1]$: $$ \lim_{T\to\infty} \Prob\left\{ \left| \tilde{X}_s(\mathbb{Y}_S)(x) - X_s(x) \right| \leq \Phi^{-1}\left(1-\alpha/2\right) \sqrt{\hat{\mathbbSigma}_{\mathbb{X}_T|\mathbb{Y}_T} (x,x)}  \right\} = 1-\alpha.
$$
\item
Asymptotic coverage of the simultaneous confidence band:
$$ \lim_{T\to\infty} \Prob\left\{ \forall{x\in[0,1]} : \left| \tilde{X}_s(\mathbb{Y}_S)(x) - X_s(x) \right| \leq z_{\alpha,\hat{\rho}} \sqrt{\hat{\mathbbSigma}_{\mathbb{X}_T|\mathbb{Y}_T} (x,x)}  \right\} = 1-\alpha . $$
\end{itemize}
\end{theorem}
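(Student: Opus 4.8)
The plan is to reduce the two coverage statements for the data-driven bands to the \emph{exact} coverage of the corresponding oracle bands (those built from the true $\mu$, $\{R_h\}$ and $\sigma^2$), and then to show that substituting estimates for these quantities perturbs the centre, the width, and — in the simultaneous case — the critical value by amounts that are $\op(1)$ uniformly in $x\in[0,1]$. The key inputs will be Theorem \ref{thm:correctness_of_kriging} (for the centre), Corollary \ref{corollary:sup_consistency} together with the $\hat\sigma^2$-consistency of Theorem \ref{thm:mean_and_autocov_function} (for the width), and the continuity of the quantile functional of \citet{degras2011simultaneous} (for the simultaneous critical value).

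First I would condition on the design $\mathcal{D}=(\{x_{tj}\},\{N_t\})$. Under the Gaussian assumption \ref{assumption:A.1} the finite collection $(X_s,\mathbb{Y}_S)$ is jointly Gaussian, so the conditional law \eqref{eq:conditional_distribution_Xs} is exact: given $\mathbb{Y}_S$ and $\mathcal{D}$, the residual $X_s(x)-\widehat{X}_s(\mathbb{Y}_S)(x)$ is centred Gaussian with variance $\mathbbSigma_{X_s|\mathbb{Y}_S}(x,x)$, a deterministic function of $\mathcal{D}$ that does not depend on the observed values. Hence $G(x)=\{X_s(x)-\widehat{X}_s(\mathbb{Y}_S)(x)\}/\sqrt{\mathbbSigma_{X_s|\mathbb{Y}_S}(x,x)}$ is, conditionally on $\mathcal{D}$, a centred Gaussian process with unit variances and correlation kernel $\rho_{X_s|\mathbb{Y}_S}$ of \eqref{eq:simul_bands_correlation_kernel}. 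For fixed $x$, $G(x)\sim N(0,1)$, so the oracle pointwise band has coverage exactly $1-\alpha$; by the very definition of $z_{\alpha,\rho}$ the oracle simultaneous band has conditional coverage $1-\alpha$, hence exact unconditional coverage.

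Next I would verify that the three plug-in perturbations vanish. The centre is handled directly by Theorem \ref{thm:correctness_of_kriging}: $\sup_x|\tilde{X}_s(\mathbb{Y}_S)(x)-\widehat{X}_s(\mathbb{Y}_S)(x)|=\op(1)$, since estimation error decays as $T\to\infty$ while the span-$S$ prediction problem stays finite-dimensional. For the width, note that for fixed $S$ the map from $(\boldsymbolmu_S,\mathbbSigma_S,\sigma^2)$ to the conditional covariance \eqref{eq:conditional_distribution_X_Sigma} involves only finitely many compositions, adjoints, and one inversion of the matrix $\mathbb{H}_S\mathbbSigma_S\mathbb{H}_S^*+\sigma^2 I$, which stays boundedly invertible because $\sigma^2>0$; this map is therefore continuous in its inputs. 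Feeding in the uniform consistency of $\tilde{\mathscr{R}}_h$ (Corollary \ref{corollary:sup_consistency}) and of $\hat\sigma^2$ (Theorem \ref{thm:mean_and_autocov_function}) gives $\sup_x|\hat{\mathbbSigma}_{X_s|\mathbb{Y}_S}(x,x)-\mathbbSigma_{X_s|\mathbb{Y}_S}(x,x)|=\op(1)$ and, after dividing, uniform consistency of $\hat\rho_{X_s|\mathbb{Y}_S}$ on the set where the variances stay away from $0$. Finally $z_{\alpha,\hat\rho}\to z_{\alpha,\rho}$ in probability: uniform convergence of the correlation kernels forces weak convergence in $C[0,1]$ of the associated Gaussian processes, and since the law of the supremum of a nondegenerate centred Gaussian process is continuous, its quantiles are continuous in $\rho$ \citep{degras2011simultaneous}. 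The coverage statements then follow by a Slutsky-type argument. For the pointwise band I would rewrite the event as $|G(x)+\op(1)|\le \Phi^{-1}(1-\alpha/2)\{1+\op(1)\}$, where $G(x)\sim N(0,1)$ has a continuous law, so the probability tends to $\Prob\{|N(0,1)|\le\Phi^{-1}(1-\alpha/2)\}=1-\alpha$. For the simultaneous band the uniform bounds give $\sup_x|\tilde{X}_s(\mathbb{Y}_S)(x)-X_s(x)|/\sqrt{\hat{\mathbbSigma}_{X_s|\mathbb{Y}_S}(x,x)}=\sup_x|G(x)|+\op(1)$, while the critical value converges to $z_{\alpha,\rho}$, so the coverage tends to $\Prob\{\sup_x|G(x)|\le z_{\alpha,\rho}\}=1-\alpha$.

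The main obstacle I anticipate is controlling the ratio $\sqrt{\hat{\mathbbSigma}_{X_s|\mathbb{Y}_S}(x,x)}/\sqrt{\mathbbSigma_{X_s|\mathbb{Y}_S}(x,x)}$ uniformly in $x$ and the continuity of the quantile map near degeneracy. One must rule out $\inf_x\mathbbSigma_{X_s|\mathbb{Y}_S}(x,x)=0$ — guaranteed here by $\sigma^2>0$ together with positivity of $R_0$ on the diagonal, so that the latent curve is never perfectly recovered — and confirm that the estimated conditional variance inherits a matching positive lower bound with probability tending to one. The genuinely degenerate locations are already neutralised by the definition \eqref{eq:simul_bands_correlation_kernel}, so the remaining care is to ensure that the $\op(1)$ perturbations of centre and width do not accumulate when passed through the supremum over $x$; this is exactly what the uniformity built into Theorem \ref{thm:correctness_of_kriging} and Corollary \ref{corollary:sup_consistency} supplies.
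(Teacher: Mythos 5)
Your proposal is correct and follows essentially the same route as the paper: exact coverage of the oracle bands from the Gaussian conditional law \eqref{eq:conditional_distribution_Xs}, followed by a Slutsky argument using the uniform consistency of the plug-in conditional mean and covariance (the paper packages this as Lemma \ref{lemma:convergence_4_correctness_of_kriging}, whose ingredients are exactly the ones you cite) and the continuity of the quantile map $\rho\mapsto z_{\alpha,\rho}$. Your added care about ruling out degeneracy of $\mathbbSigma_{X_s|\mathbb{Y}_S}(x,x)$ is a point the paper simply assumes away for simplicity, so it is a welcome refinement rather than a divergence.
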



\section{Numerical Experiments}
\label{sec:simulation}

\subsection{Simulation Setting}
\label{subsec:simulation_setting}

In this section, we present a simulation study in order to prove the finite-sample performance of our methodology. To this aim, we simulate realisations of functional linear processes, namely functional moving average processes and functional autoregressive processes. These provide a good framework to investigate our methods since their spectral density operators can be explicitly calculated in closed form. Specifically, we consider:

\begin{itemize}
\item \textbf{Functional moving average process}

The (Gaussian) functional moving average process of order $q$ is given by the formula \citep{bosq2012linear}
\begin{equation}
\label{eq:simulations_FMA}
X_t = \mu + E_t + \mathcal{B}_1 E_{t-1} + \mathcal{B}_2 E_{t-2} + \dots + \mathcal{B}_q E_{t-q}
\end{equation}
where  $\mu\in\mathcal{H}$ is the mean function, $\mathcal{B}_j, j=1,\dots,q$ are bounded linear operators in $\mathcal{H}$, and $\{E_t\}$ is zero-mean Gaussian noise with a trace-class covariance operator $\mathcal{S}$.
The functional moving average process is a stationary linear process \citep{bosq2012linear} and clearly satisfies the assumption \eqref{eq:assumption_stationarity_summability_of_autocovariance} in the nuclear norm and thus admits the spectral density in the operator sense.
Though the calculation of the spectral density of the functional moving average process is straightforward, we are not aware of it having been considered before in its functional form elsewhere.
\begin{proposition}
\label{prop:spec_density_FMA}
The functional moving average process defined above admits the spectral density
\begin{equation}
\label{eq:spectral_density_FMA}
\mathscr{F}_\omega = \frac{1}{2\pi}
\left( I + \mathcal{B}_1 e^{-\I\omega} + \dots + \mathcal{B}_q e^{-\I\omega q}\right)
\mathcal{S}
\left( I + \mathcal{B}_1^* e^{\I\omega} + \dots + \mathcal{B}_q^* e^{\I\omega q}\right),
\qquad \omega\in(-\pi,\pi),
\end{equation}
in the operator sense \eqref{eq:spectral_density_kernel_operator}. Moreover, if the kernels corresponding to the operators $\mathcal{B}_1,\dots,\mathcal{B}_q$ are smooth, the spectral density exists also in the kernel sense \eqref{eq:spectral_density_kernel_operator} and the process satisfies the assumptions \ref{assumption:B.4}, \ref{assumption:B.5}, \ref{assumption:B.6}. If the mean function $\mu(\cdot)$ is smooth, the process satisfies also \ref{assumption:B.3}.
\end{proposition}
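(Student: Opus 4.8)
The plan is to first compute the autocovariance operators $\mathscr{R}_h$ of the process explicitly, then substitute them into the defining series \eqref{eq:spectral_density_kernel_operator} and factor the resulting double sum, and finally to verify the kernel-sense statement together with assumptions \ref{assumption:B.3}--\ref{assumption:B.6}.

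Set $\mathcal{B}_0 = I$, so that $X_t - \mu = \sum_{i=0}^q \mathcal{B}_i E_{t-i}$. Writing the covariance operator as a tensor-product expectation, $\mathscr{R}_h = \E\left[(X_h - \mu)\otimes(X_0-\mu)\right]$, and using the algebraic identity $(\mathcal{A}u)\otimes(\mathcal{C}v) = \mathcal{A}(u\otimes v)\mathcal{C}^*$ together with linearity of the (Bochner) expectation, I would obtain
\begin{equation*}
\mathscr{R}_h = \sum_{i=0}^q\sum_{j=0}^q \mathcal{B}_i\, \E\left[E_{h-i}\otimes E_{-j}\right]\,\mathcal{B}_j^*.
\end{equation*}
The white-noise property of $\{E_t\}$ gives $\E[E_s\otimes E_t] = \mathcal{S}$ if $s=t$ and $0$ otherwise, which collapses the double sum to the single constraint $j=i-h$; hence $\mathscr{R}_h = \sum_i \mathcal{B}_i\mathcal{S}\mathcal{B}_{i-h}^*$ over the admissible index range, and in particular $\mathscr{R}_h = 0$ whenever $|h|>q$.

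Substituting this into $\mathscr{F}_\omega = \frac{1}{2\pi}\sum_h \mathscr{R}_h e^{-\I\omega h}$ and re-indexing by $h=i-j$ turns the (now finite) sum into an unconstrained double sum over $0\le i,j\le q$, which factors as
\begin{equation*}
\mathscr{F}_\omega = \frac{1}{2\pi}\left(\sum_{i=0}^q \mathcal{B}_i e^{-\I\omega i}\right)\mathcal{S}\left(\sum_{j=0}^q \mathcal{B}_j^* e^{\I\omega j}\right),
\end{equation*}
which, recalling $\mathcal{B}_0=I$, is exactly \eqref{eq:spectral_density_FMA}. This settles the operator-sense claim; convergence of the series in \eqref{eq:spectral_density_kernel_operator} is immediate because it terminates.

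For the kernel-sense statement and the smoothness assumptions, I would note that a trace-class $\mathcal{S}$ is in particular Hilbert--Schmidt, hence has a kernel $s\in L^2([0,1]^2)$; writing $b_i$ for the kernel of $\mathcal{B}_i$, the operator $\mathcal{B}_i\mathcal{S}\mathcal{B}_j^*$ has kernel $(x,u)\mapsto\iint b_i(x,z)s(z,w)b_j(u,w)\D z\D w$. Differentiating under the integral sign---justified since the $b_i$ are smooth, their derivatives are bounded on the compact square, and $\iint|s|<\infty$ by Cauchy--Schwarz---shows each $R_h$ is twice continuously differentiable; because only finitely many $R_h$ are nonzero, the supremum over $h$ of their second derivatives is a maximum of finitely many finite quantities, giving \ref{assumption:B.4}, and the finite support likewise makes $\sum_h|h|\sup_{x,y}|R_h(x,y)|$ a finite sum, giving \ref{assumption:B.6}. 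Assumption \ref{assumption:B.5} holds trivially: $\{X_t\}$ is a fixed linear image of a Gaussian noise sequence, hence jointly Gaussian, so its fourth-order cumulant kernel vanishes identically. Finally \ref{assumption:B.3} is precisely the hypothesis that $\mu$ is smooth. I expect the only genuine (though still routine) technical point to be the justification of differentiation under the integral in the kernel-sense step; the remainder is operator bookkeeping, where the one identity requiring care is $(\mathcal{A}u)\otimes(\mathcal{C}v)=\mathcal{A}(u\otimes v)\mathcal{C}^*$ and the correct tracking of the index range in $\mathscr{R}_h$.
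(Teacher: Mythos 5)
Your proposal is correct and follows essentially the same route as the paper's own (very terse) proof: compute the autocovariance operators $\mathscr{R}_h=\sum_i\mathcal{B}_i\mathcal{S}\mathcal{B}_{i-h}^*$, note they vanish for $|h|>q$, substitute into the defining series and factor, then verify the smoothness and summability assumptions from the finite support of the lags and the smoothness of the kernels. You supply more detail than the paper does (the tensor-product identity, the re-indexing, the Gaussianity argument for \ref{assumption:B.5}, and the differentiation under the integral), but the underlying argument is the same.
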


We set again the mean function as $\mu(x) = 4 \sin( 1.5 \pi x )$. The covariance kernel $S(x,y)$ of the driving noise is set to be $S(x,y) = 1.4  \sin(2\pi x)\sin(2\pi y) + 0.6 \cos(2\pi x)\cos(2\pi y)$. Next we define $\mathcal{B}_1,\dots,\mathcal{B}_8$ as integral operators with kernels $B_1(x,y) = B_5(x,y) = 5\exp( -(x^2+y^2) )$, $B_2(x,y) = B_6(x,y) = 5\exp( -((1-x)^2+y^2) )$, $B_3(x,y) =B_7(x,y) = 5\exp( -(x^2+(1-y)^2) )$, and $B_4(x,y) = B_8(x,y) = 5\exp( -((1-x)^2+(1-y)^2) )$ respectively. We denote these functional moving average processes as $\mathbf{FMA(q)}$ for $q=2,4,8$.

\item \textbf{Functional autoregressive process}

The (Gaussian) functional autoregressive process of order 1, well reviewed in \citet{bosq2012linear}, is defined by the iteration
\begin{equation}
\label{eq:FAR_equation}
(X_{t+1} - \mu) = \mathcal{A} (X_t - \mu) + E_t
\end{equation} where $\{X_t\}$ is a functional time series in the Hilbert space $\mathcal{H} = L^2([0,1])$, $\mu\in\mathcal{H}$ is the mean function, $\mathcal{A}$ is a bounded linear operator on $\mathcal{H}$, and $\{E_t\}$ is zero-mean Gaussian noise with a trace-class covariance operator $\mathcal{S}$.
\citet{bosq2012linear} showed that if the transition operator $\mathcal{A}$ satisfies $\|\mathcal{A}\|<1$ (the operator norm on $\mathcal{H}$) then there exists a unique Gaussian stationary solution to the equation \eqref{eq:FAR_equation}.
The formula for the spectral density of the functional autoregressive process has a form analogous to the finite-dimensional vector autoregression case (cf. \citet[\S 9.4]{PriestleyMauriceB1981Saat}), but its extension to the functional case appears to be a novel contribution:


\begin{proposition}
\label{prop:spec_density_FAR}
The functional autoregressive process of order 1 solving the equation \eqref{eq:FAR_equation} with $\|\mathcal{A}\| < 1$ satisfies the assumption \eqref{eq:assumption_stationarity_summability_of_autocovariance} in the operator sense,
and admits the spectral density
\begin{equation}
\label{eq:spectral_density_FAR}
\mathscr{F}_\omega = \frac{1}{2\pi} (I - \mathcal{A}  e^{-\I\omega})^{-1} \mathcal{S} (I - \mathcal{A}^*  e^{\I\omega})^{-1} , \quad \omega\in (-\pi,\pi)
\end{equation}
in the operator sense \eqref{eq:spectral_density_kernel_operator}. Moreover, if the kernels corresponding to the operators $\mathcal{A}$ and $\mathcal{S}$ are smooth, the spectral density exists also in the kernel sense \eqref{eq:spectral_density_kernel_operator} and the process satisfies the assumptions \ref{assumption:B.4}, \ref{assumption:B.5}, \ref{assumption:B.6}. If the mean function $\mu(\cdot)$ is smooth, the process satisfies also \ref{assumption:B.3}.
\end{proposition}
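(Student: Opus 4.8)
The plan is to route everything through the causal $MA(\infty)$ representation of the process and then read off both the autocovariance operators and the spectral density by summing operator-valued geometric series. Since the existence and uniqueness of the stationary solution under $\|\mathcal{A}\|<1$ is already supplied by \citet{bosq2012linear}, I would start from the representation $X_t - \mu = \sum_{k=0}^\infty \mathcal{A}^k E_{t-k}$, which converges in $L^2$ because $\|\mathcal{A}^k\|\leq\|\mathcal{A}\|^k$ decays geometrically. Using $\E[E_s\otimes E_t]=\delta_{st}\mathcal{S}$ together with the identity $(\mathcal{A}^k u)\otimes(\mathcal{A}^l v)=\mathcal{A}^k(u\otimes v)(\mathcal{A}^*)^l$, I would compute, for $h\geq 0$,
\[ \mathscr{R}_h = \sum_{m=0}^\infty \mathcal{A}^{m+h}\mathcal{S}(\mathcal{A}^*)^m = \mathcal{A}^h\mathscr{R}_0, \qquad \mathscr{R}_0 = \sum_{m=0}^\infty \mathcal{A}^m\mathcal{S}(\mathcal{A}^*)^m, \]
with $\mathscr{R}_{-h}=\mathscr{R}_h^*=\mathscr{R}_0(\mathcal{A}^*)^h$ by stationarity. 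The bound $\|\mathscr{R}_0\|_1\leq\|\mathcal{S}\|_1\sum_m\|\mathcal{A}\|^{2m}=\|\mathcal{S}\|_1/(1-\|\mathcal{A}\|^2)<\infty$ shows $\mathscr{R}_0$ is trace class, and $\|\mathscr{R}_h\|_1\leq\|\mathcal{A}\|^h\|\mathscr{R}_0\|_1$ then yields $\sum_h\|\mathscr{R}_h\|_1<\infty$, establishing the nuclear-norm summability \eqref{eq:assumption_stationarity_summability_of_autocovariance}.

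Next I would derive the closed form for $\mathscr{F}_\omega$. Splitting $\mathscr{F}_\omega=\frac{1}{2\pi}\sum_{h\in\mathbb{Z}}\mathscr{R}_h e^{-\I\omega h}$ into the ranges $h\geq 0$ and $h<0$, substituting the expressions above, and summing the Neumann series $\sum_{h\geq 0}(\mathcal{A}e^{-\I\omega})^h=(I-\mathcal{A}e^{-\I\omega})^{-1}$ (legitimate since $\|\mathcal{A}e^{-\I\omega}\|=\|\mathcal{A}\|<1$), I obtain $2\pi\mathscr{F}_\omega=C\mathscr{R}_0+\mathscr{R}_0 D-\mathscr{R}_0$, where $C=(I-\mathcal{A}e^{-\I\omega})^{-1}$ and $D=C^*=(I-\mathcal{A}^*e^{\I\omega})^{-1}$. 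To reduce this to the advertised product $C\mathcal{S}D$, I would invoke the discrete Lyapunov identity $\mathscr{R}_0=\mathcal{S}+\mathcal{A}\mathscr{R}_0\mathcal{A}^*$ (immediate from the series for $\mathscr{R}_0$) and the resolvent relations $C\mathcal{A}=(C-I)e^{\I\omega}$, $\mathcal{A}^*D=(D-I)e^{-\I\omega}$; a one-line cancellation then gives $C\mathcal{S}D=C\mathscr{R}_0+\mathscr{R}_0 D-\mathscr{R}_0$, which is \eqref{eq:spectral_density_FAR}. Equivalently and more quickly, one recognises the transfer function $\Psi(e^{-\I\omega})=\sum_k\mathcal{A}^k e^{-\I\omega k}=(I-\mathcal{A}e^{-\I\omega})^{-1}$ and applies the general linear-process formula $\mathscr{F}_\omega=\frac{1}{2\pi}\Psi(e^{-\I\omega})\mathcal{S}\,\Psi(e^{-\I\omega})^*$; I would keep the direct summation as the self-contained justification.

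For the kernel-level claims I would assume $\mathcal{A}$ and $\mathcal{S}$ have smooth kernels, so that $\mathcal{A}^h\mathscr{R}_0$ possesses the continuous kernel $R_h(x,y)=\int A^{(h)}(x,z)R_0(z,y)\D{z}$, with $A^{(h)}$ the iterated kernel of $\mathcal{A}^h$. Gaussianity of $\{E_t\}$ forces all joint cumulants of order $\geq 3$ to vanish, so \ref{assumption:B.5} holds trivially, while \ref{assumption:B.3} is simply the assumed smoothness of $\mu$. The remaining conditions \ref{assumption:B.4} and \ref{assumption:B.6} both reduce to showing that $\sup_{x,y}|\partial_x^{\alpha_1}\partial_y^{\alpha_2}R_h(x,y)|$ decays geometrically in $h$ for $\alpha_1+\alpha_2\leq 2$. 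The part requiring the most care is the observation that a spatial derivative in $x$ falls only on the outermost factor $\mathcal{A}$ (and one in $y$ only on $\mathscr{R}_0$), leaving an untouched large power $\mathcal{A}^{h-1}$ with $\|\mathcal{A}^{h-1}\|\leq\|\mathcal{A}\|^{h-1}$; combined with a Cauchy--Schwarz bound on the residual kernel this gives $\sup_{x,y}|\partial_x^{\alpha_1}\partial_y^{\alpha_2}R_h(x,y)|\leq C\|\mathcal{A}\|^{h-1}$ with $C$ independent of $h$. This simultaneously supplies the uniform-in-$h$ second-derivative bound of \ref{assumption:B.4} and, because $\sum_h|h|\|\mathcal{A}\|^h<\infty$, the weighted summability \ref{assumption:B.6}. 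The main obstacle is therefore confined to this bookkeeping of how spatial differentiation interacts with the operator powers while preserving geometric decay; the rest is the same geometric-series and resolvent machinery used for the operator-level assertions.
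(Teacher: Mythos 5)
Your proposal is correct and follows essentially the same route as the paper: both rest on the causal $MA(\infty)$ expansion, the identity $\mathscr{R}_h=\mathcal{A}^h\mathscr{R}_0$ with $\mathscr{R}_0=\sum_{j\geq 0}\mathcal{A}^j\mathcal{S}(\mathcal{A}^*)^j$, and the geometric decay of $\|\mathcal{A}^h\|$; your ``transfer function'' coefficient-matching remark is exactly the paper's argument, and your primary Lyapunov-plus-resolvent verification is an equivalent rearrangement of the same computation. You give considerably more detail than the paper on the kernel-level claims (which the paper dismisses as ``easily verified''), and both your Gaussian-cumulant observation for \ref{assumption:B.5} and the bookkeeping showing that spatial derivatives land on a single factor while leaving $\mathcal{A}^{h-1}$ intact (yielding the uniform bounds for \ref{assumption:B.4} and the weighted summability \ref{assumption:B.6}) are sound.
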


For our simulations we choose $\mu(x) = 4 \sin(1.5 \pi x)$. The autoregressive operator $\mathcal{A} = \mathcal{A}_c$ is the integral operator with kernel $A_c(x,y) = \kappa_c \exp\left( -(x+2y)^2 \right)$
where the scaling constant $\kappa_c$ is chosen so that $\| \mathcal{A}_c \| = c$. We vary $c$ to control the degree of temporal dependence and let $c\in\{ 0.7, 0.9\}$. The covariance operator $\mathcal{S}$ is the integral operator with kernel $S(x,y) = 1.4  \sin(2\pi x)\sin(2\pi y) + 0.6 \cos(2\pi x)\cos(2\pi y)$. In the simulation results we denote the resulting two processes as $\mathbf{FAR(1)_{0.7}}$ and $\mathbf{FAR(1)_{0.9}}$ for $c=0.7$ and $c=0.9$ respectively.

\end{itemize}

We simulate the functional moving average processes $\mathbf{FMA(2)}$, $\mathbf{FMA(4)}$, $\mathbf{FMA(8)}$, and the functional autoregressive processes $\mathbf{FAR(1)_{0.7}}$, $\mathbf{FAR(1)_{0.9}}$, over temporal periods of varying length, specifically $T\in\{150,300,450,600,900,1200\}$. The simulation is started from the stationary distribution of the respective processes.

The simulations must be obviously performed in a finite dimension. We performed the simulation in the third-order B-spline basis created by equidistantly placing 20 knots on the interval $[0,1]$. Hence the basis admits $21$ elements. The B-spline basis is efficient in expressing smooth functions \citep{ramsay2007applied}.

The sparse observations are then obtained by the following process. We set a maximum number of locations to be sampled $N^{max} \in\{5,10,20,30,40\}$. For each $t=1,\dots,T$, a random integer $N_t$ is independently drawn from the uniform distribution on $0,1,\dots,N^{max}$. Next, for each $t=1,\dots,T$, we independently draw $N_t$ random locations $x_{tj}, j=1,\dots,N_t$ from the uniform distribution on $[0,1]$. At each location, an independent identically distributed Gaussian measurement error $\epsilon_{tj} \sim N(0,\sigma^2)$ is added and the ensemble $Y_{tj} = X_t(x_{tj}) + \epsilon_{tj}, j=1,\dots,N_t, t=1,\dots,T$ is used as the dataset for the estimation procedure. Therefore the observation protocol satisfies the assumptions \ref{assumption:B.1} and \ref{assumption:B.2}.

The measurement error variance is chosen in the way that the ratio $\tr( \mathscr{R}_0 ) / \sigma^2$, which we interpret as a basic signal-to-noise ratio metric, is $20$. The same signal-to-noise ratio was used in the simulation study by \citet{yao2005functional}.
Further simulation results of ours not reported here indicate that moderate variations of the signal-to-noise ratio do not change the conclusions of this simulation study.

\subsection{Estimation of the Spectral Density}
\label{subsec:simulations_estimation of spectral density}

In this subsection we quantify the estimation error of the spectral density estimator \eqref{eq:spectral_density_estimator_bartlett_smoother} in our simulation setting. In particular, we want to explore the dependence of the estimation error on the length $T$ of the time series and the number $N^{max}$ impacting the average number of measurements per curve. 

For each of the considered process and for each pair of the sample size parameters $T\in\{150,300,450,600,900,1200\}$ and $N^{max}\in\{5,10,20,30,40\}$ we simulated $100$ independent realisations. We have run the estimation procedure introduced in Sections \ref{subsec:nonparam_estimation} and \ref{subsec:Spectral Density Kernels Estimation}. In each case, the tuning parameters $B_\mu$, $B_R$, and $B_V$ are selected by the $K$-fold cross-validation as explained in Section \ref{subsec:selection of B_mu, B_R and B_V}.
The selection of Bartlett's span parameter are discussed in Section \ref{subsec:selection of Q_Bartlett}.
Based on the results of the simulation study, we introduce a simple selection rule that works well for spectral density estimation.
The optimal $\QBartlett$ depends clearly on the (unknown) dynamics of the functional time series. As a compromise across the simulated processes we propose to use the following selection rule
\begin{equation}
\label{eq:Q_Bartlett decision rule}
\QBartlett = \lfloor T^{1/3} \left(\bar{N}\right)^{1/4} \rfloor
\end{equation}
where $\bar{N}$ is the average number of measurements per curve and $\lfloor\cdot\rfloor$ is the integer part of a given real number.
The selection rule \eqref{eq:Q_Bartlett decision rule} was hand-picked for the considered range of variables $T$ and $N^{max}$ and should not be used for extrapolation, especially not for dense observation schemes.

We measure the quality of the spectral density estimation by the relative mean square error defined as
\begin{equation}\label{eq:RMSE_definition}
RMSE =
\frac{
\int_{-\pi}^\pi \int_0^1\int_0^1 | \hat{f}_\omega(x,y) - f_\omega(x,y) |^2 \D x \D y \D\omega
}{
\int_{-\pi}^\pi \int_0^1\int_0^1 | f_\omega(x,y) |^2 \D x \D y \D\omega
 }
\end{equation}
where $\hat{f}_\omega(\cdot,\cdot)$ and $f_\omega(\cdot,\cdot)$ are respectively the estimated and the true spectral density kernels at the frequency $\omega\in(-\pi,\pi)$.
Due to space constraints, we present in Table \ref{table:estimation_of_spectral_density_FRO_MA4} the results only for the functional moving average process of order 4, $\mathbf{FMA(4)}$. The results for the remaining considered processes are reported in Section \ref{appendix:supplementary results}.

\begin{table}[]
\caption{Average relative mean square errors (defined in \eqref{eq:RMSE_definition}) of the spectral density estimators for the above defined functional moving average process of order 4 ($\mathbf{FMA(4)}$) and varying sample sizes. The numbers in parentheses are the standard deviations of the relative mean square error. Each cell of the table (each error and its standard deviation) is the result of 100 independent simulations. The Bartlett's span parameter $\QBartlett$ was selected by the rule \eqref{eq:Q_Bartlett decision rule} 
}
\label{table:estimation_of_spectral_density_FRO_MA4}
\begin{tabular}{clllll}
$T$\textbackslash$N^{max}$  & \multicolumn{1}{c}{5} & \multicolumn{1}{c}{10} & \multicolumn{1}{c}{20} & \multicolumn{1}{c}{30} & \multicolumn{1}{c}{40}  \\[5pt] 
150  & 0.312 (0.060) & 0.225 (0.063) & 0.184 (0.060) & 0.170 (0.049) & 0.165 (0.050) \\
300  & 0.206 (0.040) & 0.157 (0.042) & 0.124 (0.028) & 0.115 (0.030) & 0.110 (0.033) \\
450  & 0.167 (0.033) & 0.126 (0.034) & 0.097 (0.022) & 0.092 (0.027) & 0.081 (0.021) \\
600  & 0.137 (0.027) & 0.107 (0.027) & 0.083 (0.017) & 0.077 (0.023) & 0.071 (0.017) \\
900  & 0.115 (0.020) & 0.082 (0.015) & 0.067 (0.016) & 0.061 (0.015) & 0.056 (0.016) \\
1200 & 0.096 (0.019) & 0.072 (0.015) & 0.056 (0.013) & 0.050 (0.012) & 0.047 (0.012)

\end{tabular}
\end{table}

\begin{figure}[hbt!]
\centering
\includegraphics[width=0.8\textwidth]{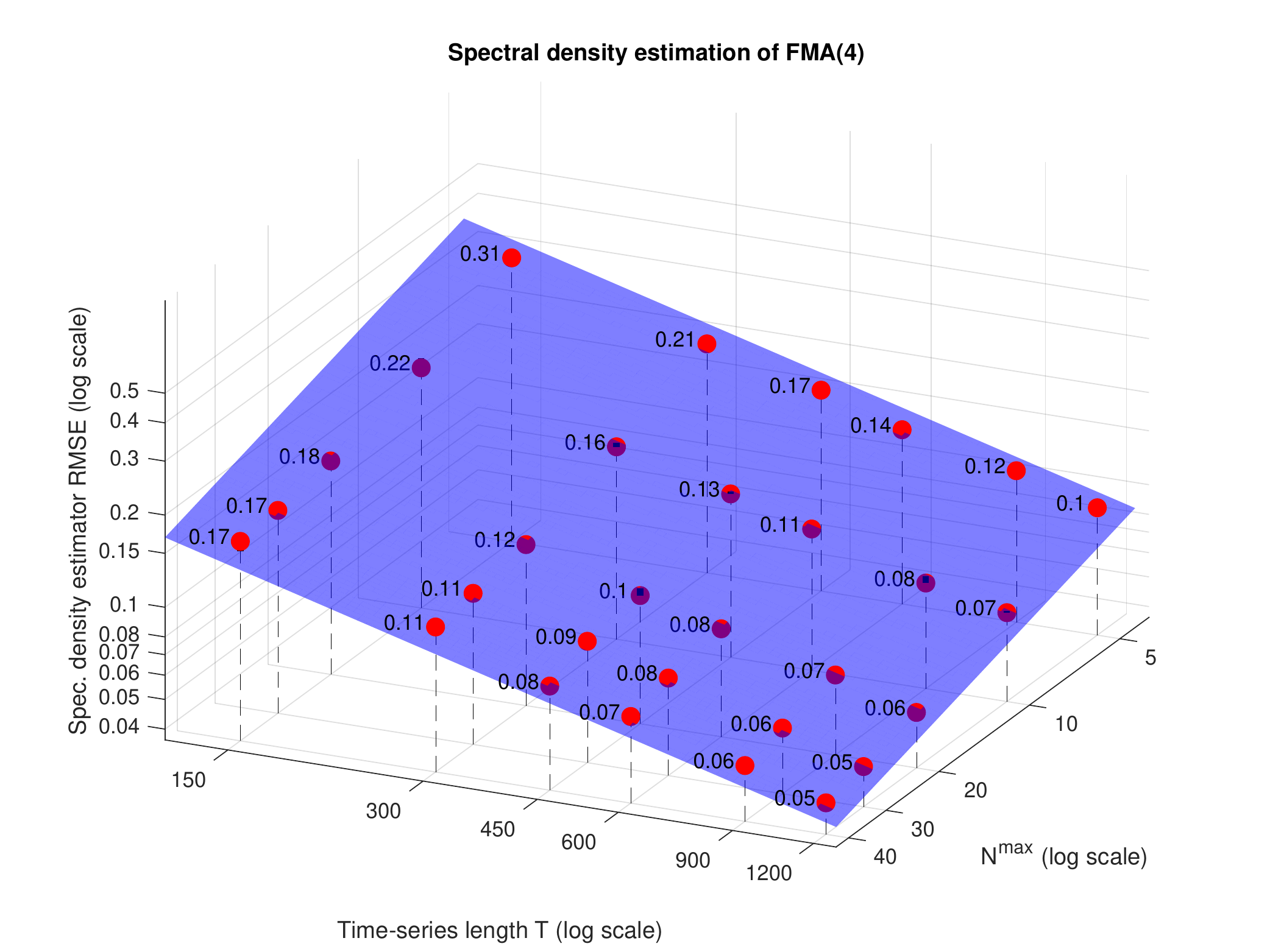}
\caption{The dependence of spectral density estimation relative mean square error (red points with labels of the magnitude of this error) on the sample size parameters $T$ and $N^{max}$. The blue plane is the estimated regression surface in model \eqref{eq:simulations_linear_model}.
}
\label{fig:MA4_spec_density}
\end{figure}

Concerning the results of Table \ref{table:estimation_of_spectral_density_FRO_MA4}, one can raise an interesting design question: 
\begin{quote}
Provided one has a fixed budget for the total number of measurements to be made, should opt to record fewer spatial measurements over a longer time interval (lengthy but sparsely observed time series), or rather record dense spatial measurements over a shorter time period (short but densely observed time series)?
\end{quote}

In order to answer this question we define a simple linear model to asses the dependence of the relative mean square error on the considered sample size parameters $T$ and $N^{max}$. For each of the considered processes we fit the linear model
\begin{equation}
\label{eq:simulations_linear_model}
\log( RMSE(N^{max},T) ) = \beta_0 + \beta_1 \log( N^{max} ) + \beta_2 \log( T) + e
\end{equation}
where $RMSE(N^{max},T)$ is the average relative mean square error for the considered parameters $T$ and $N^{max}$, $(\beta_0,\beta_1,\beta_2)$ are the regression parameters, and $e$ is a homoskedastic model error.

 The least square estimate of \eqref{eq:simulations_linear_model} yields $( \hat{\beta}_0, \hat{\beta}_1, \hat{\beta}_2 ) = (    1.98,   -0.32,   -0.57)$.
The coefficient $\hat{\beta}_2$ is larger than $\hat{\beta}_1$ in absolute value, therefore the relative increase of the time-length $T$ has a stronger effect in reducing the relative mean square error of the estimated spectral density than the same relative increase in the number of points per curve. The apparent conclusion is that, in order to estimate the spectral density of a smooth functional time series, the better strategy is to invest in longer time-horizon $T$ rather than denser sampling regime.


\subsection{Recovery of Functional Data from Sparse Observations}
\label{subsec:simulations_recovery}

In this section, we examine the performance of the functional recovery procedure proposed in Section \ref{subsec:functional_data_recovery}. We compare the recovery performance of our dynamic predictor \eqref{eq:BLUP_Xs_empirical}, in the following denoted as the \textit{dynamic recovery}, with its static version that relies only on the lag-zero covariance and hence does not exploit the temporal dependence. In the following, we call this predictor the \textit{static recovery}. This static recovery is in fact the predictor \eqref{eq:BLUP_Xs_empirical} with the Bartlett's span parameter $\QBartlett$ set to $1$.

We simulate 100 independent realisations for each of the considered functional moving average processes $\mathbf{FMA(2)}$, $\mathbf{FMA(4)}$, $\mathbf{FMA(8)}$, and the considered functional autoregressive processes $\mathbf{FAR(1)_{0.7}}$, $\mathbf{FAR(1)_{0.9}}$,  (their definitions in Section \ref{subsec:simulations_estimation of spectral density}) and each combination of the sample size parameters $T\in\{150,300,450,600,900,1200\}$ and $N^{max}\in\{5,10,20,30,40\}$.
Again, due to space constraints, we state here the results only for the functional moving average process of order 4, $\mathbf{FMA(4)}$. The results for the other considered processes are stated in Section \ref{appendix:supplementary results}.

For each dataset we run the estimation procedure from Sections \ref{subsec:nonparam_estimation} and \ref{subsec:Spectral Density Kernels Estimation}. The tuning parameters $B_\mu$, $B_R$, and $B_V$ are selected by $K$-fold cross-validation as explained in Section \ref{subsec:selection of B_mu, B_R and B_V}. The parameter $\QBartlett$ is selected again by the rule \eqref{eq:Q_Bartlett decision rule}.

We define the functional recovery (either dynamic or static) relative mean square error as
\begin{equation}
\label{eq:simulations_RMSE}
RMSE = \frac{1}{T} \sum_{t=1}^T \frac{
\int_0^1 \left(\hat{X}_t(x) - X_t(x) \right)^2 \D x
}{ \tr \mathscr{R}_0 }
\end{equation}
where $\hat{X}_t$ is the recovered functional curve at $t=1,\dots,T$, either dynamically or statically, and $X_t$ is the true (unobserved) functional datum.

The key factor contributing to the quality of the functional recovery is the estimate $\hat\sigma^2$ of the additive measurement error variance parameter $\sigma^2$. A very small value of the estimated $\hat\sigma^2$ can lead to an ill-conditioned matrix needed to be inverted in \eqref{eq:BLUP_X_empirical}, thus resulting in a defective recovery of the functional data. Because this circumstance affects the relative mean square error metric, we opt to calculate the median of the relative mean square errors as a better indicator of the typical recovery error instead.

We calculate the \textit{relative gain} as
\begin{equation}
\label{eq:simulations_relative gain}
Relative\text{ }gain = \left( \frac{RMSE(static)}{RMSE(dynamic)} -1\right)*100 \%
\end{equation}
where $RMSE(static)$ is the median relative mean square error of the static recovery and $RMSE(dynamic)$ is the median mean square error of the dynamic recovery.

 {
Table \ref{table:recovery_relative_gain_MA4} summarizes the relative gains of dynamic recovery over the static recovery. Unsurprisingly, the relative gain is strikingly large for sparser designs. This can be explained by the fact that in sparse designs there is not sufficient information to interpolate the functional curves themselves, and the observed data in neighbouring curves are crucial for the recovery of the curves. That being said, it is observed that even when the number of points sampled per curve are as many as 40, the improvement remains  substantial, demonstrating that the new methodology should be preferred over methods designed for the i.i.d. case when dependence is present.}

\begin{table}[hbt]
\caption{Relative gain \eqref{eq:simulations_relative gain} between median relative mean square error of dynamic recovery and median relative mean square error of static recovery. Positive percentage signifies that dynamic recovery has smaller error. Simulations from the functional moving average of order 4, $\mathbf{FMA(4)}$. Each cell of the table is the result of 100 independent simulations
}
\label{table:recovery_relative_gain_MA4}
\centering
\begin{tabular}{crrrrr}
$T$\textbackslash$N^{max}$  & \multicolumn{1}{c}{5} & \multicolumn{1}{c}{10} & \multicolumn{1}{c}{20} & \multicolumn{1}{c}{30} & \multicolumn{1}{c}{40}   \\[5pt] 
150  & 67 \% & 38 \% & 38 \% & 23 \% & 30 \% \\
300  & 53 \% & 39 \% & 33 \% & 31 \% & 26 \% \\
450  & 52 \% & 45 \% & 38 \% & 30 \% & 24 \% \\
600  & 45 \% & 41 \% & 32 \% & 26 \% & 24 \% \\
900  & 54 \% & 41 \% & 37 \% & 30 \% & 22 \% \\
1200 & 54 \% & 45 \% & 34 \% & 26 \% & 21 \%
\end{tabular}
\end{table}


\section{Data Analysis: Fair-Weather Athmospheric Electricity} \label{sec:data analysis}

The atmosphere is weakly conductive due to the ionization of molecules and this conductivity can be continuously measured by a variable called \emph{atmospheric electricity} \citep{tammet2009joint}. The ionization is the outcome of complicated physical-chemical processes that are subject to the current weather conditions. Since unfair weather conditions affect and alter these processes \citep{israelsson2001variation}, climatologists are interested in analysing the atmospheric electricity variable only under fair weather conditions (the definition of fair weather is given later). The analyses under fair weather conditions are of particular interest because the fair-weather electricity variable is a valuable source of information in global climate research  \citep{tammet2009joint} as well as with regards to air pollution \citep{israelsson2001variation}.

\citet{tammet2009joint} published an open-access database of atmospheric electricity time series accompanied by some meteorological variables. Most of the data come from weather stations across the former Soviet Union states and their data quality is assessed as high \citep{tammet2009joint}. In this article, we analyse the time series of one weather station, namely that measured at the station near Tashkent, Uzbekistan. The atmospheric electricity was recorded between the years 1989 and 1993 in the form of hourly averages. Besides the atmospheric electricity, a number of other meteorological variables were measured, of which we use two: the wind speed and the total cloudiness.

\begin{figure}[]
\centering
\includegraphics[width=\textwidth]{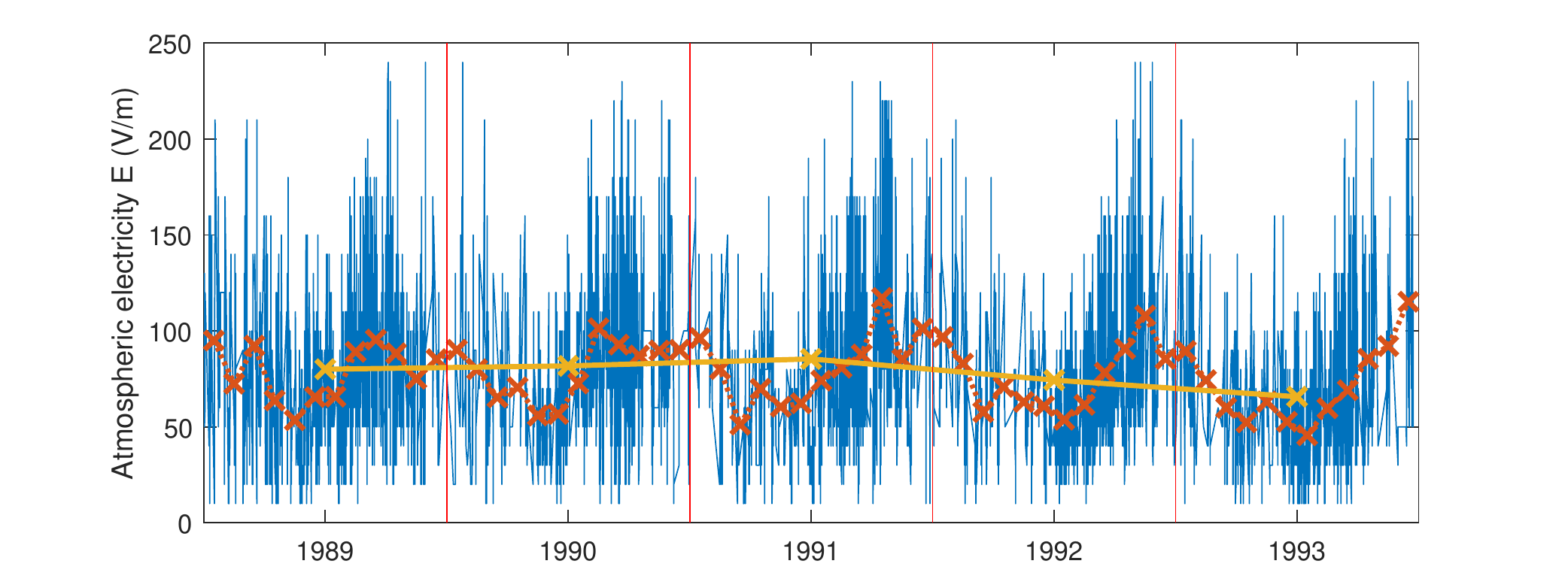}
\caption{Overview of the fair-weather atmospheric electricity time series measured in Tashkent, Uzbekistan. All fair-weather hourly measurements (blue line) accompanied by monthly means (brown crosses, brown dotted line) and yearly means (yellow crosses, yellow solid line).}
\label{fig:full_ts_means}
\end{figure}

The definition of the fair-weather criteria is not simple and can often be relatively subjective \citep{xu2013periodic}. Inspired by criteria in climatology research \citep{xu2013periodic,israelsson2001variation}, we define the weather conditions as fair if the particular hourly measurement satisfies all of the following conditions:
\begin{itemize}
\item the wind speed is less than $20 \, km/h$,
\item the sky is clear (the total cloudiness variable is equal to $0$),
\item the atmospheric electricity $E$ satisfies $0 < E < 250 \, V/m$.
\end{itemize}

Because of the above stated fair-weather criteria (and some genuinely missing data in the database), the resulting fair-weather electricity time series is, in fact, unevenly sampled time series. Nevertheless, we assume there exists an underlying continuous truth, corresponding to the atmospheric electricity if the weather was fair.
The latent process of fair-weather atmospheric electricity is considered smooth and its values are observed only under the fair-weather conditions, possibly with a deviation from the truth (noise). Based on the above discussed natural mechanisms, we justify the assumption that the censoring protocol is independent of the underlying fair-weather atmospheric electricity process.

The underlying fair-weather atmospheric electricity process is a continuous scalar time series. Previous research \citep{hormann2010weakly,hormann2015dynamic,hormann2016detection,aue2015prediction} has demonstrated the usefulness of segmenting a continuous scalar time series into segments of an obvious periodicity, usually days, and thus constructing a functional time series. A key benefit of this practice is the separation of intra-day variability and the temporal dependence across the days while preserving a fully non-parametric model.

We use the same approach in our analysis as well. We segment the (latent) continuous time series into days and consider each day us an unobserved (latent) functional datum defined on $[0,24]$. We place the hourly observations in the middle of the hour interval, i.e. $0.5,1.5,2.5,\dots,23.5$. Because of the above fair-weather criteria, the constructed fair-weather atmospheric electricity time series falls into the sparsely observed functional time series framework defined in Section \ref{subsec:observation scheme}.

Figure \ref{fig:full_ts_means} presents an overview of the considered fair-weather atmospheric electricity time series accompanied by monthly and yearly means. Figure \ref{fig:ts_raw_4_days} provides a zoomed-in perspective into a stretch of data in 4 consecutive days.

\begin{figure}
\centering
\includegraphics[width=0.9\textwidth]{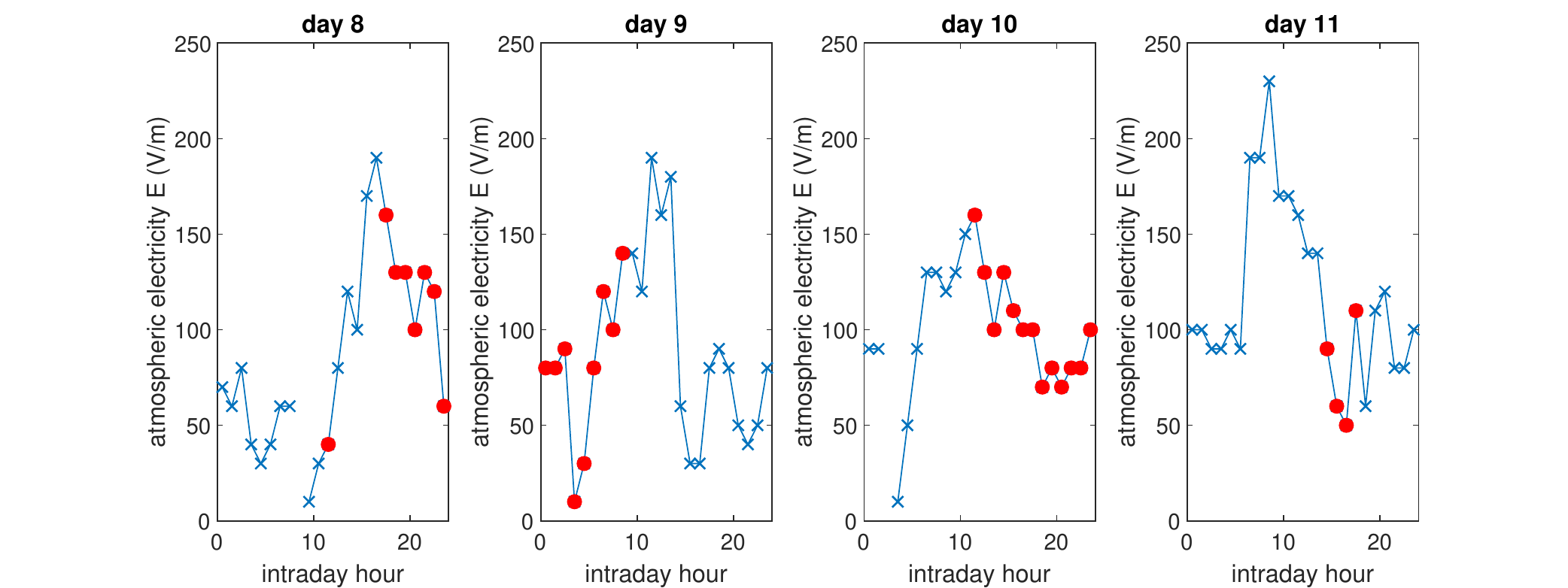}
\caption{Example of atmospheric electricity profiles over 4 consecutive days. The fair-weather atmospheric electricity measurements are highlighted as red points. The unfair-weather measurements (blue crosses) are not used for the analysis. }
\label{fig:ts_raw_4_days}
\end{figure}

In summary, the fair-weather atmospheric electricity functional time series has the following features:
\begin{itemize}
\item the data are recorded over 5 years, therefore the time horizon of the functional time series is $T = 1826$ (days),
\item there are 1118 days have at least 1 fair-weather measurement (61 \%),
\item there are 251 gaps in time series (we define a gap as a stretch of days where there is no measurement within these days) with the average length of 2.8 days,
\item there are 12997 fair-weather measurements in total, i.e. 7.1 on average per day, or 11.6 on average per day among the days with at least one measurement.
\end{itemize}

The statistical question raised is the following. Benefiting from the separation of intra-day variability and temporal dependence across the days, can we fit an interpretable model of the process dynamics? Additionally, we aim to recover the latent functional data, fill in the gaps in the data, remove the noise, and construct confidence bands.

We analyse the fair-weather atmospheric electricity data by the means of Section \ref{sec:model}. Initially, after removing the intra-day dependence by subtracting the estimate $\hat\mu(\cdot)$ we inspect the periodicity identification chart introduced in Section \ref{subsec:periodic behaviour identification}. Specifically, we construct the said chart with $\QBartlett=1000$ and plot the trace of the estimated spectral density operator against frequencies $\omega\in(0,\pi)$. We identify the peaks of this plot as suggesting the presence of periodicities in the corresponding frequencies.

The largest peak in Fig. \ref{fig:periodogram} clearly corresponds to yearly periodicity together with a half-year harmonic. The peak is not entirely at 365 days because of the combination of the following factors: discretisation of the frequency grid, numerical rounding, and most likely the slight smoothing by $\QBartlett = 1000$.

\begin{figure}
\centering
\includegraphics[width=0.9\textwidth]{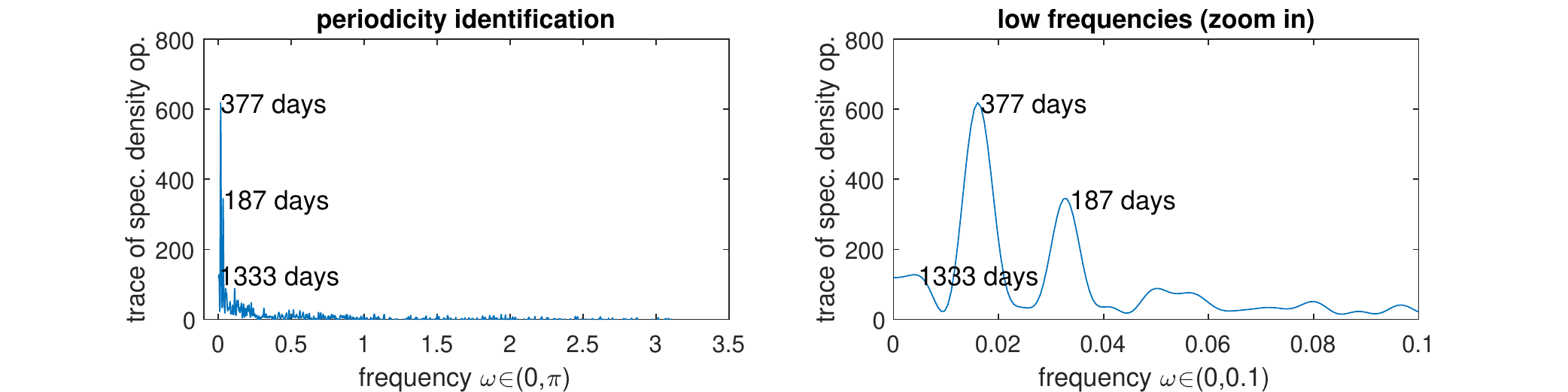}
\caption{\textbf{Left:} The periodicity identification plot with $\QBartlett=1000$. The labels at first 4 peaks convert the frequency into the corresponding periodicity. \textbf{Right:} Zoom-in into low frequencies. 
}
\label{fig:periodogram}
\end{figure}

Once the yearly periodicity is discovered, we opt to model it deterministically, as is usual in (scalar) time series. Thus we propose the model
\begin{equation}
\label{eq:data_analysis_model}
Y_{tj} = \mu(x_{tj}) + s_t + X_t(x_{tj}) + \epsilon_{tj}
\end{equation}
where $Y_{tj}$ are the observed measurements at locations $x_{tj}$, $\mu(\cdot)$ is the intra-day mean, $s_t$ is yearly seasonality adjustment, and the ``residual'' process $X_t(\cdot)$ is a zero-mean stationary weakly-dependent functional time series. The assumptions of an additive relation of $\mu(\cdot)$ and $s_t$ as well as the stationarity of $X_t(\cdot)$ were justified by exploratory analysis.

We fit the model \eqref{eq:data_analysis_model} in the following order. First, we estimate $\mu(\cdot)$ by a local-linear smoother. Nevertheless, we expect the mean function to be periodic and assume $\mu(0)=\mu(24)$. Thus we modify the estimator \eqref{eq:local_LS_for_mu} to measure the distance between $x$ and $x_{tj}$ as if the endpoints of the interval $[0,24]$ were connected. Having estimated $\hat\mu(\cdot)$, we estimate the yearly periodic seasonality adjustment $s_t$ again by a local-linear smoother, again by assuming continuity between first day and last day of the year. The smoothing parameter was chosen by leave-one-year-out cross-validation. Figure \ref{fig:est_mu_st} presents the estimates $\hat{\mu}(\cdot)$ and $\hat{s}_t$. We observe that the intraday mean exhibits two peaks at around 4 a.m. and 3 p.m. The yearly seasonality is almost sinusoidal with low values in the spring and summer and high values in the autumn and winter.

\begin{figure}
\centering
\includegraphics[width=0.9\textwidth]{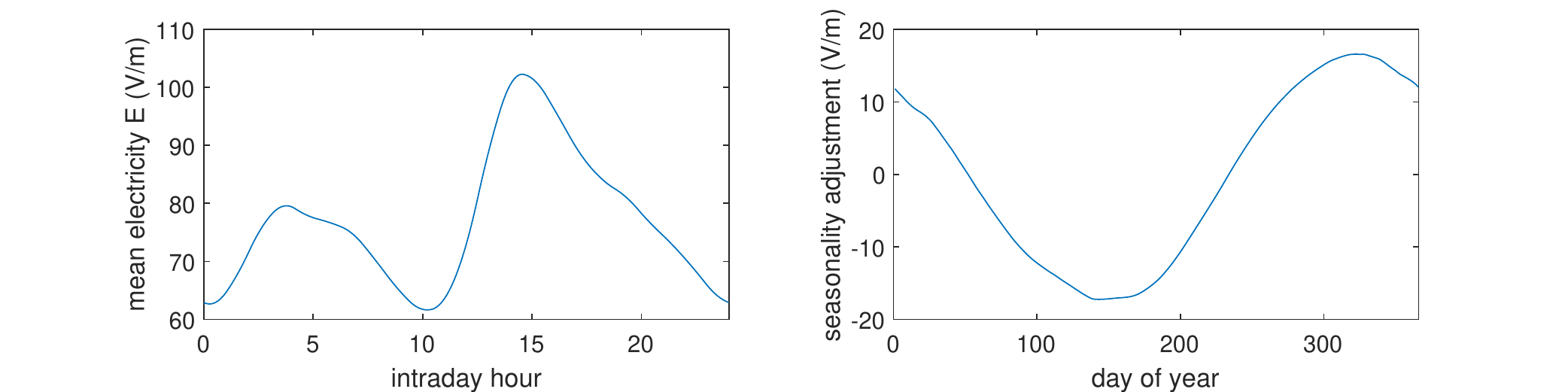}
\caption{\textbf{Left:} The estimated intra-day mean $\hat{\mu}(\cdot)$. \textbf{Right:} the estimated yearly seasonality adjustment $\hat{s}_t$}
\label{fig:est_mu_st}
\end{figure}

Once the first-order structure given by $\mu(\cdot)$ and $s_t$ is estimated, we calculate the raw covariance \eqref{eq:raw_autocovariances} by subtracting both $\hat{\mu}(x)$ and $\hat{s}_t$.
The lag-0 covariance kernel $R_0(\cdot,\cdot)$ is estimated by \eqref{eq:local_LS_cov_lag_0}. 
For the estimation of the components of \eqref{eq:local_LS_est_sigma2}, namely $\hat{V}(\cdot)$ and $\bar{R}_0(\cdot)$, we use the same periodicity adjustment as for $\hat{\mu}(\cdot)$ because we expect the marginal variance (with and without the ridge contamination) to be continuous across midnight.
For illustration and interpretation purposes we estimate also the lag-1 autocovariance $R_1(\cdot,\cdot)$ by \eqref{eq:local_LS_cov_lag_nu}.
Figure \ref{fig:est_covariance_kernels_without_lines} shows the surface plots of these estimates. An interesting element of the estimated lag-0 covariance kernel is the peak at afternoon hours signifying higher marginal variance of the fair-weather atmospheric electricity in the afternoon hours. The estimated lag-0 correlation kernel demonstrates that the observations measured close to each other are highly correlated and the correlation diminishes as the distance grows. The estimated lag-1 autocovariance and autocorrelation kernels show that the correlation between two consecutive days is positive. The lag-1 autocorrelation kernel features a lifted-up surface up to correlation 1 in the eastern corner of the surface plot. The clear interpretation is that the late hours of one day are strongly correlated with early morning hours of the following day.

\begin{figure}
\centering
\includegraphics[width=0.78\textwidth]{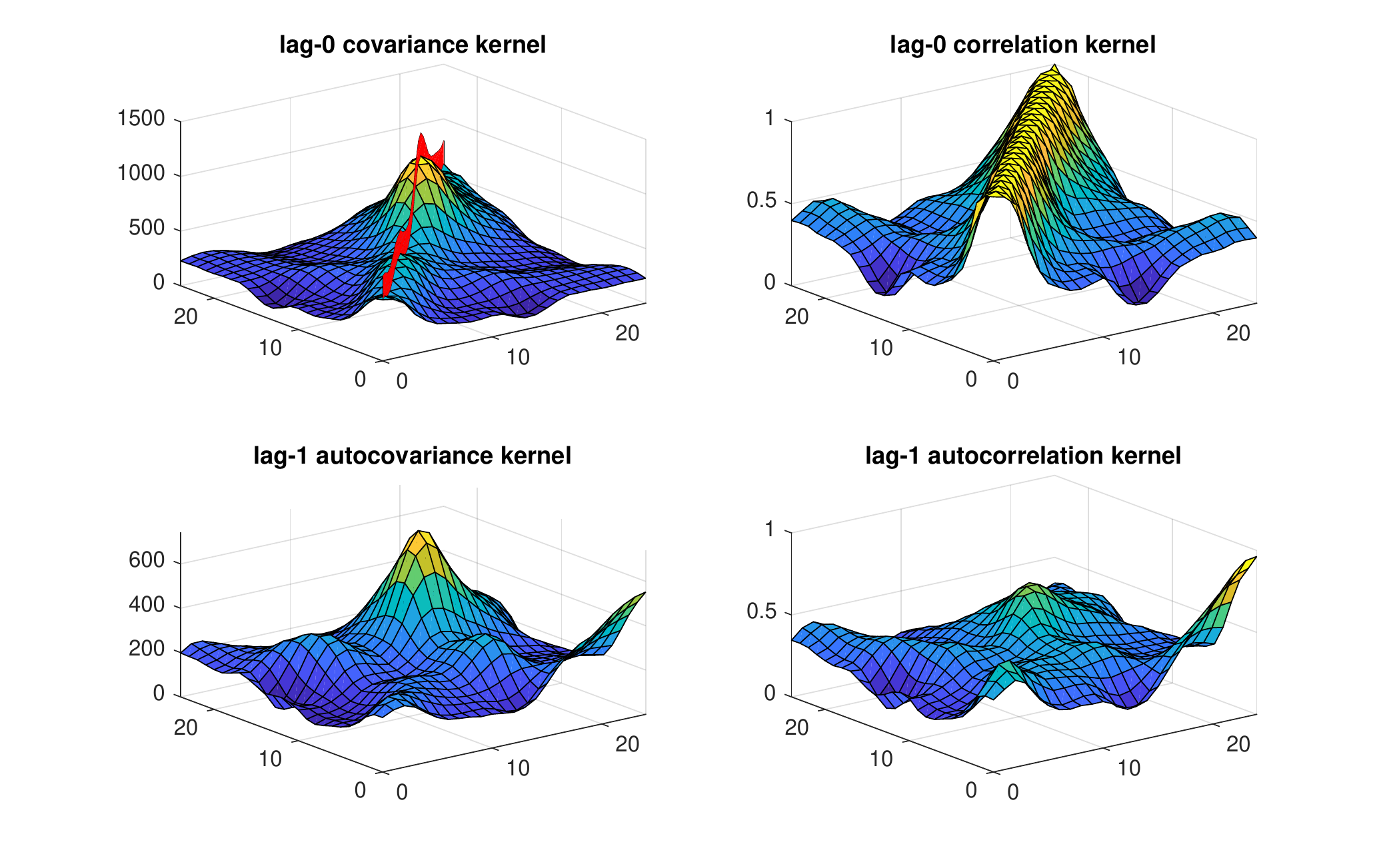}
\caption{
\textbf{Top-left:} $\hat{R}_0(\cdot,\cdot)$, the estimated lag-0 covariance kernel $R_0(\cdot,\cdot)$ (surface), the ridge contamination by the measurement error (red) \textbf{Top-right:} the correlation kernel corresponding to $\hat{R}_0(\cdot,\cdot)$.
\textbf{Bottom-left:} $\hat{R}_1(\cdot,\cdot)$, the estimated lag-0 covariance kernel $R_1(\cdot,\cdot)$.
\textbf{Bottom-right:} the correlation kernel corresponding to $\hat{R}_1(\cdot,\cdot)$.
}
\label{fig:est_covariance_kernels_without_lines}
\end{figure}

\begin{figure}
\centering
\includegraphics[width=0.78\textwidth]{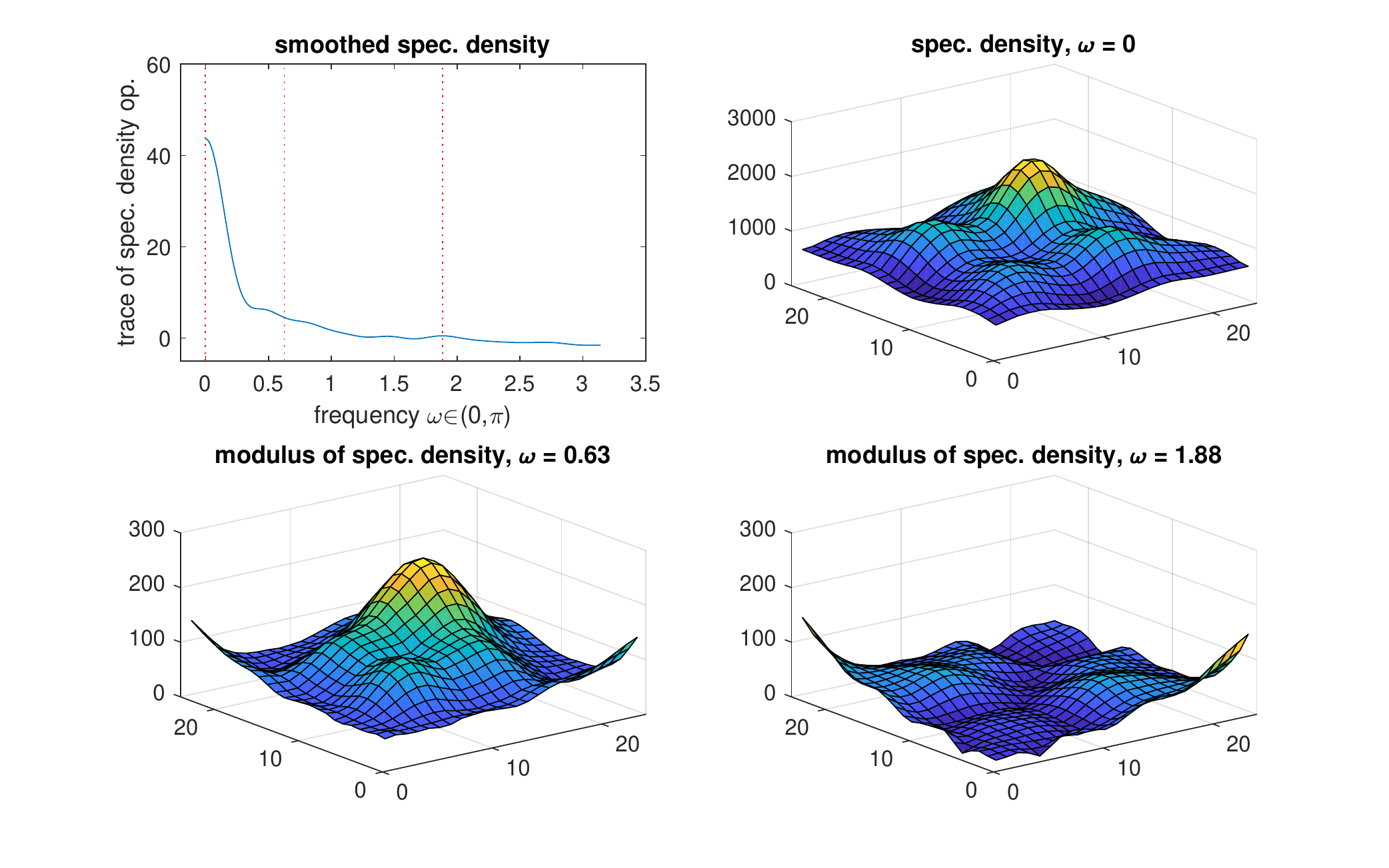}
\caption{\textbf{Top-left:} the traces of estimated spectral density operators with the highlighted frequencies considered in the next plots.
\textbf{Top-right:} the estimated spectral density at frequency $\omega=0$ (it is a real-valued kernel for $\omega=0$).
\textbf{Bottom-left and bottom-right:} the modulus of the complex-valued estimated spectral density at frequencies $\omega=0.63$ and $\omega=1.88$ respectively.
}
\label{fig:specDensity_4_paper}
\end{figure}

In order to estimate the spectral density consistently, we need to select a moderate value of Bartlett's span parameter $\QBartlett$. Plugging in the size of the dataset into the formula 
\eqref{eq:Q_Bartlett decision rule} we set $\QBartlett = 19$. Figure \ref{fig:specDensity_4_paper} presents a few views on the estimated spectral density kernels.

Once the spectral density is estimated, we apply the functional recovery method of Section \ref{subsec:functional_data_recovery} and estimate the unobserved functional data. The method produces estimates of intra-day profiles of fair-weather atmospheric electricity that can be interpreted as predicted atmospheric electricity if the weather was fair at given time, without the modelled noise. As a by-product, the method fills in the gaps in the data (the stretches of days without any measurement). Another output is the construction of confidence bands (under the Gaussianity assumption). Figure \ref{fig:recovery_4_days} presents 4 consecutive days with estimated (noiseless) fair-weather atmospheric electricity together with 95\%-simultaneous confidence bands. It is important to note that these bands are supposed to cover the assumed smooth underlying functional data, not the observed data produced by adding measurement errors to the smooth underlying process.


\begin{figure}
\centering
\includegraphics[width=\textwidth]{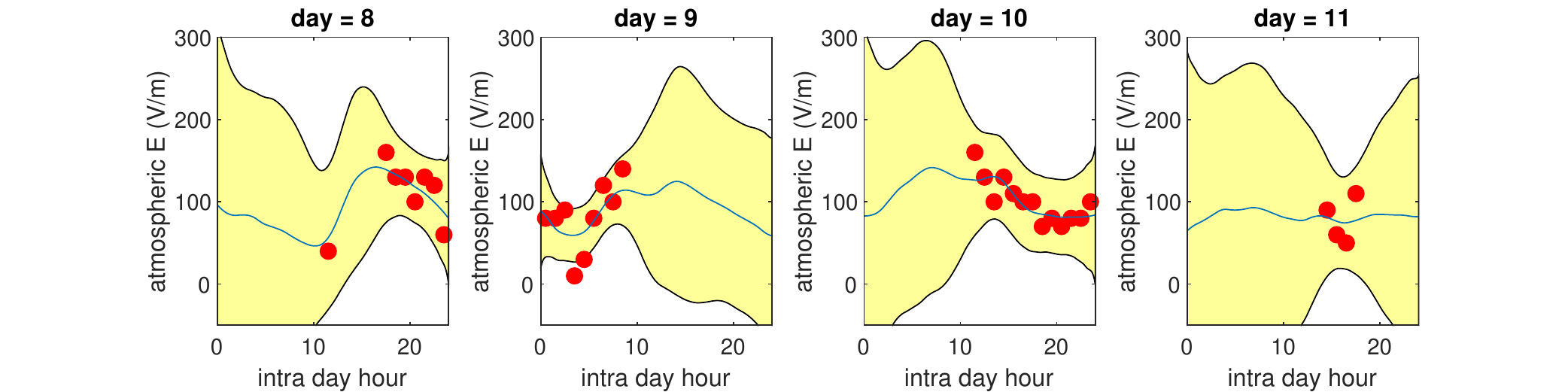}
\caption{Fair-weather atmospheric electricity hourly measurements (red points) over 4 consecutive days; functional recovery of the latent smooth fair-weather atmospheric electricity process (blue); 95\%-simultaneous confidence bands for the functional data of the said latent process (yellow).}
\label{fig:recovery_4_days}
\end{figure}

\appendix
\newpage

\addtocontents{toc}{\protect\setcounter{tocdepth}{1}}

\section{Practical Implementation Concerns}
\label{appendix:practical_implementation_concerns}

\subsection{Selection of bandwidths $B_\mu$, $B_R$, and $B_V$}
\label{subsec:selection of B_mu, B_R and B_V}
Our estimation methodology involves three bandwidth parameters $B_\mu, B_R, B_V$ that need to be selected based on some data-driven criterion. To reduce the computational cost we choose to perform the selection of the parameters in successive fashion.

The selection of a bandwidth parameter in kernel smoothing has been extensively studied in literature for the case of locally polynomial regression. The classical selector by \citet{ruppert1995} calculates the asymptotic mean square error and plugs-in some estimated quantities. However, their methodology applies to the independent case which is distinctly different from the setting of this paper and hence we opt for a cross-validation selection procedure. The selection of the smoothing parameters by cross-validation has already been implemented by \citet{yao2005functional}. Here we use a similar approach.

To further reduce the computational requirements we opt for a K-fold cross-validation strategy instead of the leave-one-curve-out cross-validation originally suggested by \citet{rice1991estimating}. For the K-fold cross-validation, we work with $K=10$ partitions, as follows.
We randomly split the functional curves into $K$ partitions and denote the time indices sets as
$ \mathcal{T}_1, \dots, \mathcal{T}_K $.
For each $k\in\{1,\dots,K\}$, denote $\hat{\mu}^{(-k), B_\mu^0}$ the estimate of the common mean function $\mu$ calculated by the smoother \eqref{eq:local_LS_for_mu} from data without the partition $k$ and using the candidate smoothing parameter $B_\mu^0$. We select the smoothing parameter $B_\mu$ by minimizing the following loss:
\begin{equation}
\label{eq:CV_minimisation_B_mu}
B_\mu = \argmin_{B_\mu^0}
\frac{1}{K}\sum_{k=1}^K \sum_{t\in \mathcal{T}_k } \sum_{j=1}^{N_t}
\left\{
Y_{tj} - \hat{\mu}^{(-k),B_\mu^0}( x_{tj} ) 
\right\}^2
.
\end{equation}

Once the smoothing parameter $B_\mu$ is chosen we estimate the function $\hat\mu$ from all data and use it in the second step to select $B_R$ and $B_V$ for smoothing the covariance kernels. We choose these smoothing parameters only while smoothing the lag-zero covariance.   The reason behind this is that we expect the same smoothness for higher order lags and the selection of the parameters on only one covariance kernel reduces the computational cost, which would otherwise become substantial.
We again employ K-fold cross-validation. Denote $\hat{R_0}^{(-k), B_R^0}$ the estimate of $R_0$ obtained by the smoother \eqref{eq:local_LS_cov_lag_0} calculated from the data without the partition $k$ and using the candidate smoothing parameter $B_R^0$.
The smoothing parameters $B_R$ is selected by minimizing the following loss:
\begin{equation}
\label{eq:CV_minimisation_B_R}
B_R = \argmin_{B_R^0} 
\frac{1}{K} \sum_{k=1}^K
\sum_{t\in\mathcal{T}_k}
\sum_{i,j=1}^{N_t}
\left\{
\left(Y_{ti} - \hat{\mu}(x_{ti})\right)
\left(Y_{tj} - \hat{\mu}(x_{tj})\right)
- \hat{R_0}^{(-k),B_R^0}(x_{ti},x_{tj})
\right\}^2.
\end{equation}

To select the smoothing parameter $B_V$, we denote $\hat{V}^{(-k), B_V^0}$ the estimate of the diagonal of $R_0(\cdot,\cdot)$ including the ridge contamination, from the data except the partition $k$ and using the candidate smoothing parameter $B_V^0$. The parameter $B_V$ is selected by minimizing the following loss:
\begin{equation}
\label{eq:CV_minimisation_B_V}
B_V = \argmin_{B_V^0}
\frac{1}{K} \sum_{k=1}^K
\sum_{t\in\mathcal{T}_k}
\sum_{i=1}^{N_t}
\left\{
\left(Y_{ti} - \hat{\mu}(x_{ti})\right)^2
- \hat{V}(x_{ti})^{(-k), B_V^0}
\right\}^2
\end{equation}

%

Once the minimizers $B_R$ and $B_V$ have been found, we construct the estimate of the lag-zero covariance kernel $\hat{R_0}$ and the measurement error $\widehat{\sigma^2}$ from the full data. The bandwidth parameter $B_R$ will be used for estimation of the spectral density because we expect the same degree of spatial smoothness for spectral density kernels over all frequencies.

To numerically solve the optimization problems \eqref{eq:CV_minimisation_B_mu}, \eqref{eq:CV_minimisation_B_R}, and \eqref{eq:CV_minimisation_B_V} we use MATLAB's implementation of the Bayesian optimisation algorithm (BayesOpt). A review of this algorithm can be found for example in  \citet{mockus2012bayesian}.

\subsection{Selection of Bartlett's span parameter $\QBartlett$}
\label{subsec:selection of Q_Bartlett}

The selection of the parameter $\QBartlett$, i.e. the number of lags taken into account when estimating the dynamics, is a challenging problem in general. Selection rules for the bandwidth parameter for smoothing in the frequency domain, which is equivalent to Bartlett's estimate as explained in Subsection \ref{subsec:Spectral Density Kernels Estimation}, is reviewed in  \citet{fan2008nonlinear} for the case of one-dimensional time-series. The selection of the parameter $\QBartlett$, or equivalently the bandwidth parameter for frequency domain smoothing, has nevertheless not been explored for the case of functional time-series. Neither  \citet{panaretos2013fourier} nor \citet{hormann2015dynamic} provide data-dependent criteria, but instead rely on a prior choices based on asymptotic considerations.

The selection of the tuning parameter $\QBartlett$ is better studied in a related problem --- the estimation of the long-run covariance, which is in fact the value of the spectral density at frequency $\omega=0$. The long-run covariance can be estimated by the Bartlett's formula \eqref{eq:bartletts_estimate_full_observations} for frequency $\omega=0$. Data adaptive selection procedures for the tuning parameter $\QBartlett$ have been suggested in this context by  \citet{rice2017plug} and \citet{horvath2016adaptive_bandwidth}.

However, it is unclear how to incorporate the sparse sampling scheme to the above-cited rules.
To address this issue, we run a number of numerical experiments, simulating datasets from a couple of smooth functional time-series, and estimating the spectral density with a varying value of the parameter $\QBartlett$. By investigating the estimation error, we propose guidelines on selecting $\QBartlett$ in the form of a rule of thumb. The details on the simulation study are reported in Section \ref{subsec:simulations_estimation of spectral density}, the results are recorded in Section \ref{appendix_results:determination_Q_Bartlett}, and the proposed rule of thumb is stated in formula \eqref{eq:Q_Bartlett decision rule}.

\subsection{Representation of Functional Data}
\label{subsec:basis representation}

In the classical functional data analysis, one typically works with the functional data expressed with respect to a given finite (but possibly large) fixed basis. The usual choice is B-splines, Fourier basis, or wavelets. Throughout this article (in simulations and the data analysis) we choose to work with the B-spline basis of order $3$ because B-splines are efficient in expressing smooth functions  \citep{ramsay2007applied}.

A useful feature of the B-spline basis is the interpolation capability \citep{ramsay2007applied} which we benefit from. The smoother based estimators introduced in Sections \ref{subsec:nonparam_estimation} and \ref{subsec:Spectral Density Kernels Estimation} require to perform the smoothing at every point of $[0,1]$ or $[0,1]^2$. Therefore one has to choose a grid where the smoother is to be calculated.
To mitigate the computational time, we want to avoid executing the smoother on a very dense grid. Therefore we evaluate the smoother on a grid with a moderate number of points. Specifically, we operate with the equidistant grid with $21$ and $21\times 21$ points for functions and 2-dimensional kernels respectively. Once the smoothing estimator is realized on this grid, the functional counterparts as functions on $[0,1]$ and kernels on $[0,1]^2$ are retrieved by the B-spline interpolation. This technique is in contrast to \citet{yao2005functional} who evaluate the smoother on the equidistant grid of size $51\times 51$ and treat the covariance kernel as a $51\times 51$ matrix and the functional data as vectors. Our simulations (not reported here) suggest that these two approaches have essentially the same statistical performance for smooth functional data. Indeed the stochastic estimation error dominates the numerical approximation error of the fully functional quantities. From the implementation point of view, the B-spline interpolation approach shortens the computational time, reduces the dimension of the data to be stored, and directly expresses the functional quantities with respect to a basis. 

Once the smoother-based estimates of the model dynamics expressed in the B-spline basis, we assume that the functional data themselves are expressed within the fixed finite B-spline basis. Of course, the functional data are not directly observed and thus we treat the unknown basis coefficient as latent variables to be retrieved. Using the calculus for functions and operators expressed with respect to a basis \citep{ramsay2007applied}, the functional recovery formulae of Section \ref{subsec:functional_data_recovery} can be rewritten and their evaluation is based on vector and matrix manipulations, albeit in a much lower dimensional setting.


\subsection{Forecasting}
\label{subsec:forecasting}

A natural next step to consider, and indeed one of the main reasons why one may be interested in recovering the functional time-series dynamics, is that of forecasting. In this section, we comment on how the forecasting problem naturally fits into the functional data recovery framework introduced in Section \ref{subsec:functional_data_recovery}.

Assume that we are given sparse data $\{Y_{tj}: 1\leq j\leq N_t, 1\leq t \leq T\}$ and we wish to forecast the functional datum $X_{T+r}$ for $r\in\mathbb{N}$ as well as to quantify the uncertainty of the forecast.
We define the random element $\mathbb{X}_{T+r} = [X_1,\dots,X_T,X_{T+1},\dots,X_{T+r}] \in\mathcal{H}^{T+r}$. If the forecasts for the intermediate data $X_{T+1},\dots,X_{T+r-1}$ are not of interest, we may delete these elements and naturally alter the explained method below. Nevertheless, we opt to explain the approach for forecasting up to the time $T+r$ simultaneously.

We utilize the notation introduced in Subsection \ref{subsec:functional_data_recovery}.
By extending the formulae \eqref{eq:mathbb_X_mu} and \eqref{eq:mathbb_X_cov} for $t=1,\dots,T+r$ we obtain the law of $\mathbb{X}_{T+r}$, i.e. the joint law of $X_1,\dots,X_{T+r}$, and can calculate their conditional distribution given the observed data $\mathbb{Y}_T$. In particular, by taking $s=T+r$ in the equations \eqref{eq:BLUP_Xs_theoretical}, \eqref{eq:confidence_band_Xs_population}, and \eqref{eq:simul_confidence_band_population} we obtain the forecast, the pointwise confidence band, and the simultaneous confidence band respectively for the functional datum $X_{T+r}$. In practice, we substitute the unknown population level quantities by their empirical estimators. Therefore, by taking $s=T+r$ in the equations \eqref{eq:BLUP_Xs_empirical}, \eqref{eq:confidence_band_Xs_empirical}, and \eqref{eq:simul_confidence_band_empiric} we obtain the forecast, the (asymptotic) pointwise confidence band, and the (asymptotic) simultaneous confidence band for $X_{T+r}$.


\section{Proofs of Formal Statements}
\label{sec:proofs}

\subsection{Proof of Theorem \ref{thm:mean_and_autocov_function} }

We start with the smoother for the common mean function $\mu(\cdot)$.
Its estimator $\hat\mu(x)$, the minimizer of \eqref{eq:local_LS_for_mu}, explicitly:
\begin{equation}\label{eq:explicit_formula_mu}
\hat\mu(x) = \frac{Q_0 S_2 - Q_1 S_1}{S_0 S_2 - S_1^2}, 
\end{equation}
where
$$ S_r = \frac{1}{T} \sum_{t=1}^T \sum_{j=1}^{N_t} \left(\frac{x_{tj}-x}{B_\mu}\right)^r \frac{1}{B_\mu} K\left(\frac{x_{tj}-x}{B_\mu}\right), \qquad r=0,1,2,$$
$$ Q_r = \frac{1}{T} \sum_{t=1}^T \sum_{j=1}^{N_t} \left(\frac{x_{tj}-x}{B_\mu}\right)^r Y_{tj} \frac{1}{B_\mu} K\left(\frac{x_{tj}-x}{B_\mu}\right), \qquad r=0,1.$$
All of the above quantities are functions of $x\in[0,1]$ and all of the operations are to be understood in the pointwise sense, and this includes the division operation. In Lemma \ref{lemma:asymptotics_of_S_for_mu} and Lemma \ref{lemma:asymptotics_of_Q_for_mu}  we determine the asymptotic behaviour of $S_r$ and $Q_r$, respectively.

\begin{lemma}\label{lemma:asymptotics_of_S_for_mu}
Under \ref{assumption:B.1}, \ref{assumption:B.2} and \ref{assumption:B.7},
for $r=0,1,2$
$$ \sup_{x\in[0,1]} \left| S_r - M_{[S_r]} \right| = \Op\left(  \frac{1}{\sqrt{T}B_\mu} +B^2_\mu\right) $$
where $M_{[S_0]}=\Ez{ N}g(x), M_{[S_1]}=0, M_{[S_2]}= \Ez{N}\sigma^2_K g(x)$ and $\sigma^2_K = \int v^2 K(v) \D{v}$.
\end{lemma}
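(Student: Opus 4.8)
The plan is to exploit that $S_r$ is a function of the sampling design alone --- the counts $N_t$ and the locations $x_{tj}$ --- and does not involve the curves $X_t$ or the errors $\epsilon_{tj}$. By \ref{assumption:B.1} and \ref{assumption:B.2} the pairs $(N_t,\{x_{tj}\}_{j})$ are independent and identically distributed across $t$, so no mixing condition is required and
$$ S_r(x) = \frac{1}{T}\sum_{t=1}^{T} Z_t^{(r)}(x), \qquad Z_t^{(r)}(x) = \sum_{j=1}^{N_t}\left(\frac{x_{tj}-x}{B_\mu}\right)^r \frac{1}{B_\mu}K\!\left(\frac{x_{tj}-x}{B_\mu}\right), $$
is a genuine i.i.d.\ average. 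I would decompose $S_r - M_{[S_r]} = \big(\E S_r - M_{[S_r]}\big) + \big(S_r - \E S_r\big)$ into a deterministic bias and a centred stochastic fluctuation, and treat each separately, uniformly in $x$.

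For the bias I would condition on $N_t$ and use that the $x_{tj}$ are i.i.d.\ from $g$ to get $\E S_r(x) = \Ez{N}\int_0^1 \big(\tfrac{u-x}{B_\mu}\big)^r \tfrac{1}{B_\mu}K\big(\tfrac{u-x}{B_\mu}\big)g(u)\,\D u$; the substitution $v=(u-x)/B_\mu$ turns this into $\Ez{N}\int v^r K(v)\,g(x+B_\mu v)\,\D v$. A second-order Taylor expansion of $g$ (licensed by \ref{assumption:B.2}), together with $\int K=1$ and the vanishing odd moments $\int vK=\int v^3K=0$, reproduces the stated leading terms $M_{[S_0]}=\Ez{N}g(x)$, $M_{[S_1]}=0$, $M_{[S_2]}=\Ez{N}\sigma_K^2 g(x)$, with an $O(B_\mu^2)$ remainder for $r=0,2$. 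For $r=1$ the leading constant is likewise $0$ by symmetry, and the surviving first-order term, of size $\Ez{N}B_\mu g'(x)\sigma_K^2$, is innocuous because $S_1$ only ever enters $\hat\mu$ through the local-linear combinations $Q_0S_2-Q_1S_1$ and $S_0S_2-S_1^2$, in which it cancels to second order. The boundedness of $g''$ makes all remainders uniform in $x$, up to the $O(B_\mu)$-measure boundary region where the kernel support is truncated, which does not affect the order.

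The main obstacle is upgrading pointwise control of $D(x):=S_r(x)-\E S_r(x)$ to a uniform-in-$x$ bound attaining the (suboptimal) rate $1/(\sqrt{T}B_\mu)$. Rather than chaining, I would invoke a one-dimensional interpolation (Agmon/Sobolev) inequality on $[0,1]$,
$$ \|D\|_\infty^2 \le C\big(\|D\|_2^2 + \|D\|_2\,\|D'\|_2\big), $$
valid since the Epanechnikov $K$ is Lipschitz and $g$ is smooth, so $D\in H^1$ almost surely. Because $D$ is a centred i.i.d.\ average, a direct second-moment calculation gives $\E|D(x)|^2=\var\!\big(S_r(x)\big)=O\!\big(1/(TB_\mu)\big)$, the factor $B_\mu^{-1}$ coming from $\E\big[(B_\mu^{-1}K(\cdot/B_\mu))^2\big]=O(B_\mu^{-1})$ and the $N_t(N_t-1)$ cross terms being controlled by the finiteness of the relevant moments of $N$. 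Differentiating in $x$ costs an extra $B_\mu^{-1}$ through $K'$, whence $\E|D'(x)|^2=O\!\big(1/(TB_\mu^3)\big)$. Integrating over $x$ and applying Cauchy--Schwarz inside the interpolation bound yields
$$ \E\|D\|_\infty^2 \le C\Big(\tfrac{1}{TB_\mu} + \big(\tfrac{1}{TB_\mu}\big)^{1/2}\big(\tfrac{1}{TB_\mu^3}\big)^{1/2}\Big) = O\!\Big(\tfrac{1}{TB_\mu^2}\Big), $$
so that $\sup_x|D(x)|=\Op\!\big(1/(\sqrt{T}B_\mu)\big)$ by Markov's inequality. Combining with the $O(B_\mu^2)$ bias gives the asserted rate $\Op\big(1/(\sqrt{T}B_\mu)+B_\mu^2\big)$. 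The only genuinely delicate bookkeeping is the variance of $S_r$ and of $S_r'$: the double sums over $j\neq j'$ must be organised so that the combinatorial factors stay finite, and the kernel-support truncation near the endpoints must be tracked when differentiating --- neither alters the leading orders.
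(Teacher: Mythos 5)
Your proposal is correct in substance and reaches the stated rate, but the key step --- uniformity in $x$ of the stochastic fluctuation --- is handled by a genuinely different device than in the paper. The paper writes $u^rK(u)$ as an inverse Fourier integral, so that the $x$-dependence of $S_r(x)-\E S_r(x)$ enters only through the unimodular factor $e^{-\I xv}$; the supremum over $x$ then passes inside the integral for free, a pointwise-in-$v$ variance bound of order $T^{-1}$ suffices, and the $B_\mu^{-1}$ is paid through $\int|\zeta_r(vB_\mu)|\,\D v$. You instead control $\|D\|_\infty$ by the one-dimensional interpolation inequality $\|D\|_\infty^2\le C(\|D\|_2^2+\|D\|_2\|D'\|_2)$ together with pointwise second-moment bounds $\var(D(x))=O(1/(TB_\mu))$ and $\var(D'(x))=O(1/(TB_\mu^3))$; the bookkeeping is right and yields the same $\Op(1/(\sqrt{T}B_\mu))$. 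Both routes avoid chaining/covering arguments and both are intrinsically limited to this suboptimal rate; your route requires $u^rK(u)$ to be weakly differentiable (true for Epanechnikov), while the paper's requires its Fourier transform to be integrable --- comparable demands. Your treatment of the bias is the same Taylor expansion as the paper's, and your observation that $\E S_1 = \Ez{N}B_\mu g'(x)\sigma_K^2+O(B_\mu^2)$ is genuinely $O(B_\mu)$ rather than $O(B_\mu^2)$ --- with the discrepancy cancelling only in the local-linear combinations entering $\hat\mu$ --- is more careful than the paper, which asserts a uniform $O(B_\mu^2)$ bias for all $r$. Two caveats apply equally to both arguments and so are not gaps specific to yours: the cross terms in the variance involve $\E\{N(N-1)\}$, so second moments of $N$ are implicitly needed beyond \ref{assumption:B.1}; and the claimed centring $M_{[S_r]}$ fails in the $O(B_\mu)$-wide boundary strips, where only the ratio defining $\hat\mu$, not the individual $S_r$, retains the stated accuracy.
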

\begin{proof}
We have the usual bias-variance decomposition
$$\Ez{ \sup_{x\in[0,1]} \left| S_r - M_{[S_r]} \right| }\leq
\sup_{x\in[0,1]} \left| \Ez{ S_r } - M_{[S_r]} \right| +
\Ez{ \sup_{x\in[0,1]} \left| \Ez{ S_r } - S_r \right| }.$$
For the bias term, by using the Taylor expansion to order 2 it is easy to show the formulae for $M_{[S_r]}, r=0,1,2$ as well as that
$\Ez{ S_r } = M_{[S_r]} + O(B_\mu^2)$ where the remainder of the Taylor expansion is uniform in $x\in[0,1]$.
Hence
\begin{equation}\label{eq:asymptotics_of_S_for_mu_bias}
\sup_{x\in[0,1]} \left| \Ez{ S_r } - M_{[S_r]} \right| = O(B_\mu^2).
\end{equation}
For the stochastic term, it will be useful to employ the Fourier transform. The inverse Furrier transform of the function $u\mapsto K(u)u^r$ is defined as $\zeta_r(t) = \int e^{-\I ut}K(u)u^r \D{u}$. Therefore we may write
$$
w_{tj} \left( \frac{x_{tj}-x}{B_\mu} \right)^r
= \frac{1}{2\pi B_\mu}\int e^{\I u (x_{tj}-x)/hu_\mu} \left(\frac{x_{tj}-x}{B_\mu}\right)^r \zeta_r(u) \D{u}
=
\frac{1}{2\pi}\int e^{\I v (x_{tj}-x)}(x_{tj}-x)^r \zeta_r(v B_\mu) \D{v} .
$$
Define
\begin{equation}\label{eq:asymptotics_of_S_for_mu_bias_phi}
\phi_{r}(v) = \frac{1}{T }\sum_{t=1}^T \sum_{j=1}^{N_t} e^{\I v x_{tj}}(x_{tj}-x)^r
\end{equation}
and thus we can write
$$ S_r(x) = \frac{1}{2\pi} \int \phi_r(v) e^{-\I x v} \zeta_r(v B_\mu)\D{v} .$$
Thanks to the independence of $\{N_t\}$ and $\{x_{tj}\}$ we can bound the variance of $\phi_{S_r}(x)$
\begin{multline*}
\var(  \phi_{S_r}(x) ) \leq 
\frac{1}{T} \var \left\{  \sum_{j=1}^{N_1} e^{\I v x_{1j} (x_{1j}-x)^r} \right\}
\leq
\frac{1}{T} \Ez{ E\left[ \left\{ \sum_{j=1}^{N_1} e^{\I v x_{1j}}(x_{1j}-x)^r \right\}^2 \mid N_1 \right]}
\leq\\\leq
\frac{1}{T} \Ez{ E\left[ \left\{ \sum_{j=1}^{N_1} \left| e^{\I v x_{1j}} \right|^2 \right\}\left\{ \sum_{j=1}^{N_1} (x_{1j}-x)^{2r} \right\} \mid N_1 \right]}
\leq
\frac{1}{T} E\left[ \Ez{N}  E \left\{ \sum_{j=1}^{N_1} (x_{1j}-x)^{2r} \right\} \mid N_1 \right]
\leq\\\leq
\frac{\Ez{N}}{T} \E\left[ (x_{11}-x)^{2r} \right] \leq \frac{\Ez{N}}{T}.
\end{multline*}
Thus
\begin{multline}
\label{eq:asymptotics_of_S_for_mu_var}
\E\left\{ \sup_x \left| S_r(x) - \Ez{ S_r(x) } \right| \right\} \leq
\frac{1}{2\pi} \int \Ez{\left| \phi_r(v)-\E \phi_{S_r}(v) \right|} |\zeta_r(v B_\mu)| \D{v}
\leq\\\leq
\frac{1}{2\pi} \int \sqrt{\var(  \phi_r(x) ) } |\zeta_r(v B_\mu)| \D{v}
\leq
\frac{ \int |\zeta_r(u)|\D{u}}{2\pi } \frac{\Ez{N}}{\sqrt{T} B_\mu} = O\left(\frac{1}{\sqrt{T}B_\mu}\right)
.
\end{multline}

The proof is concluded by combining \eqref{eq:asymptotics_of_S_for_mu_bias} and \eqref{eq:asymptotics_of_S_for_mu_var}, and by the observation that $\Ez{ |Z_n| } = O(a_n)$ implies $Z_n = \Op(a_n)$ for an arbitrary sequence of random variables $Z_n$ and a sequence of constants $a_n$.
\end{proof}

\begin{lemma}\label{lemma:asymptotics_of_Q_for_mu}
Under \ref{assumption:B.1} --- \ref{assumption:B.3} and \ref{assumption:B.7},
for $r=0,1$
$$ \sup_{x\in[0,1]} \left| Q_r - M_{[Q_r]} \right| = \Op\left(\frac{1}{\sqrt{T}B_\mu} +B^2_\mu\right) $$
where $M_{[Q_0]} = \Ez{N} \mu(x) g(x)$ and $M_{[Q_1]} = 0$.
\end{lemma}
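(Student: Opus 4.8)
The proof runs parallel to that of Lemma \ref{lemma:asymptotics_of_S_for_mu}, the only new feature being that each summand now carries the random amplitude $Y_{tj}=X_t(x_{tj})+\epsilon_{tj}$ instead of a purely deterministic weight. The plan is to use the same bias--variance split $\E\{\sup_x|Q_r-M_{[Q_r]}|\}\le \sup_x|\E Q_r-M_{[Q_r]}|+\E\{\sup_x|Q_r-\E Q_r|\}$ and handle the two pieces in turn, closing with the observation already used for $S_r$ that an $O(a_T)$ bound on $\E|Z_T|$ upgrades to $Z_T=\Op(a_T)$.

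For the bias I would first take the conditional expectation given the design: since the errors are mean zero and independent of everything else, $\E[Y_{tj}\mid x_{tj}]=\mu(x_{tj})$. Using \ref{assumption:B.1}--\ref{assumption:B.2} (independence of the counts $\{N_t\}$ from the i.i.d. locations with density $g$), $\E Q_r$ collapses to the single integral $\E[N]\int v^r K(v)\,(\mu g)(x+vB_\mu)\,\D v$. A second-order Taylor expansion of the twice continuously differentiable product $\mu g$ (by \ref{assumption:B.3} and \ref{assumption:B.2}), together with $\int K=1$ and $\int vK=0$, identifies $M_{[Q_0]}=\E[N]\mu(x)g(x)$ and $M_{[Q_1]}=0$, leaving a remainder controlled uniformly in $x$ exactly as in the $S_r$ computation. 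Boundary points are treated as there.

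For the stochastic part I would reuse the Fourier device. With $\zeta_r(t)=\int e^{-\I ut}K(u)u^r\,\D u$ one writes $Q_r(x)=\frac{1}{2\pi}\int \psi_r(v)\,e^{-\I xv}\,\zeta_r(vB_\mu)\,\D v$, where now $\psi_r(v)=\frac1T\sum_{t=1}^T\sum_{j=1}^{N_t} e^{\I vx_{tj}}(x_{tj}-x)^r Y_{tj}$ carries the extra factor $Y_{tj}$. Bounding $\E|\psi_r(v)-\E\psi_r(v)|\le\sqrt{\var\psi_r(v)}$ and pushing the supremum inside the integral, the whole problem reduces to a uniform-in-$v$ estimate $\var\psi_r(v)=O(1/T)$; the factor $\int|\zeta_r(vB_\mu)|\,\D v=O(1/B_\mu)$ then produces the announced $O(1/(\sqrt T B_\mu))$ rate.

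The main obstacle is precisely this variance bound, which is where the argument genuinely departs from the $S_r$ case. There the summands were independent across $t$, so a one-curve computation sufficed; here the amplitudes $Y_{tj}$ inherit the temporal dependence of $\{X_t\}$, so the double sum $\frac{1}{T^2}\sum_{t,t'}\cov(\sum_j Z_{tj},\overline{\sum_k Z_{t'k}})$, with $Z_{tj}=e^{\I vx_{tj}}(x_{tj}-x)^rY_{tj}$, has nonnegligible off-diagonal terms. I would condition on the design and the error vector and reduce each cross term to $\cov(X_t(x_{tj}),X_{t'}(x_{t'k}))=R_{t-t'}(x_{tj},x_{t'k})$, bounded uniformly by $\|R_{t-t'}\|_\infty$. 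The standing weak-dependence hypothesis \eqref{eq:assumption_stationarity_summability_of_autocovariance}, $\sum_h\|R_h\|_\infty<\infty$, then makes $\sum_{t,t'}\|R_{t-t'}\|_\infty=O(T)$, so the off-diagonal mass is $O(1/T)$ after the $1/T^2$ prefactor; the diagonal $t=t'$ terms (carrying $R_0$ and the $\sigma^2$ contributed by the independent errors) are $O(1/T)$ just as for $S_r$. This yields $\var\psi_r(v)=O(1/T)$ uniformly in $v$ and $x$, and the proof closes as in Lemma \ref{lemma:asymptotics_of_S_for_mu}.
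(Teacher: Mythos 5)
Your proposal is correct and follows essentially the same route as the paper: the same bias--variance split with a second-order Taylor expansion of $\mu g$ for the bias, the same Fourier-transform device $Q_r(x)=\frac{1}{2\pi}\int\varphi_r(v)e^{-\I xv}\zeta_r(vB_\mu)\,\D v$, and the same variance bound obtained by conditioning on the design, applying the law of total covariance, and invoking the summability $\sum_h\|R_h\|_\infty<\infty$ from \eqref{eq:assumption_stationarity_summability_of_autocovariance}. The paper merely packages your explicit double sum over $(t,t')$ as the statement that the per-curve sums $Z_t=\sum_j e^{\I vx_{tj}}(x_{tj}-x)^rY_{tj}$ form a stationary scalar series with summable autocovariance, which is the identical estimate.
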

\begin{proof}
The proof of Lemma \ref{lemma:asymptotics_of_Q_for_mu} follows the same ideas as that of Lemma \ref{lemma:asymptotics_of_S_for_mu}. We use the bias variance decomposition and a Taylor expansion to order 2 to derive the analogous results as in \eqref{eq:asymptotics_of_S_for_mu_bias} as well as the formulae for $M_{[Q_0]}(x)$ and $M_{[Q_1]}(x)$. We then define
\begin{equation}\label{eq:asymptotics_of_S_for_mu_bias_varphi}
\varphi_r(v) = \frac{1}{T} \sum_{t=1}^T \sum_{j=1}^{N_t} e^{\I v x_{tj}} (x_{tj}-x)^r Y_{tj}
\end{equation}
in analogy to \eqref{eq:asymptotics_of_S_for_mu_bias_phi}. Thus we can write
$$ Q_r(x) = \frac{1}{2\pi} \int \varphi_r(v) e^{-\I x v} \zeta_r(v B_\mu) \D{v}. $$
It remains to bound the variance of \eqref{eq:asymptotics_of_S_for_mu_bias_varphi}. However, the temporal dependence among $Y_{tj}$ must be now taken into account.
First of all remark that for an arbitrary stationary time-series $\{Z_t\}$ with a summable autocovariance function $\rho_Z(\cdot)$, one has:
\begin{equation}
\label{eq:proof_arbitrary_stationary_timeseries_variance}
\var\left(\frac{1}{T} \sum_{t=1}^T Z_t \right) = \frac{1}{T} \sum_{h=-T+1}^{T-1} \rho_Z(h) \left(1-\frac{|h|}{T}\right) \leq \frac{1}{T} \sum_{h=-\infty}^\infty |\rho_Z(h)| .
\end{equation}
Define $Z_t = \sum_{j=1}^{N_t} e^{\I v x_{tj}} (x_{tj}-x)^r Y_{tj}$. This sequence of real random variables constitutes a stationary time-series. By conditioning on $N_t$ and $x_{tj}$, and applying the law of total covariance, we can bound the autocovariance of $\{Z_t\}$ by $|\rho_Z(h)| \leq \max_{x,y} |R_h(x,y)|$ for $h\neq0$. For $h=0$, the bound is augmented by $\sigma^2$ due to the measurement error but this changes nothing  on the summability.
The autocovariance function is summable thanks to the assumption \eqref{eq:assumption_stationarity_summability_of_autocovariance} and we conclude that $ \var \varphi_r(v) = O(1/T)$.
By repeating the same steps as in \eqref{eq:asymptotics_of_S_for_mu_var} we obtain
$$ E \left\{ \sup_{x\in[0,1]} \left| S_r(x) - \Ez{ S_r(x) } \right| \right\}  = O\left(\frac{1}{\sqrt{T}B_\mu} \right)$$
which completes the proof.
\end{proof}

\begin{proof}[Proof of the first part of Theorem \ref{thm:mean_and_autocov_function}]
By combining Lemma \ref{lemma:asymptotics_of_S_for_mu}, Lemma \ref{lemma:asymptotics_of_Q_for_mu}, the formula \eqref{eq:explicit_formula_mu}, and the uniform version of Slutsky's theorem, we obtain the rate \eqref{eq:thm:mean_function:mu}.
\end{proof}

Now we turn our attention to the estimation of the lag-$0$ covariance and lag-$h$ autocovariance kernels. We include the proof only for $h\neq 0$. For $h=0$ one has to exclude the diagonal to evade the measurement errors but the proof is essentially the same. It is possible to explicitly express the minimizer to \eqref{eq:local_LS_cov_lag_nu} (cf. \citet{li2010uniform}). The general principles of the explicit formula deviation are also commented on for the case of spectral density estimation in Section \ref{subsec:proof_of_Thm_2}, which uses similar deviation steps as the estimator of lagged autocovariance kernels.
The explicit formula yields
\begin{equation}
\label{eq:autocov_explicit_formula}
 \hat{R}_h(x,y) = \left( 
\mathscr{A}^{(h)}_1 Q_{00}^{(h)} -
\mathscr{A}^{(h)}_2 Q_{10}^{(h)} -
\mathscr{A}^{(h)}_3 Q_{01}^{(h)}
 \right) \left(\mathscr{B}^{(h)}\right)^{-1} ,
\end{equation}
where $|h|<T$ and
\begin{align*}
\mathscr{A}^{(h)}_1 &= S_{20}^{(h)}S_{02}^{(h)}-\left(S_{11}^{(h)}\right)^2, \qquad 
\mathscr{A}^{(h)}_2  = S_{10}^{(h)}S_{02}^{(h)}-S_{01}^{(h)}S_{11}^{(h)}, \\
\mathscr{A}^{(h)}_3  &= S_{01}^{(h)}S_{20}^{(h)}-S_{10}^{(h)}S_{11}^{(h)}, \quad
\mathscr{B}^{(h)} = \mathscr{A}_1^{(h)} S_{00}^{(h)} - \mathscr{A}_2^{(h)} S_{10}^{(h)} - \mathscr{A}_3^{(h)} S_{01}^{(h)}, \\
S_{pq}^{(h)} &= \frac{1}{T-|h|} \sum_{t=\max(1,1-h)}^{\max(T,T-h)}
\stackrel[j\neq k \text{ if } h=0]{}{\sum_{j=1}^{N_{t+h}} \sum_{k=1}^{N_t}}
\left( \frac{x_{t+h,j}-x}{B_R} \right)^p
\left( \frac{x_{tk}-y}{B_R} \right)^q
\times\\
&\times
\frac{1}{B_R^2}
K\left( \frac{x_{t+h,j}-x}{B_R} \right)
K\left( \frac{x_{tk}-y}{B_R} \right), \\
Q_{pq}^{(h)} &= \frac{1}{T-|h|} \sum_{t=\max(1,1-h)}^{\max(T,T-h)}
\stackrel[j\neq k \text{ if } h=0]{}{\sum_{j=1}^{N_{t+h}} \sum_{k=1}^{N_t}}
G_{h,t}(x_{t+h,j}, x_{tk})
\left( \frac{x_{t+h,j}-x}{B_R} \right)^p
\left( \frac{x_{tk}-y}{B_R} \right)^q \times\\
&\times
\frac{1}{B_R^2}
K\left( \frac{x_{t+h,j}-x}{B_R} \right)
K\left( \frac{x_{tk}-y}{B_R} \right).
\end{align*}
All of the above terms are functions of $(x,y) \in [0,1]^2$ and all operations are understood the pointwise sense, including the pointwise inversion of $\left(\mathscr{B}^{(h)}\right)^{-1} = \left(\mathscr{B}^{(h)}(x,y)\right)^{-1}$.

We asses the uniform asymptotic behaviour of $S_{pq}^{(h)}$ and $Q_{pq}^{(h)}$ in Lemma
\ref{lemma:asymptotics_of_S_for_R} and Lemma \ref{lemma:asymptotics_of_Q_for_R}.

\begin{lemma}\label{lemma:asymptotics_of_S_for_R}
Under \ref{assumption:B.1}, \ref{assumption:B.2}, \ref{assumption:B.7} and \ref{assumption:B.8},
\begin{align}
\label{eq:lemma3:rate}
\Ez{ \sup_{x,y\in[0,1]} \left| S_{pq}^{(h)} - \E S_{pq}^{(h)} \right| }
&\leq
U \frac{1}{\sqrt{T-|h|}} \frac{1}{B_R^2} \\
\label{eq:lemma3:rate2}
\sup_{x,y\in[0,1]} \left| \E S_{pq}^{(h)} - M_{[S_{pq}]} \right|
&= O\left( B_R^2 \right)
\end{align}
where the constant $U$ is uniform for $0\leq p+q\leq 2$, $T\in\mathbb{N}$, $|h|<T$, and
\begin{equation}
\label{eq:lemma3:means}
\begin{gathered}
M_{[S_{00}^{(h)}]} = c_h g(x)g(y), \qquad
M_{[S_{01}^{(h)}]} = M_{[S_{10}^{(h)}]} = M_{[S_{11}^{(h)}]} = 0, \\
M_{[S_{20}^{(h)}]} =  M_{[S_{02}^{(h)}]} = c_h g(x)g(y) \sigma^2_K, \qquad
\sigma^2_K = \int v^2 K(v) \D{v},
\end{gathered}
\end{equation}
where $c_h = (\E N)^2$ for $h\neq 0$ and $c_0 = \E\{N(N-1)\}$.
Moreover, the convergence \eqref{eq:lemma3:rate2} is uniform in $h$.
\end{lemma}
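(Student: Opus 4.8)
The plan is to follow the proof of Lemma~\ref{lemma:asymptotics_of_S_for_mu}, adapted to two spatial coordinates and to the lag structure. The crucial simplification is that $S_{pq}^{(h)}$ is a functional of the sampling design $\{x_{tj}\}$ and the counts $\{N_t\}$ only, so no dependence enters through the latent curves $\{X_t\}$; the only dependence across $t$ is the mild, finite-range one induced by the shared time indices $t+h$ and $t$. I would start from the bias--variance split
\[
\Ez{\sup_{x,y\in[0,1]}\bigl|S_{pq}^{(h)}-M_{[S_{pq}]}\bigr|}
\le
\sup_{x,y\in[0,1]}\bigl|\Ez{S_{pq}^{(h)}}-M_{[S_{pq}]}\bigr|
+\Ez{\sup_{x,y\in[0,1]}\bigl|S_{pq}^{(h)}-\Ez{S_{pq}^{(h)}}\bigr|},
\]
and bound the two terms so that the resulting constants are uniform in $h$.

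For the bias term \eqref{eq:lemma3:rate2}, note that for $h\neq0$ the locations $x_{t+h,j}$ and $x_{tk}$ lie on distinct, hence independent, curves, so the expectation of each summand factorises into a product of two one-dimensional integrals. Substituting $z=(x'-x)/B_R$ in each factor and Taylor-expanding $g$ to second order about $x$ (respectively $y$), the kernel moments $\int K=1$, $\int uK(u)\,\D u=0$ and $\int u^2K(u)\,\D u=\sigma_K^2$ produce exactly the constants in \eqref{eq:lemma3:means}, the count factor $c_h=(\E N)^2$ arising from $\Ez{N_{t+h}N_t}=(\E N)^2$ and the average over the $T-|h|$ summands; for $h=0$ the restriction $j\neq k$ replaces this factor by $c_0=\Ez{N(N-1)}$. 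The Taylor remainder is $O(B_R^2)$ uniformly in $h$, since it is controlled by $\sup|g''|$, which does not depend on $h$ (assumption~\ref{assumption:B.2}).

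For the stochastic term \eqref{eq:lemma3:rate} I would pass to the Fourier domain. Writing $\zeta_p(t)=\int e^{-\I ut}K(u)u^p\,\D u$, each kernel weight admits the representation
\[
\frac{1}{B_R}\Bigl(\frac{x'-x}{B_R}\Bigr)^{p}K\!\Bigl(\frac{x'-x}{B_R}\Bigr)
=\frac{1}{2\pi}\int e^{\I v(x'-x)}\zeta_p(vB_R)\,\D v,
\]
so that, with $\Phi^{(h)}(v,w)=\frac{1}{T-|h|}\sum_{t}\sum_{j,k}e^{\I v x_{t+h,j}}e^{\I w x_{tk}}$ (the inner sum restricted to $j\neq k$ when $h=0$),
\[
S_{pq}^{(h)}(x,y)=\frac{1}{(2\pi)^2}\iint e^{-\I(xv+yw)}\Phi^{(h)}(v,w)\,\zeta_p(vB_R)\zeta_q(wB_R)\,\D v\,\D w.
\]
The decisive point is that $\Phi^{(h)}$ carries no $(x,y)$, which enters only through the unit-modulus factor $e^{-\I(xv+yw)}$; hence the supremum over $(x,y)$ can be moved inside the integral, and after taking expectations and using $\E|Z|\le\sqrt{\var Z}$ it suffices to bound $\var\Phi^{(h)}(v,w)$. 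Writing $\Phi^{(h)}=(T-|h|)^{-1}\sum_t W_t$ with $W_t=\bigl(\sum_j e^{\I v x_{t+h,j}}\bigr)\bigl(\sum_k e^{\I w x_{tk}}\bigr)$, the sequence $\{W_t\}$ is stationary and, for $h\neq0$, $\cov(W_t,W_{t'})$ vanishes unless $t'\in\{t-h,t,t+h\}$; thus at most $3(T-|h|)$ covariance terms survive, each bounded by $\max_t\var W_t$, a finite quantity independent of $v,w,h$ (this is where the moments of $N$ are used). Consequently $\var\Phi^{(h)}(v,w)\le C/(T-|h|)$ uniformly, and integrating against $|\zeta_p(vB_R)||\zeta_q(wB_R)|$ with the change of variables $u=vB_R$ gives $\int|\zeta_p(vB_R)|\,\D v=O(B_R^{-1})$ in each coordinate, yielding \eqref{eq:lemma3:rate} with a constant $U$ uniform over $0\le p+q\le2$ and $|h|<T$.

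The main obstacle is precisely this uniform-in-$h$ variance bound. Although the dependence range of $\{W_t\}$ grows linearly with $h$, the number of time pairs $(t,t')$ contributing a nonzero covariance remains $O(T-|h|)$, because each $W_t$ overlaps only $W_{t\pm h}$ and itself; it is this combinatorial fact, together with the $(x,y)$-free Fourier representation of $\Phi^{(h)}$, that keeps the constant $U$ independent of $h$ and delivers the claimed uniformity. The $h=0$ case is in fact easier, the $W_t$ then being independent across $t$.
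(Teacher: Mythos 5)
Your proof is correct and follows essentially the same route as the paper's: a bias--variance decomposition, a second-order Taylor expansion of $g$ for the bias and the constants in \eqref{eq:lemma3:means}, and the Fourier-transform representation of $(u,v)\mapsto u^pv^qK(u)K(v)$ to control the stochastic term as in Lemma~\ref{lemma:asymptotics_of_S_for_mu}. In fact you make explicit the one point the paper leaves implicit --- that for $h\neq0$ the summands $W_t$ have covariance range restricted to $t'\in\{t-h,t,t+h\}$, which is what delivers the $O(1/(T-|h|))$ variance bound uniformly in $h$.
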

\begin{proof}
Note the bias-variance decomposition of the estimation error
\begin{equation}\label{eq:lemma3:bias variance decomposition}
\Ez{ \sup_{x,y\in[0,1]} \left| S_{pq}^{(h)} - M_{[S_{pq}^{(h)}]} \right| } \leq
\Ez{ \sup_{x,y\in[0,1]} \left| S_{pq}^{(h)} - \Ez{ S_{pq}^{(h)} } \right| } +
\sup_{x,y\in[0,1]} \left| \Ez{ S_{pq}^{(h)} } - M_{[S_{pq}^{(h)}]} \right|
\end{equation}
Considering a Taylor expansion of order 2, it is easy to show that the formulae \eqref{eq:lemma3:means} and that the second term of \eqref{eq:lemma3:rate} is of order $O(B_R^2)$ uniformly in $h$ and $T$.

Taking the analogous steps as in the proof of Lemma \ref{lemma:asymptotics_of_S_for_mu} while using the Fourier transform of the function $(u,v) \mapsto K(u)K(v) u^p v^q$, one can prove that the first term on the right-hand side of \eqref{eq:lemma3:bias variance decomposition} are bounded by $1/(T-|h|)$.
\end{proof}

\noindent Now assume that the common mean function $\mu(\cdot)$ is known for the moment.
Thus formally define
\begin{multline*}
\tilde{Q}_{pq}^{(h)} = \frac{1}{T-|h|} \sum_{t=\max(1,1-h)}^{\max(T,T-h)}
\stackrel[j\neq k \text{ if } h=0]{}{\sum_{j=1}^{N_{t+h}} \sum_{k=1}^{N_t}}
\tilde{G}_{h,t}(x_{t+h,j}, x_{tk})
\left( \frac{x_{t+h,j}-x}{B_R} \right)^p
\left( \frac{x_{tk}-y}{B_R} \right)^q
\times\\\times
\frac{1}{B_R^2}
K\left( \frac{x_{t+h,j}-x}{B_R} \right)
K\left( \frac{x_{tk}-y}{B_R} \right)
\end{multline*}
where
\begin{equation}\label{eq:raw_covariance_known_mu}
\tilde{G}_{h,t}( x_{t+h,j}, x_{tk} ) = (Y_{t+h,j} - \mu(x_{t+h,j}))(Y_{tk} - \mu(x_{tk})).
\end{equation}
We analyse the asymptotics of $\tilde{Q}_{pq}^{(h)}$ in Lemma \ref{lemma:asymptotics_of_Q_for_R}.

\begin{lemma}\label{lemma:asymptotics_of_Q_for_R}
Under \ref{assumption:B.1} --- \ref{assumption:B.5} and \ref{assumption:B.8}
\begin{align}
\label{eq:lemma4:rate}
\Ez{ \sup_{x,y\in[0,1]} \left| \tilde{Q}_{pq}^{(h)} - \E \tilde{Q}_{pq}^{(h)} \right| }
&\leq U \frac{1}{\sqrt{T-|h|}}\frac{1}{B_R^2} \\
\label{eq:lemma4:rate2}
\sup_{x,y\in[0,1]} \left| \E \tilde{Q}_{pq}^{(h)} - M_{[Q_{pq}^{(h)}]} \right|
&= O\left( B_R^2 \right)
\end{align}
where the constant $U$ is uniform for $0\leq p+q\leq 2$, $T\in\mathbb{N}$, $|h|<T$, and
\begin{equation}
\label{eq:lemma4:means}
M_{[Q_{00}^{(h)}]} = c_h R_h(x,y) g(x)g(y), \qquad
M_{[Q_{01}^{(h)}]} = M_{[Q_{10}^{(h)}]} = 0,
\end{equation}
where $c_h = (\E N)^2$ for $h\neq 0$ and $c_0 = \E\{N(N-1)\}$.
Moreover, the convergence \eqref{eq:lemma4:rate2} is uniform in $h$.
\end{lemma}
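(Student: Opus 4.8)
The plan is to follow the same bias--variance template used in Lemmas \ref{lemma:asymptotics_of_S_for_mu}--\ref{lemma:asymptotics_of_S_for_R}, writing
$$\Ez{ \sup_{x,y} \left| \tilde{Q}_{pq}^{(h)} - M_{[Q_{pq}^{(h)}]} \right| } \leq \Ez{ \sup_{x,y} \left| \tilde{Q}_{pq}^{(h)} - \E\tilde{Q}_{pq}^{(h)} \right| } + \sup_{x,y} \left| \E\tilde{Q}_{pq}^{(h)} - M_{[Q_{pq}^{(h)}]} \right|,$$
so that \eqref{eq:lemma4:rate} is the stochastic (first) term and \eqref{eq:lemma4:rate2} is the deterministic (second) term. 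Because there are two spatial summations and two kernel factors, each coordinate behaves like an independent one-dimensional smoother, and I therefore expect the variance scaling $1/B_R^2$ (one factor $1/B_R$ per coordinate) rather than the $1/B_\mu$ of the mean case.

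For the bias term I would first condition on the counts $\{N_t\}$ and the locations $\{x_{tj}\}$. Since the measurement errors are mean zero and independent of the curves, for $h\neq 0$ (and for $h=0$ with $j\neq k$) the conditional expectation of the raw covariance is $\E[\tilde{G}_{h,t}(u,v)\mid\cdots]=R_h(u,v)$, with no $\sigma^2$ contribution off the diagonal. Taking expectation over the locations against the density $g$ and over the counts, the paired indices contribute a combinatorial factor equal to $(\E N)^2$ for $h\neq 0$ and $\E\{N(N-1)\}$ for $h=0$; this is precisely the constant $c_h$. A Taylor expansion of $R_h(\cdot,\cdot)g(\cdot)g(\cdot)$ to second order around $(x,y)$, together with the symmetry of $K$ (which annihilates the odd moments), then yields the leading terms $M_{[Q_{pq}^{(h)}]}$ of \eqref{eq:lemma4:means} and a remainder of order $B_R^2$; the uniformity of this remainder in $h$ follows from the uniform-in-$h$ bound on the second derivatives of $R_h$ imposed in \ref{assumption:B.4}.

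The stochastic term is the genuine difficulty. Mimicking the Fourier-transform argument of Lemma \ref{lemma:asymptotics_of_S_for_mu}, I would represent $\tilde{Q}_{pq}^{(h)}$ as a double integral of a time-averaged empirical transform $\frac{1}{T-|h|}\sum_t W_t$ against the Fourier kernels $\zeta_p(vB_R)\zeta_q(wB_R)$ (with $\zeta_r$ as in the proof of Lemma \ref{lemma:asymptotics_of_S_for_mu}), where $W_t$ collects the $\tilde{G}_{h,t}$-weighted complex exponentials over the paired indices $(j,k)$. The sequence $\{W_t\}_t$ is stationary, so by the elementary variance bound \eqref{eq:proof_arbitrary_stationary_timeseries_variance} it suffices to show that $\sum_{\ell}|\cov(W_t,W_{t+\ell})|$ is finite uniformly in $v,w,x,y,h$, which gives $\var(\frac{1}{T-|h|}\sum_t W_t)=O(1/(T-|h|))$ and, after integrating the $1/B_R$ scaling of each $\zeta$, the bound \eqref{eq:lemma4:rate}. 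The covariance $\cov(W_t,W_{t+\ell})$ unfolds, via the law of total covariance (conditioning on counts and locations), into fourth-order product moments of $\{X_t\}$; expanding these into sums of products of second-order autocovariances plus a genuine fourth-order term and applying the cumulant decomposition, the summability over $\ell$ reduces to the lag-autocovariance summability \eqref{eq:assumption_stationarity_summability_of_autocovariance} together with the fourth-order cumulant summability \ref{assumption:B.5}. Controlling this fourth-order contribution uniformly --- and thereby confirming that \ref{assumption:B.5} is exactly the condition that renders the series summable --- is the main obstacle; the remaining passage from the $L^1$ bound to the $\Op$-rate is routine, as in the earlier lemmas.
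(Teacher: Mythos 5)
Your proposal is correct and follows essentially the same route as the paper's proof: the same bias--variance split with a second-order Taylor expansion of $R_h(\cdot,\cdot)g(\cdot)g(\cdot)$ for \eqref{eq:lemma4:rate2}, and the same Fourier-transform representation of the kernel weights combined with the stationary-series variance bound \eqref{eq:proof_arbitrary_stationary_timeseries_variance}, the law of total covariance, and the fourth-order cumulant decomposition (summable by \ref{assumption:B.5} together with \eqref{eq:assumption_stationarity_summability_of_autocovariance}, uniformly in $h$) for \eqref{eq:lemma4:rate}. The step you flag as the main obstacle is handled in the paper exactly as you outline it.
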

\begin{proof}
Again, the bias-variance decomposition yields
$$
\Ez{ \sup_{x,y\in[0,1]} \left| \tilde{Q}_{pq}^{(h)} - M_{[Q_{pq}^{(h)}]} \right| }
\leq
\Ez{ \sup_{x,y\in[0,1]} \left| \tilde{Q}_{pq}^{(h)} - \Ez{ \tilde{Q}_{pq}^{(h)} } \right| } +
\sup_{x,y\in[0,1]} \left| \Ez{ \tilde{Q}_{pq}^{(h)} } - M_{[Q_{pq}^{(h)}]} \right|
$$
By taking a Taylor expansion of order 2, it is again straightforward to show that the formulae \eqref{eq:lemma4:means} and that the second term of \eqref{eq:lemma4:rate} is of order $O(B_R^2)$ uniformly in $h$ and $T$.

To treat the first term on the right-hand side of \eqref{eq:lemma4:rate}, we define the Fourier transform of the function $(\alpha,\beta) \mapsto K(\alpha)\alpha K(\beta)\beta$ as $\zeta_{pq}(u,v) = \iint e^{-\I (u\alpha+v\beta)} K(\alpha)\alpha^p K(\beta)\beta^q d\alpha d\beta$. Thus we may write
\begin{multline*}
\left( \frac{x_{t+h,j}-x}{B_R} \right)^p
\left( \frac{x_{tk}-y}{B_R} \right)^q
\frac{1}{B_R^2}
K\left( \frac{x_{t+h,j}-x}{B_R} \right)
K\left( \frac{x_{tk}-y}{B_R} \right)
=\\=
\frac{1}{(2\pi)^2 B_R^2} \iint
\exp\left\{ \I \left( \frac{x_{t+h,j}-x}{B_R} \right) u  \right\}
\exp\left\{ \I \left( \frac{x_{tk}-y}{B_R} \right) v  \right\}
\zeta_{pq}(u,v)\D{u}\D{v}
=\\=
\frac{1}{(2\pi)^2} \iint
e^{ \I ( x_{t+h,j}-y ) \tilde{u} }
e^{ \I ( x_{tk}-y ) \tilde{v} }
\zeta_{pq}(B_R \tilde{u},B_R \tilde{v})d\tilde{u}d\tilde{v}
\end{multline*}
Define
$$ \varphi_{pq}^{(h)} = \varphi_{pq}^{(h)}(u,v,x,y) =
\frac{1}{T-|h| } \sum_{t=\max(1,1-h)}^{\max(T,T-h)}
\stackrel[j\neq k \text{ if } h=0]{}{\sum_{j=1}^{N_{t+h}} \sum_{k=1}^{N_t}}
e^{ \I ( x_{t+h,j}-x ) u }
e^{ \I ( x_{tk}-y ) v }
\tilde{G}_{h,t}(x_{t+h,j}, x_{tk})
$$
and write
$$ \tilde{Q}_{pq}^{(h)} =
\frac{1}{(2\pi)^2} \iint \varphi_{pq}^{(h)} \zeta_{pq}(B_R u,B_R v) \D{u}\D{v}
$$
Analogously to \eqref{eq:asymptotics_of_S_for_mu_var}, it now remains to analyse the variance of $\varphi_{pq}^{(h)}$. Define the following stationary time-series
$$ Z_t^{(h)} =
\stackrel[j\neq k \text{ if } h=0]{}{\sum_{j=1}^{N_{t+h}} \sum_{k=1}^{N_t}}
e^{ \I ( x_{t+h,j}-x ) u }
e^{ \I ( x_{tk}-y ) v }
\tilde{G}_{h,t}(x_{t+h,j}, x_{tk}).
$$
As in the proof of Lemma \ref{lemma:asymptotics_of_Q_for_mu} we want to bound the sum of the autocovariance function $\sum_{\xi\in\mathbb{Z}} |\rho_{Z^{(h)}}(\xi)|$ but the bound must be uniform in $h$. By conditioning on $N_t$ and $x_{tj}$, and applying the law of total covariance, the $\xi$-lag autocovariance $\rho_{Z^{(h)}}(\xi)$ can be bounded by
\begin{multline}\label{eq:autocovZnu bound}
\left| \rho_{Z^{(h)}}(\xi) \right| = \left| \cov( Z_{t+\xi}, Z_t ) \right|
\leq\\\leq
(\Ez{N})^2 
\sup_{x_1,x_2,x_3,x_4 \in [0,1]} \Big|
\cov\Big\{
(X_{t+\xi+h}(x_1) - \mu(x_1)) (X_{t+\xi}(x_2) - \mu(x_2))
(X_{t+h}(x_3) - \mu(x_3))( X_t(x_4) - \mu(x_4))
\Big\}
\Big|
=\\=
(\Ez{N})^2
\sup_{x_1,x_2,x_3,x_4 \in [0,1]} \Big|
\cov\Big\{
(X_{\xi+h}(x_1) - \mu(x_1)) (X_{\xi}(x_2) - \mu(x_2))
(X_{h}(x_3) - \mu(x_3))( X_0(x_4) - \mu(x_4))
\Big\}
\Big|
\end{multline}
for $\xi \notin \{-h,0,h\}$. For $\xi \in \{-h,0,h\}$, the bound is augmented by $\sigma^2$ but this changes nothing as to the summability with respect to $\xi\in\mathbb{Z}$.

Using the formula for the 4-th order cumulant of centred random variables \citep[p. 36]{rosenblatt2012stationary}, we express the covariance on the right-hand side of \eqref{eq:autocovZnu bound} as
\begin{multline}\label{eq:autocovZnu expression cum}
\cov\left(
(X_{\xi+h}(x_1) - \mu(x_1)) (X_{\xi}(x_2) - \mu(x_2)),
(X_{h}(x_3) - \mu(x_1))( X_0(x_4) - \mu(x_1))
\right) =\\=
\cum\left(
	X_{\xi+h}(x_1) - \mu(x_1),
	X_{\xi}(x_2) - \mu(x_2),
	X_{h}(x_3) - \mu(x_3),
	X_0(x_4) - \mu(x_4),
\right)
+\\+
R_\xi(x_1,x_3) R_\xi(x_2,x_4) +
R_{\xi+h}(x_1,x_4) R_{\xi-h}(x_2,x_3).
\end{multline}
Taking the absolute value and the supremum, the sum of \eqref{eq:autocovZnu bound} with respect to $\xi$ is bounded thanks to the fact that the cumulant on the right-hand side of \eqref{eq:autocovZnu expression cum} is summable by \ref{assumption:B.5} and the autocovariances are summable by \eqref{eq:assumption_stationarity_summability_of_autocovariance}. Moreover the sum is bounded uniformly in $h$.

Therefore
$$\var\left( Q_{pq}^{(h)} \right) \leq \frac{1}{T-h} \sum_{\xi\in\mathbb{Z}} |\rho_{Z^{(h)}}(\xi)| \leq U \frac{1}{T-h}$$
where the constant $U$ is independent of $h$. Observing that $\iint  \zeta_{pq}(B_R u,B_R v) \D{u}\D{v} = O(B_R^2)$ concludes the proof of the bound \eqref{eq:lemma4:rate}.
\end{proof}

In the following lemma we modify the previous result for the raw covariances $G_{h,t}$ instead of $\tilde{G}_{h,t}$.

\begin{lemma}\label{lemma:asymptotics_of_Q_for_R_in_P}
Under \ref{assumption:B.1} --- \ref{assumption:B.5}, \ref{assumption:B.7} and \ref{assumption:B.8},
for $h\in\mathbb{Z}$ and $0\leq p+q \leq 2; p,q\in\mathbb{N}_0$
$$ Q_{pq}^{(h)} = M_{[Q_{pq}]} + \Op \left( \frac{1}{\sqrt{T}} \frac{1}{B_R^2} +B_R^2 \right) $$
uniformly in $x,y\in[0,1]$.
\end{lemma}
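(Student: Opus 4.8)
The plan is to reduce the claim to Lemma \ref{lemma:asymptotics_of_Q_for_R}, which already establishes the analogous statement for the infeasible quantity $\tilde{Q}_{pq}^{(h)}$ built from the true mean $\mu$. Writing $\delta = \hat\mu - \mu$ and suppressing the common arguments $(x_{t+h,j},x_{tk})$, I first record the algebraic identity
$$ G_{h,t} - \tilde{G}_{h,t} = -\delta(x_{t+h,j})(Y_{tk}-\mu(x_{tk})) - \delta(x_{tk})(Y_{t+h,j}-\mu(x_{t+h,j})) + \delta(x_{t+h,j})\delta(x_{tk}), $$
obtained by expanding both products in \eqref{eq:raw_autocovariances} and \eqref{eq:raw_covariance_known_mu} and cancelling $\tilde{G}_{h,t}$. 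Substituting this into the definition of $Q_{pq}^{(h)}$ decomposes $Q_{pq}^{(h)} - \tilde{Q}_{pq}^{(h)}$ into three weighted sums carrying the common weight $w_{tjk} = \frac{1}{B_R^2}\left(\frac{x_{t+h,j}-x}{B_R}\right)^p \left(\frac{x_{tk}-y}{B_R}\right)^q K\left(\frac{x_{t+h,j}-x}{B_R}\right) K\left(\frac{x_{tk}-y}{B_R}\right)$, one sum for each summand above.

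Second, in each of these sums I would bound the factors $\delta(x_{t+h,j})$ and $\delta(x_{tk})$ uniformly by $\|\delta\|_\infty = \sup_{z}|\hat\mu(z)-\mu(z)|$, which is a single scalar random variable not depending on the summation indices, and pull it outside the sum. This is the key manoeuvre that side-steps the awkward statistical dependence between $\hat\mu$ and the raw data $\{Y_{tj}\}$: after factoring out $\|\delta\|_\infty$ (respectively $\|\delta\|_\infty^2$ for the quadratic term), what remains are auxiliary weighted averages of the form $\frac{1}{T-|h|}\sum_{t} \sum_{j,k} |Y_{tk}-\mu(x_{tk})|\, w_{tjk}$, and, for the quadratic term, the same average with the $|Y|$ factor replaced by $1$.

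Third, I would show each such auxiliary average is $\Op(1)$ uniformly in $x,y$. Its expectation is $O(1)$: conditioning on $\{N_t\}$ and the locations, $\E|Y_{tk}-\mu(x_{tk})| \le (R_0(x_{tk},x_{tk})+\sigma^2)^{1/2}$ is bounded, the double index sum contributes a factor $(\E N)^2$ by independence, and the localising kernels with the $B_R^{-2}$ normalisation integrate to order $g(x)g(y)$ via a second-order Taylor expansion exactly as in \eqref{eq:lemma3:rate2}. The fluctuation about this mean is controlled by the same Fourier-transform variance argument used in Lemmas \ref{lemma:asymptotics_of_S_for_R} and \ref{lemma:asymptotics_of_Q_for_R}, giving an $\E\sup$ bound of order $1/\sqrt{T-|h|}$; hence the average is $\Op(1)$ uniformly. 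Consequently the two linear correction terms are $\Op(\|\delta\|_\infty)$ and the quadratic one is $\Op(\|\delta\|_\infty^2)$.

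Finally, invoking the first part of Theorem \ref{thm:mean_and_autocov_function} gives $\|\delta\|_\infty = \Op\left(\frac{1}{\sqrt{T} B_\mu} + B_\mu^2\right)$, so that $Q_{pq}^{(h)} - \tilde{Q}_{pq}^{(h)} = \Op\left(\frac{1}{\sqrt{T} B_\mu} + B_\mu^2\right)$ uniformly in $x,y$, and combining with Lemma \ref{lemma:asymptotics_of_Q_for_R} yields $Q_{pq}^{(h)} = M_{[Q_{pq}]} + \Op\left(\frac{1}{\sqrt{T} B_R^2} + B_R^2\right) + \Op\left(\frac{1}{\sqrt{T} B_\mu} + B_\mu^2\right)$. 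The main obstacle, and the only genuinely delicate point, is verifying that the mean-estimation error is absorbed into the stated rate, i.e. that $\frac{1}{\sqrt{T} B_\mu}+B_\mu^2 = O\left(\frac{1}{\sqrt{T} B_R^2}+B_R^2\right)$; this holds under the comparable polynomial scaling of the bandwidths $B_\mu \asymp B_R$ (so that $B_R^2 \lesssim B_\mu \lesssim B_R$ and both $B_\mu^2 \lesssim B_R^2$ and $\frac{1}{\sqrt{T}B_\mu} \lesssim \frac{1}{\sqrt{T}B_R^2}$), whereupon the correction is dominated and the claimed rate follows.
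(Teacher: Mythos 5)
Your proof is correct and follows essentially the same route as the paper's: the identical three-term expansion of $G_{h,t}-\tilde{G}_{h,t}$ in powers of $\hat\mu-\mu$, bounded via the uniform rate \eqref{eq:thm:mean_function:mu} and then absorbed into the conclusion of Lemma \ref{lemma:asymptotics_of_Q_for_R}. The paper merely asserts that the resulting correction of order $\Op(1/(\sqrt{T}B_\mu))$ is negligible against $\Op(1/(\sqrt{T}B_R^2))$, so your explicit control of the auxiliary weighted averages and your observation that absorbing $1/(\sqrt{T}B_\mu)+B_\mu^2$ into the stated rate requires a comparability condition on the bandwidths (left implicit in \ref{assumption:B.7}--\ref{assumption:B.8}) is, if anything, more careful than the published argument.
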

\begin{proof}
We follow the lines of the discussion at the end of the proof of \citet[Theorem 1]{yao2005functional}. Consider a generic raw covariance $G_{h,t}(x,y) = \left(X_{t+h}(x) - \hat{\mu}(x) \right)\left( X_{t}(y) - \hat{\mu}(y) \right)$ and its counterpart $\tilde{G}_{h,t}(x,y) = \left(X_{t+h}(x) - \mu(x) \right)\left( X_{t}(y) - \mu(y) \right)$.
They can be related to each other by the expansion:
\begin{multline*}
G_{h,t}(x,y) = \tilde{G}_{h,t}(x,y)
+ \left( X_{t+h}(x) - \mu(x) \right)\left( \mu(y) - \hat{\mu}(y) \right)
+\\+
 \left( \mu(x) - \hat{\mu}(x) \right) \left( X_{t}(y) - \mu(y) \right)
+ \left( \mu(x) - \hat{\mu}(x) \right) \left( \mu(y) - \hat{\mu}(y) \right).
\end{multline*}
By \eqref{eq:thm:mean_function:mu}, the difference of $G_{h,t}(x,y)$ and $\tilde{G}_{h,t}(x,y)$ is of order $\Op\left( \frac{1}{\sqrt{T}} \frac{1}{B_\mu} \right)$ which is negligible with respect to the rate $\Op\left( \frac{1}{\sqrt{T}} \frac{1}{B_R^2} \right)$ from Lemma \ref{lemma:asymptotics_of_Q_for_R}.
\end{proof}

\begin{proof}[Proof of the second part of Theorem \ref{thm:mean_and_autocov_function}]
Combining the results of Lemma \ref{lemma:asymptotics_of_Q_for_mu} and Lemma \ref{lemma:asymptotics_of_Q_for_R_in_P}, we obtain the following uniform convergence rates:
\begin{align*}
\mathscr{A}_1^{(h)} &= \left[ c_h g(x) g(y) \sigma^2_K \right]^2 + \Op\left(\frac{1}{\sqrt{T}} \frac{1}{B_R^2} +B_R^2\right), \\
\mathscr{A}_2^{(h)} &= \Op\left(\frac{1}{\sqrt{T}} \frac{1}{B_R^2} +B_R^2\right),\\
\mathscr{A}_3^{(h)} &= \Op\left(\frac{1}{\sqrt{T}} \frac{1}{B_R^2} +B_R^2\right),\\
\mathscr{B}^{(h)} &= \left[ c_h g(x) g(y)\right]^3  \left(\sigma^2_K \right)^2
+ \Op\left(\frac{1}{\sqrt{T}} \frac{1}{B_R^2} +B_R^2\right).
\end{align*}
The numerator of the ratio \eqref{eq:autocov_explicit_formula} exhibits the following uniform convergence
$$
\mathscr{A}^{(h)}_1 Q_{00}^{(h)} -
\mathscr{A}^{(h)}_2 Q_{10}^{(h)} -
\mathscr{A}^{(h)}_3 Q_{01}^{(h)}
= \left[ c_h g(x) g(y)\right]^3  \left(\sigma^2_K \right)^2
R_h(x,y)
+ \Op\left(\frac{1}{\sqrt{T}} \frac{1}{B_R^2} +B_R^2\right) $$ 
and therefore we have proven the convergence rate for the autocovariance kernel estimator
$$ \hat{R}_h(x,y) = R_h(x,y) + \Op\left(\frac{1}{\sqrt{T}} \frac{1}{B_R^2} +B_R^2\right) $$
uniformly in $x,y\in[0,1]$.
\end{proof}

Finally we turn to the estimation of the measurement error variance $\sigma^2$.
The minimizer of the local quadratic smoother \eqref{eq:local_LS_est_sigma2_minimisation_V} can be expressed explicitly as
\begin{equation}
\label{eq:explicit_formula_barR}
\end{equation}
$$ \bar{R}_0(x) = \left( \bar{\mathscr{A}}_1 \bar{Q}_{0} - \bar{\mathscr{A}}_2 \bar{Q}_{1} - \bar{\mathscr{A}}_3 \bar{Q}_{2} \right) \bar{\mathscr{B}}^{-1}$$
where
\begin{align*}
\bar{\mathscr{A}}_1 &= \bar{S}_{2}\bar{S}_{4}-\left(\bar{S}_{3}\right)^2, \qquad 
\bar{\mathscr{A}}_2  = \bar{S}_{1}\bar{S}_{4}-\bar{S}_{2}\bar{S}_{3}, \qquad 
\bar{\mathscr{A}}_3  = \bar{S}_{2}\bar{S}_{2}-\bar{S}_{1}\bar{S}_{3}, \\
\bar{\mathscr{B}} &= \bar{\mathscr{A}}_1 \bar{S}_{0} - \bar{\mathscr{A}}_2 \bar{S}_{1} - \bar{\mathscr{A}}_3 \bar{S}_{2}, \\
\bar{S}_{r} &= \frac{1}{T} \sum_{t=1}^T \sum_{j\neq k}
\left( \frac{P(x_{tj},x_{tk}) - x}{B_R} \right)^r
\frac{1}{B_R^2}
K\left( \frac{x_{tj}-x}{B_R} \right)
K\left( \frac{x_{tk}-x}{B_R} \right), \\
\bar{Q}_{r} &= \frac{1}{T} \sum_{t=1}^T \sum_{j\neq k}
G_{0,t}(x_{tj}, x_{tk})
\left( \frac{P(x_{tj},x_{tk}) - x}{B_R} \right)^r
\frac{1}{B_R^2}
K\left( \frac{x_{tj}-x}{B_R} \right)
K\left( \frac{x_{tk}-x}{B_R} \right). \\
\end{align*}
All of the above quantities are understood as functions of $x\in[0,1]$ and all operations are considered pointwise, including the pointwise inversion $\bar{\mathscr{B}}^{-1} = (\bar{\mathscr{B}}(x))^{-1}$.

\begin{lemma}\label{lemma6}
Under \ref{assumption:B.1}, \ref{assumption:B.2} and \ref{assumption:B.7},
for $r\in {0,1,2,3,4}$
$$ \bar{S}_r(x) = M_{[\bar{S}_r]}(x) + \Op \left( \frac{1}{\sqrt{T} B_V^2}  +B_V^2\right)$$
uniformly in $x\in[0,1]$
where
\begin{align*}
M_{[\bar{S}_0]} &= c_0 g(x)^2, \qquad M_{[\bar{S}_1]} = M_{[\bar{S}_3]} = 0,\\
M_{[\bar{S}_2]} &= \frac{1}{2} c_0 g(x)^2 \sigma^2_K, \qquad \sigma^2_K = \int v^2K(v)\D{v},\\ 
M_{[\bar{S}_4]} &= \frac{1}{8} c_0 g(x)^2 \left(\mu_4^{(K)}+ 3\sigma^2_K\right), \qquad \mu_4^{(K)} = \int v^4 K(v)\D{v}.
\end{align*}
\end{lemma}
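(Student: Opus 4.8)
The plan is to reuse, essentially verbatim, the bias--variance template developed for the design statistics in Lemmas \ref{lemma:asymptotics_of_S_for_mu} and \ref{lemma:asymptotics_of_S_for_R}, taking advantage of the fact that $\bar{S}_r$ is a \emph{pure design quantity}: it is built only from the sampling locations $\{x_{tj}\}$ and the counts $\{N_t\}$, with no dependence on the observed values $Y_{tj}$ or the raw covariances $G_{0,t}$. As a consequence---and in contrast with the $Q$-type terms of Lemma \ref{lemma:asymptotics_of_Q_for_R}---no temporal dependence enters the stochastic analysis at all: writing $\bar{S}_r=\frac{1}{T}\sum_{t=1}^{T}\Xi_t$ with $\Xi_t=\sum_{j\neq k}(\cdots)$, the summands $\Xi_t$ are i.i.d.\ across $t$ under \ref{assumption:B.1} and \ref{assumption:B.2}. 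I would begin from the decomposition
\begin{equation*}
\Ez{\sup_{x\in[0,1]}\left|\bar{S}_r-M_{[\bar{S}_r]}\right|}
\leq
\sup_{x\in[0,1]}\left|\Ez{\bar{S}_r}-M_{[\bar{S}_r]}\right|
+\Ez{\sup_{x\in[0,1]}\left|\bar{S}_r-\Ez{\bar{S}_r}\right|},
\end{equation*}
treating the two terms separately.

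For the bias term I would condition on $N_t$; summing over ordered pairs $j\neq k$ produces the factor $c_0=\Ez{N(N-1)}$, after which the two relevant locations are i.i.d.\ draws from $g$. The only genuinely new feature relative to Lemma \ref{lemma:asymptotics_of_S_for_R} is the diagonal projection: because both kernels are centred at the same point $x$ and $P(x_{tj},x_{tk})$ equals the common coordinate $(x_{tj}+x_{tk})/2$ of the projection onto the diagonal, the substitution $x_{tj}=x+B_R\alpha$, $x_{tk}=x+B_R\beta$ converts $(P(x_{tj},x_{tk})-x)/B_R$ into $(\alpha+\beta)/2$. A second-order Taylor expansion of $g(x+B_R\alpha)g(x+B_R\beta)$ then gives, uniformly in $x$,
\begin{equation*}
\Ez{\bar{S}_r}=c_0\, g(x)^2\iint\left(\frac{\alpha+\beta}{2}\right)^{r}K(\alpha)K(\beta)\,\D\alpha\,\D\beta+O(B_R^2).
\end{equation*}
Evaluating these product-kernel moments---using the symmetry of $K$ to annihilate the odd cases $r=1,3$, and expanding $(\alpha+\beta)^2$ and $(\alpha+\beta)^4$ for $r=2,4$---reproduces the stated constants $M_{[\bar{S}_r]}$.

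For the stochastic term I would invoke the Fourier-transform device of Lemma \ref{lemma:asymptotics_of_S_for_mu}. Writing the generic summand as $B_R^{-2}\Psi\!\left((x_{tj}-x)/B_R,(x_{tk}-x)/B_R\right)$ with $\Psi(\alpha,\beta)=K(\alpha)K(\beta)\left((\alpha+\beta)/2\right)^r$ and substituting the inverse Fourier representation of $\Psi$, one writes $\bar{S}_r(x)$ as an integral of the empirical transform $\phi(u,v)=\frac{1}{T}\sum_{t}\sum_{j\neq k}e^{\I(x_{tj}u+x_{tk}v)}$ against $\zeta(B_R u,B_R v)$, where $\zeta$ is the transform of $\Psi$. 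Since the $\Xi_t$ are i.i.d.\ and the exponentials are bounded, the same Cauchy--Schwarz bound as in Lemma \ref{lemma:asymptotics_of_S_for_mu} gives $\var(\phi(u,v))=O(1/T)$ uniformly in $(u,v)$, and the rescaling in $\zeta(B_R\cdot,B_R\cdot)$ contributes the factor $B_R^{-2}$; this yields $\Ez{\sup_x|\bar{S}_r-\Ez{\bar{S}_r}|}=O\!\left(1/(\sqrt{T}B_R^2)\right)$, exactly as for the two-dimensional smoother of Lemma \ref{lemma:asymptotics_of_S_for_R}. Combining the two contributions and invoking the elementary implication $\Ez{|Z_n|}=O(a_n)\Rightarrow Z_n=\Op(a_n)$ gives the claimed uniform rate.

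The main difficulty here is bookkeeping rather than conceptual novelty. One must propagate the diagonal projection $P$ consistently through the change of variables so that the moments $\iint\left((\alpha+\beta)/2\right)^rK(\alpha)K(\beta)\,\D\alpha\,\D\beta$ come out correctly (the $r=4$ case being the most error-prone), and one must verify that the variance bound for $\phi(u,v)$ is genuinely uniform in both $(u,v)$ and $x$, which rests on the moment control of $N$ supplied by \ref{assumption:B.1}. Beyond these, no estimate not already established in Lemmas \ref{lemma:asymptotics_of_S_for_mu} and \ref{lemma:asymptotics_of_S_for_R} is required.
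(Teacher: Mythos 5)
Your proposal is correct and follows essentially the same route as the paper, which simply states that the proof of Lemma \ref{lemma6} ``follows in the footsteps of that of Lemma \ref{lemma:asymptotics_of_S_for_R}'' and omits the details; you have supplied exactly those details (bias--variance split, conditioning on $N_t$ to produce $c_0=\E\{N(N-1)\}$, the change of variables turning the diagonal projection into $(\alpha+\beta)/2$, and the Fourier-transform device for the uniform stochastic term, simplified here by the fact that the design statistics are i.i.d.\ across $t$). One minor remark: your $r=4$ computation gives $\tfrac18\bigl(\mu_4^{(K)}+3(\sigma^2_K)^2\bigr)$, which shows that the factor $3\sigma^2_K$ in the statement should read $3(\sigma^2_K)^2$ --- a typo in the lemma, not a flaw in your argument.
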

\begin{proof}
The proof of Lemma \ref{lemma6} follows in the footsteps of that of Lemma \ref{lemma:asymptotics_of_S_for_R}, and the details are omitted.
\end{proof}

\begin{lemma}\label{lemma7}
Under \ref{assumption:B.1} --- \ref{assumption:B.5} and \ref{assumption:B.7} --- \ref{assumption:B.9},
for $r\in {0,1,2}$
$$ \bar{Q}_r(x) = M_{[\bar{Q}_r]}(x) + \Op \left( \frac{1}{\sqrt{T} B_V^2}  +B_V^2\right)$$
uniformly in $x\in[0,1]$
where
\begin{align*}
M_{[\bar{Q}_0]} &= c_0 R_0(x,x) g(x)^2, \qquad M_{[\bar{Q}_1]} = 0,\\
M_{[\bar{Q}_2]} &= \frac{1}{2} c_0 R_0(x,x) g(x)^2 \sigma^2_K, \qquad \sigma^2_K = \int v^2K(v)\D{v}.
\end{align*}
\end{lemma}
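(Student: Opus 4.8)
The plan is to mirror the proof of Lemma~\ref{lemma:asymptotics_of_Q_for_R}, specialised to lag $h=0$ and to evaluation along the diagonal, and then to pass from the true mean to its estimate exactly as in Lemma~\ref{lemma:asymptotics_of_Q_for_R_in_P}. First I would replace $\hat\mu$ by the true mean $\mu$ in the raw covariances, i.e. work with $\tilde{G}_{0,t}$ of \eqref{eq:raw_covariance_known_mu} and the corresponding $\tilde{Q}_r$. Since the projection onto the diagonal satisfies $P(x_{tj},x_{tk}) = (x_{tj}+x_{tk})/2$, the perpendicular coordinate becomes $\big((P(x_{tj},x_{tk})-x)/B_R\big)^r = \big((u+v)/2\big)^r$ with $u=(x_{tj}-x)/B_R$ and $v=(x_{tk}-x)/B_R$; expanding this binomially writes $\tilde{Q}_r$ as a finite linear combination of the quantities already treated in Lemma~\ref{lemma:asymptotics_of_Q_for_R} at $h=0$ with second spatial argument set to $y=x$. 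Thus no genuinely new object appears and the two ingredients (bias and stochastic fluctuation) can be imported.

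For the bias I would insert the conditional expectation $\E[\tilde{G}_{0,t}(x_{tj},x_{tk})\mid x_{tj},x_{tk}] = R_0(x_{tj},x_{tk})$ --- here the exclusion $j\neq k$ is precisely what removes the measurement-error ridge --- and Taylor-expand both $R_0$ and the design density $g$ about $(x,x)$ to second order, conditioning additionally on $N_t$ to produce the combinatorial factor $c_0=\E\{N(N-1)\}$. The odd moments of the symmetric kernel vanish, and the surviving integrals $\iint \big((u+v)/2\big)^r K(u)K(v)\,\D u\,\D v$ evaluate to $1,0,\tfrac12\sigma^2_K$ for $r=0,1,2$, which yields the stated constants $M_{[\bar{Q}_r]}$ with a remainder of order $O(B_R^2)$ uniform in $x$.

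For the stochastic term I would again pass to the Fourier representation of $(u,v)\mapsto K(u)K(v)u^av^b$ as in Lemma~\ref{lemma:asymptotics_of_Q_for_R}, reducing matters to the variance of the averaged Fourier process $\varphi_{ab}^{(0)}$ built from $Z_t=\sum_{j\neq k} e^{\I(x_{tj}-x)u}e^{\I(x_{tk}-x)v}\tilde{G}_{0,t}(x_{tj},x_{tk})$. Conditioning on $\{N_t\}$ and $\{x_{tj}\}$ and applying the law of total covariance, the lag autocovariance of $\{Z_t\}$ is controlled by a fourth-order cumulant term plus products of autocovariances, exactly as in \eqref{eq:autocovZnu expression cum}; summability then follows from \ref{assumption:B.5} and \eqref{eq:assumption_stationarity_summability_of_autocovariance}, giving $\var(\varphi_{ab}^{(0)})=O(1/T)$. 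Combining this with $\iint |\zeta_{ab}(B_R u,B_R v)|\,\D u\,\D v = O(B_R^{-2})$ through the $\E\sup$ bound of \eqref{eq:asymptotics_of_S_for_mu_var} yields the uniform stochastic order $\Op(1/(\sqrt{T}B_R^2))$.

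Finally I would transfer the result from $\tilde{Q}_r$ to $\bar{Q}_r$ as in Lemma~\ref{lemma:asymptotics_of_Q_for_R_in_P}: writing $G_{0,t}-\tilde{G}_{0,t}$ in terms of the three correction products involving $(\mu-\hat\mu)$ and invoking the mean rate \eqref{eq:thm:mean_function:mu}, these contributions are $\Op(1/(\sqrt{T}B_\mu))$, negligible against the $\Op(1/(\sqrt{T}B_R^2)+B_R^2)$ already obtained. Combining the bias and stochastic bounds then gives the claim. I expect the only real care to lie in the bias computation: unlike Lemma~\ref{lemma:asymptotics_of_Q_for_R}, the diagonal projection couples the two scaled increments, so one must verify that the cross term $2uv$ integrates against $K(u)K(v)$ to zero and that the surviving $u^2+v^2$ contribution produces the factor $\tfrac12\sigma^2_K$; everything else is a routine transcription of the lag-$h$ argument to the case $h=0$, $y=x$.
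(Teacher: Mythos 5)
Your proposal is correct and follows essentially the same route as the paper, whose own proof consists of a single sentence declaring the result analogous to Lemmas \ref{lemma:asymptotics_of_Q_for_R} and \ref{lemma:asymptotics_of_Q_for_R_in_P}; you have filled in exactly the intended details, correctly isolating the one genuinely new computation, namely that $P(x_{tj},x_{tk})=(x_{tj}+x_{tk})/2$ couples the two scaled increments and that $\iint \bigl((u+v)/2\bigr)^r K(u)K(v)\,\D{u}\,\D{v}$ equals $1$, $0$, $\tfrac12\sigma^2_K$ for $r=0,1,2$. The only discrepancy is notational: the smoother \eqref{eq:local_LS_est_sigma2_minimisation_V} defining $\bar{Q}_r$ uses bandwidth $B_R$ while the lemma states the rate in terms of $B_V$ (an inconsistency already present in the paper), so the $B_R$-rates you derive are the ones actually implied by the definitions.
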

\begin{proof}
The proof of Lemma \ref{lemma7} is analogous to the proofs of Lemma \ref{lemma:asymptotics_of_Q_for_R} and Lemma \ref{lemma:asymptotics_of_Q_for_R_in_P}.
\end{proof}

The following corollary is a direct consequence of  Lemma \ref{lemma6} and Lemma \ref{lemma7}, and the formula \eqref{eq:explicit_formula_barR}.
\begin{corollary}
Under \ref{assumption:B.1} --- \ref{assumption:B.5} and \ref{assumption:B.7} --- \ref{assumption:B.9},
$$ \bar{R}_0(x) = R_0(x,x) + \Op \left( \frac{1}{\sqrt{T} B_V^2}  +B_V^2\right) $$
uniformly in $x\in[0,1]$.
\end{corollary}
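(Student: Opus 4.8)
The plan is to mirror exactly the argument used for the second part of Theorem \ref{thm:mean_and_autocov_function}: substitute the uniform asymptotic expansions supplied by Lemma \ref{lemma6} and Lemma \ref{lemma7} into the explicit formula \eqref{eq:explicit_formula_barR} for $\bar{R}_0(x)$, and then invoke the uniform version of Slutsky's theorem. First I would propagate the rates through the composite quantities $\bar{\mathscr{A}}_1,\bar{\mathscr{A}}_2,\bar{\mathscr{A}}_3$ and $\bar{\mathscr{B}}$. Since each is a fixed polynomial in the $\bar{S}_r$, and each $\bar{S}_r$ converges uniformly in $x$ to the bounded limit $M_{[\bar{S}_r]}$ at rate $\Op(1/(\sqrt{T}B_V^2)+B_V^2)$ (and the limits are bounded, being continuous functions of the bounded density $g$), every sum and product inherits the same uniform rate. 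Using $M_{[\bar{S}_1]}=M_{[\bar{S}_3]}=0$ this gives $\bar{\mathscr{A}}_1=M_{[\bar{S}_2]}M_{[\bar{S}_4]}+\Op(\cdot)$, $\bar{\mathscr{A}}_2=\Op(\cdot)$, $\bar{\mathscr{A}}_3=M_{[\bar{S}_2]}^2+\Op(\cdot)$, and $\bar{\mathscr{B}}=M_{[\bar{\mathscr{A}}_1]}M_{[\bar{S}_0]}-M_{[\bar{\mathscr{A}}_3]}M_{[\bar{S}_2]}+\Op(\cdot)$, where each $\Op(\cdot)$ abbreviates the uniform rate $\Op(1/(\sqrt{T}B_V^2)+B_V^2)$.

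Next I would expand the numerator $\bar{\mathscr{A}}_1\bar{Q}_0-\bar{\mathscr{A}}_2\bar{Q}_1-\bar{\mathscr{A}}_3\bar{Q}_2$ of \eqref{eq:explicit_formula_barR}, inserting the limits $M_{[\bar{Q}_0]}=c_0R_0(x,x)g(x)^2$, $M_{[\bar{Q}_1]}=0$ and $M_{[\bar{Q}_2]}=\tfrac12 c_0R_0(x,x)g(x)^2\sigma_K^2$ from Lemma \ref{lemma7}. A direct computation with the kernel-moment constants then shows that the leading term of the numerator equals $R_0(x,x)$ times the leading term $M_{[\bar{\mathscr{B}}]}$ of the denominator: both share the common factor $c_0^3 g(x)^6\sigma_K^2\bigl[\tfrac{1}{16}(\mu_4^{(K)}+3\sigma_K^2)-\tfrac18\sigma_K^4\bigr]$, with the numerator carrying in addition the factor $R_0(x,x)$. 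Because $g$ is strictly positive and bounded and $c_0=\E\{N(N-1)\}>0$ under \ref{assumption:B.1} (as $\Prob(N>1)>0$), the leading term $M_{[\bar{\mathscr{B}}]}$ of the denominator is bounded away from zero uniformly in $x$; hence the pointwise inversion $\bar{\mathscr{B}}^{-1}$ is uniformly stable and the uniform Slutsky theorem delivers $\bar{R}_0(x)=R_0(x,x)+\Op(1/(\sqrt{T}B_V^2)+B_V^2)$ uniformly in $x\in[0,1]$.

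The one point deserving care is the algebraic cancellation verifying that the ratio collapses exactly to $R_0(x,x)$: one must check that the kernel-moment constant $\tfrac{1}{16}(\mu_4^{(K)}+3\sigma_K^2)-\tfrac18\sigma_K^4$ appearing in both numerator and denominator is nonzero. This is precisely the non-degeneracy of the design of the local-quadratic projection smoother and is readily verified for the Epanechnikov kernel (there $\sigma_K^2=1/5$ and $\mu_4^{(K)}=3/35$, so the constant is strictly positive). Beyond this verification, the result is a routine consequence of the two preceding lemmas combined with the same Slutsky step already used in the proof of the second part of Theorem \ref{thm:mean_and_autocov_function}, so no genuinely new obstacle arises.
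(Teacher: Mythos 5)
Your proposal is correct and follows exactly the route the paper takes: the paper states the corollary as a direct consequence of Lemma \ref{lemma6}, Lemma \ref{lemma7}, and the explicit formula \eqref{eq:explicit_formula_barR}, and you simply carry out that substitution and the Slutsky step in detail. Your explicit verification of the cancellation of the kernel-moment constant (and its non-degeneracy for the Epanechnikov kernel) is a welcome elaboration of what the paper leaves implicit, but it is not a different argument.
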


Now we turn our attention to the linear smoother on the diagonal \eqref{eq:local_LS_diagonal}.

\begin{lemma}\label{lemma8}
Under \ref{assumption:B.1} --- \ref{assumption:B.5} and \ref{assumption:B.7} --- \ref{assumption:B.9},
$$ \hat{V}(x) = R_0(x,x) + \sigma^2 + \Op\left( \frac{1}{\sqrt{T} B_V} + B_V^2 \right)$$
uniformly in $x\in[0,1]$.
\end{lemma}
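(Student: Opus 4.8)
The plan is to treat $\hat V(x)=\hat c_0$ in exactly the same way as the mean estimator $\hat\mu$ was treated in the first part of Theorem \ref{thm:mean_and_autocov_function}, but with the ``data'' $Y_{tj}$ replaced by the squared centred observations $(Y_{tj}-\hat\mu(x_{tj}))^2$ and the target $\mu(x)$ replaced by $V(x)=R_0(x,x)+\sigma^2$. Since \eqref{eq:local_LS_diagonal} is a one-dimensional local linear smoother, its minimiser admits the same explicit ratio form as \eqref{eq:explicit_formula_mu}, namely $\hat V(x)=(Q_0 S_2-Q_1 S_1)/(S_0 S_2-S_1^2)$, where $S_r$ is built from the bandwidth $B_V$ and $Q_r$ is the corresponding weighted sum of the squared centred observations. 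The denominator terms $S_r$ are identical in structure to those of Lemma \ref{lemma:asymptotics_of_S_for_mu} (with $B_\mu$ replaced by $B_V$), so that lemma delivers $S_r=M_{[S_r]}+\Op(1/(\sqrt T B_V)+B_V^2)$ directly, and it remains only to analyse the numerator.

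First I would handle the idealised numerator $\bar Q_r$, in which $\hat\mu$ is replaced by the true mean $\mu$, i.e.\ the smoother applied to $(Y_{tj}-\mu(x_{tj}))^2$. Since $\E\{(Y_{tj}-\mu(x_{tj}))^2\mid x_{tj}\}=R_0(x_{tj},x_{tj})+\sigma^2=V(x_{tj})$, a Taylor expansion to order two (as in Lemma \ref{lemma:asymptotics_of_Q_for_mu}) yields $\E\bar Q_0=M_{[\bar Q_0]}+O(B_V^2)$ and $\E\bar Q_1=O(B_V^2)$, uniformly in $x$, with $M_{[\bar Q_0]}=\Ez{N}\,V(x)g(x)$ and $M_{[\bar Q_1]}=0$.

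The main obstacle is the stochastic term, namely bounding $\var(\bar Q_r)$ uniformly in $x$. Following the Fourier device of Lemma \ref{lemma:asymptotics_of_Q_for_mu}, this reduces to controlling the variance of $\varphi_r(v)=\frac1T\sum_{t=1}^T Z_t$ with $Z_t=\sum_{j=1}^{N_t}e^{\I v x_{tj}}(x_{tj}-x)^r(Y_{tj}-\mu(x_{tj}))^2$. The sequence $\{Z_t\}$ is stationary, but its autocovariance now involves the covariance of products of \emph{four} centred field values, so unlike the mean case the second-order summability \eqref{eq:assumption_stationarity_summability_of_autocovariance} no longer suffices on its own. Expanding this covariance by the fourth-order cumulant formula (exactly as in \eqref{eq:autocovZnu expression cum} of Lemma \ref{lemma:asymptotics_of_Q_for_R}) splits it into a fourth-order cumulant term plus products of lag autocovariances; summability of the former is supplied by assumption \ref{assumption:B.5} and of the latter by \eqref{eq:assumption_stationarity_summability_of_autocovariance}, while the measurement-error contribution is independent across $t$ and hence augments only finitely many lags. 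This gives $\var(\varphi_r)=O(1/T)$ uniformly over $v$ and $x$, so that integrating against $\zeta_r(vB_V)$ produces the stochastic rate $\Op(1/(\sqrt T B_V))$.

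Finally I would pass from $\bar Q_r$ to the feasible $Q_r$ by expanding $(Y_{tj}-\hat\mu(x_{tj}))^2-(Y_{tj}-\mu(x_{tj}))^2=2(Y_{tj}-\mu(x_{tj}))(\mu(x_{tj})-\hat\mu(x_{tj}))+(\mu(x_{tj})-\hat\mu(x_{tj}))^2$ and invoking the uniform rate \eqref{eq:thm:mean_function:mu} for $\sup_x|\hat\mu(x)-\mu(x)|$, exactly as in Lemma \ref{lemma:asymptotics_of_Q_for_R_in_P}; the resulting discrepancy is of smaller order and is absorbed into $\Op(1/(\sqrt T B_V)+B_V^2)$. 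Combining the numerator and denominator expansions through the uniform Slutsky argument already used for the first part of Theorem \ref{thm:mean_and_autocov_function}, the ratio converges to $M_{[\bar Q_0]}M_{[S_2]}/(M_{[S_0]}M_{[S_2]})=V(x)$, which yields $\hat V(x)=R_0(x,x)+\sigma^2+\Op(1/(\sqrt T B_V)+B_V^2)$ uniformly in $x\in[0,1]$.
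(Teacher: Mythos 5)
Your proposal is correct and follows essentially the same route the paper intends: the paper's own proof of this lemma is just the remark that it is analogous to the preceding lemmas, and your argument is precisely that analogy carried out — the explicit ratio form, the bias--variance split with a second-order Taylor expansion, the Fourier device with the autocovariance of the squared centred observations controlled via the fourth-order cumulant decomposition under \ref{assumption:B.5} and \eqref{eq:assumption_stationarity_summability_of_autocovariance}, and the final plug-in of $\hat\mu$ handled as in Lemma \ref{lemma:asymptotics_of_Q_for_R_in_P}. You also correctly read the smoother \eqref{eq:local_LS_diagonal} as acting on $(Y_{tj}-\hat\mu(x_{tj}))^2$ (consistent with \eqref{eq:CV_minimisation_B_V} and the cited reference), which is the intended target even though the displayed formula writes $Y_{tj}$.
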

\begin{proof}
The proof is similar to the proofs of the above lemmas. An explicit formula for the minimizer of \eqref{eq:local_LS_diagonal} can be found analogously.
\end{proof}

\begin{proof}[Proof of the last part of the Theorem \ref{thm:mean_and_autocov_function}]
Combining Lemma \ref{lemma6}, Lemma \ref{lemma7}, and Lemma \ref{lemma8} yields the rate \eqref{eq:thm:autocov_function:sigma}. See also the proof of \citet[Theorem 3.4]{li2010uniform} where the proof with the local-linear smoothing of the diagonal is written out in detail.
\end{proof}

\subsection{Proof of Theorem \ref{thm:spectral_density}}
\label{subsec:proof_of_Thm_2}

Firstly we comment that the minimizer to \eqref{eq:minimization-spectral-density} and hence the estimator can be expressed explicitly \eqref{eq:spectral_density_estimator_bartlett_smoother} as
\begin{equation}
\label{eq:minimization-spectral-density-explicit}
\hat{f}_\omega(x,y) = \frac{1}{2\pi}\left( 
\mathscr{A}_1 Q_{00}^{\omega} -
\mathscr{A}_2 Q_{10}^{\omega} -
\mathscr{A}_3 Q_{01}^{\omega}
 \right) \mathscr{B}^{-1} ,
 \end{equation}
where
\begin{align*}
\mathscr{A}_1 &= S_{20}S_{02}-S_{11}^2, \qquad 
\mathscr{A}_2  = S_{10}S_{02}-S_{01}S_{11},\qquad
\mathscr{A}_3  = S_{01}S_{20}-S_{10}S_{11}, \\
\mathscr{B} &= \mathscr{A}_1 S_{00} - \mathscr{A}_2 S_{10} - \mathscr{A}_3 S_{01}, \\
S_{pq} &= \frac{1}{\QBartlett} \sum_{h=-\QBartlett}^{\QBartlett}
\frac{W_h}{\widehat{\mathcal{N}_h}}
\sum_{t=\max(1,1-h)}^{\min(T,T-h)} 
\stackrel[j\neq k \text{ if } h=0]{}{\sum_{j=1}^{N_{t+h}} \sum_{k=1}^{N_t}}
\left( \frac{x_{t+h,j}-x}{B_R} \right)^p
\left( \frac{x_{tk}-y}{B_R} \right)^q
\times\\ &\times
\frac{1}{B_R^2}
K\left( \frac{x_{t+h,j}-x}{B_R} \right)
K\left( \frac{x_{tk}-y}{B_R} \right),
\\
Q_{pq}^{\omega} &= \sum_{h=-\QBartlett}^{\QBartlett}
\frac{W_h e^{-\I h\omega}}{\widehat{\mathcal{N}_h}}
\sum_{t=\max(1,1-h)}^{\min(T,T-h)}
\stackrel[j\neq k \text{ if } h=0]{}{\sum_{j=1}^{N_{t+h}} \sum_{k=1}^{N_t}}
G_{h,t}(x_{t+h,j}, x_{tk})
\left( \frac{x_{t+h,j}-x}{B_R} \right)^p
\left( \frac{x_{tk}-y}{B_R} \right)^q
\times \\
&\times
\frac{1}{B_R^2}
K\left( \frac{x_{t+h,j}-x}{B_R} \right)
K\left( \frac{x_{tk}-y}{B_R} \right).
\end{align*}
All of the above quantities are understood as functions of $(x,y)\in[0,1]^2$ and all operations are considered in a pointwise sense, including the pointwise inversion $\mathscr{B}^{-1} = (\mathscr{B}(x,y))^{-1}$.

To see why the minimizer has the form \eqref{eq:minimization-spectral-density-explicit} we simplify the notation of the complex minimisation problem \eqref{eq:minimization-spectral-density} to the following:
$$ \min_{d_0,d_1,d_2} \sum_{j=1}^J \left| A_j - d_0 - d_1 (x_j-x) - d_2 (y_j-y) \right|^2 v_j $$
where $A_j\in\mathbb{C}$ represents the raw covariances multiplied by the complex exponential, and $v_j\geq 0$ are the spatial and Barlett's weights. The sum of squares can be rewritten in the matrix notation as
$$ \min_{d_0,d_1,d_2} \left(\mathbf{A} - \mathbb{X} \mathbf{d}\right)^\dagger \mathbb{V} \left(\mathbf{A} - \mathbb{X} \mathbf{d}\right)$$
where $\dagger$ denotes the complex conjugate, $\mathbf{A} = (A_1,\dots,A_J)\transpose \in\mathbb{C}^J$, $\mathbf{d}=(d_0,d_1,d_3) \in\mathbb{C}^3, \mathbb{V} = \diag(v_1,\dots,v_J) \in\mathbb{R}^{J\times J}$ and
$$ \mathbb{X} =
\begin{pmatrix}
1 & x_1 - x & y_1 - y \\
\vdots & \vdots & \vdots \\
1 & x_J - x & y_J - y
\end{pmatrix} \in \mathbb{R}^{J\times 3}
.$$
Thanks to $\mathbb{X}$ and $\mathbb{V}$ being real, the real and imaginary parts of the minimisation can be separated:
\begin{multline*}
\hat{\mathbf{d}} =
\argmin_{\mathbf{d}} \left(\mathbf{A} - \mathbb{X} \mathbf{d}\right)^\dagger \mathbb{V} \left(\mathbf{A} - \mathbb{X} \mathbf{d}\right)
=\\=
\underbrace{
\left(\argmin_{\Re \mathbf{d}} \left(\Re\mathbf{A} - \mathbb{X} \Re\mathbf{d}\right)\transpose \mathbb{V} \left(\Re\mathbf{A} - \mathbb{X} \Re\mathbf{d}\right)
\right)}_{ \Re\hat{\mathbf{d}} }
+ \I 
\underbrace{
\left(
\argmin_{\Im \mathbf{d}} \left(\Im\mathbf{A} - \mathbb{X} \Im\mathbf{d}\right)\transpose \mathbb{V} \left(\Im\mathbf{A} - \mathbb{X} \Im\mathbf{d}\right)\right).
}_{ \Im\hat{\mathbf{d}} }
\end{multline*}
We solve the above minimisation problems by the classical normal equations formula for the weighted least squares problem and obtain
$$ \hat{\mathbf{d}} = \Re\hat{\mathbf{d}}+\I\Im\hat{\mathbf{d}} =
\left(\mathbb{X}\transpose\mathbb{V}\mathbb{X}\right)^{-1}\mathbb{X}\transpose \mathbb{V} \Re\mathbf{A} 
+\I
\left(\mathbb{X}\transpose\mathbb{V}\mathbb{X}\right)^{-1}\mathbb{X}\transpose \mathbb{V} \Im\mathbf{A} =
\left(\mathbb{X}\transpose\mathbb{V}\mathbb{X}\right)^{-1}\mathbb{X}\transpose \mathbb{V} \mathbf{A}.$$
We can calculate the first element of $\left(\mathbb{X}\transpose\mathbb{V}\mathbb{X}\right)^{-1}\mathbb{X}\transpose \mathbb{V} \mathbf{A}$ by Cram\'er's rule. After switching back to the quadruple summation \eqref{eq:minimization-spectral-density} we arrive at the formula \eqref{eq:minimization-spectral-density-explicit}.

To investigate the asymptotic behaviour of the estimator \eqref{eq:spectral_density_estimator_bartlett_smoother}, we need to analyse the asymptotics of the terms in the formula \eqref{eq:minimization-spectral-density-explicit}. We now assess the asymptotic behaviour of $S_{pq}$ and $Q_{pq}^\omega$.

\begin{lemma}
\label{lemma:asymptotics_of_S_for_F}
Under the assumptions\ref{assumption:B.1}, \ref{assumption:B.2}, and \ref{assumption:B.8}, for any $p,q\in\mathbb{N}_0$, such that $0\leq p+q\leq 2$:
$$ S_{pq} = M_{[S_{pq}]} + \Op\left( \frac{1}{\sqrt{T}} \frac{1}{B_R^2} +B_R^2\right) $$
uniformly in $x,y\in[0,1]$ and where
$$
\begin{gathered}
M_{[S_{00}]} = g(x)g(y), \qquad
M_{[S_{01}]} = M_{[S_{10}]} = M_{[S_{11}]} = 0, \\
M_{[S_{20}]} =  M_{[S_{02}]} = g(x)g(y) \sigma^2_K, \qquad
\sigma^2_K = \int v^2 K(v) \D{v}.
\end{gathered}
$$
\end{lemma}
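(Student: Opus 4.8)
The plan is to reduce the statement to the already-established Lemma~\ref{lemma:asymptotics_of_S_for_R}, by observing that the inner sum over $(t,j,k)$ appearing in $S_{pq}$ is, lag by lag, exactly $(T-|h|)$ times the quantity $S_{pq}^{(h)}$ analysed there. Inserting the estimated normalisers $\widehat{\mathcal{N}_h}=(T-|h|)(\bar N)^2$ for $h\neq 0$ and $\widehat{\mathcal{N}_0}=T(\overline{N^2}-\bar N)$, the $(T-|h|)$ factors cancel and one obtains the clean identity
\[
S_{pq} = \frac{1}{(\bar N)^2}\,\frac{1}{L}\sum_{0<|h|\leq L} W_h\, S_{pq}^{(h)} \;+\; \frac{1}{\overline{N^2}-\bar N}\,\frac{W_0}{L}\, S_{pq}^{(0)}.
\]
The first step is to record that, by the law of large numbers under \ref{assumption:B.1}, $\bar N\to\E N$ and $\overline{N^2}\to\E N^2$ with $\Op(T^{-1/2})$ fluctuations, so that $1/(\bar N)^2 = 1/(\E N)^2 + \Op(T^{-1/2})$ and $1/(\overline{N^2}-\bar N) = 1/c_0 + \Op(T^{-1/2})$; here $\Prob(N>1)>0$ guarantees $\E N>0$ and $c_0=\E\{N(N-1)\}>0$, so both denominators are bounded away from zero with probability tending to one.

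Second, I would control the stochastic part by a crude triangle inequality over $h$: since $\frac{1}{L}\sum_{0<|h|\leq L}W_h\leq 1$,
\[
\E\Big[\sup_{x,y}\Big|\tfrac{1}{L}\sum_{0<|h|\leq L}W_h\big(S_{pq}^{(h)}-\E S_{pq}^{(h)}\big)\Big|\Big]
\leq \frac{1}{L}\sum_{0<|h|\leq L}W_h\,\E\Big[\sup_{x,y}\big|S_{pq}^{(h)}-\E S_{pq}^{(h)}\big|\Big]
\leq \frac{U}{\sqrt{T-L}\,B_R^2},
\]
using the per-lag bound \eqref{eq:lemma3:rate} together with $T-|h|\asymp T$ uniformly over $|h|\leq L$ (valid since $L=o(T)$ in the regime of interest). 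This is precisely $O(T^{-1/2}B_R^{-2})$, and it is the averaging structure, namely the $1/L$ prefactor cancelling the $\sum_h W_h=L$ terms, that keeps this contribution at the single-lag rate and prevents the extra factor $L$ that will later appear for the frequency-dependent numerators $Q_{pq}^\omega$.

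Third, for the bias I would substitute the uniform expansion $\E S_{pq}^{(h)}=M_{[S_{pq}^{(h)}]}+O(B_R^2)$ from \eqref{eq:lemma3:rate2}, recalling $M_{[S_{pq}^{(h)}]}=c_h\,m_{pq}$, where $m_{pq}$ are the claimed targets ($m_{00}=g(x)g(y)$, $m_{20}=m_{02}=g(x)g(y)\sigma_K^2$, the rest zero) and $c_h=(\E N)^2$ for $h\neq 0$, $c_0=\E\{N(N-1)\}$. The $h\neq 0$ block then yields $\frac{1}{(\E N)^2}(\E N)^2 m_{pq}\frac{L-1}{L}=m_{pq}(1-1/L)$ up to $O(B_R^2)$, while the isolated $h=0$ term yields $\frac{1}{c_0}\cdot\frac{1}{L}c_0 m_{pq}=m_{pq}/L$; these two $O(1/L)$ pieces cancel exactly, which is precisely the purpose of the distinct lag-zero normaliser $\overline{N^2}-\bar N$. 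Combining the three steps, and absorbing the $\Op(T^{-1/2})$ fluctuations of the normalisers (which multiply $\Op(1)$ quantities) into the dominant $\Op(T^{-1/2}B_R^{-2})$ term, gives $S_{pq}=M_{[S_{pq}]}+\Op(T^{-1/2}B_R^{-2}+B_R^2)$, uniformly in $x,y$, since every inherited bound is already a supremum over $[0,1]^2$ and the scalar prefactors are free of $(x,y)$.

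The main obstacle is the exact bookkeeping of the $1/L$ terms: one must confirm that the lag-zero term's separate normalisation is calibrated so that its $O(1/L)$ contribution cancels the $O(1/L)$ deficit incurred by omitting $h=0$ in the main sum, since an uncancelled $1/L$ is \emph{not} dominated by $T^{-1/2}B_R^{-2}$ under \ref{assumption:B.10} and would therefore violate the stated rate. A secondary nuisance is decoupling the random denominators from the correlated $S_{pq}^{(h)}$, which I handle by splitting each denominator into its deterministic limit plus an $\Op(T^{-1/2})$ remainder rather than attempting an exact expectation over the coupled $N_t$.
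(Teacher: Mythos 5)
Your proof is correct and follows essentially the same route as the paper's: both arguments reduce the claim to the per-lag bounds of Lemma~\ref{lemma:asymptotics_of_S_for_R}, handle the random normalisers $\bar N$ and $\overline{N^2}$ by a law-of-large-numbers expansion with $\Op(T^{-1/2})$ remainders, and exploit $\frac{1}{L}\sum_h W_h = 1$ so that the averaged stochastic error stays at the single-lag rate $\Op(T^{-1/2}B_R^{-2})$. Your explicit verification that the $O(1/L)$ deficit from omitting $h=0$ in the main sum is exactly cancelled by the separately normalised lag-zero term is a point the paper treats only implicitly, but it does not change the substance of the argument.
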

\begin{proof}
Denote
$$\mathcal{S}^{(pq)}_{htjk} =  
\left( \frac{x_{t+h,j}-x}{B_R} \right)^p
\left( \frac{x_{tk}-y}{B_R} \right)^q
\frac{1}{B_R^2}
K\left( \frac{x_{t+h,j}-x}{B_R} \right)
K\left( \frac{x_{tk}-y}{B_R} \right),
$$
for $h=-L,\dots,L$, $t=1,\dots,T-h$, $j=1,\dots,N_{t+h}$, $k=1,\dots,N_t$ for $j\neq k$ if $h=0$.

Because $L=o(T)$ we may assume (and we do) in the entire proof that $L\leq T/2$.
Noting that $\QBartlett^{-1}\sum_{h=-\QBartlett}^{\QBartlett} W_h = 1$ we start with the decomposition
\begin{multline}
\label{eq:lemma9_proof_eq1}
\left|\frac{1}{\QBartlett}\sum_{h=-\QBartlett}^{\QBartlett} \left(
\frac{W_h}{\widehat{\mathcal{N}_h}}
\sum_{t=\max(1,1-h)}^{\min(T,T-h)} 
\stackrel[j\neq k \text{ if } h=0]{}{\sum_{j=1}^{N_{t+h}} \sum_{k=1}^{N_t}}
\mathcal{S}^{(pq)}_{htjk}
\right) - M_{[S_{pq}]} \right|
\leq\\\leq
\left|\frac{1}{\QBartlett}\sum_{h=-\QBartlett}^{\QBartlett}
\frac{W_h}{\widehat{\mathcal{N}_h}}
\sum_{t=\max(1,1-h)}^{\min(T,T-h)} 
\stackrel[j\neq k \text{ if } h=0]{}{\sum_{j=1}^{N_{t+h}} \sum_{k=1}^{N_t}}
\left(\mathcal{S}^{(pq)}_{htjk}
-M_{[S_{pq}]}
\right) \right|
+
\left|
\frac{1}{\QBartlett}\sum_{h=-\QBartlett}^{\QBartlett}
W_h M_{[S_{pq}]} \left( 1 - \frac{\mathcal{N}_h}{\widehat{\mathcal{N}_h}} \right)
\right|
\leq\\\leq
\left|\frac{1}{\QBartlett}\sum_{h=-\QBartlett}^{\QBartlett}
\frac{W_h}{\widehat{\mathcal{N}_h}}
\sum_{t=\max(1,1-h)}^{\min(T,T-h)} 
\stackrel[j\neq k \text{ if } h=0]{}{\sum_{j=1}^{N_{t+h}} \sum_{k=1}^{N_t}}
\left(\mathcal{S}^{(pq)}_{htjk}
-M_{[S_{pq}]}
\right) \right|
+
\left|\frac{1}{L} M_{[S_{pq}]} \left( 1 - \frac{\mathcal{N_0}}{\widehat{\mathcal{N}_0}} \right)\right|
+\\+
\left|
\frac{1}{\QBartlett}\sum_{h=-\QBartlett, h\neq 0}^{\QBartlett}
W_h M_{[S_{pq}]} \left( 1 - \frac{\mathcal{N}_h}{(T-|h|) (\E N)^2} \right)
\right|
+
\left|
\frac{1}{\QBartlett}\sum_{h=-\QBartlett, h\neq 0}^{\QBartlett}
W_h M_{[S_{pq}]}
\frac{\mathcal{N}_h}{ T-|h| }
\left( \frac{1}{(\bar{N})^2} - \frac{1}{(\E N)^2} \right)
\right|
\end{multline}
where $\mathcal{N}_h = \sum_{t=\min(1,1-h)}^{\max(T,T-h)} N_{t+h}N_t$ for $h\neq 0$ and $\mathcal{N}_0 =\sum_{t=1}^T N_t(N_t-1)$.
The second term on the right-hand side of \eqref{eq:lemma9_proof_eq1} is of order $\Op(L)$. The third term is bounded by bounding the variance $\mathcal{N}_t \leq U T$ for $|h|\leq T/2$ where the constant $U$ is independent of $T$ and $h$ but may depend on the distribution of $N$. Thus the third term is of order $\Op(T^{-1/2})$ thanks to
$$
\left|
\frac{1}{\QBartlett}\sum_{h=-\QBartlett,h\neq 0}^{\QBartlett}
\Ez{ 1 - \frac{\mathcal{N}_h}{(T-|h|) (\E N)^2}  }
\right|
\leq
\frac{1}{\QBartlett}
\sum_{h=-\QBartlett,h\neq 0}^{\QBartlett}
\left\{\var\left( \frac{\mathcal{N}_h}{(T-|h|)(\E N)^2} \right)\right\}^{1/2}
=O(T^{-1/2}).
$$
The fourth term on the right hand side of order $\Op(T^{-1/2})$ because $\bar{N} = \E N + \Op(T^{-1/2})$.

The first term on the right-hand side of \eqref{eq:lemma9_proof_eq1} is decomposed as
\begin{multline}
\label{eq:lemma9_proof_eq2}
\left|\frac{1}{\QBartlett}\sum_{h=-\QBartlett}^{\QBartlett}
\frac{W_h}{\widehat{\mathcal{N}_h}}
\sum_{t=\max(1,1-h)}^{\min(T,T-h)} 
\stackrel[j\neq k \text{ if } h=0]{}{\sum_{j=1}^{N_{t+h}} \sum_{k=1}^{N_t}}
\left(\mathcal{S}^{(pq)}_{htjk}
-M_{[S_{pq}]}
\right) \right|
\leq\\\leq
\frac{1}{\QBartlett}\sum_{h=-\QBartlett}^{\QBartlett}
\frac{W_h}{\widehat{\mathcal{N}_h}}
\sum_{t=\max(1,1-h)}^{\min(T,T-h)} 
\stackrel[j\neq k \text{ if } h=0]{}{\sum_{j=1}^{N_{t+h}} \sum_{k=1}^{N_t}}
\left| \mathcal{S}^{(pq)}_{htjk} - \E \mathcal{S}^{(pq)}_{htjk}\right|
+\\+
\frac{1}{\QBartlett}\sum_{h=-\QBartlett}^{\QBartlett}
\frac{W_h}{\widehat{\mathcal{N}_h}}
\sum_{t=\max(1,1-h)}^{\min(T,T-h)} 
\stackrel[j\neq k \text{ if } h=0]{}{\sum_{j=1}^{N_{t+h}} \sum_{k=1}^{N_t}}
\left| \E \mathcal{S}^{(pq)}_{htjk} - M_{[S_{pq}]}\right|.
\end{multline}

The second term on the right hand side of \eqref{eq:lemma9_proof_eq2} is of order $\Op(B_R^2)$ because $\left| \E \mathcal{S}^{(pq)}_{htjk} - M_{[S_{pq}]}\right| = \Op(B_R^2)$ uniformly.
The first term on the right hand side of \eqref{eq:lemma9_proof_eq2} is treated using similar steps as in the proof of Lemma \ref{lemma:asymptotics_of_S_for_R}, therefore for a constant $U$ independent of $B_R, T$ and $|h|<T/2$,
$$
\Ez{
\sum_{t=\max(1,1-h)}^{\min(T,T-h)} 
\stackrel[j\neq k \text{ if } h=0]{}{\sum_{j=1}^{N_{t+h}} \sum_{k=1}^{N_t}}
\sup_{x,y\in[0,1]}
\left| \mathcal{S}^{(pq)}_{htjk} - \E \mathcal{S}^{(pq)}_{htjk}\right| 
\vert
N_1,\dots,N_T
}
\leq U \frac{\sqrt{\mathcal{N}_h}}{B_R^2}.
$$
The observation that
$$ \frac{1}{L}\sum_{h=-L}^L \frac{\sqrt{\mathcal{N}_h}}{\widehat{\mathcal{N}}_h} = \Op(T^{-1/2})$$
now establishes that the first term on the right hand side of \eqref{eq:lemma9_proof_eq2} is of order $\Op(\frac{1}{\sqrt{T}} \frac{1}{B_R^2} + B_R^2)$.

\end{proof}
%

\begin{lemma}
\label{lemma:asymptotics_of_Q_for_F}
For $p,q\in\mathbb{N}_0$, be such that $0\leq p+q\leq 2$, we have
\begin{enumerate}
\item under \ref{assumption:B.1} --- \ref{assumption:B.5}, \ref{assumption:B.7}, \ref{assumption:B.8} and \ref{assumption:B.10}
\begin{align*}
\tilde{Q}_{pq}^\omega &= M_{[Q^\omega_{pq}]} + \op\left( 1\right),\\
Q_{pq}^\omega &= M_{[Q^\omega_{pq}]} + \op\left( 1\right),
\end{align*}
\item under \ref{assumption:B.1} --- \ref{assumption:B.8} and \ref{assumption:B.10}
\begin{align*}
\tilde{Q}_{pq}^\omega &= M_{[Q^\omega_{pq}]} + \Op\left( \QBartlett \frac{1}{\sqrt{T}} \frac{1}{B_R^2}  + \QBartlett B_R^2\right),\\
Q_{pq}^\omega &= M_{[Q^\omega_{pq}]} + \Op\left( \QBartlett \frac{1}{\sqrt{T}} \frac{1}{B_R^2}  + \QBartlett B_R^2\right) ,
\end{align*}
\end{enumerate}
where all convergences are uniformly in $\omega\in[-\pi,\pi]$ and $x,y\in[0,1]$
and
$$
M_{[Q_{00}^\omega]} = 2\pi g(x)g(y) f_\omega(x,y) , \qquad
M_{[Q_{10}^\omega]} = M_{[Q_{01}^\omega]} = 0.
$$
\end{lemma}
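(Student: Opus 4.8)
The plan is to treat $\tilde{Q}_{pq}^\omega$ (the version built from the true-mean raw covariances $\tilde{G}_{h,t}$) by following the proof of Lemma \ref{lemma:asymptotics_of_S_for_F} almost verbatim, the only structural differences being that each summand now carries a raw covariance $\tilde{G}_{h,t}(x_{t+h,j},x_{tk})$ and a modulation $e^{-\I h\omega}$, that the single-lag control must come from Lemma \ref{lemma:asymptotics_of_Q_for_R} (which accounts for the temporal dependence) rather than from the weight-moment Lemma \ref{lemma:asymptotics_of_S_for_R}, and that the absence of a $\QBartlett^{-1}$ prefactor makes the deterministic limit the full spectral sum rather than a single moment. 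Writing the un-normalised inner double sum as $\mathcal{T}_{pq}^{(h)}=(T-|h|)\tilde{Q}_{pq}^{(h)}$, so that $\tilde{Q}_{pq}^\omega=\sum_{h=-\QBartlett}^{\QBartlett}W_h e^{-\I h\omega}\frac{T-|h|}{\widehat{\mathcal{N}_h}}\tilde{Q}_{pq}^{(h)}$, I would condition on $N_1,\dots,N_T$ (freezing $\widehat{\mathcal{N}_h}=(T-|h|)(\bar N)^2$ and its $h=0$ analogue) and split each $\tilde{Q}_{pq}^{(h)}$ into the limit $M_{[Q_{pq}^{(h)}]}$, the $O(B_R^2)$ smoothing bias \eqref{eq:lemma4:rate2}, and the centred fluctuation controlled by \eqref{eq:lemma4:rate}.

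For the deterministic limit I would use $\frac{T-|h|}{\widehat{\mathcal{N}_h}}=(\bar N)^{-2}=(\E N)^{-2}(1+\Op(T^{-1/2}))$ (with the corresponding identity involving $\overline{N^2}-\bar N$ at $h=0$); multiplying by $M_{[Q_{00}^{(h)}]}=c_h R_h(x,y)g(x)g(y)$ the counting constants $c_h/(\E N)^2$ cancel, leaving $g(x)g(y)\sum_{|h|\le\QBartlett}(1-|h|/\QBartlett)R_h(x,y)e^{-\I h\omega}$. This is the triangular-window approximation to $g(x)g(y)\sum_h R_h e^{-\I h\omega}=2\pi g(x)g(y)f_\omega(x,y)=M_{[Q_{00}^\omega]}$, with error at most $\QBartlett^{-1}\sum_{|h|\le\QBartlett}|h|\,\|R_h\|_\infty+\sum_{|h|>\QBartlett}\|R_h\|_\infty$, uniformly in $\omega$ and $(x,y)$. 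Under the bare summability \eqref{eq:assumption_stationarity_summability_of_autocovariance} this is $o(1)$ by dominated convergence (giving part (1)), while under \ref{assumption:B.6} it is $O(1/\QBartlett)$, negligible against the stated rate. The cases $p+q=1$ are immediate because $M_{[Q_{10}^{(h)}]}=M_{[Q_{01}^{(h)}]}=0$ for every $h$.

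The two error streams then accumulate lag by lag. The smoothing bias contributes $\sum_h|W_h|\cdot O(B_R^2)\cdot\Op(1)=\Op(\QBartlett B_R^2)$, using $\sum_h W_h=\QBartlett$ and $\frac{T-|h|}{\widehat{\mathcal{N}_h}}=\Op(1)$. For the fluctuations I would, conditionally on $\{N_t\}$, bound the centred inner sum by the conditioning argument of Lemma \ref{lemma:asymptotics_of_S_for_F}, which — using the summability of the across-$t$ covariances guaranteed by \ref{assumption:B.5}, exactly as in Lemma \ref{lemma:asymptotics_of_Q_for_R} — yields a conditional standard deviation of order $\sqrt{\mathcal{N}_h}\,B_R^{-2}$; since $|W_h e^{-\I h\omega}|\le1$ and $(T-|h|)^{-1/2}\le\sqrt2\,T^{-1/2}$ for $|h|\le\QBartlett\le T/2$, summing $\sqrt{\mathcal{N}_h}/(\widehat{\mathcal{N}_h}B_R^2)$ over the $2\QBartlett+1$ lags and de-conditioning via $\bar N=\E N+\Op(T^{-1/2})$ produces $\Op(\QBartlett T^{-1/2}B_R^{-2})$. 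Because the modulus discards $e^{-\I h\omega}$, every bound is automatically uniform in $\omega\in[-\pi,\pi]$; under \ref{assumption:B.10} both $\QBartlett T^{-1/2}B_R^{-2}$ and $\QBartlett B_R^2$ are $o(1)$, which is what collapses the rate into the $\op(1)$ of part (1). Finally, to pass from $\tilde{Q}_{pq}^\omega$ to $Q_{pq}^\omega$ I would expand $G_{h,t}=\tilde{G}_{h,t}+(\text{terms linear and quadratic in }\mu-\hat\mu)$ as in Lemma \ref{lemma:asymptotics_of_Q_for_R_in_P}; the induced error is $\Op(T^{-1/2}B_\mu^{-1})$ per lag and, summed over the $\QBartlett$ lags, is dominated by $\Op(\QBartlett T^{-1/2}B_R^{-2})$ since $B_R^2=o(B_\mu)$.

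I expect the main obstacle to be the joint, uniform-in-$h$ bookkeeping of the random pair-counts: one must verify that conditioning on $\{N_t\}$ legitimately freezes $\widehat{\mathcal{N}_h}$ and $\mathcal{N}_h$, that the resulting conditional fluctuation bounds hold across all $2\QBartlett+1$ lags with a single constant, and that de-conditioning (through $\bar N$, $\overline{N^2}$, and $\QBartlett^{-1}\sum_h\sqrt{\mathcal{N}_h}/\widehat{\mathcal{N}_h}=\Op(T^{-1/2})$, precisely as in Lemma \ref{lemma:asymptotics_of_S_for_F}) does not degrade the rate. A secondary bookkeeping point is confirming that the triangular-window truncation bias $O(1/\QBartlett)$ and the mean-substitution error are both absorbed into $\Op(\QBartlett T^{-1/2}B_R^{-2}+\QBartlett B_R^2)$ under the bandwidth and span constraints \ref{assumption:B.7}, \ref{assumption:B.8}, \ref{assumption:B.10}.
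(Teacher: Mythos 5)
Your proposal is correct and follows essentially the same route as the paper's proof: the same three-way decomposition into the triangular-window/tail truncation error (handled by Kronecker's lemma, or $O(1/\QBartlett)$ under \ref{assumption:B.6}), the per-lag smoothing bias accumulating to $\Op(\QBartlett B_R^2)$, and the centred fluctuation controlled conditionally on $\{N_t\}$ via the cumulant summability of \ref{assumption:B.5} and the counting identity $\sum_h \sqrt{\mathcal{N}_h}/\widehat{\mathcal{N}}_h = \Op(\QBartlett T^{-1/2})$, followed by the same Lemma~\ref{lemma:asymptotics_of_Q_for_R_in_P}-style argument that replacing $\mu$ by $\hat\mu$ is asymptotically negligible.
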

\begin{proof}
Analogously to Lemma \ref{lemma:asymptotics_of_Q_for_R}, we first assume that $\mu(\cdot)$ is known. Hence we define
\begin{multline*}
\tilde{Q}_{pq}^\omega =
\sum_{h=-\QBartlett}^{\QBartlett}
\frac{W_h e^{-\I h\omega}}{\widehat{\mathcal{N}_h}}
\sum_{t=\max(1,1-h)}^{\min(T,T-h)}
\stackrel[j\neq k \text{ if } h=0]{}{\sum_{j=1}^{N_{t+h}} \sum_{k=1}^{N_t}}
\tilde{G}_{h,t}(x_{t+h,j}, x_{tk})
\left( \frac{x_{t+h,j}-x}{B_R} \right)^p
\left( \frac{x_{tk}-y}{B_R} \right)^q
\times \\
\times
\frac{1}{B_R^2}
K\left( \frac{x_{t+h,j}-x}{B_R} \right)
K\left( \frac{x_{tk}-y}{B_R} \right).
\end{multline*}

Denote
$M_{[Q_{00,h}]}(x,y) = g(x)g(y) R_h(x,y)$ and $M_{[Q_{10,h}]}(x,y)= M_{[Q_{01,h}^\omega]}(x,y) = 0$.
Further denote
$$\mathcal{Q}^{(pq)}_{htjk} =
\tilde{G}_{h,t}(x_{t+h,j}, x_{tk})
\left( \frac{x_{t+h,j}-x}{B_R} \right)^p
\left( \frac{x_{tk}-y}{B_R} \right)^q
\frac{1}{B_R^2}
K\left( \frac{x_{t+h,j}-x}{B_R} \right)
K\left( \frac{x_{tk}-y}{B_R} \right)
$$
$h=-L,\dots,L$, $t=1,\dots,T-h$, $j=1,\dots,N_{t+h}$, $k=1,\dots,N_t$ for $j\neq k$ if $h=0$,
we can write
\begin{multline}\label{eq:lemma10_proof_eq1}
\left| \tilde{Q}_{pq}^\omega - M_{[Q^\omega_{pq}]} \right|
\leq
\left|
\sum_{h=-\QBartlett}^{\QBartlett}
\frac{W_h e^{-\I h\omega}}{\widehat{\mathcal{N}_h}}
\sum_{t=\max(1,1-h)}^{\min(T,T-h)}
\stackrel[j\neq k \text{ if } h=0]{}{\sum_{j=1}^{N_{t+h}} \sum_{k=1}^{N_t}}
\tilde{\mathcal{Q}}^{(pq)}_{htjk}
-
\sum_{h=-\infty}^\infty M_{[Q_{pq,h}]} e^{-\I h \omega}
\right|
\leq\\\leq
\sum_{h=-\QBartlett}^{\QBartlett}
\frac{W_h}{\widehat{\mathcal{N}_h}}
\sum_{t=\max(1,1-h)}^{\min(T,T-h)}
\stackrel[j\neq k \text{ if } h=0]{}{\sum_{j=1}^{N_{t+h}} \sum_{k=1}^{N_t}}
\left|
\tilde{\mathcal{Q}}^{(pq)}_{htjk}
- M_{[Q_{pq,h}]}
\right|
+\\+
\frac{1}{\QBartlett}
\sum_{h=-\QBartlett}^\QBartlett
|h| \left| M_{[Q_{pq,h}^\omega]} \right|
+
\sum_{|h|>\QBartlett} \left|M_{[Q_{pq,h}]}\right|
\end{multline}
Under the assumption \ref{assumption:B.5}, the second and the third term on the right-hand side of \eqref{eq:lemma10_proof_eq1} converge to zero uniformly in $x,y\in[0,1]$ by Kronecker's lemma. Assuming further the assumption \ref{assumption:B.6}, these terms are in fact of order $O(L^{-1})$ uniformly in $x,y\in[0,1]$.

The first term on the right-hand side of \eqref{eq:lemma10_proof_eq1} is treated similarly as in the proof of Lemma \ref{lemma:asymptotics_of_Q_for_R_in_P}.
\begin{multline}\label{eq:lemma10_proof_eq2}
\sum_{h=-\QBartlett}^{\QBartlett}
\frac{W_h}{\widehat{\mathcal{N}_h}}
\sum_{t=\max(1,1-h)}^{\min(T,T-h)}
\stackrel[j\neq k \text{ if } h=0]{}{\sum_{j=1}^{N_{t+h}} \sum_{k=1}^{N_t}}
\left|
\tilde{\mathcal{Q}}^{(pq)}_{htjk}
- M_{[Q_{pq,h}]}
\right|
\leq\\\leq
\sum_{h=-\QBartlett}^{\QBartlett}
\frac{W_h}{\widehat{\mathcal{N}_h}}
\sum_{t=\max(1,1-h)}^{\min(T,T-h)}
\stackrel[j\neq k \text{ if } h=0]{}{\sum_{j=1}^{N_{t+h}} \sum_{k=1}^{N_t}}
\left|
\tilde{\mathcal{Q}}^{(pq)}_{htjk}
- \E \tilde{\mathcal{Q}}^{(pq)}_{htjk}
\right|
+\\+
\sum_{h=-\QBartlett}^{\QBartlett}
\frac{W_h}{\widehat{\mathcal{N}_h}}
\sum_{t=\max(1,1-h)}^{\min(T,T-h)}
\stackrel[j\neq k \text{ if } h=0]{}{\sum_{j=1}^{N_{t+h}} \sum_{k=1}^{N_t}}
\left|
\E \tilde{\mathcal{Q}}^{(pq)}_{htjk}
- M_{[Q_{pq,h}]}
\right|
\end{multline}
The second term on the right-hand side of \eqref{eq:lemma10_proof_eq2} is of order $O(\QBartlett B_R^2)$ uniformly in $x,y\in[0,1]$.
The first term on the right-hand side of \eqref{eq:lemma10_proof_eq2} is treated analogously as in the proof of Lemma \ref{lemma:asymptotics_of_Q_for_R_in_P}, thus there exists a constant $U$ independent of $B_R,T$ and $|h|<T/2$ such that
$$ \Ez{
\sum_{h=-\QBartlett}^{\QBartlett}
\frac{W_h}{\widehat{\mathcal{N}_h}}
\sum_{t=\max(1,1-h)}^{\min(T,T-h)}
\stackrel[j\neq k \text{ if } h=0]{}{\sum_{j=1}^{N_{t+h}} \sum_{k=1}^{N_t}}
\sup_{x,y\in[0,1]} \left| \tilde{\mathcal{Q}}^{(pq)}_{htjk} - \E \tilde{\mathcal{Q}}^{(pq)}_{htjk} \right|
\vert
N_1,\dots,N_t
}
\leq U \frac{\sqrt{\mathcal{N}_h}}{B_R^2}.
$$
Observing that
$$ \sum_{h=-\QBartlett}^\QBartlett \frac{\sqrt{\mathcal{N}_h}}{\widehat{\mathcal{N}}_h}
= \Op( L T^{-1/2} )
$$
concludes the rates $\op(1)$, and $\Op( L T^{-1/2} B_R^{-2})$ under the assumption \ref{assumption:B.6}.

The proof is completed by the repetition of the steps in the proof of Lemma \ref{lemma:asymptotics_of_Q_for_R_in_P}, switching to the $\Op$ notation and noting that the difference between $\tilde{Q}_{pq}^\omega$ and $Q_{pq}^\omega$ is asymptotically negligible.
\end{proof}

\begin{proof}[Proof of Theorem \ref{thm:spectral_density}]

Combining the above derived results in lemmas \ref{lemma:asymptotics_of_S_for_F} and \ref{lemma:asymptotics_of_Q_for_F} we are ready to establish the asymptotic behaviour of the terms that enter the formula \eqref{eq:minimization-spectral-density-explicit}.
\begin{align*}
\mathscr{A}_1 &= \left[ g(x) g(y) \sigma^2_K \right]^2 + \Op\left(\frac{1}{\sqrt{T}} \frac{1}{B_R^2} + B_R^2\right), \\
\mathscr{A}_2 &= \Op\left(\frac{1}{\sqrt{T}} \frac{1}{B_R^2} + B_R^2\right),\\
\mathscr{A}_3 &= \Op\left(\frac{1}{\sqrt{T}} \frac{1}{B_R^2} + B_R^2\right),\\
\mathscr{B} &= \left[ g(x) g(y)\right]^3  \left(\sigma^2_K \right)^2
+ \Op\left(\frac{1}{\sqrt{T}} \frac{1}{B_R^2} + B_R^2\right), \\
Q_{00} &= 2\pi g(x)g(y) f_\omega(x,y) + \op\left( 1 \right), \\
Q_{10} &= \op\left( 1 \right), \\
Q_{01} &= \op\left( 1 \right)
\end{align*}
uniformly in $\omega\in[-\pi,\pi]$ and $x,y\in[0,1]$.
Finally, the numerator of \eqref{eq:minimization-spectral-density-explicit} is
$$ \mathscr{A}_1 Q_{00}^{\omega} -
\mathscr{A}_2 Q_{10}^{\omega} -
\mathscr{A}_3 Q_{01}^{\omega}
=
2\pi f_\omega(x,y) \left[ g(x) g(y)\right]^3\left(\sigma^2_K \right)^2 + \op(1)
$$
uniformly in $\omega\in[-\pi,\pi]$ and $x,y\in[0,1]$ which completes the proof of consistency. Under the assumption \ref{assumption:B.6} we replace $\op(1)$ by $\Op( \QBartlett / (\sqrt{T} B_R^2) + \QBartlett B_R^2)$.

\end{proof}

\subsection{Proof of Corollary \ref{corollary:sup_consistency}}

\begin{proof}[Proof of Corollary \ref{corollary:sup_consistency}]
Note that for $h\in\mathbb{Z}$ and $x,y\in[0,1]$:
$$ \tilde{R}_h(x,y) - R_h(x,y) = \int_{-\pi}^\pi \left\{ \tilde{f}_\omega (x,y) - f_\omega (x,y)\right\} e^{\I h\omega} \D{\omega}.$$
Therefore
$$ \sup_{h\in\mathbb{Z}}\sup_{x,y\in[0,1]} \left| \tilde{R}_h(x,y) - R_h(x,y) \right|
\leq 2\pi\sup_{\omega\in [-\pi,\pi]}\sup_{x,y\in[0,1]}\left| \tilde{f}_\omega (x,y) - f_\omega (x,y)\right|
= \op(1).$$
Assuming further \ref{assumption:B.6}, proving the statement \eqref{eq:corollary:sup_consistency:statement2} is analogous to the previous line.
\end{proof}

\subsection{Proof of Theorem \ref{thm:strong_mixing_mu_R}}
\begin{proof}[Proof of Theorem \ref{thm:strong_mixing_mu_R} ]

{We begin with the estimation of the mean function $\mu(\cdot)$.} We are going to make use of the result in \citet[Thm 10]{hansen2008uniform}.
Define the two-dimensional time-series $\{\tilde{Y}_i,\tilde{X}_i\}_i$ composed of the sparse observations and their observation locations according to the observation scheme \eqref{eq:observation_scheme}
\begin{align*}
\left(\tilde{Y}_1, \tilde{Y}_2, \dots \right) &=
\left( Y_{1,1},\dots,Y_{1,N_1},Y_{2,1},\dots,Y_{2,N_2},Y_{3,1},\dots \right), \\
\left(\tilde{X}_1, \tilde{X}_2, \dots \right) &=
\left( x_{1,1},\dots,x_{1,N_1},x_{2,1},\dots,x_{2,N_2},x_{3,1},\dots \right).
\end{align*}
Under the assumptions \ref{assumption:D1} --- \ref{assumption:D9}, the time-series $\{\tilde{Y}_i,\tilde{X}_i\}_i$ is strictly stationary and strongly mixing, with mixing coefficients $\tilde{\alpha}(m)\leq \tilde{A}m^{-\beta}$, and satisfies the conditions (2) --- (7) of \citet{hansen2008uniform}. Indeed:
\begin{enumerate}
\item[(3)]
$$ \E |\tilde{Y}_1|^s \leq 2^s \left( \E |X_1(x_{11})|^s + \E |\epsilon_{1,1}|^s \right) <\infty$$
\item[(6)]
$$ \sup_{x\in[0,1]} \Ez{ |\tilde{Y}_1|^s \vert \tilde{X} = x} g(x) \leq B_1 B_3 <\infty $$
\item[(7)]
\begin{multline*}
\sup_{x,x'\in[0,1]} \Ez{ |\tilde{Y}_1 \tilde{Y}_j| \vert \tilde{X}_1 = x, \tilde{X}_j=x' } g(x)g(x') \leq\\\leq
\left\{
\sup_{h\in\mathbb{Z}}\sup_{x,x'\in[0,1]}\left[\E \left| X_h(x)X_0(x') \right| \right] +
2\sup_{x\in[0,1]}\left[ \E|X_0(x)| \E|\epsilon_{1,1}| \right] + 
\sigma^2\right\}g(x)g(x') <\infty
\end{multline*}
\end{enumerate}
The conditions (10) --- (13) of \citet{hansen2008uniform} are also satisfied taking $q=\infty$ and $c_n=1$.
Therefore all conditions of \citet[Thm 10]{hansen2008uniform} are satisfied. Noting that the length $n=n(T)$ of the time-series $\{\tilde{Y}_i,\tilde{X}_i\}_i$ is asymptotically of the same order as $T$ of the functional time series $\{X_t(\cdot)\}$, formally $n=n(T) \asymp T, \quad T\to\infty$, yields
$$ \sup_{x\in[0,1]} \left| \hat{\mu}(x) - \mu(x) \right| =
\Op\left( \sqrt{\frac{\log T}{T B_\mu}} + B_\mu^2 \right).$$

\noindent
{Next we turn to the estimation of the lag-$h$ autocovariance kernels $R_h(\cdot,\cdot)$.} Fix $h\in\mathbb{Z}$. For simplicity consider $h\neq 0$. The proof for $h=0$ is essentially the same, only the diagonal ``raw'' covariances must be removed.
For the moment assume that the mean function $\mu(\cdot)$ is known and we shall work with the ``raw'' covariances $\tilde{G}_{h,t}(x_{t+h,j},x_{tk})$ as defined in \eqref{eq:raw_covariance_known_mu}.
Similarly as in the first part of this proof, define now the three dimensional time-series $\{\tilde{Y}_i,\tilde{X}_i\}_i$ composed of the ``raw'' covariances and their locations
\begin{align*}
\left(\tilde{Y}_1, \tilde{Y}_2, \dots \right) &=
\left( \tilde{G}_{h,1}(x_{1+h,1},x_{1,1}),\dots,\tilde{G}_{h,1}(x_{1+h,N_{1+h}},x_{1,N_t}),
\tilde{G}_{2,1}(x_{2+h,1},x_{2,1}),\dots \right), \\
\left(\tilde{X}_1, \tilde{X}_2, \dots \right) &=
\left(
\left[
\begin{array}{c}
x_{1+h,1} \\ x_{1,1}
\end{array} 
\right],
\dots,
\left[
\begin{array}{c}
x_{1+h,N_{t+h}} \\ x_{1,N_t}
\end{array} 
\right],
\left[
\begin{array}{c}
x_{2+h,1} \\ x_{2,1}
\end{array} 
\right],
\dots
\right).
\end{align*}
We are again going to make us of \citet[Thm 10]{hansen2008uniform}.
Under the assumptions \ref{assumption:D1} --- \ref{assumption:D13}
it is easy to verify (analogously as in the first part of this proof) that
the time series $\{\tilde{Y}_i,\tilde{X}_i\}_i$ satisfies the conditions (2) --- (7) of \citet{hansen2008uniform}. The conditions (10) --- (13) also follow directly from our assumptions. 
It remains to repeat the discussion as in the proof of Lemma \ref{lemma:asymptotics_of_Q_for_R_in_P} to conclude that the difference between $\tilde{G}_{h,t}(x_{t+h,j},x_{tk})$ and $G_{h,t}(x_{t+h,j},x_{tk})$ is asymptotically negligible with respect to the rate bellow.

Therefore by \citet[Thm 10]{hansen2008uniform}, for fixed $h\in\mathbb{Z}$,
$$ \sup_{x,y\in[0,1]} \left| \hat{R}_h(x,y) - R_h(x,y) \right| =
\Op\left( \sqrt{\frac{\log T}{T B_R^2}} + B_R^2\right).$$

\end{proof}

\subsection{Proof of Theorem \ref{thm:strong_mixing_F}}

The proof of Theorem \ref{thm:strong_mixing_F} is more involved. Rather than make direct use of, we shall need to modify the proof techniques of \citet{hansen2008uniform} in order to construct our proof. We express the spectral density kernel estimator \eqref{eq:spectral_density_estimator_bartlett_smoother} in a similar way as in the proof of Theorem \ref{thm:spectral_density}.

\begin{equation}\label{eq:proof_strong_mixing_eq0.5}
\hat{f}_\omega(x,y) = \frac{1}{2\pi}\left( 
\mathscr{A}_1 Q_{00}^{\omega} -
\mathscr{A}_2 Q_{10}^{\omega} -
\mathscr{A}_3 Q_{01}^{\omega}
 \right) \mathscr{B}^{-1} ,
\end{equation}
where
\begin{align*}
\mathscr{A}_1 &= S_{20}S_{02}-S_{11}^2, \qquad 
\mathscr{A}_2  = S_{10}S_{02}-S_{01}S_{11},\qquad
\mathscr{A}_3  = S_{01}S_{20}-S_{10}S_{11}, \\
\mathscr{B} &= \mathscr{A}_1 S_{00} - \mathscr{A}_2 S_{10} - \mathscr{A}_3 S_{01}, \\
S_{pq} &= \frac{1}{\QBartlett} \sum_{h=-\QBartlett}^{\QBartlett}
\frac{W_h \mathcal{N}_h}{\widehat{\mathcal{N}_h}}
S_{pq}^{(h)}, \\
S_{pq}^{(h)} &= \frac{1}{\mathcal{N}_h B_R^2} 
\sum_{t=\max(1,1-h)}^{\min(T,T-h)} 
\stackrel[j\neq k \text{ if } h=0]{}{\sum_{j=1}^{N_{t+h}} \sum_{k=1}^{N_t}}
\left( \frac{x_{t+h,j}-x}{B_R} \right)^p
\left( \frac{x_{tk}-y}{B_R} \right)^q
K\left( \frac{x_{t+h,j}-x}{B_R} \right)
K\left( \frac{x_{tk}-y}{B_R} \right),
\\
Q_{pq}^{\omega} &= \sum_{h=-\QBartlett}^{\QBartlett}
\frac{W_h e^{-\I h\omega} \mathcal{N}_h}{\widehat{\mathcal{N}_h}}
Q_{pq}^{(h)}, \\
Q_{pq}^{(h)} &=
\frac{1}{\mathcal{N}_h B_R^2}
\sum_{t=\max(1,1-h)}^{\min(T,T-h)}
\stackrel[j\neq k \text{ if } h=0]{}{\sum_{j=1}^{N_{t+h}} \sum_{k=1}^{N_t}}
\tilde{G}_{h,t}(x_{t+h,j}, x_{tk})
\left( \frac{x_{t+h,j}-x}{B_R} \right)^p
\left( \frac{x_{tk}-y}{B_R} \right)^q
\times \\
&\times
K\left( \frac{x_{t+h,j}-x}{B_R} \right)
K\left( \frac{x_{tk}-y}{B_R} \right).
\end{align*}
All of the above quantities are understood as functions of $(x,y)\in[0,1]^2$ and all operations are considered in a pointwise sense, including the pointwise inversion $\mathscr{B}^{-1} = (\mathscr{B}(x,y))^{-1}$.

Similarly as in the proof of Theorem \ref{thm:strong_mixing_mu_R} define for $h\in\mathbb{Z}$,
\begin{align*}
\left(\tilde{Y}_1^{h,r}, \tilde{Y}_2^{h,r}, \dots \right) &=
\left(
\left\{\tilde{G}_{h,1}(x_{1+h,1},x_{1,1}) \right\}^r,\dots,
\left\{\tilde{G}_{h,1}(x_{1+h,N_{1+h}},x_{1,N_t}) \right\}^r,
\left\{\tilde{G}_{2,1}(x_{2+h,1},x_{2,1}) \right\}^r ,\dots \right), \\
\left(\tilde{X}_1^{h}, \tilde{X}_2^{h}, \dots \right) &=
\left(
\left[
\begin{array}{c}
x_{1+h,1} \\ x_{1,1}
\end{array} 
\right],
\dots,
\left[
\begin{array}{c}
x_{1+h,N_{t+h}} \\ x_{1,N_t}
\end{array} 
\right],
\left[
\begin{array}{c}
x_{2+h,1} \\ x_{2,1}
\end{array} 
\right],
\dots
\right).
\end{align*}

Let $k(\cdot) : \mathbb{R}^2\to\mathbb{R}$ be a function satisfying the assumption \ref{assumption:C}
and denote for $r=0,1$ and $h\in\mathbb{Z}$ define
\begin{align}
\label{eq:proof_strong_mixing_eq0.74}
\hat{\Psi}^{h,r}(x,y) &= \frac{1}{\mathcal{N}_h B_R^2}
\sum_{t=\max(1,1-h)}^{\min(T,T-h)}
\stackrel[j\neq k \text{ if } h=0]{}{\sum_{j=1}^{N_{t+h}} \sum_{k=1}^{N_t}}
\left\{\tilde{G}_{h,t}(x_{t+h,j}, x_{tk})\right\}^r
k\left( \frac{x_{t+h,j}-x}{B_R} , \frac{x_{tk}-y}{B_R} \right) \\
\label{eq:proof_strong_mixing_eq0.75}
&=
\frac{1}{\mathcal{N}_h B_R^2}
\sum_{i=1}^{\mathcal{N}_h}
\tilde{Y}_i^{h,r}
k\left( \frac{\tilde{X}_i^{h} - (x,y)}{B_R} \right) \\
\label{eq:proof_strong_mixing_eq0.76}
&=
\frac{1}{\mathcal{N}_h B_R^2}
\sum_{i=1}^{\mathcal{N}_h}
Z_i^{h,r}(x,y)
\end{align}
where we are denoting
\begin{equation}
\label{eq:proof_strong_mixing_eq1}
Z_i^{h,r}(x,y) = \tilde{Y}_i k\left( \left(\tilde{X}_i^{h,r} - (x,y)\right)/B_R \right).
\end{equation}

\begin{lemma}\label{lemma12}
Under the assumptions \ref{assumption:D1} --- \ref{assumption:D13},
$$
\var( \hat{\Psi}^{h,r}(x,y) \vert \mathcal{N}_h ) \leq \frac{\Theta}{\mathcal{N}_h B_R^2}
$$
for $\mathcal{N}_h > 0$
and where the constant $\Theta$ is uniform in $h\in\mathbb{Z}$, $x,y\in[0,1]$, $r=0,1$.
\end{lemma}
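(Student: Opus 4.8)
The plan is to expand the conditional variance as a double sum over the reindexed observation pairs and to treat the diagonal and off-diagonal contributions by different means: kernel localisation for the former, and the strong mixing of the reindexed series $\{\tilde{Y}_i^{h,r},\tilde{X}_i^h\}_i$ (shown to satisfy the hypotheses of \citet[Thm~10]{hansen2008uniform} in the proof of Theorem \ref{thm:strong_mixing_mu_R}) for the latter. Writing $Z_i=Z_i^{h,r}(x,y)$ and computing conditionally on $N_1,\dots,N_T$, so that the number of summands is deterministic and the locations are i.i.d.\ draws from $g$, we have
\begin{equation*}
\var\!\left(\hat{\Psi}^{h,r}(x,y)\mid N_1,\dots,N_T\right)
=\frac{1}{\mathcal{N}_h^2 B_R^4}\left(\sum_{i=1}^{\mathcal{N}_h}\var(Z_i)+\sum_{i\neq j}\cov(Z_i,Z_j)\right),
\end{equation*}
with $\var$ and $\cov$ understood conditionally. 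Since the resulting bound will depend on the configuration only through $\mathcal{N}_h$, it suffices to show that the bracketed quantity is $O(\mathcal{N}_h B_R^2)$, uniformly in $h$, $x$, $y$, and $r$.

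For the diagonal terms I would bound $\var(Z_i)\le\E[Z_i^2]=\E[(\tilde{Y}_i^{h,r})^2\,k^2((\tilde{X}_i^h-(x,y))/B_R)]$. Integrating the squared kernel against $g$, which is bounded above by \ref{assumption:D4}, produces a factor $O(B_R^2)$ from the two-dimensional localisation, and $\int k^2<\infty$ follows from the regularity \ref{assumption:C} imposed through \ref{assumption:D11}; the remaining conditional second moment of $\tilde{Y}_i^{h,r}$ is bounded uniformly in $h$ — trivially for $r=0$, and for $r=1$ because $(\tilde{G}_{h,t})^2$ expands into products of four centred evaluations of $X$ together with measurement-error cross terms, controlled respectively by the first part of \ref{assumption:D10} and by \ref{assumption:D3}. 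Hence $\E[Z_i^2]=O(B_R^2)$ uniformly, and summing over the $\mathcal{N}_h$ indices gives the leading $O(\mathcal{N}_h B_R^2)$ contribution, which after division yields precisely the claimed order $\Theta/(\mathcal{N}_h B_R^2)$.

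The substance lies in showing the off-diagonal sum is of smaller order, and I would split it by the temporal separation of the blocks of pairs sharing a common time index $t$; since $N^{max}<\infty$ by \ref{assumption:D2}, each block has bounded cardinality, so block-separation is comparable to the reindexing distance $|i-j|$. For pairs lying in blocks within a fixed distance $m$ of one another I would use $|\cov(Z_i,Z_j)|\le\E|Z_iZ_j|+\E|Z_i|\,\E|Z_j|$: the expectation of $\tilde{G}_{h,t}\tilde{G}_{h,t'}$ is a product of four centred $X$-evaluations bounded uniformly in $h$ by the second part of \ref{assumption:D10}, while the two kernel factors localise the four (or, when an observation location is shared, three) distinct arguments, giving $O(B_R^4)$ (respectively $O(B_R^3)$) per term; as there are $O(\mathcal{N}_h m)$ such terms, their total is $o(\mathcal{N}_h B_R^2)$. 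For pairs in blocks separated by more than $m$ I would apply a covariance inequality for strongly mixing sequences, using the polynomial rate $\alpha(m)\le Am^{-\beta}$ of \ref{assumption:D1} together with the localised moment bounds above, so that the far sum decays like $O\big(\mathcal{N}_h B_R^2\sum_{d>m}\alpha(d)^{\kappa}\big)$ for the relevant exponent $\kappa$; choosing $m$ to balance the two regimes, as permitted by the rate restriction \ref{assumption:D12} on $B_R$ and $\beta$, keeps the off-diagonal contribution at or below $O(\mathcal{N}_h B_R^2)$.

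The main obstacle is twofold, and both parts are built into the hypotheses. First, the shared-location combinatorics within and between neighbouring blocks must be tracked, since coincidences of observation locations degrade the kernel localisation; the boundedness of $N^{max}$ keeps the number of affected terms of order $\mathcal{N}_h$ and their size $o(B_R^2)$, so they are harmless, and the available moment exponents must be matched to the mixing rate, which is exactly the role of the calibrated conditions \ref{assumption:D10}--\ref{assumption:D12}. Second, and crucial for the conclusion, every constant entering the diagonal bound, the near-block bound, and the far-block bound is uniform in the lag $h$ — the moment bounds \ref{assumption:D10} are stated with $\sup_{h,h'}$, the mixing coefficients of \ref{assumption:D1} are free of $h$, and the counting $\mathcal{N}_h\asymp(T-|h|)(\E N)^2$ is uniform for $|h|\le T/2$ (valid since $\QBartlett=o(T)$). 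This produces a single $\Theta$ uniform in $h$, $x$, $y$, $r$, and since the bound depends on $N_1,\dots,N_T$ only through $\mathcal{N}_h$, it also bounds $\var(\hat{\Psi}^{h,r}(x,y)\mid\mathcal{N}_h)$, completing the argument.
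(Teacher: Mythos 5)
Your decomposition into diagonal and off-diagonal contributions is sensible, and your diagonal analysis is correct: conditioning on the locations and integrating $k^2$ against the bounded design density gives $\E[Z_i^{h,r}(x,y)^2]=O(B_R^2)$ uniformly, and this single two-dimensional localisation is exactly what produces the claimed order $\Theta/(\mathcal{N}_h B_R^2)$ rather than $\Theta/(\mathcal{N}_h B_R^4)$ — a point you make more explicit than the paper does. The paper, however, does not split the off-diagonal pairs by temporal separation and never invokes a mixing-coefficient covariance inequality here. It simply notes that $\{Z_i^{h,r}(x,y)\}_i$ is stationary, applies $\var(n^{-1}\sum_i Z_i)\le n^{-1}\sum_{\xi}|\rho_Z(\xi)|$, and bounds each lag-$\xi$ autocovariance through the moment identity
$\cov\bigl(\tilde{G}_{h,t+\xi},\tilde{G}_{h,t}\bigr)=\cum(X_{\xi+h},X_{\xi},X_h,X_0)+R_{\xi}R_{\xi}+R_{\xi+h}R_{\xi-h}$
(evaluated at the observation locations), so that summability over $\xi$, uniformly in $h$, follows from the summability of the fourth-order cumulant kernels and of the autocovariances; together with the kernel localisation this yields the lemma with no appeal to the mixing coefficients at all.

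The genuine gap is in your far-block step. A Davydov/Rio-type inequality gives $|\cov(Z_i,Z_j)|\lesssim \alpha(d)^{1-2/p}\|Z_i\|_p\|Z_j\|_p$ only for $p>2$, and $\|Z_i\|_p<\infty$ requires $p$-th moments of the products $\tilde{G}_{h,t}$. Assumption \ref{assumption:D1} guarantees $\E|X_t(x)|^s<\infty$ only for some $s>2$, so by Cauchy--Schwarz the products are only guaranteed moments of order $s/2$, which may be $\le 2$; \ref{assumption:D10} likewise supplies only second moments of the products. With $p\le 2$ the exponent $1-2/p$ is non-positive and the inequality gives no decay in the separation $d$, so the far sum cannot be controlled this way under the stated hypotheses. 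Moreover, even when $p>2$ is available, the localisation you can extract is $\|Z_i\|_p\|Z_j\|_p=O(B_R^{4/p})$ with $4/p<2$, so your claimed far-sum bound $O\bigl(\mathcal{N}_h B_R^2\sum_{d>m}\alpha(d)^{\kappa}\bigr)$ overstates what the $L^p$ norms deliver, and the window $m$ would then have to satisfy both $mB_R=O(1)$ (from your near terms) and a lower bound forced by the far terms — compatible only when $\beta$ is large relative to $s$, which neither \ref{assumption:D7} nor \ref{assumption:D12} ensures. The repair is to drop the near/far split entirely and bound every off-diagonal covariance by the cumulant decomposition above combined with the $O(B_R^4)$ (or $O(B_R^3)$ in the shared-location cases) kernel localisation: the sum over all separations is then $O(\mathcal{N}_h B_R^4)+O(\mathcal{N}_h B_R^3)=o(\mathcal{N}_h B_R^2)$, uniformly in $h$, by the summability of the cumulants and autocovariances, which is precisely the paper's route.
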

\begin{proof}
Note that the sequence $\{ Z_i^{h,r}(x,y) \}_i$ is a stationary scalar time-series and denote its autocovariance function as $\rho_{Z_i^{h,r}(x,y)}(\xi)$ for lag $\xi$. Therefore we have the bound 
\eqref{eq:proof_arbitrary_stationary_timeseries_variance}. Conditioning on $\mathcal{N}_h$ yields
\begin{equation}\label{eq:proof_strong_mixing_eq2}
\var\left( \frac{1}{\mathcal{N}_h} \sum_{i=1}^{\mathcal{N}_h} Z_i^{h,r}(x,y) \vert \mathcal{N}_h\right)
\leq \frac{1}{\mathcal{N}_h} \sum_{\xi=-\infty}^\infty \left| \rho_{Z_i^{h,r}(x,y)}(\xi) \right|.
\end{equation}
The sum on the right hand side of \eqref{eq:proof_strong_mixing_eq2} can be bounded by
\begin{multline}
\label{eq:proof_strong_mixing_eq3}
\sum_{\xi=-\infty}^\infty \left| \rho_{Z_i^{h,r}(x,y)}(\xi) \right|
\leq\\\leq
\left(N^{max}\right)^2
\sum_{\xi=-\infty}^\infty
\sup_{x_1,x_2,x_3,x_4\in[0,1]}
\Bigg|
\cum\left( X_{\xi+h}(x_1),X_{\xi}(x_2),X_{h}(x_3),X_{0}(x_4) \right)
+\\+
R_\xi(x_1,x_2)R_\xi(x_3,x_4) + R_{\xi+h}(x_1,x_4)R_{\xi-h}(x_2,x_3)
\Bigg|.
\end{multline}
The bound \eqref{eq:proof_strong_mixing_eq3} is uniform in $h$ and constitutes the constant $\Theta$ in the statement of Lemma \ref{lemma12}.
\end{proof}

The key tool for our proof is an exponential-type inequality for strongly mixing random sequences. This inequality was given by \citet[Thm 2.1]{liebscher1996strong}, whose result was derived from \citet[Thm 5]{rio1995functional}.
\begin{lemma}[Liebscher/Rio]\label{lemma13}
Let $Z_i$ be a stationary zero-mean real-valued process such that $|Z_i| \leq b$, with strong mixing coefficients $\alpha_m$. Then for each positive integer $m\leq n$ and $\epsilon$ such that $m<\epsilon b/4$
$$ \Prob\left( \left| \sum_{i=1}^n Z_i \right| > \epsilon \right) \leq
4 \exp\left(
-\frac{\epsilon^2}{
	64\frac{n\sigma_m^2}{m} +
	\frac{8}{3} \epsilon m b
}
\right)
+
4\frac{n}{m}\alpha_m,
$$
where $\sigma_m^2 = \E\left( \sum_{i=1}^m Z_i \right)^2$
\end{lemma}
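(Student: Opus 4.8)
The inequality is a Bernstein-type concentration bound for sums of a bounded, strongly mixing sequence, and the statement coincides with \citet[Thm 2.1]{liebscher1996strong} (itself a specialisation of the coupling estimate of \citet[Thm 5]{rio1995functional}), so one legitimate route is simply to invoke it. To reconstruct it, the plan is to use the classical \emph{big-block} device together with a coupling step that trades dependence for an additive mixing penalty. First I would partition $\{1,\dots,n\}$ into consecutive blocks of length $m$, set $B_\ell = \sum_{i\in\text{block }\ell} Z_i$, and split the blocks into the odd-indexed family $B_1,B_3,\dots$ and the even-indexed family $B_2,B_4,\dots$. By stationarity each $B_\ell$ is zero-mean, satisfies $|B_\ell|\leq mb$, and has variance exactly $\sigma_m^2=\E(\sum_{i=1}^m Z_i)^2$; crucially, within each family two consecutive block sums are separated by a full block of length $m$, so their dependence is governed by the mixing coefficient $\alpha_m$.

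The second ingredient is decoupling. Within the odd family I would construct, on an enlarged probability space, independent random variables $\widetilde B_1,\widetilde B_3,\dots$ with the same marginal laws as the $B_\ell$, such that each replacement fails with probability at most $\alpha_m$; this is precisely what Berbee's coupling lemma (equivalently the coupling underlying \citet{rio1995functional}) delivers when consecutive blocks are $m$ apart. There are $\lfloor n/(2m)\rfloor$ blocks in each family, so the total probability that the coupled and original sums differ is at most of order $(n/m)\alpha_m$ across the two families; routing this through the union bound produces the additive mixing penalty, of the form $4(n/m)\alpha_m$, in the statement, the numerical factor absorbing the odd/even split.

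With the dependent block sums replaced by their independent surrogates, I would apply the ordinary Bernstein inequality to each family separately, allocating the threshold as $\epsilon/2$ to each. The independent summands are bounded by $mb$, which furnishes the linear-in-$\epsilon$ contribution $\tfrac{8}{3}\epsilon mb$ to the denominator of the exponent, while their total variance is at most of order $(n/m)\sigma_m^2$ (up to residual cross-block covariances, which are summable and absorbed into the numerical constant), giving the quadratic contribution $64\,n\sigma_m^2/m$. Combining the two families by a union bound, adding the coupling penalty, and collecting the factors of $2$ from the odd/even split and the threshold halving yields the displayed bound, with the constants $64$ and $\tfrac{8}{3}$ soaking up this overhead; the hypothesis $m<\epsilon b/4$ is exactly the regime in which the block length is small relative to the deviation, so that the Bernstein estimate on the $mb$-bounded surrogates is informative.

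The main obstacle is the coupling step: one must produce independent copies with the per-junction error controlled by $\alpha_m$ while simultaneously preserving both the boundedness $|B_\ell|\le mb$ and the variance $\sigma_m^2$, and then bound the variance of the decoupled sum without letting the residual within-family covariances spoil the clean constant. Carrying the exact numerical constants $64$ and $\tfrac{8}{3}$ through the two-family decomposition, the coupling, and Bernstein's inequality is the delicate bookkeeping; the cleanest rigorous route is to specialise the coupling inequality of \citet[Thm 5]{rio1995functional} and read off \citet[Thm 2.1]{liebscher1996strong} directly.
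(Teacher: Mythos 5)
Your primary route is exactly the paper's: the paper does not prove this lemma at all, but states it verbatim as a quoted external result, introduced with the sentence that it ``was given by \citet[Thm 2.1]{liebscher1996strong}, whose result was derived from \citet[Thm 5]{rio1995functional}'' --- so invoking that theorem directly is precisely what the paper does, and your proposal is correct on that score. One caveat on your reconstruction sketch, since you flag the coupling step as the delicate point: Berbee's lemma controls the per-junction replacement probability by the \emph{$\beta$-mixing} (absolute regularity) coefficient, whereas the lemma here assumes only strong ($\alpha$-)mixing, under which no coupling with failure probability $\alpha_m$ exists in general; the correct device is an $\alpha$-based coupling in the spirit of Bradley or Rio's $L^1$-coupling for bounded variables (which is what underlies \citet[Thm 5]{rio1995functional}), and this is presumably why the clean constants $64$ and $\tfrac{8}{3}$ are best read off from the cited theorem rather than re-derived --- as you yourself conclude.
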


\begin{lemma}\label{lemma16}
Under the assumptions \ref{assumption:D1} --- \ref{assumption:D11} and \ref{assumption:D13} --- \ref{assumption:D16}
\begin{equation}\label{eq:lemma16_eq0.5}
\sup_{\omega\in[-\pi,\pi]} \sup_{x,y\in[0,1]} \left| Q_{pq}^{\omega} - M_{[Q_{pq}^\omega]} \right| =
\op(1),
\end{equation}
and assuming further assumption \ref{assumption:B.6},
\begin{equation}\label{eq:lemma16_eq1}
\sup_{\omega\in[-\pi,\pi]} \sup_{x,y\in[0,1]} \left| Q_{pq}^{\omega} - M_{[Q_{pq}^\omega]} \right| = \Op\left( \QBartlett \sqrt{\frac{\log T}{T B_R^2}} + B_R^2 \right)
\end{equation}
where
$$
M_{[Q_{00}^\omega]} = 2\pi g(x)g(y) f_\omega(x,y) , \qquad
M_{[Q_{10}^\omega]} = M_{[Q_{01}^\omega]} = 0.
$$
\end{lemma}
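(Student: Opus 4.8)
The plan is to mirror the architecture of Lemma \ref{lemma:asymptotics_of_Q_for_F}, but to upgrade the crude $L^1$ variance control (which produced the suboptimal $1/\sqrt{T}$-type rate there) to a genuinely uniform deviation bound extracted from the exponential inequality of Lemma \ref{lemma13}. First I would reduce to the known-mean situation: writing $G_{h,t} = \tilde G_{h,t} + (\text{cross terms involving } \hat\mu-\mu)$ and invoking the mean rate of Theorem \ref{thm:strong_mixing_mu_R} exactly as in Lemma \ref{lemma:asymptotics_of_Q_for_R_in_P}, the discrepancy between $Q_{pq}^\omega$ and its analogue built from $\tilde G_{h,t}$ is negligible relative to the claimed rate. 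Next, since $\omega$ enters only through the unit-modulus factors $e^{-\I h\omega}$, the triangle inequality removes the supremum over $\omega$ at no cost: with $M_{[Q_{pq}^\omega]}=\sum_h M_{[Q_{pq,h}]}e^{-\I h\omega}$,
\[
\sup_{\omega\in[-\pi,\pi]}\bigl| Q_{pq}^\omega - M_{[Q_{pq}^\omega]} \bigr|
\;\le\; \sum_{h=-\QBartlett}^{\QBartlett}\Bigl| \frac{W_h \mathcal{N}_h}{\widehat{\mathcal{N}_h}}\, Q_{pq}^{(h)} - M_{[Q_{pq,h}]} \Bigr|
\;+\; \sum_{|h|>\QBartlett}\bigl| M_{[Q_{pq,h}]} \bigr| .
\]
Thus the uniformity in $\omega$ is free, and everything reduces to controlling the per-lag terms and the tail.

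The ``soft'' pieces go exactly as in Lemma \ref{lemma:asymptotics_of_Q_for_F}. The tail $\sum_{|h|>\QBartlett}|M_{[Q_{pq,h}]}|$ vanishes by Kronecker's lemma under the cumulant summability \ref{assumption:D14}, and is $O(1/\QBartlett)$ under \ref{assumption:B.6}; the same argument disposes of the Bartlett-weight defect $\sum_h (1-W_h)M_{[Q_{pq,h}]}$, while $\mathcal{N}_h/\widehat{\mathcal{N}_h}=1+\Op(T^{-1/2})$ uniformly over $|h|\le \QBartlett$. A second-order Taylor expansion of the kernel weights gives the smoothing bias $\sup_{x,y}\bigl|\Ez{Q_{pq}^{(h)}\mid\mathcal{N}_h}-M_{[Q_{pq,h}]}\bigr|=O(B_R^2)$ uniformly in $h$, hence $O(\QBartlett B_R^2)$ after summation over the $2\QBartlett+1$ lags. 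What remains, and what constitutes the substance of the proof, is the centred stochastic term $\sum_h \sup_{x,y}\bigl|Q_{pq}^{(h)}-\Ez{Q_{pq}^{(h)}\mid\mathcal{N}_h}\bigr|$.

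For fixed $h$ and fixed $(x,y)$, $Q_{pq}^{(h)}=\hat\Psi^{h,1}(x,y)$ is the average $\tfrac{1}{\mathcal{N}_h B_R^2}\sum_i Z_i^{h,1}(x,y)$ of the concatenated stationary, strongly mixing sequence \eqref{eq:proof_strong_mixing_eq1}, whose mixing coefficients inherit the polynomial decay of \ref{assumption:D1}. The obstacle is that the summands are \emph{unbounded}: the raw covariances $\tilde G_{h,t}$ possess only finitely many moments under \ref{assumption:D1} and \ref{assumption:D10}, whereas Lemma \ref{lemma13} demands $|Z_i|\le b$. I would therefore adapt Hansen's truncation device, splitting $Z_i^{h,1}=Z_i^{h,1}\1_{|\tilde Y_i|\le\tau}+Z_i^{h,1}\1_{|\tilde Y_i|>\tau}$ at a level $\tau=\tau_T$ diverging polynomially. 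The truncated part is bounded by $b\asymp\tau$, so Lemma \ref{lemma13} applies with an optimally chosen block length $m$ and deviation $\epsilon\asymp\sqrt{\mathcal{N}_h B_R^2\log T}$; the conditional variance bound of Lemma \ref{lemma12} supplies $\sigma_m^2/m=O(B_R^2)$, which makes the leading term $64\,n\sigma_m^2/m\asymp\mathcal{N}_h B_R^2$ dominate and yields a pointwise tail probability of order $T^{-C}$ with $C$ as large as desired. The tail part is handled by Markov's inequality and the $s$-th moment bounds, and it is precisely the balance between $\tau$, the moment order $s$, the block length $m$, and the mixing exponent $\beta$ that produces the technical conditions \ref{assumption:D15} and \ref{assumption:D16}, including the exponent $(s-2)/(s-1)$ appearing there.

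Uniformity in $(x,y)$, in $h$, and the conclusion are then assembled as follows. I would cover $[0,1]^2$ by a grid of spacing $\delta=\delta_T$ polynomially small, control the oscillation between neighbouring nodes via the Lipschitz regularity of the kernel $(u,v)\mapsto u^p v^q K(u)K(v)$ guaranteed by \ref{assumption:D11} through assumption \ref{assumption:C}, and take a union bound over the $O(\delta^{-2})$ nodes and over the $2\QBartlett+1$ lags. Since $\delta^{-1}$ and $\QBartlett$ are polynomial in $T$, the factor $\log(\QBartlett\,\delta^{-2})$ is absorbed into the $\log T$ already carried by $\epsilon$, so the overall failure probability still tends to zero. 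This gives $\sup_{x,y}\bigl|Q_{pq}^{(h)}-\Ez{Q_{pq}^{(h)}\mid\mathcal{N}_h}\bigr|=\Op\bigl(\sqrt{\log T/(\mathcal{N}_h B_R^2)}\bigr)$ uniformly in $h$, and since $\mathcal{N}_h\asymp T$, summation over lags yields the stochastic contribution $\Op\bigl(\QBartlett\sqrt{\log T/(T B_R^2)}\bigr)$. Adding the bias $O(\QBartlett B_R^2)$ and the vanishing tail terms proves \eqref{eq:lemma16_eq1} under \ref{assumption:B.6}, while the unconditional consistency \eqref{eq:lemma16_eq0.5} follows because \ref{assumption:D15} and \ref{assumption:D16} force both $\QBartlett\sqrt{\log T/(T B_R^2)}\to 0$ and $\QBartlett B_R^2\to 0$. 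The main obstacle is the truncation bookkeeping of the previous paragraph: reconciling the unbounded raw covariances with the bounded-summand requirement of Lemma \ref{lemma13} while keeping the truncation error negligible simultaneously across all $2\QBartlett+1$ lags is exactly the point where the proof departs from, and is more delicate than, the setting of \citet{hansen2008uniform}.
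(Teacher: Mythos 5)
Your proposal follows essentially the same route as the paper's proof: the identical decomposition into the $\mathcal{N}_h/\widehat{\mathcal{N}_h}$ correction, the per-lag centred stochastic term, the $O(\QBartlett B_R^2)$ bias, and the Kronecker-controlled tail, with the stochastic term handled by the same truncation at a polynomial level $\tau_T$, discretization of $[0,1]^2$ at spacing $B_R a_T$, the Liebscher/Rio inequality of Lemma \ref{lemma13} with the variance bound of Lemma \ref{lemma12}, and a union bound over grid points and the $2\QBartlett+1$ lags. The only differences are organizational (the paper defers the $\hat\mu$-versus-$\mu$ reduction to the proof of Theorem \ref{thm:strong_mixing_F}), so the argument is correct and matches the paper's.
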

\begin{proof}
Denote
$M_{[Q_{00,h}]}(x,y) = g(x)g(y) R_h(x,y)$ and $M_{[Q_{10,h}]}(x,y)= M_{[Q_{01,h}^\omega]}(x,y) = 0$.
Similarly as in the proof of Lemma \ref{lemma:asymptotics_of_Q_for_F}, decompose
\begin{multline}\label{eq:lemma16_eq2}
\left| Q_{pq}^\omega - M_{[Q_{pq}^\omega]} \right|
\leq
\left|
\sum_{h=-\QBartlett}^\QBartlett
W_h e^{-\I h\omega} \frac{\mathcal{N}_h}{\hat{\mathcal{N}}_h} Q_{pq}^{(h)}
-
\sum_{h=-\infty}^\infty
M_{[Q_{pq,h}]} e^{-\I h\omega}
\right|
\leq\\\leq
\sum_{h=-\QBartlett}^\QBartlett
W_h \left| Q_{pq}^{(h)} \right|
\left| \frac{\mathcal{N}_h}{\hat{\mathcal{N}}_h} -1\right|
+
\sum_{h=-\QBartlett}^\QBartlett
W_h
\left| Q_{pq}^{(h)} - \E Q_{pq}^{(h)} \right|
+
\sum_{h=-\QBartlett}^\QBartlett
W_h
\left| \E Q_{pq}^{(h)} - M_{[Q_{pq,h}]} \right|
+\\+
\frac{1}{\QBartlett}
\sum_{h=-\QBartlett}^\QBartlett
|h| \left| M_{[Q_{pq,h}]} \right|
+
\sum_{|h|\geq \QBartlett}
\left| M_{[Q_{pq,h}]} \right|.
\end{multline}
Under the assumption \ref{assumption:D14}, the last two terms on the right-hand side of \eqref{eq:lemma16_eq2} converge to zero uniformly in $x,y\in[0,1]$ by Kronecker's lemma. Assuming further the assumption \ref{assumption:B.6}, these terms are in fact of order $O(L^{-1})$ uniformly in $x,y\in[0,1]$.
The first term on the right-hand side of \eqref{eq:lemma16_eq2} is of order $\Op(\QBartlett T^{-1/2})$ uniformly in $x,y\in[0,1]$.
The bias term, third term on the right-hand side of \eqref{eq:lemma16_eq2}, is of order $\Op( \QBartlett B_R^2 )$ which is shown exactly as in the proof of Lemma \ref{lemma:asymptotics_of_Q_for_R}.

It remains to treat the second term on the right-hand side of \eqref{eq:lemma16_eq2}, for which we start with the observation
\begin{equation}\label{eq:lemma16_eq3}
\sup_{\omega\in[-\pi,\pi]} \sup_{x,y\in[0,1]}
\sum_{h=-\QBartlett}^\QBartlett
W_h
\left| Q_{pq}^{(h)} - \E Q_{pq}^{(h)} \right|
\leq
\sum_{h=-\QBartlett}^\QBartlett
\sup_{x,y\in[0,1]}
\left| Q_{pq}^{(h)} - \E Q_{pq}^{(h)} \right|.
\end{equation}
Denote $a_T = \left(\log T / (T B_R^2) \right)^{-1/2}$.
To show the order $\Op(\QBartlett a_T)$ of the right-hand side of \eqref{eq:lemma16_eq3} we investigate the probabilities for some $M>0$
\begin{multline}\label{eq:lemma16_eq4}
\Prob\left(
\sum_{h=-\QBartlett}^\QBartlett
\sup_{x,y\in[0,1]}
\left| Q_{pq}^{(h)} - \E Q_{pq}^{(h)} \right| > M \QBartlett a_T
\right)
\leq
\sum_{h=-\QBartlett}^\QBartlett
\Prob\left(
\sup_{x,y\in[0,1]}
\left| Q_{pq}^{(h)} - \E Q_{pq}^{(h)} \right| > \frac{M \QBartlett a_T}{2\QBartlett+1}
\right)
\leq\\\leq
\sum_{h=-\QBartlett}^\QBartlett
\Prob\left(
\sup_{x,y\in[0,1]}
\left| Q_{pq}^{(h)} - \E Q_{pq}^{(h)} \right| > \frac{1}{3}M a_T
\right)
\end{multline}
We bound the probabilities on the right-hand side of \eqref{eq:lemma16_eq4} using the proof techniques presented in \citet[Thm 2]{hansen2008uniform}.
For the simplification of the notation and the proof we shall assume that the numbers of observation locations are deterministic and constant,
\begin{equation}\label{eq:lemma14_eq1}
N_1=\dots=N_T = N^{max} \equiv N \geq 2.
\end{equation}
Without this assumption, all bounds must be conditioned on these counts and the unconditional statements follow from the fact that $(1/T) \mathcal{N}_h = (\E N)^2 + \Op(T^{-1/2})$ for $h\neq 0$ and $(1/T) \mathcal{N}_0 = (\E\{ N(N-1)\}) + \Op(T^{-1/2})$ where the convergences are uniform in $|h|<T/3$.
Under the technical assumption \eqref{eq:lemma14_eq1}, $\mathcal{N}_h = (T-|h|)N^2$ for $h\neq 0$ and $\mathcal{N}_0 = T N (N-1)$.

From the assumption \ref{assumption:D15} we may take $T$ to be sufficiently large so that
\begin{equation}\label{eq:lemma14_eq1.5}
\QBartlett \leq \frac{1}{2}\sqrt{\frac{\log T}{T B_R^2}}^{-\frac{s-2}{s-1}}.
\end{equation}

Our proof follows essentially the same steps \citet[Thm 2]{hansen2008uniform}, the only difference is that we need to keep track of the uniformity in $h$ and adjust the convergence rate for the growing $\QBartlett$. 

Using the notation \eqref{eq:proof_strong_mixing_eq0.74}, \eqref{eq:proof_strong_mixing_eq0.75}, and \eqref{eq:proof_strong_mixing_eq0.76} rewrite  $Q_{pq}^{(h)}$ as
\begin{align*}
Q_{pq}^{(h)}(\tilde{x}) &= \frac{1}{\mathcal{N}_h B_R^2}
\sum_{i=1}^{\mathcal{N}_h}
\tilde{Y}^{h,1}_i
k\left(
\frac{\tilde{X}_i^h -\tilde{x}}{B_R}
\right) \\
&= \frac{1}{\mathcal{N}_h B_R^2}
\sum_{i=1}^{\mathcal{N}_h}
\tilde{Z}^{h,1}_i(\tilde{x})
\end{align*}
where $k(u,v) = u^p v^q K(u)K(v)$.

The proof consists of three steps. Firstly we replace $ \tilde{Y}_i^{h,r} $ with the truncated process $ \tilde{Y}_i^{h,r} \mathbb{1}_{[ |\tilde{Y}_i^{h,r}|\leq \tau_{T} ]}$ where $\tau_T = a_T^{-1/(s-1)}$. Secondly, we replace the supremum over $\tilde{x}\equiv(x,y)\in[0,1]$ with a maximisation over a finite $N_{g}$-point grid. And finally, with the help of the exponential inequality from Lemma \ref{lemma13} we bound the remainder.

Define
$$ R^{h,r}(\tilde{x}) = \hat{\Psi}^{h,r}(\tilde{x}) - \frac{1}{\Nh B_R^2}
\sum_{i=1}^{\Nh}
Z_i^{h,r}(\tilde{x})
\mathbb{1}_{\left[ \tilde{Y}_i \leq \tau_T \right]}.
$$
Following the same steps as in the proof of \citet[Thm 2]{hansen2008uniform}, we bound
$$ \left| \Ez{ R^{h,r}(\tilde{x}) } \right| = \Op\left( \tau_T^{-(s-1)} \right) = \Op(a_T)$$
uniformly in $|h|<T/3$.

Thus replacing $ \tilde{Y}_i $ with $ \tilde{Y}_i \mathbb{1}_{[ |\tilde{Y}_i| \leq \tau_{T} ]}$ yields only an error of order $\Op(a_T)$ and we therefore assume for the rest of the proof that $\tilde{Y}_i \leq \tau_T$.

The second step of the proof introduces a discretization of the square $[0,1]^2$ which can be covered by a regular grid of $N_g = 2 B_R^{-2} a_T^{-2}$ points such that for each $(x,y)\in[0,1]^2$, the closest grid point $\tilde{x}_j \equiv (x_j,y_j)$ is at a distance of at most $B_R a_T$ distance. Denote this discretization as $A_j \subset [0,1]^2, j=1,\dots,N_g$.

Thanks to the assumption \ref{assumption:D11}, for all $\tilde{x}_1,\tilde{x}_2 \in [0,1]^2$
satisfying $\|\tilde{x}_1 - \tilde{x}_2\| \leq \delta \leq \tilde{L}$, we have the bound
\begin{equation}\label{eq:lemma14_eq2}
\left| k(\tilde{x}_1) - k(\tilde{x}_2) \right| \leq \delta k^*(\tilde{x}_1)
\end{equation}
where $k^* : \mathbb{R}^2\to\mathbb{R}$ is a bounded integrable function. Indeed, if $k(\cdot)$ satisfies the compact support condition of \ref{assumption:C} and is Lipschitz then $k^*(u) = \Lambda_1 1_{[ \|u\|\leq 2\tilde{L} ]}$. If on the other hand $k(u)$ satisfies the differentiability condition of \ref{assumption:C}, then we may put $k^*(u) = \Lambda_1 1_{[\|u\|\leq 2\tilde{L}]} + \|u-\tilde{L}\|^{-\eta}$.

The inequality \eqref{eq:lemma14_eq2} implies that if $a_T \leq \tilde{L}$ then for $\tilde{x}\in A_j$ we have
$\| \tilde{x}- \tilde{x}_j \|/B_R \leq a_T$ and, for $T$ large enough such that $a_T\leq\tilde{L}$,
$$ \left|
k\left(\frac{\tilde{x}-\tilde{X}_i^h}{B_R}\right)
-
k\left(\frac{\tilde{x}_j-\tilde{X}_i^h}{B_R}\right)
\right|
\leq 
a_T k^*\left( \frac{\tilde{x}_j-X_i}{B_R} \right)
. $$

Define
$$ \tilde{\Psi}^{h,r} (\tilde{x}) =
\frac{1}{\mathcal{N}_h B_R^2}
\sum_{i=1}^{\mathcal{N}_h}
\tilde{Y}_i^{h,r}
k^*\left( \frac{\tilde{x}-\tilde{X}_i^h}{B_R} \right),
$$
that is, a modification of $\hat{\Psi}^{h,r}$ where $k(\cdot)$ is replaced by $k^*(\cdot)$.
Note that by the assumptions \ref{assumption:D4} and \ref{assumption:D10},
$\E \left| \tilde{\Psi}^{h,r}(\tilde{x}) \right|$
is bounded uniformly in $h\in\mathbb{Z}$ and $r=0,1$.
Following the steps in the proof of \citet[Thm 2]{hansen2008uniform}, we conclude that
$$ \sup_{\tilde{x}\in A_j}
\left| \hat{\Psi}^{h,r}(\tilde{x}) - \E \hat{\Psi}^{h,r}(\tilde{x}) \right|
\leq
\left| \hat{\Psi}^{h,r}(\tilde{x}_j) - \E\hat{\Psi}^{h,r}(\tilde{x}_j) \right|
+
\left| \tilde{\Psi}^{h,r}(\tilde{x}_j) - \E\tilde{\Psi}^{h,r}(\tilde{x}_j) \right|
+ 2 a_T M,
$$
for $ M > \E| \tilde{\Psi}^{h,r}(\tilde{x})|$, and
\begin{align}
\Prob\Bigg( \sup_{\tilde{x}\in[0,1]^2} &\left| \hat{\Psi}^{h,r}(x) - \E \hat{\Psi}^{h,r}(x) \right| > 3  M a_T \Bigg)
\leq \nonumber \\
\label{eq:proof_strong_mixing_eq3.5}
&\leq
N_g \max_{j=1,\dots,N_g} \Prob\left(
	\left| |\hat{\Psi}^{h,r}(\tilde{x}_j) - \E\hat{\Psi}^{h,r}(\tilde{x}_j) | \right| >  M a_T
\right)
+\\&+
\label{eq:proof_strong_mixing_eq4}
N_g \max_{j=1,\dots,N_g} \Prob\left(
	\left| |\tilde{\Psi}^{h,r}(\tilde{x}_j) - \E\tilde{\Psi}^{h,r}(\tilde{x}_j) | \right| >  M a_T
\right)
\end{align}
The terms \eqref{eq:proof_strong_mixing_eq3.5} and \eqref{eq:proof_strong_mixing_eq4} are bounded likewise because the only difference between them is the presence of $k(\cdot)$ and $k^*(\cdot)$. Next we show how to bound \eqref{eq:proof_strong_mixing_eq3.5}.

By the definition \eqref{eq:proof_strong_mixing_eq1} of $Z_i^{h,r}(\tilde{x})$ we notice that  $|Z_i^{h,r}(\tilde{x})| \leq \tau_T \bar{K} \equiv b_T$ because $|\tilde{Y}_i^{h,r}| \leq \tau_T$ and $\left| k((\tilde{x}-\tilde{X}_i^h)/B_R) \right| \leq \bar{k}$ where $\bar{k}$ is the upper bound of the bounded function $k(\cdot)$. Therefore, by Lemma \ref{lemma12}, for $m$ sufficiently large we have, uniformly in $|h|<m/3$,
$$ \sup_{\tilde{x}\in[0,1]^2} \E\left( \sum_{i=1}^{m} Z_i^{h,r}(\tilde{x}) \right)^2 \leq \Theta m B_R^2.$$

Put $m = (a_T \tau_T)^{-1}$ and we conclude that $m<T$ and $m<\epsilon b_T/4$ for $\epsilon=Ma_T T B_R^2$ for $T$ sufficiently large. Therefore by Lemma \ref{lemma13} for any $\tilde{x}\in[0,1]$
\begin{align*}
\Prob\Big( \Big| \hat{\Psi}^{h,r}(\tilde{x}) &-\E\hat{\Psi}^{h,r}(\tilde{x}) \Big| > M a_T \Big)
=
\Prob\left( \left| \sum_{i=1}^{\mathcal{N}_h} Z_i^{h,r}(\tilde{x}) \right| > M a_T \mathcal{N}_h B_R^2 \right)
\leq\\ &\leq
4\exp\left( - \frac{M^2 a_T^2 T^2 B_R^2 }{64 \Theta \mathcal{N}_h B_R^2 + 6 \bar{k} M T B_R^2 }\right)
+
4 \frac{\mathcal{N}_h}{m}\alpha_m
\leq\\&\leq
4 \exp \left( - \frac{M^2 \log T}{64 \left(N^{max}\right)^2 \Theta + 6 \bar{k} M} \right)
+
4 \left(N^{max}\right)^2 T \tilde{A} (m-|h|)^{-\beta} m^{-1}
\leq\\&\leq
4 T^{ -M/\left(64 \left(N^{max}\right)^2 + \bar{k}\right) } + 4\left(N^{max}\right)^2 \tilde{A}T \left( \frac{1}{2} m \right)^{-\beta} m^{-1}
\leq\\&\leq
4 T^{ -M/\left(64 \left(N^{max}\right)^2 + \bar{k}\right) } + 4(2^\beta)\left(N^{max}\right)^2 \tilde{A}T a_T^{1+\beta} \tau_T^{1+\beta}
\end{align*}
where the second inequality comes from the fact that the time-series $\{Z_i^{h,r}(\tilde{x})\}$ is strong mixing with coefficients $\alpha_m \leq \tilde{A} (m-|h|)^{-\beta}$ for $m\geq |h|$, the third inequality is due to \eqref{eq:lemma14_eq1.5}, and the final one by taking $M>\Theta$.
Since $N_g \leq 2 B_R^{-2} a_T^{-2}$ we have from the above inequality and \eqref{eq:proof_strong_mixing_eq3.5} and \eqref{eq:proof_strong_mixing_eq4} that
\begin{equation}\label{eq:proof_strong_mixing_eq5}
\Prob\left( \sup_{\tilde{x}\in[0,1]^2} \left| \hat{\Psi}^{h,r}(\tilde{x}) - \E \hat{\Psi}^{h,r}(\tilde{x}) \right| > 3 M a_T \right)
\leq O\left( C_{1,T} \right) + O\left( C_{2,T} \right)
\end{equation}
where
\begin{align*}
C_{1,T} &= B_R^{-2} a_T^{-2} T^{-M/(64+6\bar{k})} \\
C_{2,T} &= B_R^{-2} T a_T^{-1+\beta}\tau_T^{1+\beta}.
\end{align*}
Returning to the inequalities \eqref{eq:lemma16_eq3} and \eqref{eq:lemma16_eq4}, we conclude that
\begin{equation}\label{eq:proof_strong_mixing_eq6}
\Prob\left(
\sum_{h=-\QBartlett}^\QBartlett
\sup_{x,y\in[0,1]}
\left| Q_{pq}^{(h)} - \E Q_{pq}^{(h)} \right|
> M \QBartlett a_T
\right)
\leq \QBartlett \left[ O(C_{1,T}) + O(C_{2,T}) \right]
\end{equation}

Assumption \ref{assumption:D12} implies that $(\log T)B_R^{-2} = o(T^\theta)$ and therefore also $B_R^{-2} = o\left( T^\theta \right)$ and $a_T = ((\log T)B_R^{-2}T^{-1})^{1/2} = o( T^{-(1-\theta)/2} )$.
For $M$ sufficiently large and by the assumptions \ref{assumption:D15} and \ref{assumption:D16}
\begin{align*}
\QBartlett C_{1,n} &= o\left( T^{\theta_F+(1-\theta_F) - M/\left(64 (N^{max})^2+6\bar{k} \right) + (1-\theta_F)(s-2)/(s-1)/2 } \right) = o(1), \\
\QBartlett C_{2,n} &= o\left( T^{\theta_F+1-(1-\theta_F)\left[ 1+\beta-2-(1+\beta)/(s-1) - (s-2)/(s-1) \right]/2 } \right) = o(1).
\end{align*}
Thus \eqref{eq:proof_strong_mixing_eq6} is of order $o(1)$ and we conclude, together with the rates
of the other terms of \eqref{eq:lemma16_eq2}, the rates
\eqref{eq:lemma16_eq0.5} and \eqref{eq:lemma16_eq1}.

\end{proof}

\begin{lemma}\label{lemma15}
Under the assumptions \ref{assumption:D1} --- \ref{assumption:D11} and \ref{assumption:D13} --- \ref{assumption:D16},
\begin{equation}\label{eq:lemma15_eq1}
\sup_{x,y\in[0,1]} \left| S_{pq} - M_{[S_{pq}]} \right| = \Op\left( \sqrt{\frac{\log T}{T B_R^2}} + B_R^2 \right).
\end{equation}
\end{lemma}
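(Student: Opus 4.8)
The plan is to mirror the argument for $Q_{pq}^\omega$ given in Lemma \ref{lemma16}, exploiting the fact that $S_{pq}$ is exactly the special case in which the ``response'' attached to each observation pair is the constant $1$ rather than a raw covariance. Write $a_T = \sqrt{\log T/(T B_R^2)}$ and recall that $\frac{1}{\QBartlett}\sum_{h=-\QBartlett}^{\QBartlett} W_h = 1$, so that, using also the $h$-independence of the limit $M_{[S_{pq}]}$, I would first decompose
\begin{multline*}
\sup_{x,y\in[0,1]}\left| S_{pq} - M_{[S_{pq}]}\right|
\leq
\frac{1}{\QBartlett}\sum_{h=-\QBartlett}^{\QBartlett} W_h \left|\frac{\mathcal{N}_h}{\widehat{\mathcal{N}_h}}-1\right|\sup_{x,y}\bigl|S_{pq}^{(h)}\bigr|
+\\+
\frac{1}{\QBartlett}\sum_{h=-\QBartlett}^{\QBartlett} W_h \sup_{x,y}\bigl|S_{pq}^{(h)}-\E S_{pq}^{(h)}\bigr|
+
\frac{1}{\QBartlett}\sum_{h=-\QBartlett}^{\QBartlett} W_h \sup_{x,y}\bigl|\E S_{pq}^{(h)}-M_{[S_{pq}]}\bigr|.
\end{multline*}

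The two outer terms are straightforward. For the bias (third) term, a second-order Taylor expansion of $\E S_{pq}^{(h)}$ reproduces $M_{[S_{pq}]}$ with a remainder of order $B_R^2$ uniform in $h$ and $(x,y)$, exactly as in Lemma \ref{lemma:asymptotics_of_S_for_R}; since $\frac{1}{\QBartlett}\sum_h W_h=1$, averaging preserves the $O(B_R^2)$ bound with no factor of $\QBartlett$. For the first term, uniformly in $|h|<T/3$ one has $\mathcal{N}_h/\widehat{\mathcal{N}_h} = 1 + \Op(T^{-1/2})$ (because $\bar N = \E N + \Op(T^{-1/2})$ and $\mathcal{N}_h/(T-|h|)$ concentrates around its mean), while $\sup_{x,y}|S_{pq}^{(h)}| = \Op(1)$ uniformly, so this contributes only $\Op(T^{-1/2})$, which is dominated by $a_T$.

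The crux is the middle (stochastic) term, which I would bound by $\max_{|h|\leq \QBartlett}\sup_{x,y}\bigl|S_{pq}^{(h)}-\E S_{pq}^{(h)}\bigr|$ (again via $\frac{1}{\QBartlett}\sum_h W_h=1$) and show to be $\Op(a_T)$. Here I observe that $S_{pq}^{(h)}$ coincides with the kernel-density-type object $\hat{\Psi}^{h,0}$ of \eqref{eq:proof_strong_mixing_eq0.74} with $k(u,v)=u^p v^q K(u)K(v)$, whose regularity is supplied by \ref{assumption:D11}. Crucially, the response is constant, so each summand $Z_i^{h,0}(x,y)=k((\tilde X_i^h-(x,y))/B_R)$ is bounded by $\bar k$ outright, and the delicate truncation step of Lemma \ref{lemma16} is \emph{not} needed. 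I would then run the remaining two steps of the proof of Lemma \ref{lemma16} (following \citet[Thm 2]{hansen2008uniform}): discretize $[0,1]^2$ on a grid of $N_g = O(B_R^{-2}a_T^{-2})$ cells, reducing the supremum to a maximum over grid points up to an $O(a_T)$ error by the Lipschitz/tail control in \ref{assumption:D11}, and on each grid point apply the Liebscher--Rio inequality of Lemma \ref{lemma13} with the conditional variance bound of Lemma \ref{lemma12} (the $r=0$ case, uniform in $h$). A union bound over the $N_g$ cells and the $2\QBartlett+1$ lags yields
\begin{equation*}
\Prob\left(\max_{|h|\leq \QBartlett}\sup_{x,y\in[0,1]}\bigl|S_{pq}^{(h)}-\E S_{pq}^{(h)}\bigr| > M a_T\right)
\leq \QBartlett\left[O(C_{1,T}) + O(C_{2,T})\right],
\end{equation*}
with $C_{1,T},C_{2,T}$ as in the proof of Lemma \ref{lemma16}; assumptions \ref{assumption:D15} and \ref{assumption:D16} then make the right-hand side $o(1)$ for $M$ large, so the stochastic term is $\Op(a_T)$.

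Collecting the three contributions gives \eqref{eq:lemma15_eq1}. The main obstacle is the same as in Lemma \ref{lemma16}: securing the exponential tail and the variance bound of Lemma \ref{lemma12} \emph{uniformly} in the lag $h$, and ensuring that the factor $\QBartlett$ from the union bound over lags does not overwhelm the per-lag probabilities --- which is precisely where the span restriction \ref{assumption:D15} and the bandwidth condition \ref{assumption:D16} are consumed. However, because the response is the constant $1$, the truncation argument is avoided, making the present proof strictly lighter; moreover the $1/\QBartlett$ normalisation built into $S_{pq}$ (absent in $Q_{pq}^\omega$) is exactly what removes the extra factor of $\QBartlett$ from the final rate.
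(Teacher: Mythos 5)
Your proposal is correct and follows essentially the same route as the paper: the identical three-term decomposition, the $O(T^{-1/2})$ bound for the count-ratio term, the $O(B_R^2)$ Taylor bias bound, and the stochastic term handled by the grid-plus-Liebscher--Rio argument of Lemma \ref{lemma16}, with the built-in $1/\QBartlett$ normalisation removing the extra factor of $\QBartlett$ from the rate. Your observation that the constant response makes the truncation step superfluous is a small but accurate simplification that the paper leaves implicit in its ``analogously'' reference.
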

\begin{proof}
We decompose the estimation error as follows:
\begin{multline}\label{eq:lemma15_eq2}
\left| S_{pq} - M_{[S_{pq}]} \right| =
\left| \frac{1}{\QBartlett} \sum_{|h|<\QBartlett} W_h\left(\frac{\mathcal{N}_h}{\hat{\mathcal{N}}_h} S_{pq}^{(h)} - M_{[S_{pq}]} \right)\right|
\leq\\\leq
\frac{1}{\QBartlett} \sum_{|h|<\QBartlett} W_h \left| \left(\frac{\mathcal{N}_h}{\hat{\mathcal{N}}_h}-1\right)S_{pq}^{(h)} \right|
+
\frac{1}{\QBartlett} \sum_{|h|<\QBartlett} W_h
\left|
S_{pq}^{(h)} - \E S_{pq}^{(h)}
\right|
+
\frac{1}{\QBartlett} \sum_{|h|<\QBartlett} W_h
\left|
\E S_{pq}^{(h)} - M_{[S_{pq}]}
\right|
\end{multline}

The first term on the right hand side of \eqref{eq:lemma15_eq2} is of order $O(T^{-1/2})$, uniformly in $x,y\in[0,1]$, because $(1/T)\mathcal{N}_h = c_h + \Op(T^{-1/2})$ and $(1/T)\hat{\mathcal{N}}_h = c_h + \Op(T^{-1/2})$ uniformly in $|h|\leq\QBartlett$. 

The third term on the right hand side of \eqref{eq:lemma15_eq2} is of order $O(B_R^2)$, uniformly in $x,y\in[0,1]$. This is shown identically as in the proof of Lemma \ref{lemma:asymptotics_of_S_for_R}.

The second term on the right hand side of order
$$S_{pq} = \E S_{pq} + \Op\left( \sqrt{\frac{\log T}{T B_R^2}} \right)$$
uniformly in $x,y\in[0,1]$ and $|h|<\QBartlett$. This is shown analogously as the proof of Lemma \ref{lemma16}. The difference is that the normalising factor $1/\QBartlett$ improves the rate to $(\log T /(T B_R^2))^{1/2}$ as opposed to $\QBartlett (\log T /(T B_R^2))^{1/2}$ as in Lemma \ref{lemma16}.
\end{proof}

\begin{proof}[Proof of Theorem \ref{thm:strong_mixing_F}]
We start with assuming that the mean function $\mu(\cdot)$ is known. Combining the results of Lemmas \ref{lemma16} and \ref{lemma15}, and the formula \eqref{eq:proof_strong_mixing_eq0.5} provides the rate
\eqref{eq:thm:strong_mixing_F_rate1}, and the rate \eqref{eq:thm:strong_mixing_F_rate2} if \ref{assumption:B.6} is assumed.

The proof is completed by the discussion that the difference between the ``raw'' covariances with and without $\mu(\cdot)$ is negligible.
\end{proof}

\color{black}

\subsection{Proof of Theorem \ref{thm:correctness_of_kriging}}

The following lemma ensures the convergence of 
$\hat{\boldsymbolmu}_{X_s|\mathbb{Y}_S}$ and $\hat{\mathbbSigma}_{X_s|\mathbb{Y}_S}$
 to their population level counterparts
\eqref{eq:conditional_distribution_Xs_mu_Sigma}.
We investigate the convergence without the Gaussianity assumption.
\begin{lemma}\label{lemma:convergence_4_correctness_of_kriging}
Under the assumptions \ref{assumption:B.1} --- \ref{assumption:B.5} and \ref{assumption:B.7} --- \ref{assumption:B.10},
$$ \sup_{x\in[0,1]} \left|
\hat{\boldsymbolmu}_{X_s|\mathbb{Y}_S}(x) -
\boldsymbolmu_{X_s|\mathbb{Y}_S}(x)
\right| = \op(1) \qquad \text{as}\quad T\to\infty,$$
$$ \sup_{x,y\in[0,1]} \left|
\hat{\mathbbSigma}_{X_s|\mathbb{Y}_S}(x,y) -
\mathbbSigma_{X_s|\mathbb{Y}_S}(x,y)
\right| = \op(1) \qquad \text{as}\quad T\to\infty.$$
\end{lemma}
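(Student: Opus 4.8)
The plan is to realise both $\boldsymbolmu_{X_s|\mathbb{Y}_S}$ and $\mathbbSigma_{X_s|\mathbb{Y}_S}$ as the image of the triple $(\mu,(R_h)_h,\sigma^2)$ under a fixed algebraic map, and to argue by a continuous-mapping/Slutsky argument that feeding in the consistent estimators $(\hat\mu,(\tilde R_h)_h,\hat\sigma^2)$ yields a consistent output. Since the span $S$ is held fixed as $T\to\infty$, I would first condition on the design up to time $S$, i.e. on $\mathcal{F}_S=\sigma\{N_t,x_{tj},Y_{tj}:1\le t\le S\}$. Given $\mathcal{F}_S$ the dimension $\mathcal{N}_1^S<\infty$, the evaluation operator $\mathbb{H}_S$, the projection $P_s$, and the data vector $\mathbb{Y}_S$ are all fixed and do not vary with $T$; the only objects depending on $T$ are the dynamics estimators, computed from all $T$ curves, whose consistency (Theorem \ref{thm:mean_and_autocov_function} and Corollary \ref{corollary:sup_consistency}) continues to apply. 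Thus the whole statement reduces to continuity of a deterministic map in finitely many arguments.

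For the inputs, Theorem \ref{thm:mean_and_autocov_function} gives $\sup_x|\hat\mu(x)-\mu(x)|=\op(1)$ and $\hat\sigma^2=\sigma^2+\op(1)$, so $\hat{\boldsymbolmu}_S\to\boldsymbolmu_S$, while Corollary \ref{corollary:sup_consistency} gives $\sup_h\sup_{x,y}|\tilde R_h(x,y)-R_h(x,y)|=\op(1)$. Because $\mathbbSigma_S$ is an $S\times S$ block operator assembled from the finitely many blocks $\mathscr{R}_h$, $|h|<S$, uniform convergence of the corresponding kernels forces $\hat{\mathbbSigma}_S\to\mathbbSigma_S$ in Hilbert--Schmidt (hence operator) norm, and in particular the entries of the finite matrix $\mathbb{H}_S\hat{\mathbbSigma}_S\mathbb{H}_S^{*}$ converge to those of $\mathbb{H}_S\mathbbSigma_S\mathbb{H}_S^{*}$. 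Writing $\Gamma_S=\mathbb{H}_S\mathbbSigma_S\mathbb{H}_S^{*}+\sigma^2 I_{\mathcal{N}_1^S}$ and $\hat\Gamma_S=\mathbb{H}_S\hat{\mathbbSigma}_S\mathbb{H}_S^{*}+\hat\sigma^2 I_{\mathcal{N}_1^S}$, the limit $\Gamma_S$ is invertible with $\|\Gamma_S^{-1}\|\le\sigma^{-2}$, since $\mathbb{H}_S\mathbbSigma_S\mathbb{H}_S^{*}\succeq0$ and $\sigma^2>0$; as $\hat\Gamma_S\to\Gamma_S$, the matrix $\hat\Gamma_S$ is invertible with probability tending to one and $\hat\Gamma_S^{-1}\to\Gamma_S^{-1}$, inversion being continuous at a nonsingular matrix.

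With these pieces I would then assemble the two quantities. By \eqref{eq:conditional_distribution_X_mu} and \eqref{eq:conditional_distribution_Xs_mu_Sigma},
\begin{equation*}
\boldsymbolmu_{X_s|\mathbb{Y}_S}=P_s\boldsymbolmu_S+P_s\mathbbSigma_S\mathbb{H}_S^{*}\Gamma_S^{-1}(\mathbb{Y}_S-\mathbb{H}_S\boldsymbolmu_S),
\end{equation*}
a composition of sums, products, a matrix inverse, and application to the fixed vector $\mathbb{Y}_S$, all continuous in $(\hat{\boldsymbolmu}_S,\hat{\mathbbSigma}_S,\hat\sigma^2)$ on the event that $\hat\Gamma_S$ is invertible; the same holds for $\mathbbSigma_{X_s|\mathbb{Y}_S}=P_s\mathbbSigma_SP_s^{*}-P_s\mathbbSigma_S\mathbb{H}_S^{*}\Gamma_S^{-1}\mathbb{H}_S\mathbbSigma_SP_s^{*}$. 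To upgrade convergence in $\mathcal{H}$ (resp. in $L(\mathcal{H})$) to the uniform statements claimed, I would use the structural fact that $\boldsymbolmu_{X_s|\mathbb{Y}_S}(x)$ equals $\mu(x)$ plus a finite linear combination $\sum_{t,j}w_{tj}R_{s-t}(x,x_{tj})$ whose coefficients $w_{tj}$ are the entries of $\Gamma_S^{-1}(\mathbb{Y}_S-\mathbb{H}_S\boldsymbolmu_S)$. Splitting $\hat w_{tj}\tilde R_{s-t}-w_{tj}R_{s-t}=(\hat w_{tj}-w_{tj})\tilde R_{s-t}+w_{tj}(\tilde R_{s-t}-R_{s-t})$, the first piece is $\op(1)$ times a uniformly bounded kernel and the second is $|w_{tj}|$ times $\sup_x|\tilde R_{s-t}(x,x_{tj})-R_{s-t}(x,x_{tj})|=\op(1)$ by Corollary \ref{corollary:sup_consistency}; summing over the finitely many $(t,j)$ and taking $\sup_x$ yields the first display, and the analogous bookkeeping in both arguments yields the second.

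The main obstacle, and the only place positivity is genuinely used, is controlling $\hat\Gamma_S^{-1}$: one must rule out near-singularity. This is exactly where the strict lower bound $\Gamma_S\succeq\sigma^2 I_{\mathcal{N}_1^S}$ from $\sigma^2>0$ enters, making inversion locally Lipschitz and letting the (possibly non-positive-definite) perturbation $\mathbb{H}_S\hat{\mathbbSigma}_S\mathbb{H}_S^{*}$ be absorbed once it is close enough to its limit, so that no separate positive-definiteness of $\hat{\mathbbSigma}_S$ is required. The secondary difficulty is spatial uniformity, which is precisely what the uniform-in-$(x,y)$ rate of Corollary \ref{corollary:sup_consistency} delivers; everything else is finite-dimensional continuity applied curve-wise over the fixed span $S$.
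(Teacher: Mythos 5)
Your proposal is correct and follows essentially the same route as the paper's own proof: both arguments reduce the claim to a finite-dimensional continuity statement over the fixed span $S$, decompose the error into a term driven by $\sup_x|\hat\mu-\mu|$, a term driven by the difference of the (noise-regularized, hence invertible by $\sigma^2>0$) inverse matrices, and a term driven by $\sup_h\sup_{x,y}|\tilde R_h-R_h|$ from Corollary \ref{corollary:sup_consistency}, exactly mirroring the paper's $J_1+J_2+J_3$ split. Your explicit remarks on $\|\Gamma_S^{-1}\|\le\sigma^{-2}$ and the local Lipschitz continuity of inversion make precise a step the paper states only implicitly, but they do not change the substance of the argument.
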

\begin{proof}
We start with $\hat{\boldsymbolmu}_{X_s|\mathbb{Y}_S}$. Decompose the difference as
\begin{multline}
\label{eq:proof_of_lemma_thm3_decomp}
\left|
\hat{\boldsymbolmu}_{X_s|\mathbb{Y}_S} -
\boldsymbolmu_{X_s|\mathbb{Y}_S}
\right|
\leq
\underbrace{\left| \hat{\mu}(x) - \mu(x) \right|}_{J_1}
+\\+
\underbrace{
\left|\left[
P_s \hat{\mathbbSigma}_S \mathbb{H}_S^*
\left\{
\left( \mathbb{H}_S\hat{\mathbbSigma}_S\mathbb{H}_S^* + \hat\sigma^2 I_{\mathcal{N}_1^T} \right)^{-1}
-
\left( \mathbb{H}_S\mathbbSigma_S\mathbb{H}_S^* + \sigma^2 I_{\mathcal{N}_1^T} \right)^{-1}
\right\}
\left( \mathbb{Y}_S - \mathbb{H}_S \widehat{\boldsymbolmu}_S\right)
\right](x)
\right|}_{J_2}
+\\+
\underbrace{\left|\left\{
\left( P_s \hat{\mathbbSigma}_S \mathbb{H}_S^* - P_s \mathbbSigma_S \mathbb{H}_S^*\right)
\left( \mathbb{H}_S\mathbbSigma_S\mathbb{H}_S^* + \sigma^2 I_{\mathcal{N}_1^T} \right)^{-1}
\left( \mathbb{Y}_S - \mathbb{H}_S \widehat{\boldsymbolmu}_S\right)
\right\}
(x)\right|}_{J_3}.
\end{multline}

The first term $J_1$ on the right-hand side of \eqref{eq:proof_of_lemma_thm3_decomp} tends to zero, uniformly in $x$, as $T\to\infty$ by Theorem \ref{thm:mean_and_autocov_function}.
The second term $J_2$ and the third term $J_3$ can be rewritten as
\begin{multline*}
J_2 = \left|\left[
P_s \hat{\mathbbSigma}_S \mathbb{H}_S^*
\left\{
\left( \mathbb{H}_S\hat{\mathbbSigma}_S\mathbb{H}_S^* + \hat\sigma^2 I_{\mathcal{N}_1^T} \right)^{-1}
-
\left( \mathbb{H}_S\mathbbSigma_S\mathbb{H}_S^* + \sigma^2 I_{\mathcal{N}_1^T} \right)^{-1}
\right\}
\left( \mathbb{Y}_S - \mathbb{H}_S \widehat{\boldsymbolmu}_S\right)
\right](x)
\right|
=\\=
\left|
\widehat{\cov}( X_s(x), \mathbb{Y}_S )
^*
\left(
\var(\mathbb{Y}_S) ^{-1}
-
\widehat{\var}(\mathbb{Y}_S)^{-1}
\right)
\left( \mathbb{Y}_S - \mathbb{H}_S \widehat{\boldsymbolmu}_S\right)
\right|
\end{multline*}
\begin{multline*}
J_3 = \left|\left\{
\left( P_s \hat{\mathbbSigma}_S \mathbb{H}_S^* - P_s \mathbbSigma_S \mathbb{H}_S^*\right)
\left( \mathbb{H}_S\mathbbSigma_S\mathbb{H}_S^* + \sigma^2 I_{\mathcal{N}_1^T} \right)^{-1}
\left( \mathbb{Y}_S - \mathbb{H}_S \hat{\boldsymbolmu}_S\right)
\right\}
(x)\right|
=\\=
\left|
\left\{
	\widehat{\cov}( X_s(x), \mathbb{Y}_S )
	-
	\cov( X_s(x), \mathbb{Y}_S )
\right\}
^*
\left(
\var(\mathbb{Y}_S) ^{-1}
\right)
\left( \mathbb{Y}_S - \mathbb{H}_S \hat{\boldsymbolmu}_S\right)
\right|
\end{multline*}
where $ \cov( X_s(x), \mathbb{Y}_S ) $ is a random vector in $\mathbb{R}^{\mathcal{N}_1^S}$ whose elements are of the form $\{ R_{h_k}(x, x_{t_k,j_k}) \}_{k=1}^{N_S}$ for some lags $h_k$ and locations $x_{t_k,j_k}$ and $\var( \mathbb{Y}_S )$ is a random matrix in $\mathbb{R}^{\mathcal{N}_1^S\times \mathcal{N}_1^S}$ whose elements are of the form $\{ R_{t_{k'}-t_k}(x_{t_k,j_k}, x_{t_{k'},j_{k'}}) \}_{k,k'=1}^{ \mathcal{N}_1^S }$. The terms $\widehat{\cov}( X_s(x), \mathbb{Y}_S )$ and $\widehat{\var}(\mathbb{Y}_S)^{-1}$ are defined using the estimated autocovariance kernels.

To treat the term $J_2$ note that $\widehat{\var}(\mathbb{Y}_S)^{-1} - \var(\mathbb{Y}_S) ^{-1} \to 0$ as $T\to\infty$ by Corollary \ref{corollary:sup_consistency}. The term $\left( \mathbb{Y}_S - \mathbb{H}_S \widehat{\boldsymbolmu}_S\right)$ is bounded as $T\to\infty$ thanks to the convergence $\hat\mu \to \mu$. The term $\widehat{\cov}( X_s(x), \mathbb{Y}_S )$ is bounded uniformly in $x$ due to its convergence to $\cov( X_s(x), \mathbb{Y}_S )$, uniformly in $x$, by Corollary \ref{corollary:sup_consistency}.

The term $J_3$ is treated similarly. $\widehat{\cov}( X_s(x), \mathbb{Y}_S ) - \cov( X_s(x), \mathbb{Y}_S ) \to 0$, uniformly in $x$, by Corollary \ref{corollary:sup_consistency}. The formula for the variance
$\hat{\mathbbSigma}_{X_s|\mathbb{Y}_S}(x,y)$
can be written as
$$\hat{\mathbbSigma}_{X_s|\mathbb{Y}_S}(x,y)
= \hat{R}_0(x,y) -
\widehat{\cov}( X_s(x), \mathbb{Y}_S ) ^*
\widehat{\var}(\mathbb{Y}_S) ^{-1}
\widehat{\cov}( X_s(y), \mathbb{Y}_S ).
$$
Its convergence, uniform in $(x,y)\in[0,1]^2$, is treated similarly as above by Corollary \ref{corollary:sup_consistency}.

\end{proof}

\begin{proof}[Proof of Theorem \ref{thm:correctness_of_kriging}]
The first statement of Lemma \ref{lemma:convergence_4_correctness_of_kriging} is the statement of Theorem \ref{thm:correctness_of_kriging}.
\end{proof}

\subsection{Proof of Theorem \ref{thm:correctness_of_bands}}

\begin{proof}[Proof of Theorem \ref{thm:correctness_of_bands} ]
We start with the pointwise confidence band. Fix $x\in[0,1]$. From \ref{assumption:A.1} and the conditional distribution
$$ \frac{X_s(x) - \boldsymbolmu_{X_s|\mathbb{Y}_S}(x) }{\sqrt{\mathbbSigma_{X_s|\mathbb{Y}_S}(x,x)}} \sim N(0,1).$$
Therefore $$ \Prob\left\{ \left| X_s(x) - \boldsymbolmu_{X_s|\mathbb{Y}_S}(x)  \right| \leq \Phi^{-1}\left(1-\alpha/2\right) \sqrt{\mathbbSigma_{X_s|\mathbb{Y}_S}(x,x)} \right\} = 1-\alpha .$$
By Lemma \ref{lemma:convergence_4_correctness_of_kriging},
$$ \frac{X_s(x) - \hat{\boldsymbolmu}_{X_s|\mathbb{Y}_S}(x) }{\sqrt{\hat{\mathbbSigma}_{X_s|\mathbb{Y}_S}(x,x)}} \stackrel{d}{\to} N(0,1)$$
and thus
$$ \Prob\left\{ \left| X_s(x) - \hat{\boldsymbolmu}_{X_s|\mathbb{Y}_S}(x)  \right| \leq \Phi^{-1}\left(1-\alpha/2\right) \sqrt{\hat{\mathbbSigma}_{X_s|\mathbb{Y}_S}(x,x)} \right\}  \to 1-\alpha .$$

Now we turn our attention to the simultaneous confidence band. By the definition of the conditional distribution
$$  X_s - \boldsymbolmu_{X_s|\mathbb{Y}_S} \sim N(0, \mathbbSigma_{X_s|\mathbb{Y}_S} ).$$
By the definition of the simultaneous confidence bands \citep{degras2011simultaneous}, which was reviewed in Section \ref{subsec:functional_data_recovery},
$$  \Prob\left\{
\forall x \in [0,1]:
\left|
X_s(x) - \boldsymbolmu_{X_s|\mathbb{Y}_S}(x)
\right|
\leq
z_{\alpha,\rho}
\sqrt{\mathbbSigma_{X_s|\mathbb{Y}_S}(x,x)} \right\} = 1-\alpha.$$
Define the correlation kernel $\rho_{X_s|\mathbb{Y}_T} (x,y)$ as in \eqref{eq:simul_bands_correlation_kernel}. Assume for simplicity of the proof that $\rho_{X_s|\mathbb{Y}_T} (x,x) > 0$ for all $x\in[0,1]$. Then
$$
\frac{X_s(\cdot) - \boldsymbolmu_{X_s|\mathbb{Y}_S}(\cdot) }{ \sqrt{\mathbbSigma_{X_s|\mathbb{Y}_S}(\cdot,\cdot)} }
\sim N\left(0, \rho_{X_s|\mathbb{Y}_T} \right) $$
where the square root and the division is understood pointwise. Denote $W_{\rho}$ the law of $\sup_{x\in[0,1]} |Z_{\rho}|$ where $Z_{\rho} \sim N(0,\rho)$. Then
$$
\sup_{x\in[0,1]}
\left|
\frac{X_s(x) - \boldsymbolmu_{X_s|\mathbb{Y}_S}(x) }{ \sqrt{\mathbbSigma_{X_s|\mathbb{Y}_S}(x,x)} }
\right|
\sim
W_{ \rho_{X_s|\mathbb{Y}_T} } $$
By Lemma \ref{lemma:convergence_4_correctness_of_kriging},
$$
\sup_{x\in[0,1]}
\left|
\frac{X_s(x) - \hat{\boldsymbolmu}_{X_s|\mathbb{Y}_S}(x) }{ \sqrt{\hat{\mathbbSigma}_{X_s|\mathbb{Y}_S}(x,x)} }
\right|
\stackrel{d}{\to}
W_{ \rho_{X_s|\mathbb{Y}_T} } .$$

Note also that if $\rho_n \to \rho$ uniformly then $N(0,\rho_n) \to N(0,\rho)$ weakly, $W_{\rho_n} \to W_\rho$ weakly and therefore $z_{\alpha,\rho_n} \to z_{\alpha, \rho}$.

$$
\Prob\left\{ \sup_{x\in[0,1]}
\left|
\frac{X_s(x) - \hat{\boldsymbolmu}_{X_s|\mathbb{Y}_S}(x) }{ \sqrt{\hat{\mathbbSigma}_{X_s|\mathbb{Y}_S}(x,x)} }
\right|
\leq z_{\alpha,\hat{\rho}}
\right\} 
=
\Prob\left\{ \sup_{x\in[0,1]}
\left|
\frac{X_s(x) - \hat{\boldsymbolmu}_{X_s|\mathbb{Y}_S}(x) }{ \sqrt{\hat{\mathbbSigma}_{X_s|\mathbb{Y}_S}(x,x)} }
\right|
\frac{z_{\alpha,\rho}}{z_{\alpha,\hat{\rho}}}
\leq z_{\alpha,\rho}
\right\} \to 1-\alpha.
$$

\end{proof}

\subsection{Proof of Proposition \ref{prop:spec_density_FMA} and Proposition \ref{prop:spec_density_FAR}}

\begin{proof}[Proof of Proposition \ref{prop:spec_density_FMA}]
The formula \eqref{eq:spectral_density_FMA} is verified by calculating the autocovariance operators of the functional moving average process, which are non-zero only for a finite number of lags.

The assumptions \ref{assumption:B.3}, \ref{assumption:B.4}, \ref{assumption:B.5}, \ref{assumption:B.6} are easily verified by the smoothness of the kernels and the exponential decay of the norm of the autocovariance operators. Verifying the condition \eqref{eq:assumption_stationarity_summability_of_autocovariance} in the supremum sense yields the existence of the spectral density in the kernel sense \eqref{eq:spectral_density_kernel_operator}.
\end{proof}

\begin{proof}[Proof of Proposition \ref{prop:spec_density_FAR}]
The existence, the uniqueness, and the stationarity is treated by \citet{bosq2012linear}. The Gaussianity is also immediate. We now verify the formula \eqref{eq:spectral_density_FAR}. We can write the inversions on the right-hand side of \eqref{eq:spectral_density_FAR} as a Neumann series:
\begin{equation}
\label{eq:proof_spec_FAR_eq1}
(I - \mathcal{A}  e^{-\I\omega})^{-1} \mathcal{S} (I - \mathcal{A}\transpose  e^{\I\omega})^{-1}
=
\left( \sum_{j=0}^\infty \mathcal{A}^j e^{-\I\omega j} \right)
\mathcal{S}
\left( \sum_{j=0}^\infty \left(\mathcal{A}^j\right)\transpose e^{\I\omega j} \right).
\end{equation}
Fix $h\geq 0$. Expanding the sums on the right-hand side of \eqref{eq:proof_spec_FAR_eq1}, in order to obtain the term with $e^{-\I\omega h}$ one has to sum up
\begin{equation}
\label{eq:proof_spec_FAR_eq2}
\sum_{j=0} \mathcal{A}^{h+j} S \left(\mathcal{A}^j\right)\transpose e^{-\I\omega h} = \mathcal{A}^h \mathscr{R}_0 e^{-\I\omega h} = \mathscr{R}_h e^{-\I\omega h}
\end{equation}
where $\mathscr{R}_0 = \sum_{j=0}^\infty \mathcal{A}^j \mathcal{S}\left(\mathcal{A}^j\right)\transpose$ is the lag-0 covariance operator of the process \citep{bosq2012linear}.
Checking the analogue of \eqref{eq:proof_spec_FAR_eq2} for $h<0$ yields the formula  \eqref{eq:spectral_density_FAR}. The discussion of the assumptions is analogous to the functional moving average process.
\end{proof}


\section{Supplementary Results of Numerical Experiments}
\label{appendix:supplementary results}

\subsection{Determination of the Optimal Parameter $\QBartlett$}
\label{appendix_results:determination_Q_Bartlett}

We run a simulation study across the considered functional moving average processes $\mathbf{FMA(2)}$, $\mathbf{FMA(4)}$, and $\mathbf{FMA(8)}$, and the functional autoregressive processes $\mathbf{FAR(1)_{0.7}}$ and $\mathbf{FAR(1)_{0.9}}$. For their definitions refer to Subsection \ref{subsec:simulation_setting}. We simulated 25 independent realizations of each of the process for each pair of the considered sample size parameters $T\in\{150,300,450,600,900,1200\}$ and $N^{max}\in\{5,10,20,30,40\}$.
For each realization we selected the bandwidth parameters $B_\mu$, $B_R$, and $B_V$ for smoothing estimators by the K-fold cross-validation suggested in Section \ref{subsec:selection of B_mu, B_R and B_V}. Then we estimated the spectral density by the estimator \eqref{eq:spectral_density_estimator_bartlett_smoother} with varying value of Bartlett's span parameter $\QBartlett$ to identify what value is the optimal for the estimation of the spectral density with respect to the relative mean square error \eqref{eq:RMSE_definition}.
First five parts of Table \ref{table:best_Q_simulations} presents the optimal values of $\QBartlett$ for the considered processes and the considered sample sizes.

The optimal value of $\QBartlett$ depends on the dynamics of the functional time-series quite substantially. Especially striking is the case of the autoregressive process $\mathbf{FAR(1)_{0.9}}$ which features a higher degree of temporal dependence than the other processes. Observing the results in the first five parts of Table \ref{table:best_Q_simulations} we suggested the selection rule  \eqref{eq:Q_Bartlett decision rule} as a compromise among the considered processes.

The bottom-right part of Table \ref{table:best_Q_simulations} presents the evaluations of the rule   \eqref{eq:Q_Bartlett decision rule} for the considered sample sizes. For the evaluation we consider the average number of points per curve $\bar{N}$ to be set to the expectation of the number of points $N^{max}/2$.

\begin{table}[hbt!]
\caption{
The best $\QBartlett$ to minimize the relative mean square error \eqref{eq:RMSE_definition} of the spectral density estimation for the functional moving average processes $\mathbf{FMA(2)}$, $\mathbf{FMA(4)}$, and $\mathbf{FMA(8)}$, and the functional autoregressive processes $\mathbf{FAR(1)_{0.7}}$ and $\mathbf{FAR(1)_{0.9}}$. The table in the bottom-right corner presents the output of the selection rule \eqref{eq:Q_Bartlett decision rule}
}
\label{table:best_Q_simulations}
\centering
\begin{tabular}{crrrrrccrrrrr}
  \multicolumn{6}{c}{Best $\QBartlett$ for $\mathbf{FMA(2)}$} 
&& \multicolumn{6}{c}{Best $\QBartlett$ for $\mathbf{FAR(1)_{0.7}}$} \\[3pt]
$N^{max}$\textbackslash$T$   & 5 & 10 & 20   & 30 & 40 &
\multicolumn{1}{p{1cm}}{ } &
$N^{max}$\textbackslash$T$   & 5 & 10 & 20 & 30 & 40 \\[5pt]
150  & 5 & 5  & 6  & 6  & 6  &  & 150  & 4 & 5  & 6  & 7  & 7  \\
300  & 5 & 6  & 7  & 7  & 7  &  & 300  & 5 & 7  & 8  & 8  & 8  \\
450  & 6 & 7  & 8  & 8  & 8  &  & 450  & 7 & 8  & 8  & 10 & 10 \\
600  & 6 & 7  & 8  & 9  & 9  &  & 600  & 7 & 8  & 10 & 10 & 11 \\
900  & 7 & 8  & 9  & 10 & 10 &  & 900  & 8 & 10 & 11 & 11 & 13 \\
1200 & 7 & 9  & 10 & 10 & 11 &  & 1200 & 9 & 11 & 12 & 12 & 13 \\[25pt]

  \multicolumn{6}{c}{Best $\QBartlett$ for $\mathbf{FMA(4)}$}
&&\multicolumn{6}{c}{Best $\QBartlett$ for $\mathbf{FAR(1)_{0.9}}$} \\[3pt]
$N^{max}$\textbackslash$T$   & 5 & 10 & 20   & 30 & 40 &
\multicolumn{1}{p{1cm}}{ } &
$N^{max}$\textbackslash$T$ & 5 & 10 & 20  & 30 & 40 
\\[5pt]
150  & 7  & 8  & 8  & 9  & 9  &  & 150  & 19 & 21 & 23 & 20 & 20 \\
300  & 9  & 10 & 11 & 12 & 12 &  & 300  & 21 & 23 & 25 & 27 & 29 \\
450  & 9  & 11 & 11 & 12 & 13 &  & 450  & 26 & 31 & 36 & 30 & 30 \\
600  & 10 & 12 & 12 & 13 & 14 &  & 600  & 30 & 34 & 33 & 37 & 39 \\
900  & 12 & 13 & 15 & 15 & 16 &  & 900  & 33 & 35 & 41 & 43 & 40 \\
1200 & 13 & 14 & 16 & 17 & 17 &  & 1200 & 40 & 42 & 40 & 44 & 48 \\[25pt]

  \multicolumn{6}{c}{Best $\QBartlett$ for $\mathbf{FMA(8)}$}
&& \multicolumn{6}{c}{Selected $\QBartlett$ by \eqref{eq:Q_Bartlett decision rule} } \\[3pt]
$N^{max}$\textbackslash$T$   & 5 & 10 & 20   & 30 & 40 &
\multicolumn{1}{p{1cm}}{ } &
$N^{max}$\textbackslash$T$ & 5 & 10 & 20           & 30 & 40  \\[5pt]
150  & 13 & 14 & 14 & 14 & 13 &  & 150  & 6  & 7  & 9  & 10 & 11 \\
300  & 14 & 16 & 17 & 18 & 20 &  & 300  & 8  & 10 & 11 & 13 & 14 \\
450  & 16 & 18 & 19 & 19 & 21 &  & 450  & 9  & 11 & 13 & 15 & 16 \\
600  & 19 & 20 & 20 & 21 & 22 &  & 600  & 10 & 12 & 14 & 16 & 17 \\
900  & 19 & 23 & 24 & 25 & 24 &  & 900  & 12 & 14 & 17 & 19 & 20 \\
1200 & 21 & 25 & 26 & 27 & 25 &  & 1200 & 13 & 15 & 18 & 20 & 22

\end{tabular}
\end{table}

\newpage
\subsection{Spectral Density Estimation}
\label{appendix_results:estimation_of_spectral_density}

Table \ref{table:estimation_of_spectral_density_FRO_all} states the average relative mean square error \eqref{eq:RMSE_definition} for the considered functional moving average processes $\mathbf{FMA(2)}$,$\mathbf{FMA(4)}$, $\mathbf{FMA(8)}$, and the functional autoregressive processes $\mathbf{FAR(1)_{0.7}}$, $\mathbf{FAR(1)_{0.9}}$. The results for the functional moving average process of order 4, $\mathbf{FMA(4)}$, were already stated in Table \ref{table:estimation_of_spectral_density_FRO_MA4} in Section \ref{subsec:simulations_estimation of spectral density} without the standard deviations. Figure \ref{fig:4processes_spec_density} displays the fitted regression surface for the model \eqref{eq:simulations_linear_model} for the functional moving average processes $\mathbf{FMA(2)}$,$\mathbf{FMA(4)}$, $\mathbf{FMA(8)}$, and the functional autoregressive processes $\mathbf{FAR(1)_{0.7}}$, $\mathbf{FAR(1)_{0.9}}$.

The fitted regression surfaces have coefficients
$(\hat\beta_0,\hat\beta_1,\hat\beta_2)$ are
$(    1.85,   -0.31,   -0.54)$,
$(    2.37,   -0.34,   -0.61)$,
$(    2.12,   -0.32, -0.56)$,
and $(    2.26,   -0.24,   -0.49)$
for the functional moving average processes $\mathbf{FMA(2)}$, $\mathbf{FMA(8)}$, and the functional autoregressive processes $\mathbf{FAR(1)_{0.7}}$, $\mathbf{FAR(1)_{0.9}}$ respectively.
Therefore the conclusion of higher time-length preference of Section \ref{subsec:simulations_estimation of spectral density} remains valid.

\begin{table}[hbt!]
\caption{Average relative mean square errors (defined in \eqref{eq:RMSE_definition}) of the spectral density estimators for the considered functional time-series. The numbers in parentheses are the standard deviations of the relative mean square error. Each cell of the table (each error and its standard deviation) is the result of 100 independent simulations. The Bartlett's span parameter $\QBartlett$ was selected by the rule \eqref{eq:Q_Bartlett decision rule}
}
\label{table:estimation_of_spectral_density_FRO_all}
\centering
\begin{tabular}{cclllll}
\multicolumn{2}{r}{$T$ \textbackslash $N^{max}$}
& \multicolumn{1}{c}{5} & \multicolumn{1}{c}{10} & \multicolumn{1}{c}{20} & \multicolumn{1}{c}{30} & \multicolumn{1}{c}{40}  \\[5pt]
																											    \multirow{5}{*}{\rotatebox[origin=c]{90}{ $\mathbf{FMA(2)}$ }}	
 & 150  & 0.313 (0.076) & 0.235 (0.061) & 0.205 (0.053) & 0.173 (0.036) & 0.172 (0.049) \\
 & 300  & 0.211 (0.048) & 0.162 (0.031) & 0.134 (0.030) & 0.124 (0.029) & 0.119 (0.026) \\
 & 450  & 0.179 (0.038) & 0.131 (0.023) & 0.102 (0.020) & 0.098 (0.020) & 0.095 (0.021) \\
 & 600  & 0.157 (0.029) & 0.113 (0.023) & 0.088 (0.016) & 0.083 (0.018) & 0.077 (0.016) \\
 & 900  & 0.121 (0.019) & 0.088 (0.014) & 0.071 (0.012) & 0.065 (0.011) & 0.062 (0.010) \\
 & 1200 & 0.108 (0.020) & 0.079 (0.012) & 0.063 (0.011) & 0.056 (0.010) & 0.054 (0.009) \\[5pt]
																				
	    \multirow{5}{*}{\rotatebox[origin=c]{90}{ $\mathbf{FMA(4)}$ }}		
 & 150  & 0.312 (0.060) & 0.225 (0.063) & 0.184 (0.060) & 0.170 (0.049) & 0.165 (0.050) \\
 & 300  & 0.206 (0.040) & 0.157 (0.042) & 0.124 (0.028) & 0.115 (0.030) & 0.110 (0.033) \\
 & 450  & 0.167 (0.033) & 0.126 (0.034) & 0.097 (0.022) & 0.092 (0.027) & 0.081 (0.021) \\
 & 600  & 0.137 (0.027) & 0.107 (0.027) & 0.083 (0.017) & 0.077 (0.023) & 0.071 (0.017) \\
 & 900  & 0.115 (0.020) & 0.082 (0.015) & 0.067 (0.016) & 0.061 (0.015) & 0.056 (0.016) \\
 & 1200 & 0.096 (0.019) & 0.072 (0.015) & 0.056 (0.013) & 0.050 (0.012) & 0.047 (0.012) \\[5pt]
																				
	    \multirow{5}{*}{\rotatebox[origin=c]{90}{ $\mathbf{FMA(8)}$ }}		
 & 150  & 0.352 (0.071) & 0.263 (0.064) & 0.213 (0.064) & 0.188 (0.074) & 0.178 (0.069) \\
 & 300  & 0.253 (0.055) & 0.170 (0.043) & 0.143 (0.050) & 0.129 (0.053) & 0.127 (0.053) \\
 & 450  & 0.176 (0.048) & 0.148 (0.049) & 0.114 (0.044) & 0.091 (0.031) & 0.086 (0.043) \\
 & 600  & 0.159 (0.041) & 0.123 (0.039) & 0.093 (0.036) & 0.080 (0.031) & 0.081 (0.036) \\
 & 900  & 0.128 (0.030) & 0.098 (0.030) & 0.074 (0.029) & 0.062 (0.023) & 0.060 (0.026) \\
 & 1200 & 0.101 (0.026) & 0.071 (0.023) & 0.055 (0.020) & 0.049 (0.017) & 0.051 (0.018) \\[5pt]
																				
	\multirow{5}{*}{\rotatebox[origin=c]{90}{ $\mathbf{FAR(1)_{0.7}}$ }}									 & 150  & 0.359 (0.082) & 0.289 (0.070) & 0.232 (0.069) & 0.211 (0.064) & 0.213 (0.066) \\
 & 300  & 0.257 (0.067) & 0.195 (0.048) & 0.154 (0.044) & 0.142 (0.042) & 0.138 (0.042) \\
 & 450  & 0.212 (0.041) & 0.155 (0.037) & 0.123 (0.032) & 0.114 (0.031) & 0.111 (0.030) \\
 & 600  & 0.187 (0.047) & 0.129 (0.029) & 0.108 (0.024) & 0.100 (0.026) & 0.090 (0.025) \\
 & 900  & 0.147 (0.031) & 0.107 (0.022) & 0.084 (0.019) & 0.075 (0.018) & 0.069 (0.020) \\
 & 1200 & 0.125 (0.022) & 0.094 (0.019) & 0.073 (0.018) & 0.063 (0.015) & 0.060 (0.015) \\[5pt]
																				
	\multirow{5}{*}{\rotatebox[origin=c]{90}{ $\mathbf{FAR(1)_{0.9}}$ }}	
 & 150  & 0.564 (0.097) & 0.466 (0.112) & 0.460 (0.117) & 0.454 (0.149) & 0.399 (0.135) \\
 & 300  & 0.433 (0.075) & 0.372 (0.102) & 0.334 (0.101) & 0.272 (0.098) & 0.291 (0.113) \\
 & 450  & 0.374 (0.074) & 0.324 (0.078) & 0.283 (0.092) & 0.239 (0.081) & 0.216 (0.077) \\
 & 600  & 0.305 (0.068) & 0.272 (0.062) & 0.216 (0.074) & 0.216 (0.083) & 0.192 (0.073) \\
 & 900  & 0.282 (0.054) & 0.227 (0.068) & 0.179 (0.061) & 0.165 (0.072) & 0.146 (0.061) \\
 & 1200 & 0.241 (0.061) & 0.194 (0.058) & 0.152 (0.059) & 0.137 (0.059) & 0.125 (0.059)

\end{tabular}
\end{table}

\begin{figure}
\centering
\includegraphics[width=\textwidth]{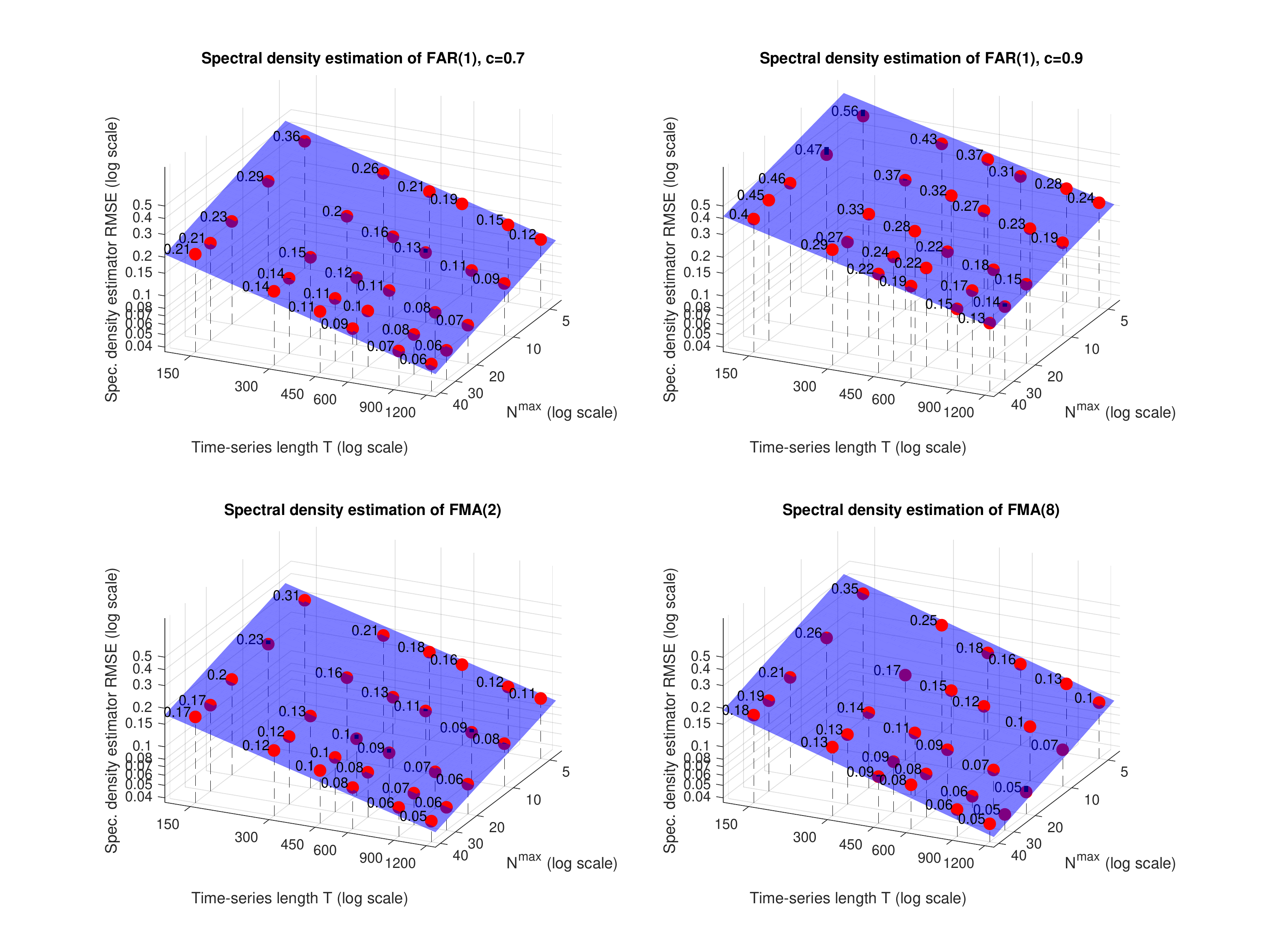}
\caption{The dependence of spectral density estimation relative mean square error (red points) of $\mathbf{FAR(1)_{0.9}}$ on the sample size parameters $T$ and $N^{max}$. The blue plane is the estimated regression surface in model \eqref{eq:simulations_linear_model}.}
\label{fig:4processes_spec_density}
\end{figure}

\subsection{Functional Data Recovery}
\label{appendix_results:recovery_of_functional_data}

Table~\ref{table:kriging_FMA_2}, Table~\ref{table:kriging_FMA_4}, Table~\ref{table:kriging_FMA_8}, Table~\ref{table:kriging_FAR_07}, and Table~\ref{table:kriging_FAR_09} summarize the performance of dynamic and static recovery methods. Because of the reasons explain in Section \ref{subsec:simulations_recovery}, the relative means square error is sensitive to poor estimation of the measurement error variance parameter $\sigma^2$. Therefore we take into account only those simulations where $\hat{\sigma}> 0.05$ and calculate the median relative mean square error and the corresponding inter-quartile range instead of the mean of the errors and their standard deviation.

The column \textit{Relative gain} of Table \ref{table:kriging_FMA_4} for the functional moving average process of order 4, $\mathbf{FMA(4)}$, corresponds to the data in Table \ref{table:recovery_relative_gain_MA4}.

\begin{table}[!htbp]
\caption{
Median relative mean square error \eqref{eq:simulations_RMSE} of the dynamic and static recovery and the relative gain \eqref{eq:simulations_relative gain} between them. Each row of the table is result of 100 independent simulations of the functional moving average process $\mathbf{FMA(2)}$
}
\label{table:kriging_FMA_2}
\begin{tabular}{ccccc}
$T$   & $N^{max}$ &
\multicolumn{1}{p{3.5cm}}{Median dynamic recovery relative mean square error (inter-quartile range)}&
\multicolumn{1}{p{3.5cm}}{Median static recovery relative mean square error (inter-quartile range)}&
\multicolumn{1}{p{1.0cm}}{Relative gain}													
\\[25pt]
     & 5  & 0.323 (0.067) & 0.514 (0.067) & 59 \% \\
     & 10 & 0.193 (0.054) & 0.263 (0.054) & 37 \% \\
150  & 20 & 0.105 (0.041) & 0.140 (0.041) & 34 \% \\
     & 30 & 0.076 (0.034) & 0.091 (0.034) & 19 \% \\
     & 40 & 0.055 (0.025) & 0.072 (0.025) & 31 \% \\[5pt]
     & 5  & 0.289 (0.042) & 0.440 (0.042) & 52 \% \\
     & 10 & 0.167 (0.034) & 0.240 (0.034) & 44 \% \\
300  & 20 & 0.094 (0.031) & 0.124 (0.031) & 32 \% \\
     & 30 & 0.068 (0.016) & 0.081 (0.016) & 20 \% \\
     & 40 & 0.054 (0.015) & 0.064 (0.015) & 18 \% \\[5pt]
     & 5  & 0.274 (0.038) & 0.426 (0.038) & 55 \% \\
     & 10 & 0.161 (0.029) & 0.226 (0.029) & 40 \% \\
450  & 20 & 0.090 (0.024) & 0.118 (0.024) & 31 \% \\
     & 30 & 0.062 (0.015) & 0.076 (0.015) & 23 \% \\
     & 40 & 0.051 (0.010) & 0.062 (0.010) & 21 \% \\[5pt]
     & 5  & 0.274 (0.035) & 0.395 (0.035) & 44 \% \\
     & 10 & 0.153 (0.030) & 0.217 (0.030) & 41 \% \\
600  & 20 & 0.085 (0.013) & 0.111 (0.013) & 31 \% \\
     & 30 & 0.062 (0.011) & 0.076 (0.011) & 24 \% \\
     & 40 & 0.049 (0.010) & 0.059 (0.010) & 20 \% \\[5pt]
     & 5  & 0.259 (0.029) & 0.376 (0.029) & 45 \% \\
     & 10 & 0.147 (0.023) & 0.206 (0.023) & 41 \% \\
900  & 20 & 0.084 (0.017) & 0.110 (0.017) & 31 \% \\
     & 30 & 0.061 (0.008) & 0.074 (0.008) & 22 \% \\
     & 40 & 0.048 (0.006) & 0.057 (0.006) & 20 \% \\[5pt]
     & 5  & 0.251 (0.024) & 0.368 (0.024) & 46 \% \\
     & 10 & 0.142 (0.016) & 0.199 (0.016) & 40 \% \\
1200 & 20 & 0.083 (0.012) & 0.107 (0.012) & 29 \% \\
     & 30 & 0.059 (0.008) & 0.073 (0.008) & 24 \% \\
     & 40 & 0.047 (0.007) & 0.055 (0.007) & 17 \%
\end{tabular}
\end{table}

\begin{table}[!htbp]
\caption{
Median relative mean square error \eqref{eq:simulations_RMSE} of the dynamic and static recovery and the relative gain \eqref{eq:simulations_relative gain} between them. Each row of the table is result of 100 independent simulations of the functional moving average process $\mathbf{FMA(4)}$
}
\label{table:kriging_FMA_4}
\begin{tabular}{ccccc}
$T$   & $N^{max}$ &
\multicolumn{1}{p{3.5cm}}{Median dynamic recovery relative mean square error (inter-quartile range)}&
\multicolumn{1}{p{3.5cm}}{Median static recovery relative mean square error (inter-quartile range)}&
\multicolumn{1}{p{1.0cm}}{Relative gain}
\\[25pt]
     & 5  & 0.321 (0.089) & 0.537 (0.089) & 67 \% \\
     & 10 & 0.190 (0.053) & 0.263 (0.053) & 38 \% \\
150  & 20 & 0.101 (0.034) & 0.139 (0.034) & 38 \% \\
     & 30 & 0.075 (0.035) & 0.092 (0.035) & 23 \% \\
     & 40 & 0.051 (0.017) & 0.067 (0.017) & 30 \% \\[5pt]
     & 5  & 0.284 (0.058) & 0.435 (0.058) & 53 \% \\
     & 10 & 0.169 (0.038) & 0.235 (0.038) & 39 \% \\
300  & 20 & 0.091 (0.026) & 0.120 (0.026) & 33 \% \\
     & 30 & 0.063 (0.019) & 0.082 (0.019) & 31 \% \\
     & 40 & 0.049 (0.014) & 0.062 (0.014) & 26 \% \\[5pt]
     & 5  & 0.267 (0.040) & 0.405 (0.040) & 52 \% \\
     & 10 & 0.157 (0.040) & 0.228 (0.040) & 45 \% \\
450  & 20 & 0.083 (0.018) & 0.115 (0.018) & 38 \% \\
     & 30 & 0.060 (0.012) & 0.078 (0.012) & 30 \% \\
     & 40 & 0.047 (0.011) & 0.059 (0.011) & 24 \% \\[5pt]
     & 5  & 0.260 (0.041) & 0.378 (0.041) & 45 \% \\
     & 10 & 0.149 (0.030) & 0.211 (0.030) & 41 \% \\
600  & 20 & 0.084 (0.018) & 0.110 (0.018) & 32 \% \\
     & 30 & 0.060 (0.011) & 0.076 (0.011) & 26 \% \\
     & 40 & 0.048 (0.009) & 0.059 (0.009) & 24 \% \\[5pt]
     & 5  & 0.239 (0.031) & 0.367 (0.031) & 54 \% \\
     & 10 & 0.141 (0.018) & 0.199 (0.018) & 41 \% \\
900  & 20 & 0.077 (0.011) & 0.105 (0.011) & 37 \% \\
     & 30 & 0.058 (0.011) & 0.075 (0.011) & 30 \% \\
     & 40 & 0.047 (0.007) & 0.057 (0.007) & 22 \% \\[5pt]
     & 5  & 0.232 (0.022) & 0.357 (0.022) & 54 \% \\
     & 10 & 0.135 (0.013) & 0.195 (0.013) & 45 \% \\
1200 & 20 & 0.079 (0.009) & 0.105 (0.009) & 34 \% \\
     & 30 & 0.057 (0.008) & 0.071 (0.008) & 26 \% \\
     & 40 & 0.046 (0.006) & 0.055 (0.006) & 21 \%
\end{tabular}
\end{table}

\begin{table}[!htbp]
\caption{
Median relative mean square error \eqref{eq:simulations_RMSE} of the dynamic and static recovery and the relative gain \eqref{eq:simulations_relative gain} between them. Each row of the table is result of 100 independent simulations of the functional moving average process $\mathbf{FMA(8)}$
}
\label{table:kriging_FMA_8}
\begin{tabular}{ccccc}
$T$   & $N^{max}$ &
\multicolumn{1}{p{3.5cm}}{Median dynamic recovery relative mean square error (inter-quartile range)}&
\multicolumn{1}{p{3.5cm}}{Median static recovery relative mean square error (inter-quartile range)}&
\multicolumn{1}{p{1.0cm}}{Relative gain}
\\[25pt]
     & 5  & 0.294 (0.072) & 0.454 (0.072) & 55 \% \\
     & 10 & 0.182 (0.083) & 0.248 (0.083) & 36 \% \\
150  & 20 & 0.094 (0.050) & 0.126 (0.050) & 34 \% \\
     & 30 & 0.064 (0.026) & 0.086 (0.026) & 34 \% \\
     & 40 & 0.050 (0.021) & 0.063 (0.021) & 26 \% \\[5pt]
     & 5  & 0.264 (0.060) & 0.393 (0.060) & 49 \% \\
     & 10 & 0.145 (0.043) & 0.210 (0.043) & 45 \% \\
300  & 20 & 0.086 (0.026) & 0.111 (0.026) & 30 \% \\
     & 30 & 0.059 (0.017) & 0.075 (0.017) & 28 \% \\
     & 40 & 0.048 (0.017) & 0.056 (0.017) & 18 \% \\[5pt]
     & 5  & 0.241 (0.056) & 0.365 (0.056) & 51 \% \\
     & 10 & 0.137 (0.028) & 0.188 (0.028) & 37 \% \\
450  & 20 & 0.080 (0.016) & 0.104 (0.016) & 31 \% \\
     & 30 & 0.058 (0.014) & 0.071 (0.014) & 22 \% \\
     & 40 & 0.046 (0.011) & 0.054 (0.011) & 20 \% \\[5pt]
     & 5  & 0.220 (0.037) & 0.341 (0.037) & 55 \% \\
     & 10 & 0.132 (0.023) & 0.191 (0.023) & 45 \% \\
600  & 20 & 0.076 (0.014) & 0.103 (0.014) & 35 \% \\
     & 30 & 0.057 (0.013) & 0.068 (0.013) & 19 \% \\
     & 40 & 0.044 (0.010) & 0.052 (0.010) & 19 \% \\[5pt]
     & 5  & 0.205 (0.029) & 0.320 (0.029) & 56 \% \\
     & 10 & 0.126 (0.025) & 0.185 (0.025) & 47 \% \\
900  & 20 & 0.073 (0.011) & 0.097 (0.011) & 34 \% \\
     & 30 & 0.053 (0.012) & 0.067 (0.012) & 26 \% \\
     & 40 & 0.042 (0.008) & 0.051 (0.008) & 22 \% \\[5pt]
     & 5  & 0.204 (0.024) & 0.316 (0.024) & 54 \% \\
     & 10 & 0.122 (0.011) & 0.175 (0.011) & 44 \% \\
1200 & 20 & 0.072 (0.009) & 0.093 (0.009) & 31 \% \\
     & 30 & 0.052 (0.009) & 0.065 (0.009) & 24 \% \\
     & 40 & 0.040 (0.005) & 0.050 (0.005) & 24 \%
\end{tabular}
\end{table}

\begin{table}[!htbp]
\caption{
Median relative mean square error \eqref{eq:simulations_RMSE} of the dynamic and static recovery and the relative gain \eqref{eq:simulations_relative gain} between them. Each row of the table is result of 100 independent simulations of the functional autoregressive process $\mathbf{FAR(1)_{0.7}}$
}
\label{table:kriging_FAR_07}
\begin{tabular}{ccccc}
$T$   & $N^{max}$ &
\multicolumn{1}{p{3.5cm}}{Median dynamic recovery relative mean square error (inter-quartile range)}&
\multicolumn{1}{p{3.5cm}}{Median static recovery relative mean square error (inter-quartile range)}&
\multicolumn{1}{p{1.0cm}}{Relative gain}
\\[25pt]
     & 5  & 0.318 (0.072) & 0.480 (0.072) & 51 \% \\
     & 10 & 0.185 (0.046) & 0.256 (0.046) & 39 \% \\
150  & 20 & 0.107 (0.037) & 0.122 (0.037) & 14 \% \\
     & 30 & 0.073 (0.028) & 0.084 (0.028) & 16 \% \\
     & 40 & 0.056 (0.023) & 0.059 (0.023) & 4 \%  \\[5pt]
     & 5  & 0.282 (0.049) & 0.403 (0.049) & 43 \% \\
     & 10 & 0.163 (0.030) & 0.209 (0.030) & 29 \% \\
300  & 20 & 0.092 (0.029) & 0.111 (0.029) & 21 \% \\
     & 30 & 0.065 (0.021) & 0.069 (0.021) & 6 \%  \\
     & 40 & 0.052 (0.013) & 0.054 (0.013) & 4 \%  \\[5pt]
     & 5  & 0.265 (0.044) & 0.368 (0.044) & 39 \% \\
     & 10 & 0.157 (0.027) & 0.200 (0.027) & 28 \% \\
450  & 20 & 0.088 (0.018) & 0.104 (0.018) & 18 \% \\
     & 30 & 0.065 (0.016) & 0.070 (0.016) & 7 \%  \\
     & 40 & 0.049 (0.012) & 0.051 (0.012) & 4 \%  \\[5pt]
     & 5  & 0.257 (0.031) & 0.350 (0.031) & 36 \% \\
     & 10 & 0.141 (0.027) & 0.184 (0.027) & 30 \% \\
600  & 20 & 0.089 (0.020) & 0.100 (0.020) & 12 \% \\
     & 30 & 0.061 (0.011) & 0.066 (0.011) & 7 \%  \\
     & 40 & 0.050 (0.009) & 0.052 (0.009) & 5 \%  \\[5pt]
     & 5  & 0.245 (0.032) & 0.335 (0.032) & 37 \% \\
     & 10 & 0.142 (0.022) & 0.181 (0.022) & 27 \% \\
900  & 20 & 0.081 (0.013) & 0.093 (0.013) & 15 \% \\
     & 30 & 0.060 (0.013) & 0.064 (0.013) & 8 \%  \\
     & 40 & 0.047 (0.008) & 0.049 (0.008) & 5 \%  \\[5pt]
     & 5  & 0.238 (0.025) & 0.321 (0.025) & 35 \% \\
     & 10 & 0.139 (0.018) & 0.178 (0.018) & 28 \% \\
1200 & 20 & 0.080 (0.010) & 0.093 (0.010) & 17 \% \\
     & 30 & 0.059 (0.008) & 0.064 (0.008) & 8 \%  \\
     & 40 & 0.047 (0.006) & 0.048 (0.006) & 3 \%
\end{tabular}
\end{table}

\begin{table}[!htbp]
\caption{
Median relative mean square error \eqref{eq:simulations_RMSE} of the dynamic and static recovery and the relative gain \eqref{eq:simulations_relative gain} between them. Each row of the table is result of 100 independent simulations of the functional autoregressive process $\mathbf{FAR(1)_{0.9}}$
}
\label{table:kriging_FAR_09}
\begin{tabular}{ccccc}
$T$   & $N^{max}$ &
\multicolumn{1}{p{3.5cm}}{Median dynamic recovery relative mean square error (inter-quartile range)}&
\multicolumn{1}{p{3.5cm}}{Median static recovery relative mean square error (inter-quartile range)}&
\multicolumn{1}{p{1.0cm}}{Relative gain}
\\[25pt]
     & 5  & 0.297 (0.098) & 0.514 (0.098) & 73 \% \\
     & 10 & 0.181 (0.047) & 0.266 (0.047) & 47 \% \\
150  & 20 & 0.089 (0.029) & 0.129 (0.029) & 46 \% \\
     & 30 & 0.062 (0.022) & 0.083 (0.022) & 34 \% \\
     & 40 & 0.052 (0.026) & 0.064 (0.026) & 23 \% \\[5pt]
     & 5  & 0.255 (0.056) & 0.434 (0.056) & 70 \% \\
     & 10 & 0.162 (0.040) & 0.231 (0.040) & 43 \% \\
300  & 20 & 0.086 (0.025) & 0.118 (0.025) & 37 \% \\
     & 30 & 0.061 (0.019) & 0.079 (0.019) & 30 \% \\
     & 40 & 0.050 (0.014) & 0.057 (0.014) & 15 \% \\[5pt]
     & 5  & 0.242 (0.049) & 0.402 (0.049) & 66 \% \\
     & 10 & 0.146 (0.032) & 0.212 (0.032) & 45 \% \\
450  & 20 & 0.083 (0.021) & 0.109 (0.021) & 32 \% \\
     & 30 & 0.060 (0.015) & 0.069 (0.015) & 16 \% \\
     & 40 & 0.046 (0.010) & 0.055 (0.010) & 18 \% \\[5pt]
     & 5  & 0.239 (0.035) & 0.380 (0.035) & 59 \% \\
     & 10 & 0.142 (0.033) & 0.205 (0.033) & 45 \% \\
600  & 20 & 0.083 (0.018) & 0.104 (0.018) & 25 \% \\
     & 30 & 0.056 (0.011) & 0.068 (0.011) & 22 \% \\
     & 40 & 0.045 (0.009) & 0.053 (0.009) & 18 \% \\[5pt]
     & 5  & 0.226 (0.028) & 0.362 (0.028) & 60 \% \\
     & 10 & 0.133 (0.023) & 0.191 (0.023) & 43 \% \\
900  & 20 & 0.077 (0.017) & 0.100 (0.017) & 29 \% \\
     & 30 & 0.056 (0.009) & 0.066 (0.009) & 18 \% \\
     & 40 & 0.045 (0.009) & 0.052 (0.009) & 15 \% \\[5pt]
     & 5  & 0.218 (0.018) & 0.346 (0.018) & 59 \% \\
     & 10 & 0.131 (0.021) & 0.188 (0.021) & 44 \% \\
1200 & 20 & 0.075 (0.009) & 0.097 (0.009) & 29 \% \\
     & 30 & 0.054 (0.007) & 0.067 (0.007) & 24 \% \\
     & 40 & 0.044 (0.006) & 0.051 (0.006) & 16 \%
\end{tabular}
\end{table}


\clearpage


\bibliographystyle{biometrika}
\bibliography{biblio}

\end{document}